\documentclass[entropy,tutorial,accept,pdftex,moreauthors]{Definitions/mdpi}
\let\origmdpihl\hl
\renewcommand{\hl}[1]{\texorpdfstring{\origmdpihl{#1}}{#1}}

\firstpage{1}
\pubvolume{1}
\issuenum{1}
\articlenumber{0}
\pubyear{2026}
\copyrightyear{2026}
\externaleditor{Chintha Tellambura} 
\datereceived{6 August 2026}
\dateaccepted{12 October 2026}
\datepublished{ }
\graphicspath{{Figures/}}
\usepackage{tikz-3dplot}
\usetikzlibrary{shadings}
\usetikzlibrary{decorations.pathmorphing}
\usetikzlibrary{shapes.geometric}
\usepackage{amsmath,amssymb,amsthm}
\usepackage{bm}
\usepackage{booktabs}
\usepackage{mathtools}
\usepackage{tabularx}
\usepackage{tikz}

\newcommand{\rev}[1]{\textcolor{black}{#1}}
\usepackage{bbm}    
\usetikzlibrary{positioning}
\usetikzlibrary{calc} 
\newtheorem{Connection}{Connection}

\newcommand{\bq}{\mathbf{q}}
\newcommand{\bx}{\mathbf{x}}

\newcommand{\bz}{\mathbf{z}}
\newcommand{\bg}{\mathbf{g}}
\newcommand{\bc}{\mathbf{c}}

\newcommand{\bd}{\mathbf{d}}

\newcommand{\balpha}{\boldsymbol{\alpha}}
\newcommand{\be}{\mathbf{e}}
\newcommand{\bu}{\mathbf{u}}
\newcommand{\bv}{\mathbf{v}}
\newcommand{\bw}{\mathbf{w}}
\newcommand{\bb}{\mathbf{b}}
\newcommand{\bX}{\mathbf{X}}
\newcommand{\bY}{\mathbf{Y}}
\newcommand{\bA}{\mathbf{A}}
\newcommand{\bPi}{\boldsymbol{\Pi}}
\newcommand{\bC}{\mathbf{C}}
\newcommand{\bE}{\mathbf{E}}
\newcommand{\bF}{\mathbf{F}}
\newcommand{\bG}{\mathbf{G}}
\newcommand{\bJ}{\mathbf{J}}
\newcommand{\bh}{\mathbf{h}}

\newcommand{\bD}{\mathbf{D}}
\newcommand{\bH}{\mathbf{H}}
\newcommand{\bI}{\mathbf{I}}

\newcommand{\bP}{\mathbf{P}}
\newcommand{\bB}{\mathbf{B}}
\newcommand{\bepsilon}{\boldsymbol{\epsilon}}
\newcommand{\bsigma}{\boldsymbol{\sigma}}
\newcommand{\bQ}{\mathbf{Q}}
\newcommand{\bS}{\mathbf{S}}
\newcommand{\bT}{\mathbf{T}}
\newcommand{\bU}{\mathbf{U}}
\newcommand{\bW}{\mathbf{W}}
\newcommand{\bV}{\mathbf{V}}
\newcommand{\bR}{\mathbf{R}}
\newcommand{\bPhi}{\boldsymbol{\Phi}}
\newcommand{\bK}{\mathbf{K}}
\newcommand{\bM}{\mathbf{M}}
\newcommand{\bZ}{\mathbf{Z}}
\newcommand{\bmu}{\boldsymbol{\mu}}
\newcommand{\bSigma}{\boldsymbol{\Sigma}}
\newcommand{\bLambda}{\boldsymbol{\Lambda}}
\newcommand{\bPsi}{\boldsymbol{\Psi}}

\newcommand{\R}{\mathbb{R}}
\newcommand{\E}{\mathbb{E}}
\newcommand{\col}{\mathrm{col}}
\newcommand{\Tr}{\mathrm{Tr}}

\newcommand{\Cov}{\mathrm{Cov}}
\newcommand{\Var}{\mathrm{Var}}

\Title{From Common Information to Canonical Correlation Analysis and Beyond: Bridging Information Theory, Signal Processing, and Machine Learning}

\Author{Paris A. Karakasis* and Walid Saad }
\AuthorNames{Paris A. Karakasis and Walid Saad}

\address[1]{%
Institute for Advanced Computing, Bradley Department of Electrical and Computer Engineering, Virginia Polytechnic Institute and
State University, Alexandria, VA 22305, USA; \{parisk, walids\}@vt.edu}

\corres{\hspace{-0.82em}Correspondence: parisk@vt.edu}

\abstract{%
Many problems in signal processing, machine learning, and neuroimaging
provide several jointly observed measurements of one process, each mixing a
component common to all of them with a component private to itself, and require
the common component to be recovered without supervision. This problem has been
studied in several communities, largely independently. Statisticians proposed
canonical correlation analysis (CCA), introduced by Hotelling in 1936.
Information theorists proposed two notions of common information, due to Gács
and Körner and to Wyner. Engineers developed subspace methods for sensor arrays
and multiuser communications to design multiview receivers. This tutorial
presents the main results and tools each community developed for this problem in
a single exposition, and explains the connections that emerge where they meet,
\rev{distinguishing exact equivalences from special cases and analogies.} We develop the information-theoretic foundation from first principles. We then introduce CCA and its extension to several views, followed by the engineering thread from
blind multiuser detection to CCA as a blind receiver. Next, we turn to nonlinear
CCA, covering kernel, deep, and probabilistic formulations, the conditions under
which the common factor is identifiable, and connections to self-supervised
learning. We conclude with a discussion of further extensions and open
directions. The exposition is self-contained, so that readers from each of these
communities can follow the parts developed by the others.}

\keyword{common information; canonical correlation analysis; generalized CCA; identifiability; Gray--Wyner network; blind multiuser detection; multiview learning; multimodal learning; latent variable models; conditional independence}

\begin{document}

\fancypagestyle{plain}{\fancyhf{}\renewcommand{\headrulewidth}{0pt}\renewcommand{\footrulewidth}{0pt}}
\fancyhead[L]{\footnotesize Accepted for publication in \emph{Entropy}}
\fancyfoot[R]{}

\section{Introduction}
\label{sec:intro}

\subsection{A Question That Crosses Disciplines}
\label{subsec:crossing-disciplines}

A wide range of applications, from wireless communications and neuroimaging to
speech processing and machine learning, rely on observing the same phenomenon
through several measurements. These measurements, which we call views, may come
from different instruments, from different locations, or from different
modalities, and each view records something the others miss. What should be
extracted from such multiview measurements depends on the application. In some
applications, each view is incomplete on its own, and the goal is to merge the
views into a single representation that describes the phenomenon more fully than
any one of them does alone. This process is known as multiview fusion. In other applications, the opposite holds. Each view captures the phenomenon completely, but most of what it records is unrelated to the phenomenon, and the goal is to keep only the part that is due to the phenomenon. Prior information that points to the interesting part of the recordings is often unavailable, and the only side information is that jointly observed samples of the different views share something. Since the phenomenon is the only thing
common to all views, the task becomes to recover what the views have in common.
This problem, which we call common factor recovery, is the subject of this
tutorial. The problem is important because it arises across engineering and the
sciences, typically when the signal of interest is weak and no supervision is
available, and it has been posed independently in fields that share little
else. We open with three instances of it, from wireless communications,
neuroimaging, and machine learning.

\textbf{{Wireless communications.} 
} A fundamental problem in wireless communications is
to recover a transmission that is received buried under others that are orders of
magnitude stronger. For example, consider two base stations, each equipped with
several antennas, receiving uplink transmissions from the users in their vicinity, as in Figure~\ref{fig:celledge}. The transmissions of users near the boundary between the two cells, the cell-edge users, reach both stations, but at low power, since those users are far from either one. A cell-center user close
to a given base station is received there at high power and lies near the noise floor at the other. Every user has to be decoded, and the ones near a station present no difficulty. The cell-edge users are the challenging case, since their signals sit underneath those of the nearby users.

\begin{figure}[H]
\centering
\includegraphics[width=0.85\linewidth]{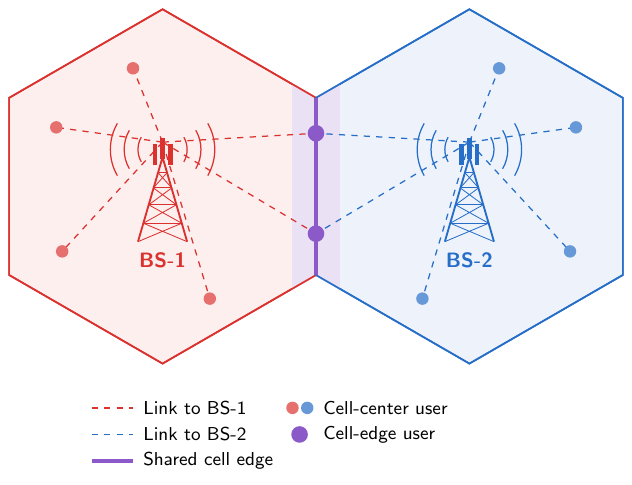}
\caption{{Two} 
 neighboring cells, each served by a multi-antenna base station. A user near the center of a cell is received strongly at its own station and is barely heard at the other, so only one link is drawn for it. The two cell-edge users are received at low power by both stations, so each of them carries a link to BS-1 and a link to BS-2, and they are the challenging case for decoding.}
\label{fig:celledge}
\end{figure}

\textbf{Neuroimaging.} A fundamental problem in neuroimaging is to estimate the spatiotemporal response to a stimulus. This is usually done by extracting task-related pairs of temporal signatures and brain activation maps from recorded brain signals. For example, consider multiple subjects performing the same visuo-motor task while functional magnetic resonance imaging (fMRI) records their brain activity. {{A visuo-motor} 
 task is one in which the subject produces a movement in response to a visual stimulus, so that visual and motor areas are engaged together. A common instance is pressing a button whenever a checkerboard pattern appears on a screen.} For each subject, the measurements can be arranged in a matrix of size $\text{space} \times \text{time}$ by stacking the time series of all volume elements, or voxels. Each recording mixes the response to the task, which is shared across subjects, with spontaneous activity private to each individual subject. The goal is to infer the task-related temporal and spatial components of the response from the recordings without knowing what the stimulus was or what the two components look like. The only assumption available is that the measurements of all subjects share them.

\textbf{Multimodal machine learning.} A fundamental problem in multimodal learning is to
build a representation of two modalities that keeps what they agree on and discards
what belongs to only one. For example, consider a speaker recorded in two modalities
while talking. In the first one, sensors track the positions of the tongue, lips and
jaw as the vocal tract moves. In the second one, a microphone records the resulting
sound. Several common factors characterize both, among them the phoneme being
produced and the speaker's identity, speaking rate, age and sex. Each recording also carries
variation the other modality does not. For example, the articulatory traces carry sensor noise and involuntary movement, and the waveform carries the acoustics of the room. Each also carries measurement noise independent of the other. Learning from both modalities isolates the factors they share. A few labels are then enough to build a representation of the factor a downstream task needs.

Naturally, one wonders whether common factor recovery, in these three examples and in general, is just an ordinary estimation task. As we will see, the problem is much more challenging, for three technical reasons.

The first is that we do not know which part of the measurements corresponds to the common factor and which to the private ones, since we observe only their mixture. In the most general case, we also lack any kind of side information about their shape, statistics, or any other
characteristics. The only thing we know is the coupling of the joint measurements, i.e., that jointly observed multiview samples share something. Infinitely many valid decompositions are consistent with the same measurements, each assigning some factors to both views and others to only one. As will become clear, what counts as common and private has to be defined more rigorously than this. 

The second reason is the absence of supervision. The cell-edge users are not identified in advance, their number can even be unknown, and their symbol sequences may be partially or fully unknown. In fMRI, the expected brain response is unknown, though a canonical shape is often assumed. In a controlled experiment the stimulus timing is known but is deliberately not used, so that the recovered response is not constrained by the assumed model. Under naturalistic stimulation, such as watching a film or listening to continuous speech, there is no design to use at all. The speech recordings may come with no labels, or with very few. 

The third is that the common factor need not dominate in the measurements. For example, in the wireless cellular network use case, cell-edge users are received at lower power than the interference, so any criterion that ranks structure by energy discards exactly their transmissions. 

Any methodology must therefore provide four elements, namely a rigorous definition of the shared structure, an objective that allows us to design a well-posed mathematical problem for the extraction of the common factor, an estimator that computes it from finite samples, and conditions under which the estimator returns the defined object rather than an artifact of the data. This tutorial is organized around those four requirements.

\subsection{Four Communities, One Problem}

Remarkably, the problem set out above has been studied by four communities, using different
semantics, and the resulting bodies of work developed with few explicit
connections between them.

\textbf{Statistics.} Hotelling introduced canonical correlation analysis (CCA) in
{1936}
~\cite{hotelling1936}, seeking the linear combinations of two sets of variables
that correlate most strongly with each other. His examples were reading and arithmetic
scores in schoolchildren, and prices and quantities across related commodities.
Kettenring~\cite{kettenring1971} later cataloged five criteria for extending the formulation to more than two sets, two of them his own, with all five coinciding for two sets of variables and differing beyond that.

\textbf{Information theory.} Gács and Körner asked in 1973 what two random variables
hold in common in the strict sense of a quantity that each can compute on its own. They showed that such a quantity can be far smaller than the mutual information, vanishing entirely for most distributions of practical interest~\cite{gacs1973}. Two years later, Wyner posed a different question, namely how much information a latent variable must carry in order to render the two random variables conditionally independent, and answered it with the common
information~\cite{wyner1975}. The Gray--Wyner network~\cite{grawyner1974} gave that quantity an operational meaning as the minimum rate on a shared communication link,  subject to lossless reconstruction at a total rate equal to the joint entropy.
Tandon, Sankar, and Poor~\cite{tandon2011} later extended the Gray--Wyner
network to several sources under a privacy requirement.

\textbf{Signal processing.} Wang and Poor showed in 1998 that a receiver can estimate the subspace spanned by the users' signatures directly from the received data, and use that estimate to suppress interference from users whose signatures it does not know~\cite{wangpoor1998}. Two decades later, Ibrahim and Sidiropoulos showed that two receivers observing a common subset of users can recover the subspace that subset spans using CCA, with no knowledge of any signature~\cite{ibrahim2020}.

{\color{black}\textbf{Machine learning.} Lai and Fyfe published a first kernel version of CCA in 2000~\cite{laifyfe2000}, and Akaho proposed in 2001 the formulation in use today, in which the linear combinations of each view are replaced with functions in a reproducing kernel Hilbert space~\cite{akaho2001}. In 2013, Andrew et~al. replaced the kernel with a neural network per view, which yielded deep CCA~\cite{andrew2013}. Later, the same correlation objectives formed a basis for current self-supervised learning, where the two views are constructed from one sample rather than observed~\cite{zbontar2021barlow, bardes2022vicreg}.}

CCA remained largely confined to statistics for a long time. Borga observed in 2001 that CCA was surprisingly little known in learning and signal processing, and wrote a tutorial addressed to those two audiences~\cite{borga2001}. Tutorials aimed at machine learning~\cite{hardoon2004, uurtio2017} and neuroscience~\cite{wang2018} followed. These treatments establish how CCA is formulated and how to compute it efficiently, which leaves two questions open. The first is what the computed quantity means. A large canonical correlation says that some linear combination of the first view agrees with some linear combination of the second. Going beyond this requires an assumed generative model that names what the common factor is, such as a cell-edge user or a response common to several subjects. The second is whether the recovered structure corresponds to anything real. A high correlation between two transformed views shows that the transformations agree. However, it does not necessarily show that the views share something concrete, since the agreement can be manufactured by the transformations
themselves, especially when the class they are drawn from is rich. Answering
that requires more than a generative model. It requires conditions under which what is recovered is a common factor rather than an artifact, which a correlation criterion by itself does not~supply.

Attempts to answer the first question were made twice, and neither answer identifies the common factor. Information theory answered it as a rate. Gács and Körner and Wyner each wrote down a quantity in bits, and the Gray--Wyner network turns Wyner's into the rate that a link shared by two receivers must carry~\cite{gacs1973, wyner1975, grawyner1974}. These quantities settle how much the two views have in common and what describing it costs. However, they say nothing about what the common factor is. Bach and Jordan answered the question with a generative model~\cite{bach2005}. Each view is its own affine function of one shared latent vector plus zero-mean Gaussian noise with its own arbitrary covariance. Under this assumption, the maximum-likelihood estimates of the model parameters are given by the canonical directions, and the posterior mean of the shared latent vector given either view is a linear function of the canonical variates of that view. Naming the common factor, as a cell-edge user or a task response, remains application specific. The second question was answered much more recently~\cite{ibrahim2019, sorensen2021, lyu2022, karakasis2023}, and in the later ones one key condition is the same conditional independence that Wyner wrote down in 1975.  In this tutorial, we take those information-theoretic definitions as the
foundation. We develop them first and then introduce CCA. For jointly Gaussian
views, the canonical correlations determine Wyner's common information exactly,
whereas the link between CCA and the Gács--Körner common information is only
one-sided. We then trace the line from subspace methods in communications
through CCA to the identifiability theory that came later, in both linear and
nonlinear settings.

\subsection{Tutorial Objectives and Contributions}

The main objective of this tutorial is, thus, to make four bodies of literature accessible to a
reader who wants to understand how shared structure appears across multiple
jointly observed sources. Each of the four communities has developed a
substantial body of theory and tools, largely without reference to the others. We present the main principles behind common information and canonical
correlation in detail, together with examples of applications, so that a reader
can see how the concepts of the four literatures relate, apply the resulting
tools, and assess what the outcome of an analysis means. A further objective is
to encourage progress on these topics, and for that reason the tutorial also
discusses open problems and research directions.

Specifically, the tutorial makes the following contributions.
\begin{itemize}
\item We develop the Gács--Körner and Wyner common information from first
principles, together with the Gray--Wyner network, their extensions to several
and to continuous variables, and worked examples for binary and Gaussian
sources (Section~\ref{sec:common_info}).
\item We present CCA from Hotelling's formulation, show that for jointly
Gaussian views the canonical correlations determine both the mutual information
and Wyner's common information, and give the conditions under which CCA and
generalized CCA recover the common subspace (Section~\ref{sec:cca}).
\item We follow the signal processing thread from the blind multiuser detection
of Wang and Poor to CCA-based receivers, for cell-edge users and for a second
view designed at the transmitter (Section~\ref{sec:sp}).
\item We cover kernel, deep, and probabilistic formulations of nonlinear CCA,
explain their trivial solutions, state the conditions under which the common
factor is identifiable, and relate these conditions to the feasible sets of the
Gács--Körner and Wyner common information. We also connect these formulations
to self-supervised learning (Section~\ref{sec:nonlinear}).
\item Throughout, boxes labeled Connection relate the information-theoretic and
the statistical developments, each labeled with the type of relation it states,
and Section~\ref{sec:open} collects further extensions and open problems.
\end{itemize}

\subsection{Relation to Existing Tutorials \& Monographs}

The prior art focuses mainly on one of the two main topics, common information and canonical correlation. On the statistics and
learning side, several tutorials on \mbox{CCA~\cite{borga2001, hardoon2004,
uurtio2017, wang2018, yang2021survey}} differ in which parts of the subject they
cover. Borga's tutorial~\cite{borga2001} covers the statistical formulation and
its links to related second-order methods, with information theory entering only
through the mutual information of Gaussian views. Hardoon
et~al.~\cite{hardoon2004} introduce CCA and kernel CCA to machine learning
researchers and sketch an extension of CCA to more than two sets. Uurtio
et~al.~\cite{uurtio2017} add regularized and sparse variants together with a
brief account of the deep one. Wang et~al.~\cite{wang2018} write a primer for
neuroscientists, with worked examples that link brain imaging to behavior,
psychopathology and genetics. Yang et~al.~\cite{yang2021survey} give the
broadest coverage of the formulations, sorting them into multi-view,
probabilistic, deep, kernel, discriminative, sparse and locality preserving
families, and pairing each with representative applications and datasets. None
of them treats the information-theoretic definitions as the foundation of the
subject. Only Yang et~al. and Wang et~al. raise identifiability, Yang et~al. for
Bayesian CCA and Wang et~al. for linear CCA when the number of samples is small
relative to the number of variables. However, neither examines the conditions under which the recovered factor is guaranteed to be a common factor.

On the information-theoretic side, the monograph of Yu and
Tan~\cite{yu2022common} surveys Wyner's and the Gács--Körner
common information in great depth. The authors cover the operational meanings of
both notions in distributed source simulation and common randomness extraction,
as well as their extensions, together with noise stability and the associated
functional inequalities. The Hirschfeld--Gebelein--Rényi maximal correlation
enters that monograph as a tool in the analysis of the
Gács--Körner common information and of noise stability, but the
monograph does not reach CCA or the identifiability of the common factor.

{\color{black}The monograph of Huang et~al.~\cite{huang2024universal} starts from the
information-theoretic side and shares the most common ground with the present
tutorial. That monograph builds on the modal decomposition of the joint
distribution of a pair of variables, mainly on finite alphabets, together with a
local information geometry perspective. From there, the monograph devotes a chapter to Hotelling's CCA for jointly Gaussian vectors under linear features, and relates
the features of the modal decomposition to both Wyner's and the Gács--Körner
common information. In addition, the monograph surveys only briefly the kernel and deep formulations of CCA as well as multivariate extensions. Finally, the monograph does not address the identifiability of the common factor, and the detection and estimation covered there are statistical formulations, outside the wireless communications setup. Table~\ref{tab:tutorials} places the present tutorial alongside these treatments.}

\begin{table}[H]
\caption{Scope of existing treatments of common information and CCA. A check
mark means the topic is covered substantively, and Partial means it is touched
on without being developed.\label{tab:tutorials}}

\begin{adjustwidth}{-\extralength}{0cm}
\begin{tabularx}{\fulllength}{lCCCCCCC}
\toprule
\textbf{Work} & \textbf{Info. Theory} & \textbf{Statistics} & \textbf{Sig. Proc.}
& \textbf{Multi-View CCA} & \textbf{Nonlinear CCA} & \textbf{Deep CCA} & \textbf{Identif.} \\\midrule
Borga~\cite{borga2001} & Partial & {\checkmark} 
 & Partial & & & & \\
Hardoon et~al.~\cite{hardoon2004} & & \checkmark & & Partial & \checkmark & & \\
Uurtio et~al.~\cite{uurtio2017} & & \checkmark & & & \checkmark & Partial & \\
Wang et~al.~\cite{wang2018} & & \checkmark & & Partial & Partial & Partial & Partial \\
Yang et~al.~\cite{yang2021survey} & & \checkmark & & \checkmark & \checkmark &
\checkmark & Partial \\
Yu and Tan~\cite{yu2022common} & \checkmark & & & & & & \\
\rev{Huang et~al.~\cite{huang2024universal}} & \checkmark & \checkmark & & Partial &
\checkmark & Partial & \\
This work & \checkmark & \checkmark & \checkmark & \checkmark & \checkmark &
\checkmark & \checkmark \\
\bottomrule
\end{tabularx}
\end{adjustwidth}
\end{table}

\subsection{Organization}

Hereinafter, we move from the definitions of what is common, in
information-theoretic and statistical terms, to the methods that recover the
common factor from data, first under linear and then under nonlinear mixing.
Each section is outlined below.

\begin{enumerate}
\item[$\bullet$] In Section~\ref{sec:common_info}, we build the
information-theoretic foundation. The section covers the Gács--Körner and Wyner
common informations, the Gray--Wyner network, the multi-user extension of
Tandon, Sankar, and Poor, the passage from discrete to continuous sources
through a lossy coding formulation, and partial information decomposition as a
complementary account of what sharing means. The section assumes no background
in signal processing or statistics.

\item[$\bullet$] In Section~\ref{sec:cca}, we introduce CCA through Hotelling's
formulation and computation. Then, we present the relation between the canonical correlations and the mutual information of Gaussian views~\cite{scharf2000, borga2001}, and the
probabilistic interpretation of Bach and Jordan~\cite{bach2005}. This is where common information and canonical correlation meet for Gaussian views. Next, we give conditions under which, in the noiseless case, the subspace recovered by
CCA is the common subspace. Finally, we extend CCA and these conditions to more than two views through generalized CCA.

\item[$\bullet$] In Section~\ref{sec:sp}, we reach CCA from multiuser detection
in wireless communications, where a receiver must separate the signals of
several users that arrive superimposed. We start from code-division multiple
access, where each user is assigned a code known to the single-antenna receiver, and present the subspace receivers of Wang and Poor~\cite{wangpoor1998}, which need only the code of the desired user. We then turn to multi-antenna receivers, which replace the code with a few known symbols of that user. Next, we present the work of Ibrahim and Sidiropoulos~\cite{ibrahim2020}, in which two base stations receiving the same cell-edge users provide two views and CCA recovers the subspace of these weak users regardless of how strong the interference from the cell-center users is, without codes or known symbols. Finally, we show how the transmitter can create the second view by repeating its data~\cite{ibrahim2024mmse}, in which case CCA yields the linear minimum mean-square-error (MMSE) detector without channel estimation.

\item[$\bullet$] In Section~\ref{sec:nonlinear}, we relax the assumption that
the shared structure is linear. The section covers optimal scaling, kernel CCA,
deep CCA and its multiview extensions, and the probabilistic and variational
formulations of CCA. For the correlation-based methods, once the class of
admissible transformations is rich enough, agreement between the transformed
views on the training samples can be produced by the transformations themselves
rather than by anything the views share. The probabilistic and variational
formulations face different difficulties, which we also discuss. The
section then presents two sets of conditions under which the common factor is
identifiable, and closes with self-supervised learning, where an objective of
the same structure reappears with the views constructed rather than given.

\item[$\bullet$] In Section~\ref{sec:open}, we collect further extensions that the preceding sections do not develop, such as tensor CCA and a unification of generalized CCA with subspace clustering. We provide them as pointers for the reader who wants to go further. We then state a series of open problems and research directions, from relaxing the conditions under which the common factor is identifiable, to distinguishing the factor of interest from a confounder that the views share.
\end{enumerate}

Readers with different interests can take different paths through the sections. However, we encourage every path to pass through Section~\ref{sec:cca}. A reader from signal processing can skip Section~\ref{sec:common_info}, together with the connections to that section in Section~\ref{sec:cca}, since Section~\ref{sec:sp} relies on the CCA formulation and the identifiability conditions of Section~\ref{sec:cca}. Section~\ref{sec:sp} refers to Section~\ref{sec:common_info} only in connecting remarks. A reader from information theory may find more interest in Section~\ref{sec:common_info} and then Section~\ref{sec:cca}, where the connections to the Gács--Körner and Wyner common information are made. A reader from machine learning can go from Section~\ref{sec:cca} directly to Section~\ref{sec:nonlinear}, which also draws on the maximal correlation and the conditional independence of Section~\ref{sec:common_info}. Section~\ref{sec:open} draws mainly on Sections~\ref{sec:cca} and~\ref{sec:nonlinear}.

Finally, many objects that the remaining sections work with are related but not necessarily interchangeable. \rev{Table~\ref{tab:terms} summarizes the terms used for them throughout the tutorial. Moreover, boxes labeled Connection link the information-theoretic and the statistical developments. Each box states whether the link is a one-sided implication, a consequence or an equivalence that holds only in a special case, such as jointly Gaussian views, or an analogy that shares structure without equating the quantities.}

\begin{table}[h]
{\color{black}
\caption{Terminology used throughout the tutorial.\label{tab:terms}}
\begin{tabularx}{\textwidth}{l>{\raggedright\arraybackslash}X}
\toprule
\textbf{Term} & \textbf{Meaning} \\
\midrule
Shared structure & Informal term for whatever the views have in common, used
only before one of the definitions below is adopted. \\
\midrule
Common information & A number, measured in bits and computed from the joint
distribution of the views, namely the Gács--Körner or the Wyner common
information. The number says how much the views share, not what they share. \\
\midrule
Common factor & A latent random vector shared by all views in a multiview
generative model, such that the views are conditionally independent given the
common factor. Whatever else enters a view is private to that view and forms
its private factor. \\
\midrule
Common subspace & The column space of the matrix whose rows are the values of the common factor at the training samples. \\
\midrule
Common representation & What the outputs of the learned encoders of the views
are made to agree on.  \\
\bottomrule
\end{tabularx}}
\end{table}

\section{The Information-Theoretic View of Common Information}
\label{sec:common_info}

\subsection{What Does ``Shared'' Mean?}

In Section~\ref{subsec:crossing-disciplines} we considered three problems. In the first, we are interested in combining the signals received at two base stations to decode the cell-edge users, which both stations receive at low power. In the second, several subjects perform the same task during fMRI scanning sessions, and we are interested in estimating the response to the task that is shared by all subjects. In the third, a speaker is recorded by vocal-tract sensors and a microphone, and we are interested in a representation of the factors the two recordings share,
such as the phoneme being spoken. One may notice that all three are instances of the same, more abstract problem. In each, several jointly observed views are available. Each view is generated by a factor common to all views together with factors private to that view, and we are interested in recovering the common one. Many other problems share this form. Two microphones record one speaker, two cameras watch one scene, and gene expressions and clinical measurements come from one patient. The views may differ in form, because each is recorded by a different instrument, at a different location, or from a different subject. In this section, we lay the foundations for studying the shared structure of two or more random variables, which information theorists quantify as common information. This is the first step toward the question that runs through the
tutorial. \emph{{What do the views share, and how can we recover what they share?} 
}

Two examples show what the answer can look like. Suppose we install two perception systems at an intersection, to support autonomous driving and traffic monitoring. For each detected object, such a system returns a label from its own label set. The two systems come from different manufacturers, and their label sets differ significantly. We therefore want a third system, on top of the two, that aligns their outputs. Unlike the perception systems, the third system never processes sensor data. The third system only applies one map to the labels of each perception system, and the two maps must send the labels of the same object to the same common label. Notice that this requirement alone is met by the trivial pair of maps that sends every label to a single common label, such as ``object'', and possibly by coarse labelings such as vehicle and person. Among all such pairs of maps, we want the pair whose common labeling describes the detected objects as finely as possible. The maps are learned from the pairs of labels that the two systems assign to the same object as the object passes through the intersection, as illustrated in Figure~\ref{fig:perception}.

\begin{figure}[H]
\centering
\includegraphics[width=0.8\textwidth]{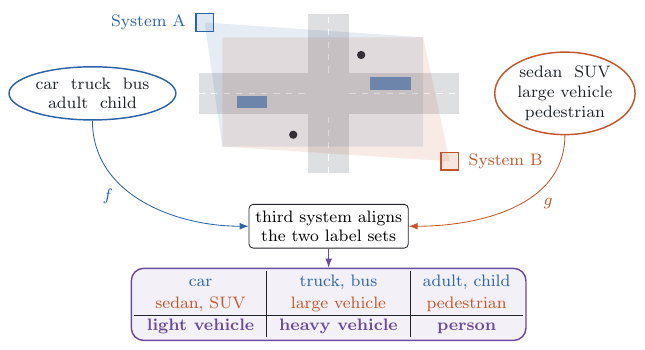}
\caption{Two perception systems at an intersection label the same objects with different label sets. A third system maps the labels of System A through $f$ and the labels of System B through $g$ to a common label set. The finest labeling on which the two maps always agree is light vehicle, heavy vehicle and person.}
\label{fig:perception}
\end{figure}

A second example involves two single-channel sources of information affected by
a common factor, such as a pair of microphones recording the same speaker, as illustrated in Figure~\ref{fig:microphones}. Each captures the speaker's voice, the common signal, together with its own background noise, which we assume to be independent of the voice and of the noise at the other microphone. The two recordings are correlated because of the shared voice and differ because of the private noise. The speaker's voice is therefore what the two recordings share, accounting for their entire dependence. Once the voice is known, one recording carries no further information about the other.

\begin{figure}[H]
\centering
\includegraphics[width=0.5\textwidth]{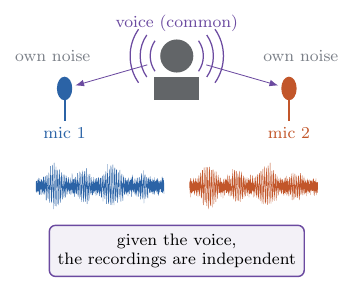}
\caption{One speaker recorded by two microphones. Each recording contains the voice, which is common to both, and the noise of its own microphone. The recordings are dependent only through the voice, so given the voice the two recordings are conditionally independent.}
\label{fig:microphones}
\end{figure}

The two examples answer our question in different ways. In the first, each
perception system can compute the common label from its own output, and both
systems arrive at the same answer. What is shared is then the richest labeling
that each view can determine on its own. In the second, neither recording
determines the voice exactly, since each recording also carries its own noise.
The voice is shared in a weaker sense. Once the voice is given, the two
recordings become conditionally independent. What is shared, in this case, is then a factor that explains the dependence between the views, and the most parsimonious such factor is sought. The two readings of ``shared'' lead to different quantities,
and information theory makes each precise. Gács and Körner formalized the first,
and Wyner the second. Both quantities are measured in bits and derive their
meaning from coding problems. We develop them in turn, beginning with the
earlier one, due to Gács and Körner.

\subsection{Gács--Körner Common Information: The Highest-Entropy Shared Function}
\label{sec:gk}

The first attempt at a formal definition of common information was given by Gács and Körner~\cite{gacs1973} in 1973. For two discrete random variables $X$ and $Y$, their notion asks for the highest-entropy random variable $W$ that can be computed \emph{deterministically} from both $X$ and $Y$ separately. Mathematically, this can be formulated {as} 
\begin{linenomath}
\begin{equation}
\begin{split}
      K_{GK}(X;Y) &= \max_W H(W)\\
    \text{ s.t. }~  W=&f(X)=g(Y),
\end{split}
  \label{eq:gk}
\end{equation}
\end{linenomath}
for deterministic functions $f$ and $g$. Since $X$ and $Y$ are discrete, $f$ and $g$ are simply lookup tables. Every feasible $W$ is an early instance of what was later called common
randomness~\cite{csiszar2011}.

To build some further intuition, we consider two simpler examples. First, let $A$, $B$ and $C$ be independent uniform bits, and define $X \triangleq (A,B)$ and $Y \triangleq (A,C)$, so that both $X$ and $Y$ take values in the alphabet $\{00,01,10,11\}$. Then $W = A$ is an optimal nontrivial shared labeling, giving $K_{GK}(X;Y) = H(A) = 1$~bit $= I(X;Y)$. This is because the entire dependence between $X$ and $Y$ is mediated by the shared bit $A$, which each of $X$ and $Y$ can determine on its own. Note that any invertible relabeling of $A$ is also a valid optimal solution. Second, let $X \triangleq A \cdot B$ and $Y \triangleq A \cdot C$, where $\cdot$ is the AND operation. When $A = 0$, both $X$ and $Y$ equal $0$ regardless of $B$ and $C$, so $X$ and $Y$ are positively correlated and $I(X;Y) > 0$. Yet $K_{GK}(X;Y) =0$, since both $X$ and $Y$ take values in $\{0,1\}$, and all four joint values of $(X,Y)$ occur with positive probability. Hence, no deterministic and nontrivial mappings $f(X)$ and $g(Y)$ can agree with probability one.

More generally, the whole spectrum is defined by two extremes and the intermediate case. {\color{black}At one extreme, $X$ and $Y$ have equally large alphabets, and each is a deterministic function of the other. Then, optimal mappings $f$ and $g$ are simply bijective relabelings, i.e., permutations of the values of $X$ and $Y$, respectively.} Hence, $W$ carries as much information as $X$ and $Y$ themselves, and $K_{GK}(X;Y) = I(X;Y) = H(X) = H(Y)$. At the other extreme, $f$ and $g$ could both be constant, mapping all values to a single label. This always satisfies the constraint $W = f(X) = g(Y)$ trivially, but gives $H(W) = 0$ and carries no information, as in the second example, where this was the only feasible option. The interesting case lies in between, where $f$ and $g$ define a coarse grouping of the values of $X$ and $Y$, respectively, into matching clusters. Then, $W$ simply carries the cluster index. The maximization of $H(W)$ searches for the finest such grouping that each of $X$ and $Y$ can determine on its own, i.e., the richest shared partition of the two alphabets that is consistent with the joint~distribution.

This intuition has a precise measure-theoretic counterpart in $K_{GK}(X;Y) = H(X \wedge Y)$, the entropy of the meet of the sigma-algebras generated by $X$ and $Y$. Here, recall that the sigma-algebra generated by a random variable is the collection of all events that can be determined from that variable alone. The meet of two sigma-algebras is the largest sigma-algebra contained within both, i.e., the greatest collection of events that can be determined from $X$ alone and simultaneously from $Y$ alone. Hence, the atoms of the meet form the finest partition of the sample space that both $X$ and $Y$ determine, and $H(X \wedge Y)$ measures how much randomness this partition carries. Since the meet is always unique, an optimal $W$ is unique up to relabeling, namely the index of the atoms of the meet. We call this optimal $W$ the common part of $X$ and $Y$ and write $W^*$. Figure~\ref{fig:sigma} illustrates the atoms of $\sigma(X)$, $\sigma(Y)$, their join, and their meet.\vspace{-6pt}
 \begin{figure}[H]
  \centering
   \includegraphics[width=\linewidth]{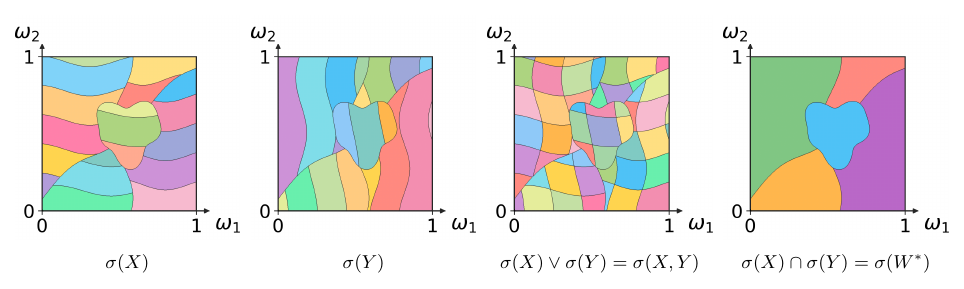}
\caption{{Partitions} 
 of the sample space $\Omega=[0,1]^2$, with outcome $\omega=(\omega_1,\omega_2)$, into the atoms of $\sigma(X)$ and $\sigma(Y)$, the sigma-algebras generated by $X$ and $Y$, of their join $\sigma(X)\vee\sigma(Y)=\sigma(X,Y)$, and of their meet $\sigma(X)\cap\sigma(Y)$. Each color marks one atom{, that is, the set of all outcomes at which the random variable generating the partition takes one particular value}. The join is the coarsest partition that refines both $\sigma(X)$ and $\sigma(Y)$. The meet is the finest partition coarser than both, and the five atoms of the meet index the values of the common part $W^*$.}
\label{fig:sigma}
\end{figure}

In the perception example of Figure~\ref{fig:perception}, an outcome $\omega$ is a detected object, and $X$ and $Y$ are the labels that Systems A and B assign to that object. Consider a sedan passing through the intersection. System A labels the sedan as car, so the object lies in the atom $\{X=\text{car}\}$ of $\sigma(X)$, which collects every object that System A calls a car. System B labels the same object as sedan, so the object lies in the atom $\{Y=\text{sedan}\}$ of $\sigma(Y)$. The join $\sigma(X,Y)$ records both labels, and the object lies in the atom $\{X=\text{car},\,Y=\text{sedan}\}$. The meet records only what both systems agree on. The object lies in the atom light vehicle, which equals $\{X=\text{car}\}$ and also equals $\{Y=\text{sedan}\}\cup\{Y=\text{SUV}\}$. Hence each system determines from its own label that the object is a light vehicle, while only System B can tell a sedan from an SUV. Figure~\ref{fig:sigma} shows generic partitions rather than the label sets of Figure~\ref{fig:perception}, so the numbers of atoms differ.

As hinted above, the Gács--Körner common information satisfies
\begin{linenomath}
\begin{equation}
    K_{GK}(X;Y) \leq I(X;Y),
    \label{eq:gk_mi}
\end{equation}
\end{linenomath}
{\color{black}with equality if and only if $X$ and $Y$ are conditionally independent given the common part $W^*$~\cite{gacs1973}.} In general, the inequality can be strict, and the gap can be arbitrarily large, as $K_{GK}(X;Y)$ can vanish entirely even when $I(X;Y)$ is arbitrarily large. This vanishing is not a technicality but a fundamental limitation of the Gács--Körner notion, and is precisely the observation encapsulated in the title of their paper, ``Common Information is Far Less Than Mutual Information''~\cite{gacs1973}.

\subsubsection{{Connection} 
 to Maximal Correlation}

The definition in~\eqref{eq:gk} relies on the Shannon entropy $H(W)$, which is infinite for any continuous random variable. Quantizing such a $W$ at resolution $\Delta$ gives an entropy of approximately $h(W) - \log\Delta$, where $h$ denotes differential entropy, and this value diverges as $\Delta \to 0$. Replacing $H(W)$ by $h(W)$ does not resolve the issue. The common part $W^*$ is defined only up to relabeling, since any invertible function of a common part is again a common part, whereas differential entropy is not invariant under such maps. For instance, $h(aW) = h(W) + \log|a|$ for any $a \neq 0$, so the maximum of $h(W)$ subject to $W = f(X) = g(Y)$ is unbounded. Differential entropy can also be negative. Hence, once $X$ and $Y$ share a continuous common part, the entropy objective no longer measures how rich the shared structure is. A more tractable question to study is whether a nontrivial common part exists at all.

Witsenhausen~\cite{witsenhausen1975} extended the Gács--Körner framework beyond finite alphabets to general measurable spaces and provided a structural characterization of when $K_{GK}(X;Y) > 0$. A natural quantity for this purpose is the Hirschfeld--Gebelein--Rényi \emph{maximal correlation}~\cite{hirschfeld1935,gebelein1941,renyi1959} 
\begin{equation}
\begin{split}
S(X,&Y) \triangleq ~\sup_{f,g} ~\mathbb{E}[f(X)g(Y)] \\
\text{s.t.} \quad &\mathbb{E}[f(X)] = \mathbb{E}[g(Y)] = 0, \\
&\mathbb{E}[f(X)^2] = \mathbb{E}[g(Y)^2] = 1.
\end{split}
\end{equation} Intuitively, $S(X,Y)$ measures how correlated functions of $X$ and $Y$ can be made. When $S(X,Y) = 1$ and the supremum is attained, there exist functions $f(X)$ and $g(Y)$ that are perfectly correlated. 

Witsenhausen showed that, under some mild conditions, this is precisely the condition for the joint distribution to decompose into a nontrivial block structure as
\begin{linenomath}
\begin{equation}
    K_{GK}(X;Y) > 0 \iff S(X,Y) = 1.
    \label{eq:gk_maxcorr}
\end{equation}
\end{linenomath}
These conditions are satisfied when at least one of $X$ and $Y$ has a finite alphabet, or when the Radon--Nikodym derivative of their joint distribution with respect to the product of their marginals is square-integrable under that product. Therefore, $S(X,Y) = 1$ provides a useful criterion for determining whether a nontrivial common random variable $W$ exists at all. However, a single pair of functions achieving $S(X,Y) = 1$ does not necessarily identify the \emph{optimal} $W^*$ of Gács and Körner, whose entropy equals $K_{GK}(X;Y)$. In the general case, such functions $f$ and $g$ specify a random variable $c(W^*)$ for some deterministic function $c$, which is not necessarily invertible. As a result, $c(W^*)$ may not convey the full Gács--Körner common information. Some authors, such as Lancaster~\cite{lancaster1957}, call $S(X,Y)$ the maximum
canonical correlation ({{the} name is not accidental; Section~\ref{sec:cca} 
explains the direct connection to CCA}). 

For jointly Gaussian $(X,Y)$ with Pearson correlation coefficient $\rho$, the maximal correlation is $S(X,Y) = |\rho|$~\cite{maung1941,lancaster1957}. Since $|\rho| < 1$ for any nondegenerate Gaussian pair, we have $K_{GK}(X;Y) = 0$, even though $I(X;Y) = -\tfrac{1}{2}\log(1-\rho^2)$ can be arbitrarily large. The Gács--Körner notion thus captures no common structure in jointly Gaussian
pairs. Interestingly, the analysis of Witsenhausen is geometric. $S(X,Y)$ is the cosine of the smallest angle between the space of zero-mean functions of $X$ and that of $Y$, and $K_{GK}(X;Y) > 0$ exactly when the two spaces share a direction~\cite{witsenhausen1975}. This is an early instance of the notion of common subspaces spanned by two or more views. As we will show in Sections~\ref{sec:cca}--\ref{sec:nonlinear}, common subspaces are central to the methods discussed in this tutorial.

\subsubsection{Extensions: Multiple Views and Relaxed Constraints}

The definition of common information naturally extends to $K$ random variables $X_1, \ldots, X_K$ as
\begin{linenomath}
\begin{equation}
    K_{GK}(X_1;\ldots;X_K) = \max H(W)
    \quad \text{s.t.} \quad
    W = f_k(X_k) \; \forall\, k,
    \label{eq:gk_K}
\end{equation}
\end{linenomath}
where $W$ must be simultaneously recoverable as a deterministic function of each $X_k$ alone. Tyagi et~al.\ \cite{tyagi2011when} show that this maximum is attained by a recursive construction. They define the \emph{maximum common function} $\mathrm{mcf}(X_1,\ldots,X_K)$ by applying the two-variable case repeatedly as $\mathrm{mcf}(X_1,\ldots,X_K) = \mathrm{mcf}(\mathrm{mcf}(X_1,\ldots,X_{K-1}), X_K)$, and show that any random variable recoverable from each $X_k$ individually has entropy at most $H(\mathrm{mcf}(X_1,\ldots,X_K))$. The same structural characterization applies as before, as $K_{GK}(X_1;\ldots;X_K) > 0$ if and only if the joint distribution decomposes into a common block structure across all $K$ variables, and vanishes otherwise.

To the best of our knowledge, the extension to $K$ continuous views was first considered by Zhang et~al.~\cite{zhang2024}, directly for random vectors rather than scalar random variables. They propose a practical framework for discovering the Gács--Körner decomposition from high-dimensional continuous samples. Their approach reframes the problem as one of disentanglement ({{we} will cover more related works in Section \ref{sec:nonlinear}}). 
 Specifically, each view $X_k$ is decomposed into a common component $C$, shared across all views, and a view-specific unique component $U_k$, where $C$ satisfies the deterministic Gács--Körner constraint $C = \phi_k(X_{k})$ for all $k$, $C$ and all $U_k$ are mutually independent, and each $X_k=f_k(C, U_k)$ for invertible functions $f_k$. The independence constraint is enforced by minimizing the total correlation $\mathrm{TC}(C, U_1, \ldots, U_K)$, estimated using a matrix-based R\'{e}nyi $\alpha$-order entropy functional~\cite{yu2021measuring} that avoids distributional estimation and variational approximation. 

A complementary relaxation, in the discrete setting, is due to Salamatian et~al.~\cite{salamatian2020}, who propose an approximate relaxation that allows a disagreement cost $\delta$ between the function $f(X_1)$ and $X_2$, measured by $H(f(X_1) \mid X_2) \leq \delta$. Their formulation recovers the original Gács--Körner formulation at $\delta=0$, while their motivation arose from applications in distributed compression. Kleinman et~al.~\cite{kleinman2023} instead target the two-view continuous setting and relax the deterministic constraint itself, replacing $W = f(X_1) = g(X_2)$ with the requirement that the conditional distribution of $W$ given each view is the same, i.e., $p(W|X_1) = p(W|X_2)$ for all $(X_1, X_2) \sim p(X_1, X_2)$. The resulting stochastic Gács--Körner common information is
\begin{linenomath}
\begin{equation}
    \tilde{K}_{GK}(X_1; X_2) = \max_{W} I(X_k; W)
    \quad \text{s.t.} \quad
    p(W|X_1) = p(W|X_2),
    \label{eq:gk_stochastic}
\end{equation}
\end{linenomath}
where the two mutual information values $I(W; X_1)$ and $I(W; X_2)$ are equal under the constraint, so either can serve as the objective. When $W$ is a deterministic function of $X_k$, we have $H(W|X_k) = 0$, so $I(X_k; W) = H(W)$ and the original Gács--Körner formulation is recovered as a special case. This relaxation is amenable to optimization via a variational autoencoder, where the stochastic constraint is enforced by penalizing the KL divergence between the posteriors $p(W|X_1)$ and $p(W|X_2)$, enabling learning from high-dimensional continuous data.

\subsubsection{Applications}
The Gács--Körner common information has found applications across several areas. In distributed source coding, the common part of two sources needs to be transmitted only once, which Salamatian et~al.~\cite{salamatian2020} exploit to build low-complexity alternatives to Slepian--Wolf coding. In cryptography, Tyagi et~al.~\cite{tyagi2011when} use the maximum common function to characterize the capacity of an aided secret key model, which in turn decides when a function of correlated signals can be computed without revealing its value to an eavesdropper. In machine learning, the Gács--Körner framework provides a principled information-theoretic foundation for multi-view representation learning, where the goal is to separate shared from view-specific factors of variation~\cite{kleinman2023, zhang2024}. Hence, although the notion dates back to 1973, it still receives attention from the research community, which keeps extending the theory built on it and developing tools that apply it to modern problems.

\subsubsection{Limitations and the Need for a Richer Notion}
The Gács--Körner common information, while mathematically elegant, captures only deterministic shared structure and is blind to any probabilistic common information. As discussed above, $K_{GK}(X;Y) = 0$ for most distributions of practical interest, including all nondegenerate jointly Gaussian pairs, even when $I(X;Y)$ is arbitrarily large. This fundamental limitation motivates the search for a richer notion of common information, one that allows the shared variable $W$ to be probabilistically rather than deterministically related to each view. Instead of requiring $W = f(X) = g(Y)$ exactly, one could ask for the minimum-rate description of $W$ that renders $X$ and $Y$ conditionally independent. This is precisely the question that Wyner's common information, introduced in the next subsection,~addresses.

\subsection{Wyner's Common Information}
\label{sec:wyner}

Wyner~\cite{wyner1975} proposed a more flexible definition. Instead of requiring the common variable to be computable from each observation alone, Wyner required a common factor that \emph{explains} the dependence between them.

\begin{Definition}[Wyner's common information~\cite{wyner1975}]
The common information of two discrete jointly distributed random variables $X$ and $Y$ is
\begin{linenomath}
\begin{equation}
  C(X;Y) = \min_{W:\;X \perp Y \mid W} I(X,Y;\,W),
  \label{eq:wyner}
\end{equation}
\end{linenomath}
where the minimum is over all random variables $W$ such that $X$  and $Y$ are conditionally independent given $W$, and is achieved by some $W$ with alphabet size $|\mathcal{W}| \leq |\mathcal{X}| \cdot |\mathcal{Y}|$.
\end{Definition}

To see what this objective measures, note that under the constraint $X \perp Y \mid W$, it can be shown that
\begin{linenomath}
\begin{equation}
I(X,Y;\,W) = I(X;W) + I(Y;W) - I(X;Y).
\end{equation}
\end{linenomath}
Since $I(X;Y)$ does not depend on the choice of $W$, minimizing $I(X,Y;W)$  is equivalent to minimizing $I(X;W) + I(Y;W)$, i.e., the total amount  of information that $W$ carries about both random variables. Therefore, Wyner's formulation of common information finds the most parsimonious $W$ that renders $X$ and $Y$ conditionally independent, since such a $W$ fully explains their dependence while carrying as little information as possible.

From a linear-algebra perspective, it can be helpful to think of the alphabet size of $W$, $|\mathcal{W}|$, as the number of rank-one terms of a nonnegative matrix factorization (NMF) of the joint probability mass function (PMF) $P_{XY}(x,y)$. Specifically, the conditional independence condition $X \perp Y \mid W$ corresponds to 
\begin{linenomath} 
\begin{equation}
    P_{XY}(x,y) = \sum_{w} P_W(w)\, P_{X|W}(x\mid w)\, P_{Y|W}(y\mid w),
\end{equation}
\end{linenomath}
which is precisely an NMF of $P_{XY}$ with rank $|\mathcal{W}|$. Let $\text{rank}_+(P_{XY})$ be the minimum $|\mathcal{W}|$ for which such a factorization exists, i.e., the nonnegative rank of $P_{XY}$. It is known that $\text{rank}_+(P_{XY}) \leq \min(|\mathcal{X}|,|\mathcal{Y}|)$, since setting $W = X$ (or $Y$) always yields a feasible decomposition with $|\mathcal{X}|$ (or $|\mathcal{Y}|$) rank-one terms. However, a $W$ achieving the minimum in Wyner's definition does not need to coincide with the one achieving the minimum nonnegative rank, as the former minimizes $I(X,Y;W)$ over all valid decompositions, while the latter minimizes only $|\mathcal{W}|$. A detailed treatment of the connections between common information and the nonnegative rank can be found in~(\cite{yu2022common}{,} Section~7). 

The NMF of $P_{XY}$ is not unique in general. Uniqueness, up to relabeling of $W$, holds when both factor matrices are sufficiently scattered, a condition that favors sparse conditional PMFs $P_{X|W}$ and $P_{Y|W}$~\cite{huang2014,fu2019}. Checking whether given factor matrices are sufficiently scattered is NP-complete in general~(\cite{huang2014}, Remark~2), although simple necessary conditions, such as zero entries in the factor matrices, are easy to verify. These conditions compare factorizations with the same number of rank-one terms, whereas Wyner's problem compares all valid factorizations through $I(X,Y;W)$, so NMF uniqueness alone does not certify that a minimizer of~\eqref{eq:wyner} is unique.

The constraint $X \perp Y \mid W$ also yields a lower bound on $C(X;Y)$. For every feasible $W$, the chain rule and the data-processing inequality give $I(X,Y;W) \ge I(X;W) \ge I(X;Y)$. Minimizing over all feasible $W$ in~\eqref{eq:wyner} then gives $C(X;Y) \ge I(X;Y)$. Together with~\eqref{eq:gk_mi}, the two notions of common information are therefore ordered as
\begin{linenomath}
\begin{equation}
  K_{GK}(X;Y) \leq I(X;Y) \leq C(X;Y).
  \label{eq:ci_chain}
\end{equation}
\end{linenomath}
To see when the two notions coincide, consider adding the Gács--Körner constraint \linebreak $W = f(X) = g(Y)$ to~\eqref{eq:wyner}. For any $W$ that satisfies both constraints, we have that \linebreak $I(X;Y) = H(W) + I(X;Y|W) = H(W) \le K_{GK}(X;Y) \le I(X;Y)$, which forces \linebreak $H(W) = K_{GK}(X;Y)$. Hence, $W = W^*$ up to relabeling, and such a $W$ exists only in the equality case of~\eqref{eq:gk_mi}, namely when $X$ and $Y$ are conditionally independent given $W^*$. In that case, $W^*$ is feasible for~\eqref{eq:wyner} with $I(X,Y;W^*) = H(W^*) = I(X;Y)$, so equality holds throughout~\eqref{eq:ci_chain}. 

{\color{black}An equivalent characterization is due to M.~Kaplan, as reported in~(\cite{wyner1975}, Remark~F). Equality holds throughout~\eqref{eq:ci_chain} if and only if it is possible to write $X = (X',V)$ and $Y = (Y',V)$, where $X'$ and $Y'$ are conditionally independent given $V$. 
Here, $X = (X',V)$ means that each value of $X$ corresponds to a unique pair of values of $X'$ and $V$, and similarly for $Y$. Hence, $V$ is computable from each of $X$ and $Y$ and plays the role of the common part $W^*$, while $X'$ and $Y'$ are the private parts of the two views. In terms of the NMF presented above, after permuting its rows and columns, $P_{XY}$ corresponds to a block-diagonal decomposition and on each block admits a product-distribution representation, as illustrated in Figure~\ref{fig:block_diag}. The rank-one terms of the factorization then have pairwise disjoint row and column supports, and $W^*$ indexes them, so each of $X$ and $Y$ can determine $W^*$ from its own value. Outside this special case, the inequalities are strict.}

\vspace{-12pt}
\begin{figure}[H]
\centering
\includegraphics[width=0.8\textwidth]{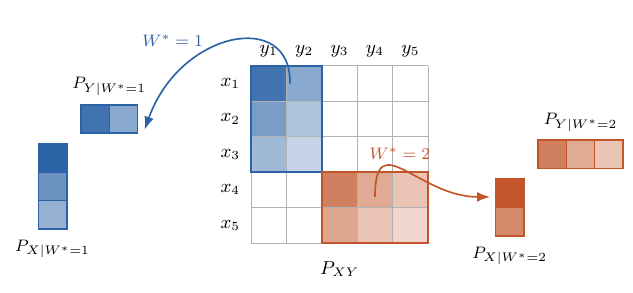}
\caption{{\color{black}{Joint} 
 PMF $P_{XY}$ in the case $K_{GK}(X;Y) = I(X;Y) = C(X;Y)$, with rows and columns permuted so that $P_{XY}$ is block-diagonal{. The block in blue corresponds to $W^*=1$, the block in orange to $W^*=2$, and darker shades indicate larger probabilities}. Each block equals $P_{W^*}(w)$ times the outer product of the conditional PMFs $P_{X|W^*=w}$ and $P_{Y|W^*=w}$, as the arrows indicate, which reflects the conditional independence of $X$ and $Y$ given $W^*$. The two blocks occupy disjoint rows and disjoint columns, so each of $X$ and $Y$ determines the common part $W^*$ from its own value. Hence, $X$ always reveals the block in which $Y$ lies, but nothing more about $Y$, and vice versa.}}
\label{fig:block_diag}
\end{figure}

\subsubsection*{Example: the Doubly Symmetric Binary Source}

A concrete case in which the three quantities differ is the \emph{doubly symmetric binary source} (DSBS), taken from~\cite{wyner1975}. Let $X$ and $Y$ be binary random variables related by a binary symmetric channel (BSC) with crossover probability $a_0 \in \left(0, \frac{1}{2}\right)$, as appears in Figure~\ref{fig:dsbs}. Assume that $X$ is a uniform bit and that $Y = X \oplus Z$, where $\oplus$ stands for modulo-2 addition and $Z \sim \mathrm{Bernoulli}(a_0)$ is independent of $X$. In this case, $H(X) = H(Y) = 1$ and $I(X;Y) = 1 - h(a_0)>0$, where $h(\lambda) = -\lambda \log \lambda - (1-\lambda)\log(1-\lambda)$ is the binary entropy function. Since all four pairs $(x,y)$ occur with positive probability, no nonconstant functions $f(X)$ and $g(Y)$ can agree with probability one. Hence, the Gács--Körner common information vanishes, i.e., $K_{GK}(X;Y) = 0$, even though $I(X;Y) > 0$.

\vspace{-6pt}
\begin{figure}[H]
\centering
\includegraphics[width=0.5\textwidth]{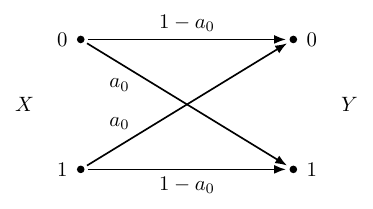}\vspace{-6pt}
\caption{The binary symmetric channel with crossover probability $a_0$ that relates $X$ and $Y$ in the doubly symmetric binary source. Each bit of $X$ is received correctly as $Y$ with probability $1-a_0$ and flipped with probability $a_0$.}
\label{fig:dsbs}
\end{figure}

Wyner showed that an optimal $W$, for this setup, is the intermediate variable in a cascade of two BSCs, each with crossover probability $a_1 = \tfrac{1}{2}(1 - \sqrt{1 - 2a_0})$, chosen so that the cascade is equivalent to a single BSC with crossover probability $a_0$, as shown in Figure~\ref{fig:wyner_dsbs}. In terms of the channel noise, the cascade reads
\begin{linenomath}
\begin{equation}
    W = X \oplus U, \qquad Y = W \oplus V,
\end{equation}
\end{linenomath}
where $U$ and $V$ are independent $\mathrm{Bernoulli}(a_1)$ random variables, independent of $X$. Hence, $Y = X \oplus (U \oplus V)$, where $U \oplus V \sim \mathrm{Bernoulli}\big(2a_1(1-a_1)\big)$ and $2a_1(1-a_1) = a_0$, which recovers the DSBS. The resulting common information is
\begin{linenomath}
\begin{equation}
    C(X;Y) = 1 + h(a_0) - 2h(a_1).
\end{equation}
\end{linenomath}
One can verify that $C(X;Y) > I(X;Y)$ for all $0 < a_0 < \frac{1}{2}$, providing a concrete instance where the inequality is strict. As $a_0 \to 0$, $X$ and $Y$ become identical and $C(X;Y) \to 1 = H(X) = H(Y)$, while as $a_0 \to \frac{1}{2}$, $X$ and $Y$ become independent and $C(X;Y) \to 0$. Hence, the DSBS illustrates the full chain $K_{GK}(X;Y) = 0 < I(X;Y) < C(X;Y)$ in a single concrete example. Finally, note that the optimal $W$ is not a function of $(X,Y)$. Given $X = Y = 0$, for instance, $W = 0$ when $U = V = 0$ and $W = 1$ when $U = V = 1$, and both events have positive probability.

Recall that the Gács--Körner definition required $W$ to be deterministically recoverable from each of $X$ and $Y$ individually, and preferably nontrivial. An interesting question is to ask what happens if we restrict Wyner's formulation to $W$'s that are \emph{jointly} determined by $X$ and $Y$, i.e., there exists a deterministic function $f$ such that $W = f(X,Y)$, and $W$ renders $X$ and $Y$ conditionally independent. Under this assumption, the conditional entropy $H(W|X,Y)$ vanishes, and the mutual information reduces to
\[
I(X,Y;W) = H(W) - H(W|X,Y) = H(W),
\]
so that Wyner's objective becomes an entropy and the problem reads\vspace{12pt}
\begin{linenomath}
\begin{equation}
\begin{split}
      G_f(X;Y) &= \min_{W} H(W)\\
      \text{ s.t. }  \;X \perp  Y &\mid W,\; W = f(X,Y).
\end{split}
\label{exact_wyner}
\end{equation}
\end{linenomath}
Among all functions of $(X,Y)$ that render $X$ and $Y$ conditionally independent, $G_f(X;Y)$ is the entropy of the most parsimonious one. Note that this is not $C(X;Y)$. The constraint $W = f(X,Y)$ shrinks the feasible set of~\eqref{eq:wyner}, so $G_f(X;Y) \ge C(X;Y)$, and the inequality can be strict.

\vspace{-6pt}
\begin{figure}[H]
\centering
\includegraphics[width=0.8\textwidth]{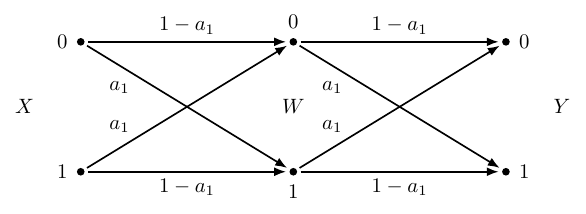}\vspace{-6pt}
\caption{Wyner's optimal common variable $W$ for the doubly symmetric binary source, shown as the intermediate variable in a cascade of two binary symmetric channels, each with crossover probability $a_1 = \tfrac{1}{2}\big(1-\sqrt{1-2a_0}\big)$. Since $2a_1(1-a_1) = a_0$, the cascade is equivalent to the single binary symmetric channel of Figure~\ref{fig:dsbs}, and $X$ and $Y$ are conditionally independent given $W$.}
\label{fig:wyner_dsbs}
\end{figure}

As the doubly symmetric binary source shows, Wyner's optimal $W$ need not
be a function of $(X,Y)$ at all. Dropping that constraint but keeping the entropy objective gives the \emph{common entropy}
\begin{linenomath}
\begin{equation}
  G_{CE}(X;Y) = \min_{W:\;X \perp Y \mid W} H(W) ,
  \label{eq:exact_ci}
\end{equation}
\end{linenomath}
of Kumar et~al.~\cite{kumar2014}. The common entropy is within one bit of the \emph{exact common information}, the minimum description length of the common randomness needed to generate $(X,Y)$ exactly. {{Kumar} et~al.~\cite{kumar2014} measure the exact common information by the minimum expected length of a prefix-free code for $W$, which lies between $G_{CE}(X;Y)$ and \mbox{$G_{CE}(X;Y) + 1$}. The extra bit is the cost of integer codeword lengths, as in Shannon's zero-error source coding theorem, and becomes negligible per symbol when many realizations are encoded jointly.} Since $I(X,Y;W) \le H(W)$ for every $W$, and $G_f$ minimizes the same entropy over a subset of the feasible set of $G_{CE}$, the chain~\eqref{eq:ci_chain} extends to $K_{GK}(X;Y) \le I(X;Y) \le C(X;Y) \le G_{CE}(X;Y) \le G_f(X;Y)$. In summary, what the views can extract in common lies below $I(X;Y)$, what is needed to explain their dependence lies above $I(X;Y)$, and the simplest explanation of their dependence that both views determine jointly lies highest of all. 

In the equality case of~\eqref{eq:ci_chain}, $W^*$ is also feasible for~\eqref{exact_wyner} and~\eqref{eq:exact_ci} with objective value $H(W^*) = I(X;Y)$, so $K_{GK}(X;Y) = I(X;Y) = C(X;Y) = G_{CE}(X;Y) = G_f(X;Y)$. Gács and Körner maximize the entropy of a common function, while Wyner minimizes the information carried by a factor that explains the dependence, and the two meet exactly when the common part explains the entire dependence~(\cite{csiszar2011}, Chapter~16).

\subsection{The Gray--Wyner Network: An Operational Meaning}

Wyner showed that $C(X;Y)$, beyond its mathematical elegance, also has a precise operational meaning through the \emph{Gray--Wyner network}~\cite{grawyner1974}, illustrated in Figure~\ref{fig:gray_wyner}. {{A quantity} is said to have an \emph{operational meaning} when it is the answer to a concrete coding problem rather than only the value of an optimization problem one has written down. The standard example is the entropy $H(X)$, defined as $-\sum_x p(x)\log p(x)$ and given operational meaning by Shannon's source coding theorem as the minimum rate at which $X$ can be losslessly compressed. Wyner's $C(X;Y)$ is defined by the minimization in~\eqref{eq:wyner}. This section gives $C(X;Y)$ an operational meaning as a rate in the Gray--Wyner network}. 

\begin{figure}[H]
\centering
\begin{tikzpicture}[
    block/.style={rectangle, draw, thick, rounded corners, 
                  minimum width=1.8cm, minimum height=1cm, align=center},
    arrow/.style={->, thick},
    line/.style={thick}
]
\node[block] (enc) {Encoder};
\node[block, right=4cm of enc, yshift=2cm]  (dec1) {Receiver 1};
\node[block, right=4cm of enc, yshift=-2cm] (dec2) {Receiver 2};
\node[left=1.5cm of enc] (input) {$(X^n, Y^n)$};
\draw[arrow] (input) -- (enc);
\draw[arrow] (enc.north) -- ++(0,1.5cm) -- node[above, pos=0.5] {$R_1$} (dec1.west);
\coordinate (split) at ($(dec1.south)!0.5!(dec2.north)$);
\draw[line] (enc.east) -- node[above] {$R_0$} (enc.east -| split);
\draw[arrow] (enc.east -| split) -- (dec1.south);
\draw[arrow] (enc.east -| split) -- (dec2.north);
\draw[arrow] (enc.south) -- ++(0,-1.5cm) -- node[below, pos=0.5] {$R_2$} (dec2.west);
\node[right=1.5cm of dec1] (xout) {$X^n$};
\node[right=1.5cm of dec2] (yout) {$Y^n$};
\draw[arrow] (dec1) -- (xout);
\draw[arrow] (dec2) -- (yout);
\end{tikzpicture}
\caption{The Gray--Wyner source coding network. An encoder observes $(X^n, Y^n)$ and describes them to two receivers using a common link of rate $R_0$ and two private links of rates $R_1$ and $R_2$. Wyner's common information $C(X;Y)$ is the minimum $R_0$ among achievable rate triples with \mbox{$R_0 + R_1 + R_2 = H(X,Y)$.}
\label{fig:gray_wyner}}
\end{figure}

Given $n$ i.i.d.\ jointly observed samples of $(X,Y)$, we consider the problem of encoding them for lossless reconstruction at two separate receivers. Receiver~1 wishes to reconstruct $X^n$ and Receiver~2 wishes to reconstruct $Y^n$. The two receivers are connected to a common transmitter via a network of three channels. One common channel is shared by both receivers, and two private channels are dedicated to the two receivers, one for each. The rate of each channel, measured in bits per sample, is defined as $R_0$ for the common channel and $R_1$, $R_2$ for the private channels. Receiver~1 thus obtains $R_0 + R_1$ bits per sample and Receiver~2 obtains $R_0 + R_2$, while the transmitter sends $R_0 + R_1 + R_2$ in total. By Shannon's source coding theorem, the optimal total rate for lossless reconstruction is $H(X,Y)$ bits per sample. The interest in this network lies in the fact that the total rate can remain at its minimum $H(X,Y)$ while each receiver obtains less than $H(X,Y)$, which saves bandwidth, storage and decoding effort at the receivers at no extra cost to the transmitter. The goal is to characterize all achievable rate triples $(R_0, R_1, R_2)$ that allow lossless reconstruction. Among these, we seek, at the minimum total rate, the encoding that keeps the common rate $R_0$, which represents the information that must necessarily be shared between the two receivers, as small as possible.

Gray and Wyner showed that $H(X,Y)$ remains optimal in the distributed setting described above, since every achievable triple satisfies $R_0 + R_1 + R_2 \ge H(X,Y)$, and some triples attain equality~(\cite{grawyner1974}, Theorems~2 and~3). Since the common channel is the only one received by both receivers, $R_0$ is precisely the rate of information shared between them, while $R_1$ and $R_2$ carry information private to each receiver. Returning to the question raised above, namely the smallest common rate $R_0$ that allows lossless reconstruction at the minimum total rate, Wyner showed that the answer is precisely $C(X;Y)$,
\begin{linenomath}
\begin{equation}
  \min\{R_0 : (R_0, R_1, R_2) \text{ achievable at total rate }
  H(X,Y)\} = C(X;Y),
\end{equation}
\end{linenomath}
revealing that $C(X;Y)$ is the irreducible amount of information that must be shared between the two receivers for lossless reconstruction at the minimum total rate.

\subsection{Extension of Wyner's Common Information to Multiple Variables}
\label{sec:wyner_K}

The two-variable definition of Wyner's common information admits a natural extension to $K$ random variables. Liu, Xu, and Chen~\cite{liu2010} proposed defining the common information of $K$ jointly distributed random variables $X_1, \ldots, X_K$ by imposing the conditional independence structure through the introduction of an auxiliary random variable~$W$.

\begin{Definition}[$K$-variable Wyner common 
information~\cite{liu2010}]
The common information of $K$ discrete jointly distributed 
random variables $X_1, \ldots, X_K$ is
\begin{linenomath}
\begin{equation}
  C(X_1; \ldots; X_K) 
  = \inf_{W} ~I(X_1, \ldots, X_K;\, W),
  \label{eq:wyner_K}
\end{equation}
\end{linenomath}
where the infimum is over all joint distributions of $(X_1, \ldots, X_K, W)$ that preserve the marginal $P_{X_1 \cdots X_K}(x_1, \ldots, x_K)$ and satisfy the full conditional independence
\begin{linenomath}
\begin{equation}
  P_{X_1 \cdots X_K|W}(x_1, \ldots, x_K \mid w) 
  = \prod_{k=1}^K P_{X_k|W}(x_k \mid w).
  \label{eq:wyner_K_ci}
\end{equation}
\end{linenomath}
\end{Definition}

For $K=2$, the full conditional independence $P_{X_1, X_2|W}(x_1, x_2 \mid w) = P_{X_1|W}(x_1 \mid w) P_{X_2|W}(x_2 \mid w)$ reduces to $X_1 \perp X_2 \mid W$, recovering Wyner's original definition. As in the two-variable case, the conditional independence constraint~\eqref{eq:wyner_K_ci} admits an interpretation as a nonnegative tensor factorization (NTF). The joint PMF $P_{X_1 \cdots X_K}(x_1, \ldots, x_K)$ can be viewed as a $K$-way nonnegative tensor of dimensions $|\mathcal{X}_1| \times \cdots \times |\mathcal{X}_K|$, and the conditional independence condition corresponds to the canonical polyadic (CP) decomposition~\cite{hitchcock1927,carroll1970,harshman1970}
\begin{linenomath}
\begin{equation}
  P_{X_1 \cdots X_K}(x_1, \ldots, x_K)
  = \sum_{w=1}^{|\mathcal{W}|} P_W(w) \prod_{k=1}^K P_{X_k|W}(x_k \mid w),
\end{equation}
\end{linenomath}
with $|\mathcal{W}|$ rank-one factors. As noted by Lim and Comon~\cite{lim2009}, this tensorial generalization of NMF corresponds probabilistically to the naive Bayes model, shown in Figure~\ref{fig:naive_bayes}, in which $X_1, \ldots, X_K$ are conditionally independent given a latent variable $W$. The nonnegative rank $\mathrm{rank}_+(P_{X_1 \cdots X_K})$ is the minimum $|\mathcal{W}|$ for which such a decomposition exists.

\begin{figure}[H]
\centering
\begin{tikzpicture}[
  every node/.style={font=\small},
  lat/.style={circle,draw,thick,minimum size=9mm,inner sep=0pt},
  obs/.style={circle,draw,thick,fill=black!8,minimum size=9mm,inner sep=0pt}]
\node[lat] (W)  at (0,1.7) {$W$};
\node[obs] (X1) at (-3,0)  {$X_1$};
\node at (-1.5,0) {$\cdots$};
\node[obs] (Xk) at (0,0)   {$X_k$};
\node[obs] (XK) at (3,0)   {$X_K$};
\node at (1.5,0) {$\cdots$};
\draw[->,thick] (W) -- (X1);
\draw[->,thick] (W) -- (Xk);
\draw[->,thick] (W) -- (XK);
\end{tikzpicture}
\caption{The naive Bayes model. A single latent variable $W$ generates every
observation, and no edge joins the $X_k$ to one another, so they are mutually
independent once $W$ is given. This is the conditional
independence~\eqref{eq:wyner_K_ci} that Wyner's construction requires, and
writing it out on a discrete joint PMF gives the nonnegative tensor
factorization above, with one rank-one factor per value of $W$.
\label{fig:naive_bayes}}
\end{figure}

As in the two-variable case, for an optimal $W^*$ of~\eqref{eq:wyner_K} we have $|\mathcal{W}^*| \leq \prod_{k=1}^K |\mathcal{X}_k|$, and combined 
with the above, we obtain
\begin{linenomath}
\begin{equation}
  \mathrm{rank}_+(P_{X_1 \cdots X_K}) \leq |\mathcal{W}^*| \leq \prod_{k=1}^K |\mathcal{X}_k|.
  \label{NTF_rank}
\end{equation}
\end{linenomath}
Interestingly, $\mathrm{rank}_+(P_{X_1 \cdots X_K})$ admits a tighter upper bound than $\prod_{k=1}^K |\mathcal{X}_k|$. To see this, assume without loss of generality that $|\mathcal{X}_1| \leq \cdots \leq |\mathcal{X}_K|$ and write
\begin{linenomath}
\small
\begin{equation*}
  P(x_1, \ldots, x_K) = \sum_{x_1' \in \mathcal{X}_1} \cdots \hspace{-4mm}
  \sum_{x_{K-1}' \in \mathcal{X}_{K-1}} \hspace{-4mm}P(x_1', \ldots, x_{K-1}')
  \prod_{k=1}^{K-1} \mathbf{1}[x_k = x_k'] \cdot 
  P(x_K \mid x_1', \ldots, x_{K-1}'),
\end{equation*}
\end{linenomath}
where the indices have been dropped from $P$ for space economy. Each term in this sum is a weighted rank-one conditional PMF. By letting $P_W(w) = P_{X_1 \cdots X_{K-1}}(x_1', \ldots, x_{K-1}')$, we obtain an implicit random variable $W$ which takes at most $\prod_{k=1}^{K-1} |\mathcal{X}_k|$ values. Since this holds for any subset of $K-1$ alphabets and $X_K$ has the largest alphabet, we get $\mathrm{rank}_+(P_{X_1 \cdots X_K}) \leq \prod_{k=1}^{K-1} |\mathcal{X}_k|$. As a result, the two ends of \eqref{NTF_rank} are always separated by a factor of at least $|\mathcal{X}_K|$.

Another interesting observation is that the CP decomposition can be essentially unique, i.e., unique up to permutations of the rank-one factors, under mild sufficient conditions~\cite{kruskal1977, sidiropoulos2000, domanov2017, kargas2018}. The most well-known of these is Kruskal's condition~\cite{kruskal1977, sidiropoulos2000}, which states that uniqueness is guaranteed when the Kruskal ranks $\kappa_k$ of the factor matrices $\mathbf{M}_k(x_k, w):=P_{X_k|W}(x_k \mid w)$, $k = 1, \ldots, K$, satisfy
\begin{linenomath}
\begin{equation}
\sum_{k=1}^K \kappa_k \geq 2|\mathcal{W}| + K - 1.
\end{equation}
\end{linenomath}
The Kruskal rank of a factor matrix $\mathbf{M}_k$ is defined as the largest integer $\kappa_k$ such that every subset of columns of size $\kappa_k$ is linearly independent. Hence, it can be at most equal to $\min(|\mathcal{X}_k|, |\mathcal{W}|)$. This maximum is attained almost surely for factors whose entries are sampled independently from absolutely continuous distributions on $\mathbb{R}_{+}$, with each column subsequently normalized by its sum. This construction covers most distributions of practical interest, essentially excluding only degenerate cases where mass concentrates on a lower-dimensional set. Indeed, since normalization does not affect linear dependence, the Kruskal rank of $\mathbf{M}_k$ matches that of its unnormalized counterpart. For the latter, linear dependence of any $\min(|\mathcal{X}_k|, |\mathcal{W}|)$ columns forces a nontrivial determinantal polynomial to vanish~\cite{stegeman2006}, which happens with probability zero.

While the definition is a direct and elegant generalization, it has a counterintuitive property. One would expect that the common information shared among a growing set of variables should not increase. Intuition suggests that something truly common to all variables cannot increase as more variables are included, much like the intersection of sets cannot increase as more sets are added. However, $C(X_1; \ldots; X_K)$ is monotonically \emph{non-decreasing} in $K$~\cite{liu2010}. This follows from the observation that if $X_1, \ldots, X_K$ are conditionally independent given $W$, then any subset $X_{i_1}, \ldots, X_{i_L}$ is also conditionally independent given $W$, but not conversely. Therefore, the feasible set of $W$'s can only remain the same or shrink as more variables are added. Moreover, for any $W$ feasible for the full set, $I(X_{i_1}, \ldots, X_{i_L}; W) \le I(X_1, \ldots, X_K; W)$, since adding variables cannot decrease mutual information. Combining these two observations leads to
\begin{linenomath}
\begin{equation}
  C(X_{i_1}; \ldots; X_{i_L}) 
  \leq C(X_1 ;\ldots; X_K) 
  \quad \text{for any } 
  \{i_1, \ldots, i_L\} \subseteq \{1, \ldots, K\}.
\end{equation}
\end{linenomath}
This monotonicity arises because the full conditional independence constraint~\eqref{eq:wyner_K_ci} becomes more demanding as $K$ grows, forcing $W$ to carry a richer description of the joint distribution. As a concrete illustration, provided in~\cite{liu2010}, consider $X_1 = (X', U, V)$, $X_2 = (Y', V, T)$, and $X_3 = (Z', T, U)$, where $X', Y', Z', U, V, T$ are mutually independent. Then $C(X_1; X_2) = H(V)$, $C(X_1; X_3) = H(U)$, and $C(X_2; X_3) = H(T)$, but $C(X_1; X_2; X_3) = H(U, V, T) = H(U) + H(V) + H(T)$, which exceeds every pairwise value.

It is worth noting that the Gács--Körner common information, discussed in Section~\ref{sec:gk}, does exhibit the expected non-increasing behavior when extended to $K$ variables. Adding more variables can only further constrain the deterministic function that all parties must agree on, so $K_{GK}(X_1; \ldots; X_K)$ can only shrink or remain the same as $K$ grows. Thus the two classical notions of common information exhibit opposite monotonicity in the multivariable setting. The Gács--Körner quantity captures the intersective, shared-structure view, while generalized Wyner common information captures the cost of fully explaining all dependencies, which naturally grows with the number of variables. 
In the next subsection, we present the work of Tandon, Sankar, and Poor~\cite{tandon2011}, motivated by multi-user privacy in the Gray--Wyner network, whose resulting common information is monotonically non-increasing in $K$ and can be viewed as a natural generalization of the Gács--Körner common information to $K$ variables.

\subsection{Common Information Under Multi-User Network Privacy}
\label{sec:poor_ext}

The Gray--Wyner network extends naturally to $K$ jointly observed sources $X_1, \ldots, X_K$ at a single encoder and $K$ separate receivers, where Receiver~$k$ losslessly reconstructs $X_k$ from the common channel of rate $R_0$ and a private channel of rate $R_k$. In the same work that introduced~\eqref{eq:wyner_K}, Liu, Xu, and Chen~\cite{liu2010} showed that $C(X_1; \ldots; X_K)$ is the minimum common rate $R_0$ among achievable rate tuples with total rate $R_0 + \sum_{k=1}^K R_k = H(X_1, \ldots, X_K)$. In many practical settings, however, privacy across users is also a concern. Tandon, Sankar, and Poor~\cite{tandon2011} studied this network under a privacy requirement, measured by the equivocation $E_k$ of Receiver~$k$, i.e., the per-sample uncertainty that remains about the sources other than $X_k$ given what Receiver~$k$ observes.

To state the results appropriately, we first consider the rate-equivocation region from~\cite{tandon2011}. The region $\mathcal{R}$ of achievable rate-equivocation $(K+2)$-tuples $(\{R_k\}_{k=1}^K, R_0, \Delta)$ is the union over all auxiliary random variables $W$, arbitrarily correlated with all $X_1$, \ldots, $X_K$, of the tuples satisfying
\begin{linenomath}
\begin{align}
  R_0 &\geq I(X_1, \ldots, X_K;\, W), \label{eq:gw_R0}\\
  R_k &\geq H(X_k \mid W),  \quad k = 1, \ldots, K, \label{eq:gw_Rk}\\
  \Delta &\leq \sum_{k=1}^{K} H(X_1, \ldots, X_K \mid W, X_k),
    \label{eq:gw_Delta}
\end{align}
\end{linenomath}
where the first inequality restricts the feasible $W$'s to those carrying no more mutual information about the sources than a common rate of $R_0$ allows, the second restricts them to those leaving at most $R_k$ bits of residual uncertainty in each source $X_k$, and the third restricts them even further to those achieving a total equivocation, i.e., the aggregate uncertainty that all receivers retain about the sources not intended for them, of at least $\Delta$.

To explore such regions, the authors in \cite{tandon2011} introduce two definitions of common information for $K$ correlated variables, each motivated by a specific engineering goal. The first addresses the following question. If maximal privacy is required, so that no decoder learns more than necessary about the sources not intended for it, what is the largest common rate that the system can afford? Mathematically speaking, this question can be formulated as follows
\begin{Definition}[\cite{tandon2011}] The privacy-optimal common information $C_1$ is the largest common rate $R_0$ compatible with maximal total equivocation,
\begin{linenomath}
\begin{equation}
  C_1 = \max\left\{ R_0 : 
    (\{R_k\}_{k=1}^K, R_0, \Delta_{\max}) 
    \in \mathcal{R} \right\},
\end{equation}
\end{linenomath}
where $\Delta_{\max} = \sum_{k=1}^{K} H(X_1, \ldots, X_K \mid X_k)$ is the 
largest total equivocation achievable in $\mathcal{R}$, i.e., the maximal possible sum of the residual uncertainties each decoder retains about the remaining sources when given only its own.
\end{Definition}

Here, a higher common rate is desirable because the common link is shared by all 
decoders and therefore reduces the total bandwidth needed across the private links. At the same time, a high common rate may force $W$ to carry information about the unintended sources, eroding privacy. $C_1$ is precisely the largest rate at which this trade-off still permits maximal privacy.

The second definition addresses efficiency rather than privacy. Recall that the 
total rate spent delivering source $X_k$ is $R_0 + R_k$, split between the shared 
and private links, and that lossless reconstruction requires this total to be at 
least $H(X_k)$. We say no bandwidth is wasted on $X_k$ when the total meets the 
reconstruction floor $H(X_k)$ exactly, i.e., \linebreak $R_k = H(X_k) - R_0$. Under this 
condition, increasing $R_0$ puts more of each source's description on the shared 
link and less on its private one, but only up to a point. Once $R_0$ exceeds that point, the shared link has to carry information that some receiver does not need. The following definition focuses on the maximal rate $R_0$ at which no bandwidth is wasted on any source.

\begin{Definition}[\cite{tandon2011}] The efficiency-optimal common information $C_2$ is the largest common rate $R_0$ at which every source is described at exactly its own entropy,
\begin{linenomath}
\begin{equation}
  C_2 = \max\left\{ R_0 : 
    (\{H(X_k) - R_0\}_{k=1}^K, R_0) 
    \in \mathcal{R}_{G\text{-}W} \right\},
\end{equation}
\end{linenomath}
where $\mathcal{R}_{G\text{-}W}$ is the Gray--Wyner rate region without the equivocation constraint, consisting of all $(K+1)$-tuples $(\{R_k\}_{k=1}^K, R_0)$ satisfying~\eqref{eq:gw_R0} and~\eqref{eq:gw_Rk} for some auxiliary random variable $W$.
\end{Definition}

Interestingly, the main result of~\cite{tandon2011} is that privacy-optimal common information $C_1$ and efficiency-optimal common information $C_2$ coincide, i.e.,
\begin{linenomath}
\begin{equation}
  C_1 = C_2 = C_{\text{TSP}}(X_1; \ldots; X_K) 
  = \max_{\substack{W:\; W - X_k - \bar{X}_{\setminus k},\\
  k = 1, \ldots, K}} 
  I(X_1, \ldots, X_K;\, W),
  \label{eq:poor_ci}
\end{equation}
\end{linenomath}
where $\bar{X}_{\setminus k}$ represents all sources except $X_k$, and the Markov chain $W - X_k - \bar{X}_{\setminus k}$ requires that $W$ be conditionally independent of the remaining sources given $X_k$ alone. In other words, $W$ can only provide information about the other sources through $X_k$, which is precisely the structural condition that prevents decoder~$k$ from learning anything about them beyond what $X_k$ already reveals. Since this must hold for every $k$ simultaneously, an optimal $W$ can capture only information that is available from every source individually, i.e., information that any single $X_k$ suffices to provide. The equivalence in (\ref{eq:poor_ci}) shows that privacy and efficiency are not in conflict here, as the largest common rate that wastes no bandwidth is exactly the largest common rate compatible with maximal privacy.

This intersective character distinguishes $C_{\text{TSP}}$ from the $K$-variable extension of Wyner's common information $C(X_1; \ldots; X_K)$~\cite{liu2010}, which instead requires the full conditional independence $P_{X_1\cdots X_K|W}(x_1, \ldots, x_K \mid w) = \prod_{k=1}^K P_{X_k|W}(x_k \mid w)$ and takes the form of a minimization. The 
two quantities satisfy~\cite{tandon2011}
\begin{linenomath}
\begin{equation}
  C_{\text{TSP}}(X_1; \ldots; X_K) \leq \min_{i \neq j} I(X_i; X_j)
  \leq \max_{i \neq j} I(X_i; X_j) 
  \leq C(X_1; \ldots; X_K),
\end{equation}
\end{linenomath}
and can differ strictly. Moreover, $C_{\text{TSP}}$ is monotonically non-increasing in $K$, as one would expect from a measure of commonality, whereas $C$ is monotonically non-decreasing in $K$, since requiring full conditional independence across more variables demands a richer description. Therefore, $C_{\text{TSP}}$ can be placed closer to the $K$-variable Gács--Körner extension of Section~\ref{sec:gk} than to Wyner's notion. This resemblance is structural rather than definitional, since $C_{\text{TSP}}$ arises from a 
rate--privacy formulation rather than from deterministic~\mbox{recoverability}.

\subsection{Extension of Wyner's Common Information to Continuous Variables}
\label{sec:continuous_ci}

The definitions developed so far were stated for discrete sources, but the central
quantities are written through mutual information, which can be defined for continuous variables as well. Wyner's common information $C(X;Y)$, its $K$-variable extension $C(X_1;\dots;X_K)$, and the Tandon--Sankar--Poor quantity
$C_{\mathrm{TSP}}(X_1;\dots;X_K)$ therefore carry over to continuous sources unchanged, and the conditional-independence constraints that underlie them remain meaningful there. The extension, however, is not free. For it to be successful, three things must hold, and each can fail for continuous sources in a way it never does for discrete~ones.

The first is that the mutual information $I(X;Y)$ must exist as a finite quantity.
For discrete variables with finite alphabets, $I(X;Y)$ is always finite. For continuous variables, $I(X;Y)$ may be infinite if the joint distribution is concentrated in a way that makes the comparison against independence ill-behaved. In practice, this is a mild condition. Rigorous conditions require that the joint law $P_{XY}$ be absolutely continuous with respect to the product of the marginals $P_X \otimes P_Y$, so that the Radon--Nikodym derivative $\frac{dP_{XY}}{d(P_X \otimes P_Y)}$ exists. The mutual information is then the relative entropy
\begin{linenomath}
\begin{equation}
  I(X;Y) = D\!\left(P_{XY} \,\big\|\, P_X \otimes P_Y\right)
         = \int \log \frac{dP_{XY}}{d(P_X \otimes P_Y)} \, dP_{XY},
\end{equation}
\end{linenomath}
and existence in the useful sense means this integral converges, i.e., the
information density $\log \frac{dP_{XY}}{d(P_X \otimes P_Y)}$ is
$P_{XY}$-integrable. When joint and marginal densities exist, the derivative is
the likelihood ratio $p_{XY}/(p_X p_Y)$ and the requirement reduces to the
convergence of $\int p_{XY} \log \frac{p_{XY}}{p_X p_Y}\,dx\,dy$. Absolute continuity fails when $P_{XY}$ places positive mass on a set of zero $P_X \otimes P_Y$ measure, as when $Y = f(X)$ with both $X$ and $Y$ continuous, in which case the divergence is $+\infty$ and $I(X;Y)$ ceases to be finite.

The second is that $C(X;Y)$ itself must be defined and finite, and this is not
\linebreak guaranteed by the first condition. The common information is an infimum,
\linebreak $C(X;Y) = \inf_{X-W-Y} I(X,Y;W)$, over all latent variables $W$ that render $X$
and $Y$ conditionally independent. A finite $I(X;Y)$ does not guarantee the finiteness of this infimum by itself, as $I(X;Y)$ provides only a lower bound of $C(X;Y)$. For discrete sources with finite alphabets, the choice $W = (X,Y)$ gives the free bound $C(X;Y) \le H(X,Y)$, so finiteness is automatic. For continuous sources that same choice may give infinite cost, and the bound is lost. In practice, finiteness of $C$ must therefore be established from the structure of the two sources rather than assumed.

The third is that $C(X;Y)$ must carry an operational meaning, and this is where the
discrete argument fails most visibly. For discrete sources with finite alphabets, $C(X;Y)$ is the smallest rate on the shared link of the Gray--Wyner network at which both sources are reconstructed losslessly, with the total rate equal to the joint entropy $H(X,Y)$. Such a source has finite entropy and, thus, lossless asymptotic reconstruction is attainable at finite rate and the question is well posed. For a continuous source, the same question is no longer well posed, since describing a real number exactly takes infinitely many bits. Hence, no finite rate, common or private, can reconstruct a continuous source losslessly, and the interpretation that gave $C(X;Y)$ its operational meaning disappears.

The way to restore an operational meaning is to ask a gentler question. Instead of seeking lossless reconstruction, one can investigate whether it is possible to allow each decoder a controlled distortion, i.e., a tolerance on the reconstruction error. For continuous sources this is the natural finite-rate question. This is the path taken by Xu, Liu, and Chen~\cite{xu2011continuous}, who place the Gray--Wyner network under a fidelity criterion. The encoder now observes a block of $n$ i.i.d.\ source pairs $(X^n, Y^n)$, where $X^n = (X_1, \dots, X_n)$ collects $n$ independent draws and $X_i$ stands for the $i$-th symbol (and likewise for $Y$). Let $d_X$ and $d_Y$ be two per-symbol distortion measures. Receiver~1 has to reconstruct $X^n$ as $\hat X^n$ within distortion $\Delta$, and Receiver~2 to reconstruct $Y^n$ as $\hat Y^n$ within $\Delta$, as well, rather than losslessly. A rate triple $(R_0, R_1, R_2)$ is \emph{achievable at distortion} $\Delta$ if, for all sufficiently large blocklengths $n$, there is a code of these rates whose average per-symbol distortions satisfy
\begin{linenomath}
\begin{equation}
  \tfrac{1}{n}\textstyle\sum_{i=1}^n \mathbb{E}\, d_X(X_i, \hat{X}_i) \le \Delta,
  \qquad
  \tfrac{1}{n}\textstyle\sum_{i=1}^n \mathbb{E}\, d_Y(Y_i, \hat{Y}_i) \le \Delta.
\end{equation}
\end{linenomath}
Let $R_{XY}(\Delta, \Delta)$ be the joint rate--distortion function, i.e., the
least total rate $R_0 + R_1 + R_2$ at which both sources can be reconstructed within $\Delta$. {{For} possibly nonsymmetric distortions, the joint rate--distortion function is $R_{XY}(\Delta_1, \Delta_2) = \min_{P_{\hat{X}\hat{Y}|XY}} I(X, Y; \hat{X}, \hat{Y})$, the minimum taken over all reconstructions $(\hat{X}, \hat{Y})$ with $\mathbb{E}\,d_X(X, \hat{X}) \le \Delta_1$ and $\mathbb{E}\,d_Y(Y, \hat{Y}) \le \Delta_2$. That this minimum equals the least total rate defined above is the rate--distortion theorem; see~\cite{cover2006} for the rate--distortion background.} {Returning to the symmetric case, the} object of interest is the smallest common rate attained on the minimum-total-rate plane, 
\vspace{-3pt}
\begin{adjustwidth}{-\extralength}{0cm}
\begin{linenomath}
\begin{equation}
  C_3(\Delta) = \inf\Bigl\{ R_0 :\ (R_0, R_1, R_2)\ \text{achievable at distortion }
  \Delta,\ \ R_0 + R_1 + R_2 = R_{XY}(\Delta, \Delta) \Bigr\}.
  \label{eq:C3_def}
\end{equation}
\end{linenomath}
\end{adjustwidth}
{Strictly, the operating point is reached only asymptotically. The common rate, the total rate, and the two distortions may exceed $R_0$, $R_{XY}(\Delta,\Delta)$, and $\Delta$, respectively, by an arbitrarily small amount for sufficiently large $n$}~(\cite{xu2011continuous}, Definitions~2 and~3).

The simplest version of the result in \cite{xu2011continuous} is the vanishing-distortion limit. Assume that the joint distribution of $(X,Y)$ has full support and that the distortion measure is normal, i.e., $d(x, \hat{x}) = 0$ if and only if $x = \hat{x}$, so that zero distortion is exactly lossless reconstruction. Then
\begin{linenomath}
\begin{equation}
  \lim_{\Delta \to 0} C_3(\Delta) = C(X;Y),
  \label{eq:ci_limit}
\end{equation}
\end{linenomath}
which means that the common information can be recovered as the limit of these lossy problems as the tolerance vanishes. For each $\Delta > 0$, $C_3(\Delta)$ is the common rate of a genuine lossy coding problem. For a discrete source with finite alphabets the value $C(X;Y)$ is already attained at $\Delta = 0$, since that is itself a valid lossless problem. For a continuous source $\Delta = 0$ is unattainable, as exact reconstruction needs infinite rate, so \eqref{eq:ci_limit} is a statement about the vanishing-distortion limit. As we will see next, when $C(X;Y)$ is finite,  $C(X;Y)$ can nonetheless be attained at positive distortion.



This asymptotic statement can be sharpened, and the sharpening was developed from two complementary angles before being unified. Xu, Liu, and Chen first approached $C_3(\Delta)$ from the operational side~\cite{xu2011continuous}, showing that $C_3(\Delta)$ does not merely approach $C(X;Y)$ but can stay exactly equal to $C(X;Y)$ over a whole range of distortions. Such a plateau exists in a neighborhood of the origin whenever the joint distribution has full support and the distortion is normal, and for symmetric distortion its edge is $\gamma = R_{XY}^{-1}(C(X;Y))$. Building on this work, Viswanatha, Akyol, and Rose~\cite{viswanatha2012} approached the same quantity from the single-letter side. The characterization in~\cite{xu2011continuous} is asymptotic, holding in the blocklength limit $n \to \infty$. Instead, Viswanatha, Akyol, and Rose gave an equivalent expression for $C_3$ as an optimization problem over an auxiliary random variable at a single symbol, with no blocklength, and hence computable directly from the joint distribution. They characterize $C_3$ through the optimal rate--distortion reconstructions $(X^{*},Y^{*})$ at distortion $(\Delta_1,\Delta_2)$, proving
\begin{linenomath}
\begin{equation}
\begin{split}
    C_3(\Delta_1,\Delta_2) &= \inf_{U}~ I(X,Y;U) \\
    \text{s.t.}~~~ X^{*}&\perp Y^{*}\mid U , \\
                 (X,Y) &\perp U \mid (X^{*},Y^{*}).
\end{split}
\end{equation}
\end{linenomath}
Here $C_3(\Delta_1,\Delta_2)$ is the general, possibly non-symmetric, version of the common rate, and $C_3(\Delta)=C_3(\Delta,\Delta)$ the symmetric case used above.
The first constraint makes $U$ a common description of the two reconstructions, and
the second confines $U$ to see the sources only through those reconstructions. Here $X^{*}$ and $Y^{*}$ are produced by any channel that achieves the joint rate--distortion function at $(\Delta_1,\Delta_2)$, so they depend on both distortions.
They also noted that convexity and monotonicity of $C_3$ in the distortion are hard to establish in general, which foreshadows the difficulty of pinning down the plateau's extent.

Xu, Liu, and Chen consolidated both viewpoints in a single rigorous analysis in~\cite{xu2016lossy}, one covering discrete and continuous alphabets alike. They also made precise how far the plateau extends for some cases. The edge characterization $\gamma = R_{XY}^{-1}(C(X;Y))$ stated above for the symmetric case holds under a sufficient condition, which they identified as successive refinability~\cite{equitz1991} of the pair $(X,Y)$ and of each source separately. The idea can be understood through the coding scheme that realizes the plateau, which is a layered one. Wyner's variable $W$ carries a common message at rate $C(X;Y)$ and gives each receiver a coarse reconstruction of its source. Then the private messages bring the reconstruction error down to the target distortion. A source is successively refinable when the two-stage scheme reaches the target distortion at the same total rate an optimal one-shot code would spend, i.e., there is no penalty for going in two stages. For such sources, the common rate $C(X;Y)$ and the private refinements together meet the joint budget $R_{XY}(\Delta,\Delta)$ exactly, so the scheme stays on the minimum-total-rate plane with the common rate fixed at $C(X;Y)$. This lasts until the budget itself falls to $C(X;Y)$, so the plateau reaches $\gamma = R_{XY}^{-1}(C(X;Y))$ exactly. When the source is not successively refinable, the refinement carries a penalty, so the common rate $C(X;Y)$ and the private refinements together exceed the joint budget $R_{XY}(\Delta,\Delta)$. The two-stage scheme is then no longer optimal, so the plateau is only guaranteed up to some $\gamma \le R_{XY}^{-1}(C(X;Y))$. The binary and Gaussian examples below are both successively refinable, so their edges follow this formula exactly. Therefore, for continuous sources, $C(X;Y)$ is more than a limit one can only approach. It is the exact rate the two receivers must share for lossy reconstruction, over a whole range of fidelities.

\subsubsection{Locating the Plateau: Binary and Gaussian Sources}

The doubly symmetric binary source (DSBS) makes the plateau concrete in the discrete case~\cite{xu2011continuous}. Its joint distribution is fixed by a binary 
symmetric channel, with $X$ a uniform bit and $Y$ its output under crossover 
probability $a_0$. This channel only sets the correlation between the two sources. 
In the Gray--Wyner network they are treated as correlated sources transmitted to two receivers, one reconstructing $X$ and the other $Y$. In this case, the common information is $C(X;Y) = 1 + h(a_0) - 2h(a_1)$, where $a_1 = \tfrac{1}{2}\big(1 - \sqrt{1-2a_0}\big)$ and $h$ is the binary entropy. Each receiver reconstructs under a Hamming distortion, which for binary sources is the average bit-error probability. Taking the two distortions equal to a common $\Delta$, it can be shown that the smallest common rate stays equal to $C(X;Y)$ for every $\Delta \le a_1$ and drops below $C(X;Y)$ only for $\Delta > a_1$. Because the DSBS is successively refinable, the plateau reaches its edge exactly. To locate the edge, note that at $\Delta = a_1$ the joint rate--distortion function equals the common information,
\begin{equation}
  R_{XY}(a_1, a_1) = 1 + h(a_0) - 2h(a_1) = C(X;Y),
  \label{eq:dsbs_edge}
\end{equation}
so $\gamma = R_{XY}^{-1}(C(X;Y)) = a_1$.

The continuous counterpart is given by the bivariate Gaussian. Let $X$ and $Y$ be 
zero-mean jointly Gaussian with covariance 
$\left[\begin{smallmatrix} 1 & \rho \\ \rho & 1 \end{smallmatrix}\right]$ and 
$\rho > 0$. Here all three conditions hold. The mutual information exists, the 
common information is finite for every $\rho < 1$, and Xu, Liu, and Chen give   $C(X;Y)$\vspace{12pt}
in closed form~\cite{xu2011continuous, xu2016lossy},
\begin{equation}
  C(X;Y) = \frac{1}{2}\log\frac{1+\rho}{1-\rho}.
  \label{eq:gauss_scalar}
\end{equation}
Under symmetric squared-error distortion, it can be shown that the smallest common rate $C_3(\Delta)$ stays equal to $C(X;Y)$ for every $\Delta \le 1-\rho$ and drops below $C(X;Y)$ only beyond. As a result, the plateau has edge $\gamma = 1-\rho$, the continuous analogue of $a_1$.

The Gaussian source also shows why lossless reconstruction had to 
be abandoned. Its symmetric joint rate--distortion function has three 
branches~\cite{berger1971},
\begin{equation}
  R_{XY}(\Delta, \Delta) =
  \begin{cases}
    \dfrac{1}{2}\log\dfrac{1-\rho^2}{\Delta^2}, 
      & 0 < \Delta \le 1-\rho, \\[2mm]
    \dfrac{1}{2}\log\dfrac{1+\rho}{2\Delta - (1-\rho)},
      & 1-\rho \le \Delta \le 1, \\[2mm]
    0, & \Delta \ge 1.
  \end{cases}
  \label{eq:gauss_rdf}
\end{equation}
On the low-distortion branch the rate is $\tfrac{1}{2}\log\frac{1-\rho^2}{\Delta^2}$, which diverges as $\Delta \to 0$. Zero distortion would require infinite rate, so $\Delta=0$ is not an attainable operating point, exactly the obstruction that condition three anticipated. The binary case differs, since the joint rate--distortion function of the DSBS stays finite at $\Delta = 0$, where $R_{XY}(0,0) = H(X,Y) = 1 + h(a_0)$. A discrete source with a finite alphabet can be reconstructed exactly at finite rate, but a continuous one cannot. This is why the plateau matters here. Its operating points are all at positive distortion, and across those up to $1-\rho$ the common rate equals $C(X;Y)$.

\subsubsection{Going Beyond Wyner's Common Information}

Thus far, we have treated $X$ and $Y$ as given and asked how efficiently they can be
compressed. The reverse question is also important. How would one engineer two
variables to exhibit a desired correlation, and how much shared randomness does that
take? This is the problem of distributed channel synthesis, analyzed by Cuff~\cite{cuff2013}, where Wyner's common information $C(X;Y)$ is the rate the two sides must communicate to build the correlation without sharing any randomness.
A different direction varies how faithfully the synthesized distribution must match
the target $P_{XY}$. Measuring that mismatch by a R\'enyi divergence of order $\alpha$, which for $\alpha > 1$ is stronger than the Kullback--Leibler one, gives the R\'enyi common information of Yu and Tan~\cite{yu2018,yu2018corrections}. As $\alpha$ grows, this quantity interpolates between Wyner's common information, which is recovered for every $\alpha \le 1$, and the exact common information of Kumar, Li, and El Gamal~\cite{kumar2014}, reached at $\alpha=\infty$~\cite{yu2020}. Finally, when the common information of two variables is infinite, a real-valued vector $\mathbf{w}$ is needed to carry the shared randomness. Li, Hanna, Diggavi, and Fragouli~\cite{li2025} consider the common information dimension, $\dim(\mathbf{w})$, and compute it in closed form for jointly Gaussian vectors. These directions lie beyond the scope of this tutorial, and we only point to them for interested readers. 

\subsubsection{Extension to Many Continuous Random Variables}

The passage from discrete to continuous carries over to more than two sources. The
$K$-variable Wyner common information $C(X_1; \ldots; X_K)$ is defined as in
Section~\ref{sec:wyner_K} under the requirement that $W$ renders all $K$
sources conditionally independent. Again, the main quantity is the mutual information, and the same three conditions of Section~\ref{sec:continuous_ci} must hold for $C(X_1; \ldots; X_K)$ to carry over to the continuous setting. Xu, Liu, and Chen developed the lossy interpretation for two sources and noted that the results extend to $K$ sources in a straightforward manner~\cite{xu2016lossy, xu2013generalizations}. According to the authors, both the plateau and its edge carry over, with the common rate holding at $C(X_1; \ldots; X_K)$ over a range of distortions. 

Furthermore, as in the two-view case, $C(X_1; \ldots; X_K)$ is finite for zero-mean unit-variance jointly Gaussian sources with common pairwise correlation $\rho \in (0,1)$. For any $K\geq 2$, these sources admit the decomposition
\begin{equation}
  X_k = \sqrt{\rho}\, W + \sqrt{1-\rho}\, N_k,
  \qquad k = 1, \ldots, K,
  \label{eq:gauss_decomp_K}
\end{equation}
with $W, N_1, \ldots, N_K$ independent standard Gaussians, and their common
information evaluates to~(\cite{xu2016lossy}, Theorem~8)
\begin{equation}
  C(X_1; \ldots; X_K)
  = \frac{1}{2}\log\left(1 + \frac{K\rho}{1-\rho}\right).
  \label{eq:gauss_K}
\end{equation}
For jointly Gaussian sources the intersective Gács--Körner common information
vanishes, while the Wyner quantity stays finite and informative. This asymmetry
is one reason why Wyner's common information has attracted more attention for
continuous variables.

A separate comment concerns the algebraic tools of Section~\ref{sec:wyner_K}. The factorization-based analysis applies directly to discrete sources, where the joint PMF is a finite array. In that setting, the conditional-independence constraint becomes a nonnegative matrix or tensor factorization, and identifiability follows from conditions on the nonnegative rank and the conditional PMFs, such as Kruskal's. This machinery is not tied to the discrete case. A smooth density with compact support can be approximated by a truncated multivariate expansion, whose coefficients form a finite tensor. Amiridi et~al.~\cite{amiridi2022} use a Fourier expansion and decompose this characteristic tensor to learn an approximation with a naive Bayes structure. A latent discrete variable $W$ renders the observed variables conditionally independent, so the
density factors as
\begin{equation}
  f_{X_1 \cdots X_K}(x_1, \ldots, x_K)
  = \sum_{w=1}^{|\mathcal{W}|} P_W(w)
    \prod_{k=1}^{K} f_{X_k \mid W}(x_k \mid w).
  \label{eq:naive_bayes}
\end{equation}

However, two issues stand in the way. First, not every density with a finite characteristic tensor is a finite sum of tensor products of univariate densities with the same harmonics, as we discuss below. Hence, the naive Bayes form is a modeling restriction rather than a universal representation. Dropping the requirement that the factors be densities removes this restriction for densities with a finite characteristic tensor, since every finite tensor admits a CP decomposition of high enough rank. Second, the decomposition of Amiridi et~al.~\cite{amiridi2022} does not constrain the univariate factors to be densities, so the learned model need not be a valid density.


Karakasis and Sidiropoulos~\cite{karakasis2024fejerriesz} instead enforce validity by construction. Their univariate spectral factorization based on the Fejér--Riesz theorem models each conditional $f_{X_k \mid W}$ as a bona fide PDF. They also consider a softmax-based representation for $P_W$. Hence, their implementation of \eqref{eq:naive_bayes} corresponds to a parametrization of a valid density. The price is a constrained decomposition. Due to their spectral factorization, their model considers only structured factors for the univariate densities. This raises the question of whether every finite characteristic tensor admits a CP decomposition with structured factor matrices at some finite rank. The model nevertheless approximates any density with compact support in the $\ell_1$ sense at high enough rank and number of Fourier coefficients. Identifiability carries over here as well. The Kruskal-type conditions of Section~\ref{sec:wyner_K} can still be applied to these structured factor matrices in order to infer the uniqueness of the learned representation. When these conditions hold, the learned $P_W$ and conditional densities form the unique naive Bayes representation of the learned density with at most $|\mathcal{W}|$ components, up to relabeling of $W$.

One distinction is worth stressing. These methods fit the density and carry no mutual-information objective. The latent $W$ they recover renders the views conditionally independent under the fitted density, so $W$ is feasible for Wyner's constraint on that density, and on the true one only when the fit is exact. However, the learned $W$ is not necessarily the mutual-information minimizer that defines Wyner's common information. Moreover, such a $W$ lies in a restricted part of that feasible set, since the recovered $W$ is discrete, whereas Wyner's optimal $W$ can be continuous, as we show in the jointly Gaussian case. The connection is therefore structural. Both rest on the same conditional-independence factorization, and density estimation supplies a common latent from data rather than Wyner's optimal~one.

\subsubsection{Applications}

Common information arises in several problems of information theory. The Gray--Wyner network~\cite{grawyner1974} is the original setting, where Wyner's common information is the smallest shared rate at minimum total rate for lossless reconstruction~\cite{wyner1975} and, under suitable conditions, for lossy reconstruction~\cite{xu2016lossy}. Viswanatha, Akyol, and Rose used this network to model a database that stores two correlated sources and serves requests for each source separately~\cite{viswanatha2011}. Storing the sources jointly minimizes the storage rate but makes retrieval costly, whereas storing them separately does the opposite. They showed that the optimal trade-off between storage and retrieval connects the operating points of Wyner and Gács--Körner, which correspond to these two extremes. Sula and Gastpar~\cite{sula2022} solved two related problems explicitly for Gaussian sources. They showed that Gaussian codebooks are optimal for the Gray--Wyner network with a sum-rate constraint on the private links, which gives the rate region in closed form. Moreover, they considered Wyner's common information, together with a relaxation that only bounds the conditional mutual information, for scalar and vector sources.

Another field where common information appears is information-theoretic security. In secure source coding, a transmitter describes a source to a legitimate receiver and uses a secret key shared with that receiver to limit what an eavesdropper can infer. Satpathy and Cuff studied secure source coding with Gaussian sources~\cite{satpathy2015}. Characterizing the trade-off between the communication rate and the key rate requires finding an optimal auxiliary variable, similar to the one in Wyner's common information. Their analysis also gives Wyner's common information of jointly Gaussian vectors in closed form, generalizing the scalar result of Xu, Liu, and Chen~\cite{xu2011continuous}. On the Gács--Körner side, the common information is the rate of shared randomness that two parties can extract from correlated observations without communicating. Allowing public communication leads to the common-randomness and secret-key capacities studied in~\cite{ahlswede1993, ahlswede1998, maurer1993, csiszar2000}. 

Common information also appears in caching, where part of the data is delivered to a user before the user's request is known. Wang, Lim, and Gastpar considered a user who requests one of two files with equal probability. Once the cache is large enough, the optimal cache content is a description that renders the two files conditionally independent, and the smallest cache rate at which this is possible is Wyner's common information~\cite{wang2016caching}. When the files need only be reconstructed within a given distortion, lossy versions of the Gács--Körner and Wyner common information also play an important role in caching~\cite{timo2018caching}.

{\color{black}In machine learning, common information has also been learned directly from data. Kleinman et~al.~\cite{kleinman2023} define a notion of common information that can be estimated from samples and recovers the Gács--Körner common information as a special case. They use this notion to train a variational autoencoder that separates the information shared by two views from the information unique to each. On the Wyner side, Ryu et~al.~\cite{ryu2022wyner} learn a compact common representation of two modalities for joint and conditional generation. Their motivation is channel synthesis, where Wyner's common information is the smallest rate needed to build the correlation, as discussed above. Huang and El Gamal~\cite{huang2024wyner} develop efficient solvers for two relaxations of Wyner's problem and apply them to multimodal clustering. Related information-theoretic objectives have been used to improve adversarial robustness~\cite{yu2023adversarial} and to measure the dependence among several variables without estimating their distributions~\cite{yu2021measuring}.}

\subsection{How Many Kinds of Common Information Are There After All?}

Thus far, we have covered several distinct quantities under the same name, and it is
worth stepping back to see how they relate. In particular, this section introduced the Gács--Körner common information $K_{GK}$, Wyner's common information $C$, its
$K$-variable extension, the common entropy $G_{CE}$, and the Tandon--Sankar--Poor quantity $C_{\text{TSP}}$. Maximal correlation entered as the test for whether $K_{GK}$ is positive. A few others were mentioned only in passing, such as the exact common information and the Rényi common information. This raises a natural question: \emph{How many kinds of common information are there, and what separates them?}

Despite the variety, they fall into two families that pull in opposite directions. The first family is intersective. Quantities in this family ask what the sources hold in common, in the sense that each source can produce the shared part on its own. Gács--Körner is the prototype here. The Gács--Körner common information seeks the richest variable that every source determines, so the quantity maximizes an entropy. The Tandon--Sankar--Poor quantity is its stochastic, privacy-driven cousin, since $C_{\text{TSP}}$ also captures only information that is available from every source individually. Maximal correlation belongs here as well, as the criterion for whether any such shared variable exists at all. The Gács--Körner and Tandon--Sankar--Poor quantities do not increase as more sources are added, because agreement across a larger set can only get harder to reach.

The second family is generative. Quantities in this family ask what must be shared to explain or reproduce the dependence, rather than what each source can recover alone. Wyner's common information is the prototype. Wyner's common information seeks the most parsimonious variable that explains the dependence by rendering the sources conditionally independent, so the quantity minimizes a rate. The $K$-variable extension $C(X_1;\ldots;X_K)$, the common entropy $G_{CE}$, and the exact common information are of the same kind. The extension $C(X_1;\ldots;X_K)$ does not decrease as more sources are added, because a single variable must now account for potentially more dependence. The two prototypes therefore move in opposite directions with the number of sources. The intersective quantities never increase, and the $K$-variable Wyner quantity never decreases.

The two families coincide only in special cases. For two discrete variables with finite alphabets, the prototypes satisfy $K_{GK}(X;Y) \le I(X;Y) \le C(X;Y)$, and both inequalities hold with equality exactly when the joint PMF, after a relabeling of the symbols, is block-diagonal with product blocks. In general, the intersective and generative values differ, and jointly Gaussian variables make the gap extreme. For a correlation $0 < |\rho| < 1$, the Gács--Körner common information vanishes, while Wyner's common information is finite and positive.

Partial information decomposition (PID)~\cite{williams2010} offers a third and complementary lens. PID splits the information that two sources $X$ and $Y$ jointly carry about a third variable $T$ into four nonnegative parts,
\begin{linenomath}
\begin{equation}
  I(X, Y; T) = R(X, Y; T) + U_X(X; T) + U_Y(Y; T) + S(X, Y; T),
  \label{eq:pid}
\end{equation}
\end{linenomath}
where $R$ is the redundant information that each source carries about $T$, $U_X$ and $U_Y$ are the unique information that only one source carries, and $S$ is the synergistic information that only the two together reveal. The redundant part $R$ belongs with the intersective family, since the information about $T$ that $R$ measures is available from each source alone, in the spirit of Gács--Körner. The synergistic part $S$ is information that the two sources hold only jointly, since $S$ appears only when the two sources are combined. Synergy has no counterpart in either family of common information. If $X$ and $Y$ are independent uniform bits and $T = X \oplus Y$, then $K_{GK}(X;Y) = C(X;Y) = 0$, yet the full bit of $I(X,Y;T)$ is synergistic, since neither source alone carries any information about $T$. PID is therefore complementary to our story rather than a refinement of the story. The main difference is that PID rests on a target variable $T$, whereas the common-information notions we discussed describe the dependence between the sources themselves, aiming to discover or engineer shared structure and then exploit that structure.

The synergistic term $S(X, Y; T)$ in~\eqref{eq:pid} names an older idea, namely that information can appear only in combination. McGill~\cite{mcgill1954} studied three variables and defined their interaction as the change in the mutual information between two of them when the third is held fixed. Hu~\cite{hu1962} later extended mutual information to any number of variables through an inclusion--exclusion formula on entropies. For two variables, mutual information is $I(X_1; X_2) = H(X_1) + H(X_2) - H(X_1, X_2)$, and for three variables the formula of Hu reads
\begin{linenomath}
\begin{equation}
\begin{split}
  I(X_1; X_2; X_3) &= H(X_1) + H(X_2) + H(X_3) + H(X_1, X_2, X_3) \\
  &\quad - H(X_1, X_2) - H(X_1, X_3) - H(X_2, X_3),
\end{split}
\end{equation}
\end{linenomath}
which equals $I(X_1; X_2) - I(X_1; X_2 \mid X_3)$. The interaction of McGill is the same quantity with the opposite sign. Unlike mutual information, $I(X_1; X_2; X_3)$ can be negative, as both authors observed, and a negative value is the footprint of synergy. In the XOR example above, $I(X;Y) = 0$ while $I(X;Y \mid T) = 1$ bit, so $I(X;Y;T) = -1$ bit. More generally, the decomposition~\eqref{eq:pid} gives $I(X;Y;T) = R(X,Y;T) - S(X,Y;T)$, so the three-variable quantity turns negative exactly when synergy outweighs redundancy. The common-information notions, by contrast, keep only the shared part and stay nonnegative. This line of work is a rich topic in its own right, but lies outside the scope of this tutorial.

\subsection{Summary}

This section developed the information-theoretic foundation for the rest of the
tutorial. Throughout, the objects of study were jointly distributed random variables, and the question was how to quantify what they share. Before moving on, it is worth highlighting the main ideas.

\begin{itemize}
\item \textbf{Two families of definitions.} The formalizations of common information
split into two groups that answer different questions. The \emph{intersective} family asks what the sources hold in common, in the sense that each source can produce the shared part on its own. Its prototype is the Gács--Körner quantity $K_{GK}(X;Y)$ of~\eqref{eq:gk}, the maximum entropy of a variable $W$ that is simultaneously a deterministic function of $X$ and of $Y$. The Tandon--Sankar--Poor quantity $C_{\mathrm{TSP}}$ of~\eqref{eq:poor_ci} is its stochastic counterpart, arising from a rate--privacy formulation rather than from deterministic recoverability. The \emph{generative} family asks what must be supplied to account for the dependence. Its prototype is Wyner's $C(X;Y)$ of~\eqref{eq:wyner}, the minimum of $I(X,Y;W)$ over all $W$ rendering $X$ and $Y$ conditionally independent. The $K$-variable extension $C(X_1;\dots;X_K)$ and the exact common information belong to the same family. The common entropy $G_{CE}$, which minimizes $H(W)$ instead of $I(X,Y;W)$ under the same constraint, belongs to this family as well.

\item \textbf{The two families behave oppositely to the number of sources.} The intersective quantities $K_{GK}$ and $C_{\mathrm{TSP}}$ are non-increasing in the number of sources, since agreement across a larger collection is a stronger requirement. The $K$-variable Wyner quantity is non-decreasing, since a single variable must account for more dependence as sources are added. For two discrete variables with finite alphabets, the prototypes are ordered as $K_{GK}(X;Y) \leq I(X;Y) \leq C(X;Y)$, with equality throughout exactly when the joint PMF, after a relabeling of the symbols, is block-diagonal with product blocks. Jointly Gaussian variables separate them sharply, since for $0 < |\rho| < 1$ the Gács--Körner common information vanishes while $C(X;Y)$ remains finite and positive.

\item \textbf{Positivity of $K_{GK}$ is decided by maximal correlation.} Witsenhausen's criterion states that $K_{GK}(X;Y) > 0$ if and only if the Hirschfeld--Gebelein--Rényi maximal correlation satisfies $S(X,Y) = 1$. The criterion certifies that a nontrivial common variable exists but does not identify an optimal one, since the maximizing functions yield in general only a coarsening of the optimal common variable.

\item \textbf{Common information admits an operational characterization.} In the
Gray--Wyner network, $C(X;Y)$ is the minimum rate on the link shared by the two
receivers, subject to lossless reconstruction at total rate $H(X,Y)$. For continuous
sources lossless reconstruction is unattainable at a finite rate, and the statement is
recovered in lossy form. Under full support and a normal distortion measure, the minimum common rate $C_3(\Delta)$ equals $C(X;Y)$ over a range of distortions, whose edge is $\gamma = R_{XY}^{-1}(C(X;Y))$ for successively refinable sources. For $K$ sources, $C_{\mathrm{TSP}}$ simultaneously characterizes the largest common rate compatible with maximal privacy and the largest common rate compatible with describing each source at exactly its own entropy.

\item \textbf{Conditional independence carries an algebraic form.} Writing Wyner's constraint $X \perp Y \mid W$ on a discrete joint PMF yields a nonnegative matrix factorization with $|\mathcal{W}|$ components, and the $K$-variable form of the constraint yields a nonnegative tensor factorization. For three or more variables, identifiability of the tensor factorization follows from uniqueness conditions, of which Kruskal's is the best known, whereas the two-variable matrix factorization is not unique in general.

\item \textbf{Partial information decomposition offers a complementary account.} PID splits $I(X,Y;T)$ into redundant, unique, and synergistic parts relative to a target $T$. The redundant part aligns with the intersective family, while the synergistic part lies outside every notion discussed here, since synergy is held only jointly and appears only when both sources are combined.
\end{itemize}

The definitions of this section specify what is to be recovered, namely a latent variable capturing what the sources hold in common, characterized either by deterministic recoverability from each source or by conditional independence of the sources given the latent variable. The definitions do not specify how to recover the latent variable from data, and they assume the joint distribution to be known. Both restrictions are lifted in what follows, where, following the multiview literature, we call the sources \emph{views}. Section~\ref{sec:cca} replaces scalar random variables with random vectors and the known distribution with a finite sample, and asks which linear combinations within each view are maximally correlated. The formal link between the two developments is the Hirschfeld--Gebelein--Rényi maximal correlation introduced above, since canonical correlation analysis solves the same correlation-maximization problem with the admissible functions restricted to be linear. This restriction is what makes the problem tractable and well posed from samples, and Sections~\ref{sec:cca}{--}
\ref{sec:nonlinear} are concerned with what can be recovered under the linearity restriction, and with what must be assumed to relax that restriction.

\section{Canonical Correlation Analysis}
\label{sec:cca}

\subsection{Three Problems to Set the Scene}
Before the mathematics, we start this section by introducing three problems from different fields, each asking the same underlying question.

For the first problem, consider a wireless sensor network (WSN). Each node carries several sensors and produces a bundle of recordings. Bandwidth is scarce, so we would like to find pairs of nodes whose recordings are strongly correlated and encode such pairs jointly rather than separately. The question here is which pairs of nodes genuinely share structure, and how we can quantify their correlation in order to sort all pairs and pick the pairs we can afford to encode jointly.

For the second problem, we switch to functional magnetic resonance imaging (fMRI). There, the brain volume is imaged as a grid of small volume elements, or voxels. Each voxel carries a time series of its blood-oxygen-level-dependent (BOLD) signal, a proxy for local neural activity. Neighboring voxels tend to fluctuate together, partly through shared blood-supply factors and partly because the single-voxel signal is spatially smoothed, to suppress noise, which correlates adjacent voxels. Hence, it is natural to treat a neighborhood of voxels as a single unit rather than a lone voxel. We then ask whether two different nonoverlapping neighborhoods exhibit correlated activity, as a step toward mapping how functional brain networks are organized.

For the third problem, consider an urban cellular network, where each multi-antenna base station forms a cell and serves the users within the cell. A user near the boundary of two adjacent cells is typically received weakly by both stations. Each station alone may be incapable of decoding that user, but since both stations receive a weak copy of the same transmission, the stations can collaborate to detect such users and decode their messages. The questions that arise are how two base stations can discover that a faint common signal is present in both received signals, and how the stations can cooperate to decode a message that neither could recover alone.

The three problems differ in every surface detail, yet are instances of the same
abstract form. In each of them, a view is described not by a single scalar random variable, as before, but by a vector of channels or features collecting a \emph{set} of related signals, gathered at a multi-sensor node, over a voxel neighborhood, or at a base station. The question is whether two such multivariate views share hidden common structure, and how strongly. The concepts of common information developed in Section~\ref{sec:common_info} can still be applied here in many different ways. For example, one could measure the common information between scalar-valued signals from different views and then report an average. However, two nodes may share a weak source that no single pair of their sensors reveals, because noise and interference bury the common signal in each single-channel measurement.

Most of the time, the natural grouping of the channels that make up a view carries a mathematical structure that can be exploited. When the same source signals appear mixed throughout the different channels of a view, processing all channels together can approximately extract a wide range of these signals, from stronger ones to weaker ones. Processing every channel on its own does not provide these degrees of freedom. As a result, in detecting multiview correlations, jointly processing all channels within a view can be significantly more effective, as we will see.

Mathematically speaking, the joint processing of the channels of each view will be
performed here by taking linear combinations of the channel signals and looking for
the combinations whose outputs are most correlated. The question of this section
then becomes whether some combination of the first set agrees with some combination
of the second, which treats the two sets as wholes rather than in parts. An ideal answer has to provide two things. The first is a number reporting how strongly the two multivariate views are related, which allows the pairs of sensor nodes above to be sorted and the strongly related ones encoded jointly. The second is the
resulting combinations themselves, since they carry the shared structure. This is
exactly the question canonical correlation analysis was built to answer, as we see
next.

\subsection{Hotelling's Original Formulation}

Canonical correlation analysis (CCA) was introduced by Harold Hotelling
in 1936~\cite{hotelling1936}, nearly four decades before the
common-information notions of Gács--Körner~\cite{gacs1973} and
Wyner~\cite{wyner1975}. Hotelling was a statistician and economic
theorist whose motivations were explicitly empirical. In a study by Truman L. Kelley, the measurements on each schoolchild fell into two groups. The reading group held reading speed and reading power, and the arithmetic group held arithmetic speed and arithmetic power. Hotelling was interested in finding what reading and arithmetic abilities have in common. In a second, economic setting, the variables were the prices and the quantities of a group of related commodities such as wheat, potatoes, rye, and barley, each commodity contributing its price to one set and its quantity to the other. There, the question was how prices and quantities move together across the whole group. In each case, the variables split into two sets, and the interest lies not in either set alone but in what the two sets share. In the scholastic example, Hotelling concluded that reading and arithmetic involve one common mental factor, and in the economic one, the prices and quantities of related commodities must be studied as a group rather than one commodity at a time. He argued that the classical one-variable-at-a-time approach could not reveal such shared structure, and his idea was to reduce each set to a single linear combination, one from each set, chosen so that the two are as correlated with each other as possible. That maximal correlation measures how strongly the two sets share their common factor.

Despite its power, CCA largely remained confined to statistics, with a few exceptions. As late as 2001, Borga~\cite{borga2001} noted that although canonical correlation was by then standard in economics, medical studies, meteorology, and
even the classification of malt whisky, it was ``surprisingly unknown in
the fields of learning and signal processing.'' That is no longer an accurate description of either field. In machine learning, CCA entered through its kernel and deep formulations, and deep CCA together with its autoencoding and identifiable variants was published at the main machine learning conferences~\cite{andrew2013, wang2015, benton2017, lyu2022}. The agreement objectives of current self-supervised learning rest on the same criterion~\cite{zbontar2021barlow, bardes2022vicreg}. In signal processing, CCA underlies the geometry of linear estimation, filtering, and Gaussian channel capacity~\cite{scharf2000}, blind detection of unknown users in cellular uplinks~\cite{ibrahim2020, ibrahim2024mmse}, underlay spectrum sharing~\cite{ibrahim2022underlay, razavi2024fm}, signal alignment in IoT networks~\cite{peppas2023harnessing, peppas2025signal}, scalable formulations for many views~\cite{fu2017, sorensen2021}, and multisubject fMRI analysis~\cite{karakasis2020multi, karakasis2021multi, karakasis2022}. Tutorials aimed at these audiences followed in learning, computing, and neuroscience venues rather than statistics ones~\cite{hardoon2004, uurtio2017, wang2018}. Next, we proceed with the rigorous mathematical formulation of CCA.

\subsubsection{Problem Formulation and Lagrangian Derivation}
\label{sec:cca_formulation}

Let $\bx_1 \in \R^{M_1}$ and $\bx_2 \in \R^{M_2}$ be two zero-mean
random vectors, called \emph{views}. {{Note} that we can always subtract the mean, in practice its sample estimate, to meet this assumption.} Following Hotelling's original setup~\cite{hotelling1936}, CCA seeks a pair of directions $\bq_1 \in \R^{M_1}$ and $\bq_2 \in \R^{M_2}$ that maximize the Pearson correlation between the projections $z_1 = \bq_1^\top\bx_1$ and $z_2 = \bq_2^\top\bx_2$. This objective reads
\begin{linenomath}
\begin{equation}
  \max_{\bq_1,\bq_2}
  \;\rho(z_1,z_2)
  =
  \frac{\Cov(\bq_1^\top\bx_1,\,\bq_2^\top\bx_2)}
  {\sqrt{\Var[\bq_1^\top\bx_1]\cdot\Var[\bq_2^\top\bx_2]}},
  \label{eq:cca_corr}
\end{equation}
\end{linenomath}
where an optimal pair $(\bq_1^*, \bq_2^*)$ is called a pair of canonical
directions, and the projections the pair produces, $z_1^* := (\bq_1^*)^\top\bx_1$ and $z_2^* := (\bq_2^*)^\top\bx_2$, are called canonical variates. The denominator in (\ref{eq:cca_corr}) normalizes the covariance by the standard deviations of the two projections, so that the final ratio measures correlation rather than raw covariance. Its purpose is to make the objective invariant to the scaling of $\bq_1$ and $\bq_2$, since multiplying either direction by a nonzero constant leaves $\rho$ unchanged, in absolute value. As we can see, only the directions $\bq_1$ and $\bq_2$ matter, not their norms, and this scale invariance is what allows us to fix the two variances to one in the equivalent constrained form below.

Letting $\bSigma_{kk} = \Cov(\bx_k)$ and
$\bSigma_{12} = \Cov(\bx_1,\bx_2) = \bSigma_{21}^\top$, the problem
in~\eqref{eq:cca_corr} is equivalent~to
\begin{linenomath}
\begin{equation}
  \max_{\bq_1,\bq_2}
  \;\bq_1^\top\bSigma_{12}\bq_2
  \quad\text{s.t.}\quad
  \bq_1^\top\bSigma_{11}\bq_1 = 1,\quad
  \bq_2^\top\bSigma_{22}\bq_2 = 1,
  \label{eq:cca_opt}
\end{equation}
\end{linenomath}
where the optimal value, called the (first) canonical correlation, is $\rho(z_1^*,z_2^*) = (\bq_1^*)^\top\bSigma_{12}\bq_2^*$. The equivalence holds because the constraints fix the denominator
of~\eqref{eq:cca_corr} to one, and, by the scale invariance noted above,
imposing them loses no generality, leaving the optimal directions
unchanged up to scaling. What we have really done by passing to~\eqref{eq:cca_opt} is trade the ratio in~\eqref{eq:cca_corr} for a simpler bilinear objective. The price we pay is the introduction of two constraints, which at first glance
complicates things.

\paragraph{{Lagrangian} 
 Derivation}

The standard approach for optimizing a function under equality constraints is the use of Lagrange multipliers. Hotelling derived the solution using them, and his argument is worth following. The Lagrangian of problem~\eqref{eq:cca_opt} is given by
\begin{linenomath}
\begin{equation}
  \mathcal{L}(\bq_1,\bq_2,\lambda_1,\lambda_2)
  = \bq_1^\top\bSigma_{12}\bq_2
  - \frac{\lambda_1}{2}(\bq_1^\top\bSigma_{11}\bq_1 - 1)
  - \frac{\lambda_2}{2}(\bq_2^\top\bSigma_{22}\bq_2 - 1).
  \label{eq:lagrangian}
\end{equation}
\end{linenomath}
Setting the partial derivatives to zero gives
\begin{linenomath}
\begin{align}
  \bSigma_{12}\bq_2 &= \lambda_1\bSigma_{11}\bq_1,
  \label{eq:kkt1}\\
  \bSigma_{21}\bq_1 &= \lambda_2\bSigma_{22}\bq_2.
  \label{eq:kkt2}
\end{align}
\end{linenomath}
To identify the multipliers, multiply~\eqref{eq:kkt1} and \eqref{eq:kkt2} on the left by $\bq_1^\top$ and $\bq_2^\top$, respectively. After using the constraints
$\bq_1^\top\bSigma_{11}\bq_1 = \bq_2^\top\bSigma_{22}\bq_2 = 1$, we get
\begin{linenomath}
\begin{equation}
\begin{split}
  \lambda_1 &= \bq_1^\top\bSigma_{12}\bq_2, \\
  \lambda_2 &= \bq_2^\top\bSigma_{21}\bq_1.
\end{split}
\end{equation}
\end{linenomath}
A careful reader will notice that $\bq_2^\top\bSigma_{21}\bq_1$ is a
scalar, which is equal to its own transpose, $\bq_2^\top\bSigma_{21}\bq_1 = (\bq_2^\top\bSigma_{21}\bq_1)^\top = \bq_1^\top\bSigma_{21}^\top\bq_2 = \bq_1^\top\bSigma_{12}\bq_2$, where the last step uses $\bSigma_{21}^\top = \bSigma_{12}$. Hence $\lambda_1 = \lambda_2$, and under the unit-variance constraints this common value is exactly the
correlation being maximized,
\begin{linenomath}
\begin{equation}
  \lambda_1 = \lambda_2
  = \bq_1^\top\bSigma_{12}\bq_2 = \rho(z_1,z_2) =: \rho .
\end{equation}
\end{linenomath}

We now turn our attention to eliminating $\bq_2$. Assuming $\bSigma_{22}$ is invertible and $\rho\neq 0$, solving
~\eqref{eq:kkt2} for $\bq_2$ and using $\lambda_2 = \rho$ gives
\begin{linenomath}
\begin{equation}
  \bq_2 = \frac{1}{\rho}\,\bSigma_{22}^{-1}\bSigma_{21}\bq_1 .
\end{equation}
\end{linenomath}
Substituting into~\eqref{eq:kkt1} and using $\lambda_1 = \rho$ gives,
after multiplying by $\rho$, the \emph{generalized eigenvalue problem}
\begin{equation}
  \bSigma_{12}\bSigma_{22}^{-1}\bSigma_{21}\bq_1
  = \rho^2\,\bSigma_{11}\bq_1 .
\end{equation}
Assuming $\bSigma_{11}$ is invertible, this becomes the standard
eigenvalue problem
\begin{equation}
  \bSigma_{11}^{-1}\bSigma_{12}\bSigma_{22}^{-1}\bSigma_{21}\bq_1
  = \rho^2\,\bq_1.
  \label{eq:gen_eig}
\end{equation}
Therefore, an optimal canonical direction $\bq_1$ is an eigenvector of $\bSigma_{11}^{-1}\bSigma_{12}\bSigma_{22}^{-1}\bSigma_{21}$ associated with its largest eigenvalue, and the canonical correlation is the positive square root of that eigenvalue. Then, the matching canonical direction $\bq_2$ follows from~\eqref{eq:kkt2}.

Compared to Hotelling's derivations, we reached~\eqref{eq:gen_eig} by a shortcut, eliminating $\bq_2$ for pedagogical convenience. The resulting eigenvalue problem is nonetheless equivalent to Hotelling's original symmetric formulation, differing only by that elimination, so the two share the same canonical correlations and canonical variates. Hence, we are still able to invoke Hotelling's results directly. Building on the positive definiteness of the covariance forms, he proved that the roots are real and that the canonical correlations, being correlation coefficients, cannot exceed unity~\cite{hotelling1936}. Equivalently, the eigenvalues $\rho^2$ of~\eqref{eq:gen_eig} are real and lie in $[0,1]$. Their
nonnegative square roots
$1 \ge \sigma_1 \geq \sigma_2 \geq \cdots \geq \sigma_s \ge 0$, where
$s = \min(M_1,M_2)$, are the \emph{canonical correlations}, and the
corresponding solutions $(\bq_1^{(i)}, \bq_2^{(i)})$ form the $i$-th pair
of \emph{canonical directions}. The canonical variates
$z_k^{(i)} = (\bq_k^{(i)})^\top\bx_k$ obey a further property that
Hotelling also established. Canonical variates associated with different
canonical correlations are mutually uncorrelated, both within each view
and across the two. Therefore, each canonical variate $z_k^{(i)}$ is correlated only with the corresponding variate of the other view, with correlation $\sigma_i$, and is uncorrelated with every other canonical variate, across and within views. When canonical correlations repeat, the variates can always be chosen so that this holds.

We set out to find a single most-correlated pair, yet the eigenvalue
problem~\eqref{eq:gen_eig} handed us a whole family. The matrix of \eqref{eq:gen_eig} has at most $s = \min(M_1, M_2)$ nonzero eigenvalues, so at most $s$ eigenvectors carry correlation. The leading one, associated with the largest
eigenvalue $\sigma_1^2$, is the direction $\bq_1^{(1)}$ we were after,
and together with the matching direction $\bq_2^{(1)}$, $\bq_1^{(1)}$ forms the first canonical pair. The remaining eigenvectors are not spurious. Each subsequent pair $(\bq_1^{(i)}, \bq_2^{(i)})$ solves the same correlation-maximization
problem, restricted to directions whose canonical variates are
uncorrelated with all earlier ones. The eigendecomposition therefore
delivers a nested sequence of canonical pairs, ordered by how much
correlation they capture, rather than a single pair.

Recovering the first $D \le s$ pairs yields a $D$-dimensional common
representation of the two views. To find these $D$ pairs simultaneously,
we seek matrices $\bQ_k \in \R^{M_k \times D}$ whose columns are the
first $D$ canonical directions. In sample form, with $N$ joint observations of $\bx_1$ and $\bx_2$ stacked as rows in $\bX_1 \in \R^{N \times M_1}$ and $\bX_2 \in \R^{N \times M_2}$, the multi-pair problem reads
\begin{linenomath}
\begin{equation}
  \min_{\bQ_1,\bQ_2}
  \|\bX_1\bQ_1 - \bX_2\bQ_2\|_F^2
  \quad\text{s.t.}\quad
  \bQ_k^\top\bX_k^\top\bX_k\bQ_k = \bI_D,\; k=1,2.
  \label{eq:cca_multi}
\end{equation}
\end{linenomath}
Given an optimal pair $\bQ_1^*$ and $\bQ_2^*$, the resulting canonical representations of $\bX_1$ and $\bX_2$ are given by $\mathbf{Z}_1^* = \bX_1\bQ_1^*$ and $\mathbf{Z}_2^* = \bX_2\bQ_2^*$. Notice that optimal solutions are not unique, since after multiplying both $\bQ_1^*$ and $\bQ_2^*$ on the right by the same orthogonal matrix $\mathbf{O} \in \R^{D \times D}$, the resulting matrices $\bQ_1^*\mathbf{O}$ and $\bQ_2^*\mathbf{O}$ are also optimal. This has an intuitive explanation. The columns of $\mathbf{Z}_1^*$ and $\mathbf{Z}_2^*$ are orthonormal bases of two subspaces of $\R^N$, and the objective in~\eqref{eq:cca_multi} depends only on how well the two bases align column by column. Rotating both bases by the same $\mathbf{O}$ keeps each basis orthonormal and preserves this alignment, so the objective is unchanged, whereas rotating the two bases differently would in general break the alignment.

These canonical variates are the goal of the construction. However, as explained above, the solution of~\eqref{eq:cca_multi} is not unique. The objective depends only on the sum of the canonical correlations $\sigma_1, \ldots, \sigma_D$, not on representations whose columns are ordered from the most to the least correlated directions. By design, $\mathbf{Z}_1^*$ and $\mathbf{Z}_2^*$ are expressed in the same $D$-dimensional space, a common coordinate system in which the two views are aligned, but not in a way that reflects how strongly each coordinate is correlated across views. Ordered canonical representations follow from the singular value decomposition ${\mathbf{Z}_1^*}^\top\mathbf{Z}_2^* = \mathbf{U}^*\boldsymbol{\Sigma}^*{\mathbf{V}^*}^\top$, namely $\bar{\mathbf{Z}}_1^* = \mathbf{Z}_1^*\mathbf{U}^*$ and $\bar{\mathbf{Z}}_2^* = \mathbf{Z}_2^*\mathbf{V}^*$. After this transform, both $\bar{\mathbf{Z}}_1^*$ and $\bar{\mathbf{Z}}_2^*$ remain orthonormal bases, and ${\bar{\mathbf{Z}}_1^*}{}^\top\bar{\mathbf{Z}}_2^* = \boldsymbol{\Sigma}^*$ is diagonal, with \linebreak $\mathrm{trace}\big({\bar{\mathbf{Z}}_1^*}{}^\top\bar{\mathbf{Z}}_2^*\big) = \mathrm{trace}\big({\mathbf{Z}_1^*}^\top\mathbf{Z}_2^*\big) = \sum_{d=1}^D \sigma_d$, where the diagonal elements $\sigma_1 \ge \cdots \ge \sigma_D$ of $\boldsymbol{\Sigma}^*$ are the canonical correlations. Each column pair of $\bar{\mathbf{Z}}_1^*$ and $\bar{\mathbf{Z}}_2^*$ is then one shared coordinate, and the larger $\sigma_i$ is, the more closely the $i$-th coordinates of the two views agree. As we will see later, these aligned representations carry valuable information that can be exploited in a range of downstream tasks, and in specific applications they acquire natural~\mbox{interpretations}.

\subsection{Closed-Form Solution via SVD and the Geometry of Principal Angles}
The eigenvalue problem~\eqref{eq:gen_eig} already solves CCA, but the
matrix involved entangles the cross-covariance $\bSigma_{12}$ with the
inverse within-view covariances $\bSigma_{11}^{-1}$ and
$\bSigma_{22}^{-1}$ in an unintuitive way. A change of coordinates called
\emph{whitening} removes that entanglement and recasts the problem as the
singular value decomposition (SVD) of a simpler matrix. As we show here, whitening also opens a geometric picture, in which the canonical correlations are the cosines of the \textit{principal angles} between the subspaces spanned by the canonical variates $\bar{\mathbf{Z}}_1^*$ and $\bar{\mathbf{Z}}_2^*$~\cite{stewart1990, scharf2000}.

To whiten a zero-mean random vector $\bx$ with covariance $\bSigma$ is to
apply $\bSigma^{-1/2}$, producing $\tilde\bx = \bSigma^{-1/2}\bx$, whose
covariance is the identity matrix. The result has unit variance in every
direction and no correlation between its entries. The name comes from
white noise, whose covariance is likewise the identity matrix. Whitening each
view separately,
\begin{linenomath}
\begin{equation}
  \tilde\bx_1 = \bSigma_{11}^{-1/2}\bx_1, \qquad
  \tilde\bx_2 = \bSigma_{22}^{-1/2}\bx_2,
\end{equation}
\end{linenomath}
standardizes each view, stripping away its internal correlations and scale, leaving only how the two views relate to each other. Why this helps is immediate. In terms of the whitened directions $\tilde\bq_k := \bSigma_{kk}^{1/2}\bq_k$, the projections are unchanged, $\bq_k^\top\bx_k = \tilde\bq_k^\top\tilde\bx_k$, and the unit-variance constraints of~\eqref{eq:cca_opt} reduce to ordinary unit-norm constraints $\|\tilde\bq_k\| = 1$. {\color{black}Writing the whitened cross-covariance matrix, known as the \emph{coherence matrix}~\cite{scharf2000}, as}
\begin{linenomath}
\begin{equation}
  \bM := \Cov(\tilde\bx_1, \tilde\bx_2)
  = \bSigma_{11}^{-1/2}\bSigma_{12}\bSigma_{22}^{-1/2},
  \label{eq:whitened_cross}
\end{equation}
\end{linenomath}
we see that~\eqref{eq:gen_eig} collapses to $\bM\bM^\top\tilde\bq_1 = \rho^2\tilde\bq_1$. CCA has thus reduced to finding the unit directions along which the two whitened views covary most. 

{{For two scalar stationary processes, the analogue of the coherence matrix is the magnitude-squared coherence, the squared magnitude of the correlation coefficient of their Fourier components at each frequency. Carter et~al.}~\cite{carter1973} {studied the estimation of the magnitude-squared coherence. Like CCA, the magnitude-squared coherence yields a sequence of correlations, but one per frequency from a single pair of scalar processes, rather than from a reduction of two vector views to their canonical variates.}}

{The unit directions along which the whitened views covary most} are exactly what the SVD of $\bM$ returns, 
\begin{linenomath}
\begin{equation}
  \bM 
  = \bU\,\mathrm{diag}(\sigma_1,\ldots,\sigma_s)\,\bV^\top,
  \label{eq:cca_svd}
\end{equation}
\end{linenomath}
where $\bU \in \R^{M_1 \times s}$ and $\bV \in \R^{M_2 \times s}$ have orthonormal columns. The canonical correlations $\sigma_1 \geq \cdots \geq \sigma_s \geq 0$ are the singular values, and the singular vectors $\bu_i$ and $\bv_i$ are the canonical directions in whitened coordinates. Mapping back to the original data by undoing the
whitening gives $\bq_1^{(i)} = \bSigma_{11}^{-1/2}\bu_i$ and
$\bq_2^{(i)} = \bSigma_{22}^{-1/2}\bv_i$. This SVD route is preferred in practice, as it returns the $\sigma_i$ directly rather than as the squared eigenvalues $\rho^2$ of~\eqref{eq:gen_eig}, while the first $D$ columns of $\bU$ and $\bV$ can be used to directly compute $\bar{\mathbf{Z}}_1^*$ and $\bar{\mathbf{Z}}_2^*$.

A geometric reading of the canonical correlations, as cosines of the
angles between two subspaces, is already present in Hotelling's
paper~\cite{hotelling1936}. Representing $N$ observations of a feature as a
point in $N$-dimensional space, he noted that the columns of $\bX_1$ and $\bX_2$ 
span two flat subspaces meeting at the origin, and showed that the
canonical correlations $\sigma_1, \ldots, \sigma_s$ are exactly the
cosines of the angles between them, the largest being the cosine of the
minimum angle. These angles are known today as the \emph{principal
angles} between the~subspaces.

To see this in the multi-pair formulation, note that the constraints
in~\eqref{eq:cca_multi} require the columns of $\bX_k\bQ_k$ to form an
orthonormal basis of a $D$-dimensional subspace of $\col(\bX_k)$.
Expanding the objective gives
\begin{linenomath}
\begin{equation}
  \|\bX_1\bQ_1 - \bX_2\bQ_2\|_F^2
  = 2D - 2\,\Tr\!\left(\bQ_1^\top\bX_1^\top\bX_2\bQ_2\right),
\end{equation}
\end{linenomath}
so minimizing the distance maximizes the alignment
$\Tr(\bQ_1^\top\bX_1^\top\bX_2\bQ_2)$. At the optimum, this alignment equals $\sum_{i=1}^D \cos\theta_i$, where $\theta_1 \le \cdots \le \theta_D$ are the $D$ smallest principal angles between $\col(\bX_1)$ and $\col(\bX_2)$. As a result,
\begin{linenomath}
\begin{equation}
  \min_{\bQ_1,\bQ_2}\|\bX_1\bQ_1 - \bX_2\bQ_2\|_F^2
  = 2D - 2\sum_{i=1}^D \cos\theta_i,
  \qquad \text{where }~~\sigma_i = \cos\theta_i .
  \label{eq:principal_angles}
\end{equation}
\end{linenomath}
Minimizing the distance is therefore the same as finding the pair of
$D$-dimensional subspaces, one within $\col(\bX_1)$ and one within
$\col(\bX_2)$, that are most closely aligned. Figure~\ref{fig:principal_angles} shows the case $D = 1$, where a single angle is involved. In this subsection we showed how the SVD is used when the focus is computing the canonical correlations. Later we will see that the SVD can also be used in a different way, one that places more emphasis on recovering the common representation shared by the views. 

\begin{figure}[H]
\centering
\tdplotsetmaincoords{65}{110}
\pgfmathsetmacro{\rvec}{1.6}
\begin{tikzpicture}[scale=1.4,tdplot_main_coords]
\shadedraw[tdplot_screen_coords,ball color=white,opacity=0.7] (0,0) circle (\rvec);
\coordinate (O) at (0,0,0);
\node at (0,0,0)[circle,fill,inner sep=0.2pt]{};
\tdplotsetrotatedcoords{-20}{-10}{-70}
\draw[thick,color=blue,tdplot_rotated_coords] (\rvec,0,0) arc (0:360:\rvec);
\draw[->,color=blue,tdplot_rotated_coords] (0,0,0) -- (-0.65,-1.45,0);
\node[] at (-1.4,0.7,0) {$\bX_1\bq_1$};
\begin{scope}[tdplot_rotated_coords]
 \begin{scope}[canvas is xy plane at z=0]
  \fill[blue,fill opacity=0.1] (-2.8,-2.5) rectangle (2.5,2.8);
  \pgflowlevelsynccm
 \end{scope}
\end{scope}
\tdplotsetrotatedcoords{30}{-60}{120}
\draw[thick,color=green,tdplot_rotated_coords] (\rvec,0,0) arc (180:-180:-\rvec);
\draw[->,color=green,tdplot_rotated_coords] (0,0,0) -- (-0.2,-1.6,0);
\node[] at (-2.7,-0.15,0) {$\bX_2\bq_2$};
\begin{scope}[tdplot_rotated_coords]
 \begin{scope}[canvas is xy plane at z=0]
  \fill[green,fill opacity=0.1] (-2.5,-2.5) rectangle (2.3,2);
  \pgflowlevelsynccm
 \end{scope}
\end{scope}
\node[] at (0.8,3.8) {$\col\left(\bX_1\right)$};
\node[] at (-4,1.3) {$\col\left(\bX_2\right)$};
\node[] at (0.55,1.5) {$\color{red}\bullet$};
\node[] at (-0.55,-1.5) {$\color{red}\bullet$};
\node[] at (3.8,-0.5) {$\mathbb{S}^{N-1}$};
\node[] at (-0.68,0.22) {${\color{orange}\theta}$};
\draw[-,tdplot_rotated_coords] (0.2,-0.18,0) to[bend right] (0.75,0.55,0.8);
\end{tikzpicture}
\caption{{The} 
 single-component case, drawn for $\bX_1, \bX_2 \in \R^{3 \times 2}$. {Blue refers to view 1 and green to view 2, for the planes, the circles, and the vectors.}
The constraints confine each canonical variate to the unit sphere of $\R^{N}$, $\mathbb{S}^{N-1}$, and to the column space of its own view. Hence, the two feasible sets are the
circles where the sphere intersects those subspaces, and $\theta$ is the angle
between a feasible pair. Minimizing $\|\bX_1\bq_1 - \bX_2\bq_2\|_2^2$ closes
that angle as far as the subspaces allow. Here, two planes in $\R^3$ share a
line, so the circles intersect at the two antipodal points marked in red, each corresponding to an optimum, where $\theta = 0$ and $\sigma_1 = 1$. {For visualization purposes, the figure ignores centering. With $N=3$, centered columns lie in the two-dimensional subspace orthogonal to $\mathbf{1}$, so two centered data matrices of rank two would share the same column space.}
\label{fig:principal_angles}}
\end{figure}

This geometric view makes it clear why CCA recovers common structure. The canonical variates pick out the parts of each view's column space that point most strongly toward the other view. These aligned variates span estimates of a \emph{common subspace}. In particular, when the two column spaces intersect, the canonical variates with $\sigma_i = 1$ span exactly $\col(\bX_1)\cap\col(\bX_2)$, so in that case CCA performs subspace intersection. Beyond that intersection there is nothing the two views hold exactly in common, and the later canonical pairs settle at positive angles, with $\sigma_i$ below one. Figure~\ref{fig:intersection} shows both cases in one picture. This geometry is also the bridge promised in Section~\ref{sec:common_info}.

\begin{figure}[H]
\centering
\begin{tikzpicture}[scale=0.80]
\fill[blue,opacity=0.10]
  (4.624,-0.579) -- (1.176,1.449) -- (-4.624,0.579) -- (-1.176,-1.449) -- cycle;
\draw[blue!50!black,thin,opacity=0.5]
  (4.624,-0.579) -- (1.176,1.449) -- (-4.624,0.579) -- (-1.176,-1.449) -- cycle;
\fill[green!60!black,opacity=0.10]
  (4.590,2.278) -- (1.210,-1.408) -- (-4.590,-2.278) -- (-1.210,1.408) -- cycle;
\draw[green!40!black,thin,opacity=0.5]
  (4.590,2.278) -- (1.210,-1.408) -- (-4.590,-2.278) -- (-1.210,1.408) -- cycle;
\draw[very thick,->] (0,0) -- (-4.140,-0.621);
\draw[very thick,->,green!40!black] (0,0) -- (2.108,2.299);
\draw[very thick,->,blue] (0,0) -- (3.060,-1.800);
\draw[orange,thick] (0.90,0) arc (0:47.5:0.90);
\draw[orange,thick] (0.90,0) arc (0:-30.5:0.90);
\node[orange] at (1.32,0.12) {$\theta_2$};
\node at (3.35,2.55) {$\col\left(\bX_2\right)$};
\node at (4.45,0.45) {$\col\left(\bX_1\right)$};
\node[align=center] at (-5.00,-1.28)
  {$\bX_1\bq_1^{(1)} = \bX_2\bq_2^{(1)}$\\[-2pt] $\theta_1 = 0$};
\node[green!40!black] at (1.43,2.56) {$\bX_2\bq_2^{(2)}$};
\node[blue] at (3.74,-1.95) {$\bX_1\bq_1^{(2)}$};
\node at (0,0)[circle,fill,inner sep=1pt]{};
\end{tikzpicture}
\caption{Both canonical pairs at once, drawn for
$\bX_1, \bX_2 \in \R^{3 \times 2}$. Along the line the two subspaces share, the
canonical variates coincide, so $\theta_1 = 0$ and $\sigma_1 = 1$, and this
direction spans $\col(\bX_1) \cap \col(\bX_2)$. The second pair cannot be
brought into agreement, since each variate is confined to its own subspace and
must be uncorrelated with the first, so it settles at the angle $\theta_2$ with
$\sigma_2 = \cos\theta_2 < 1$. What CCA reports as common is the part of the
spectrum at $\sigma_i = 1$. {As in Figure}~\ref{fig:principal_angles}{, the figure ignores centering.} 
\label{fig:intersection}}
\end{figure}

\begin{Connection}[{\color{black}One-sided implication}: Perfect correlation and Gács--Körner]
\label{con:witsenhausen}
A canonical correlation of one is a linear certificate of Gács--Körner common information for vector-valued views. If a population canonical correlation equals one, then $\mathbf{a}^\top\bx_1 = \mathbf{b}^\top\bx_2$ almost surely for some nonzero $\mathbf{a}$ and $\mathbf{b}$. Then, the two views have deterministic mappings agreeing on a nontrivial common random variable and the maximal correlation $S(\bx_1,\bx_2)$ of Section~\ref{sec:gk} equals one. Hence, the Gács--Körner common information is positive and common information exists. Witsenhausen~\cite{witsenhausen1975} showed that a nontrivial common random variable exists exactly when two subspaces of $L^2$ share a line. The first subspace contains the zero-mean functions of $\bx_1$, and the second contains the zero-mean functions of $\bx_2$. CCA restricts its search to linear functions, so the two subspaces of $L^2$ shrink from infinite-dimensional to the finite-dimensional spans of the centered coordinates of $\bx_1$ and of $\bx_2$. In the finite-sample case, these spans become the column spaces of the centered data matrices $\mathbf{X}_1$ and $\mathbf{X}_2$. A sample canonical correlation then equals one exactly when $\col(\bX_1)\cap\col(\bX_2)\neq\{\mathbf{0}\}$. CCA therefore searches for a Gács--Körner common random variable among linear functions only. The converse of the certificate fails, as the views may share a nonlinear common random variable while every canonical correlation stays below one.
\end{Connection}

Drawing quantitative connections with Gács--Körner and Wyner's common information requires a model for the joint distribution of the views, under which these information quantities can be computed. As we discuss next, the jointly Gaussian model is the natural choice for CCA.

\subsubsection*{Invariance of Canonical Correlations}

One last thing that we would like to note is that the canonical correlations are invariant under independent invertible linear transformations of the two views. Replacing $\bx_1$ by $\bA_1\bx_1$ and $\bx_2$ by $\bA_2\bx_2$ with $\bA_1, \bA_2$ invertible leaves $\sigma_1, \ldots, \sigma_s$ unchanged. This is because the {\color{black}coherence (whitened cross-covariance) matrix} $\bM$ becomes $\mathbf{O}_1\bM\mathbf{O}_2^\top$ for some orthogonal matrices $\mathbf{O}_1$ and $\mathbf{O}_2$, and multiplication by orthogonal matrices leaves singular values unchanged. The canonical correlations therefore capture something intrinsic to the relationship between the two views, independent of the basis and scaling chosen within each~view.

\subsection{Connection to Mutual Information and Wyner's Common Information}
\label{sec:borga}

{\color{black}In signal processing, Scharf and Mullis~\cite{scharf2000} expressed the mutual information between jointly Gaussian random vectors in terms of canonical correlations. More specifically, they showed that, when a message and its measurement are jointly Gaussian, the information rate and the concentration ellipse of the filtering errors are functions of the canonical correlations. On this basis, they argued that the canonical variates of the message and the measurement form the natural coordinate system for both second-order estimation and Gaussian communication.} Borga~\cite{borga2001} later presented the same link in his tutorial on CCA. For jointly Gaussian random vectors $\bx_1$ and $\bx_2$, the mutual information is determined entirely by the canonical correlations,
\begin{linenomath}
\begin{equation}
  I(\bx_1;\bx_2) = -\frac{1}{2}\sum_{i=1}^s \log(1-\sigma_i^2).
  \label{eq:mi_cca}
\end{equation}
\end{linenomath}
Borga noted that information is additive over statistically independent variables and that the canonical variates are uncorrelated. For jointly Gaussian views, uncorrelated variates are independent, so the mutual information decomposes as a sum over the canonical variate pairs, each contributing $-\tfrac{1}{2}\log(1-\sigma_i^2)$.

The canonical correlations can also determine Wyner's common information of
\linebreak Section~\ref{sec:common_info}. For jointly Gaussian views, we get~\cite{satpathy2015}
\begin{linenomath}
\begin{equation}
  C(\bx_1;\bx_2) = \frac{1}{2}\sum_{i=1}^s \log\frac{1+\sigma_i}{1-\sigma_i},
  \label{eq:wyner_gaussian}
\end{equation}
\end{linenomath}
which reduces, for a scalar pair with correlation $\rho$, to the value
$\tfrac{1}{2}\log\frac{1+\rho}{1-\rho}$ found in Section~\ref{sec:common_info}.
Comparing~\eqref{eq:wyner_gaussian} with~\eqref{eq:mi_cca} term by term gives $C(\bx_1;\bx_2) \ge I(\bx_1;\bx_2)$, as we explained in the previous section.

\begin{Connection}[{\color{black}Special-case consequence}: Canonical correlations determine both information quantities]
\label{con:gaussian}
For jointly Gaussian views the mutual information is~\eqref{eq:mi_cca} and Wyner's
common information is~\eqref{eq:wyner_gaussian}. Both are functions of the canonical
correlations alone. Hence CCA computes both in the Gaussian case.
\end{Connection}

{\color{black}Huang, Wornell, and Zheng~\cite{huang2018gaussian} reach the coherence matrix $\mathbf{M}$ of \eqref{eq:whitened_cross} from a learning perspective. They consider two jointly Gaussian views $\mathbf{x}_1$ and $\mathbf{x}_2$, for which the canonical correlations are small, and ask which low-dimensional features of one view should be extracted before the inference task is known. The task is to estimate an unknown attribute $\mathbf{u}_1$ of $\mathbf{x}_1$ from $\mathbf{x}_2$, or an unknown attribute $\mathbf{u}_2$ of $\mathbf{x}_2$ from $\mathbf{x}_1$, under the Markov chain $\mathbf{u}_1 \leftrightarrow \mathbf{x}_1 \leftrightarrow \mathbf{x}_2 \leftrightarrow \mathbf{u}_2$. In several cases, they show the optimal $k$ features are the first $k$ canonical variates of the observed view, obtained from the SVD of $\mathbf{M}$. For the same weakly correlated views, they also show that Wyner's common information equals the sum of the canonical correlations up to higher-order terms, and that an optimal latent variable of Wyner's problem is built from the canonical directions scaled by the square roots of the canonical correlations. Then, the sum of the canonical variates of the two views is a sufficient statistic for that latent variable.}

Sula and Gastpar~\cite{sula2021} also consider the feature-extraction problem, by revisiting Wyner's formulation for random vectors. Wyner's constraint requires the latent vector $\mathbf{w}$ to explain the entire dependence between the views, and for jointly Gaussian views this forces the optimal $\mathbf{w}$ to carry every canonical component with a nonzero canonical correlation. Sula and Gastpar therefore relax the constraint to $I(\mathbf{x}_1;\mathbf{x}_2 \mid \mathbf{w}) \le \gamma$, which lets $\mathbf{w}$ explain only the strongest part of the dependence and leave at most $\gamma$ bits unexplained. A larger $\gamma$ thus yields a coarser and lower-dimensional $\mathbf{w}$. Their method, called common information components analysis, computes the optimal $\mathbf{w}$ at level $\gamma$ and projects $\mathbf{w}$ back onto each view to obtain the extracted features. For Gaussian statistics, the method reduces exactly to CCA, and $\gamma$ sets how many canonical components are extracted.

So far, CCA has given us the canonical correlations and the aligned representations $\mathbf{Z}_1$ and $\mathbf{Z}_2$. It is worth pausing to ask a more basic, almost philosophical question. Why should two views be correlated at all, and what ties them together? As we see next, Bach and Jordan~\cite{bach2005} proposed a rigorous answer. Two views can be correlated because they share something, a common latent factor that drives both. Each observation is then a noisy, linearly transformed glimpse of the same hidden cause, and the correlation we measure is the trace this factor leaves on the two views at once. Once the factor is known, nothing further links the views, so they become conditionally independent given the factor, which is exactly Wyner's feasibility condition. This is a generative stance rather than a descriptive one. Instead of asking how aligned the two views are, a generative model posits the mechanism that makes them aligned, and turns CCA into the estimation of that~mechanism.

\subsection{Probabilistic {Interpretation of CCA}} 

\label{sec:cca_prob}

Bach and Jordan~\cite{bach2005} turned this intuition into a precise model. They posit a single $D$-dimensional latent factor that generates both views and prove that the maximum-likelihood estimates of the model parameters are expressed through the first $D$ canonical directions and canonical correlations. 
\begin{Definition}[Probabilistic CCA model~\cite{bach2005}]
Let $\bz \in \R^D$, with $D \le \min(M_1,M_2)$, be a latent random vector, and
let the two views be generated as
\begin{linenomath}
\begin{equation}
  \bz \sim \mathcal{N}(\mathbf{0},\bI_D), \qquad
  \bx_k \mid \bz \sim \mathcal{N}(\bW_k\bz,\bPsi_k), \quad k = 1,2,
  \label{eq:prob_cca}
\end{equation}
\end{linenomath}
where $\bW_k \in \R^{M_k \times D}$ are loading matrices and
$\bPsi_k \succ 0$ are noise covariances. The joint distribution
factorizes as
\begin{linenomath}
\begin{equation}
  p(\bx_1,\bx_2,\bz) = p(\bz)\,p(\bx_1\mid\bz)\,p(\bx_2\mid\bz),
  \label{eq:cond_ind_prob_cca}
\end{equation}
\end{linenomath}
so the two views are \emph{conditionally independent given} $\bz$.
\end{Definition}

Under the model above, the latent $\bz$ is the common factor, the loadings $\bW_k$ carry it into each view, and the noise covariances $\bPsi_k$ absorb the private variation that $\bz$ does not explain. Bach and Jordan showed that maximum-likelihood estimation of this model matches CCA. The maximum-likelihood estimates of the loading matrices are $\widehat\bW_k = \bSigma_{kk}\bQ_k\bM_k$, where the columns of $\bQ_k$ are the first $D$ canonical directions and $\bM_1,\bM_2 \in \R^{D\times D}$ are any matrices with $\bM_1\bM_2^\top = \bP$ and spectral norms smaller than one. Here, $\bP$ is the diagonal matrix of the first $D$ canonical correlations. The estimates of the residual covariances are $\widehat\bPsi_k = \bSigma_{kk} - \widehat\bW_k\widehat\bW_k^\top$. The posterior means are $\E[\bz \mid \bx_k] = \bM_k^\top\bQ_k^\top\bx_k$. Hence, the shared latent is estimated, up to an invertible linear map, by the canonical variates $\bQ_k^\top\bx_k$. Since $\bM_1$ and $\bM_2$ are not unique, the model fixes a representation of $\bz$ only up to this family of solutions. Among these solutions, the one that maximizes the conditional entropy of the views given $\bz$ takes $\bM_1 = \bM_2 = \bP^{1/2}\bR$ for a rotation matrix $\bR$.

This probabilistic reading also links CCA back to the common-information ideas of Section~\ref{sec:common_info}. The conditional independence in~\eqref{eq:cond_ind_prob_cca} is Wyner's feasibility condition, as $\bz$ renders $\bx_1 \perp \bx_2 \mid \bz$, and the model decomposes each view into a shared part $\bW_k\bz$ and a private part with covariance $\bPsi_k$. The three viewpoints we have developed now become facets of one object. Hotelling's canonical correlations measure the correlation strength of the shared structure, the mutual information formula~\eqref{eq:mi_cca} of Scharf and Mullis and of Borga measures its information content, and the Bach--Jordan latent $\bz$ captures its generative source. Yet the tie between the Bach--Jordan latent $\bz$ and Wyner's common information is sharper than feasibility alone. Every member of the family $\bM_1\bM_2^\top = \bP$ satisfies Wyner's conditional independence constraint, while Wyner's objective asks for the feasible $\bz$ that minimizes $I(\bx_1,\bx_2;\bz)$. Satpathy and Cuff~(\cite{satpathy2015}, Section~IV-B) show that Wyner's common information of jointly Gaussian views~\eqref{eq:wyner_gaussian} is attained by the latent $\mathbf{v}$ in the model
\begin{linenomath}
\begin{equation}
  \bQ_1^\top\bx_1 = \bP^{1/2}\mathbf{v} + (\bI-\bP)^{1/2}\mathbf{n}_1, \qquad
  \bQ_2^\top\bx_2 = \bP^{1/2}\mathbf{v} + (\bI-\bP)^{1/2}\mathbf{n}_2,
  \label{eq:satpathy_cuff}
\end{equation}
\end{linenomath}
where $\mathbf{v}$, $\mathbf{n}_1$ and $\mathbf{n}_2$ are independent $\mathcal{N}(\mathbf{0},\bI)$ vectors. Each pair of canonical variates thus shares $v_i$ with weight $\sqrt{\sigma_i}$ and carries independent noise of variance $1-\sigma_i$. This construction is a Bach--Jordan model with $\bM_1 = \bM_2 = \bP^{1/2}$, where $\bP$ holds the true canonical correlations. The optimal latent is therefore the population counterpart of the maximum-entropy solution singled out above.

\begin{Connection}[{\color{black}Special-case equivalence}: The balanced Bach--Jordan latent attains Wyner's common information]
\label{con:bachjordan}
In the Bach--Jordan model, the true latent $\bz$ is always feasible for Wyner's problem, because the model makes the views conditionally independent given $\bz$. However, for jointly Gaussian views, as in the case of the Bach--Jordan model, the Wyner-optimal latent of Satpathy and Cuff corresponds to the balanced Bach--Jordan model with $\bM_1=\bM_2=\bP^{1/2}$ in~\eqref{eq:satpathy_cuff}, built from the true canonical directions and correlations. The balanced maximum-likelihood solution of Bach and Jordan results in the same form, with sample-based estimates in place of the true ones, and is therefore the plug-in estimate of this optimal model. CCA thus provides every parameter of the optimal model, and the canonical correlations alone give its value $C(\bx_1;\bx_2)$ in~\eqref{eq:wyner_gaussian}.
\end{Connection}

Bach and Jordan's results show what CCA estimates once their probabilistic model is assumed to hold. Earlier we saw a second, algebraic reading. Given the sample matrices $\bX_1$ and $\bX_2$, CCA tries to match their column spaces through the principal angles and returns the two aligned representations $\bar\bZ_1^*$ and $\bar\bZ_2^*$. When the two column spaces share a nontrivial subspace, the components with unit canonical correlation coincide, and $\bar\bZ_1^*$ and $\bar\bZ_2^*$ agree on a basis for the shared subspace. As a result, CCA can provide an estimate of the intersection $\col(\bX_1)\cap\col(\bX_2)$. However, it is not clear when the subspace that CCA recovers is a \emph{meaningful} one. This is the \emph{identifiability} question. Suppose the data genuinely consist of a shared factor and private, view-specific ones, so that the common structure lives in a low-dimensional subspace. Under what conditions does CCA recover a unique nontrivial common subspace? For the two-view linear case, Ibrahim and Sidiropoulos~\cite{ibrahim2019} gave an explicit answer.

\subsection{Identifiability: When Does CCA Recover the Common Subspace?}
\label{sec:cca_id}

Consider a generative model in which for each view, the sample matrix is a mixture of a shared component and a private one,
\begin{linenomath}
\begin{equation}
  \bX_k = \bG\bS_k^\top + \bC_k\bH_k^\top + \bE_k, \quad k=1,2,
  \label{eq:gen_two_view}
\end{equation}
\end{linenomath}
where $\bG \in \R^{N \times D}$ spans the \emph{common subspace} shared by both views, $\bC_k \in \R^{N \times L_k}$ spans the \emph{private subspace} of view $k$, the matrices $\bS_k$ and $\bH_k$ hold the common and private mixing coefficients, and $\bE_k$ is noise. As we will see later, depending on the application,
$\col(\bG)$ may correspond to the subspace of cell-edge signals, or the task-related brain activation shared across brain areas or even subjects. Stacking the two blocks, we obtain $\bX_k = [\bG,\ \bC_k]\,[\bS_k,\ \bH_k]^\top + \bE_k$. Based on the algebraic generative model in (\ref{eq:gen_two_view}), when does CCA recover the common subspace $\col(\bG)$?
\begin{Theorem}[\cite{ibrahim2019}]
\label{thm:cca_id}
  Suppose $\bX_1$ and $\bX_2$ are generated by~\eqref{eq:gen_two_view}, and that $[\bG,\bC_1,\bC_2]$ and $[\bS_k, \bH_k]$, for $k = 1, 2$, are
  full column rank. Then, in the noiseless case, the CCA solution of~\eqref{eq:cca_multi} satisfies
  $\col(\bX_1\bQ_1^*) = \col(\bX_2\bQ_2^*) = \col(\bG)$.
\end{Theorem}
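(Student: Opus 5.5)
The plan is to reduce the statement to subspace intersection, using the principal-angle characterization~\eqref{eq:principal_angles}. In the noiseless case, $\bX_k = [\bG,\ \bC_k]\,[\bS_k,\ \bH_k]^\top$. We first compute the column spaces. Since $[\bS_k,\bH_k]$ has full column rank, $[\bS_k,\bH_k]^\top$ has full row rank $D+L_k$, so it is surjective onto $\R^{D+L_k}$. Hence $\col(\bX_k)=\col([\bG,\bC_k])$, and this space has dimension $D+L_k$ because $[\bG,\bC_1,\bC_2]$ has full column rank.

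The key step is to show
\[
\col(\bX_1)\cap\col(\bX_2)=\col(\bG).
\]
The inclusion $\supseteq$ is immediate. For $\subseteq$, take $\mathbf{y}=\bG\mathbf{a}_1+\bC_1\mathbf{b}_1=\bG\mathbf{a}_2+\bC_2\mathbf{b}_2$. Then $[\bG,\bC_1,\bC_2]\,[\mathbf{a}_1-\mathbf{a}_2;\ \mathbf{b}_1;\ -\mathbf{b}_2]=\mathbf{0}$, and full column rank forces $\mathbf{b}_1=\mathbf{0}$. So $\mathbf{y}\in\col(\bG)$, and the intersection has dimension exactly $D$. This dimension count is the step I expect to carry the argument. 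It is where the full-rank assumption on the stacked $[\bG,\bC_1,\bC_2]$, and not just on each $[\bG,\bC_k]$, is essential. Without it, private subspaces could intersect and inflate the common subspace.

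Next, connect this to the CCA optimum. By~\eqref{eq:principal_angles}, the optimal value of~\eqref{eq:cca_multi} is $2D-2\sum_{i=1}^D\cos\theta_i$, which is at least $0$. The number of zero principal angles between $\col(\bX_1)$ and $\col(\bX_2)$ equals the dimension of their intersection, namely $D$. So the optimal value is $0$, attained exactly when $\bX_1\bQ_1=\bX_2\bQ_2$ with orthonormal columns. One might worry about uniqueness, since the minimizer of~\eqref{eq:cca_multi} is not unique. The argument avoids this: any optimum has objective $0$, so $\bX_1\bQ_1^*=\bX_2\bQ_2^*$. This common matrix has $D$ orthonormal columns lying in both column spaces, hence in $\col(\bG)$. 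Being $D$-dimensional, its span equals $\col(\bG)$.

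There is a minor caveat. If the constraint in~\eqref{eq:cca_multi} is read with centering or sample normalization, nothing changes, because the argument uses only column spaces. I would also note that the rotational ambiguity $\bQ_k^*\mathbf{O}$ leaves the column space invariant, so the conclusion holds for every optimal pair.
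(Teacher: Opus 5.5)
Your proof is correct and follows the same route the paper gives in the discussion and remark after the theorem. The rank of $[\bS_k,\bH_k]$ gives $\col(\bX_k)=\col([\bG,\bC_k])$. Linear independence of the three subspaces makes the common/private split unique, so $\col(\bX_1)\cap\col(\bX_2)=\col(\bG)$. Noiseless CCA then returns this intersection through the $D$ zero principal angles.

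One correction to your closing caveat: centering does change column spaces, because it replaces $\bG,\bC_k$ by $\bJ\bG,\bJ\bC_k$. Full column rank of $[\bG,\bC_1,\bC_2]$ therefore does not by itself guarantee $\col(\bJ\bX_1)\cap\col(\bJ\bX_2)=\col(\bJ\bG)$, and under a centered reading the rank hypothesis must be checked on the centered factors.
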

The full column rank requirement on $[\bG,\bC_1,\bC_2]$ is just the statement that the three subspaces are linearly independent. It is worth stressing that this asks for less than orthogonality. The subspaces are free to lean toward one another at oblique angles, and the only thing ruled out is that any one of them shares a direction with the span of the other two. The cleanest way to see what this buys us is a uniqueness statement. Any vector in their combined span splits into a common part in $\col(\bG)$, a view-1 private part in $\col(\bC_1)$, and a view-2 private part in $\col(\bC_2)$, and independence guarantees that this split is unique. Were it not, some direction could be explained as either common or private, and the word ``common'' would lose its meaning. Independence is exactly what keeps a private direction from being rebuilt out of the common one or the other view's private part, so nothing private can quietly pass for common. Independence also implies a size requirement. The three subspaces can only stay independent if the ambient space has room for all of them at once, so their dimensions must fit, $N \ge D + L_1 + L_2$, and when they do their combined span has dimension exactly $D + L_1 + L_2$.

The second requirement, full column rank of $[\bS_k,\bH_k]$, keeps the
mixing from collapsing any common or private factor within a view. In the noiseless case, this requirement ensures $\col(\bX_k) = \col([\bG,\bC_k])$, so view $k$ keeps all $D + L_k$ of its common and private directions rather than losing some to a degenerate mix. A consequence of this requirement is that each view must be at least as wide as the number of factors, $M_k \ge D + L_k$. Later, we will see how these requirements translate into concrete system-design constraints, such as how many samples or antennas a deployment must provide for CCA to succeed.

\begin{Remark} The conditions of Theorem~\ref{thm:cca_id} are sufficient, and not all of them are necessary. When $[\bS_k,\bH_k]$ has full column rank for $k=1,2$, full column rank of $[\bG,\bC_1,\bC_2]$ is necessary for two views~\cite{sorensen2021}. In particular, $\col(\bC_1)\cap\col(\bC_2) = \{\mathbf 0\}$ is required, since a shared private direction would enter the intersection and be recovered as if it were common. Full column rank of $[\bS_k,\bH_k]$, on the other hand, is not generally necessary~\cite{sorensen2021}. A view can lose private directions to a degenerate private mixing $\bH_k$, and CCA still recovers $\col(\bG)$, because losing directions cannot enlarge the intersection of the two column spaces. The common mixing $\bS_k$ must be full column rank, since otherwise part of $\col(\bG)$ is missing from $\col(\bX_k)$. These facts follow from a single observation. In the noiseless case, CCA returns the intersection of the two column spaces exactly, so recovery succeeds precisely when that intersection is the modeled common subspace,
\begin{linenomath}
\begin{equation}
  \text{CCA identifies } \col(\bG)
  \iff \col(\bX_1)\cap\col(\bX_2) = \col(\bG).
\end{equation}
\end{linenomath}
The hypotheses of the theorem are a checkable way to guarantee the
right-hand side.
\end{Remark}

\begin{Connection}[{\color{black}Analogy}: Linear independence and Wyner's conditional independence]
\label{con:independence}
In the noiseless case, the model~\eqref{eq:gen_two_view} is deterministic and carries no distribution, so Wyner's conditional independence of Section~\ref{sec:common_info} is not literally defined here. What plays its role is the linear independence of the subspaces. Requiring the common and private subspaces to be jointly linearly independent, as Theorem~\ref{thm:cca_id} does, is the algebraic analogue of the private parts being conditionally independent given the common factor. Under this requirement, what CCA reports as common is genuinely common.
\end{Connection}

\begin{Connection}[{\color{black}Analogy}: One spectrum, two readings]
\label{con:two_readings}
A single set of canonical correlations carries both notions of
Section~\ref{sec:common_info}. The intersective reading, of Gács--Körner, lives at $\sigma_i = 1$, where the two canonical variates coincide and each view reconstructs the common component from its own data alone. Wyner's generative reading uses every nonzero $\sigma_i$, since one common variable must account for the whole dependence. For jointly Gaussian views, Wyner's quantity~\eqref{eq:wyner_gaussian} is finite only when every $\sigma_i < 1$, so the directions that Gács--Körner counts are exactly those that make Wyner's quantity infinite.
\end{Connection}

\subsection{Generalized CCA: From Two Views to $K$ Views}
\label{sec:gcca}

Thus far, we have studied the two-view setup, and we now understand that setup well. We know what CCA estimates, as well as models and conditions under which CCA recovers the true common latent subspace. The problems that opened this section, though, may involve more than two jointly observed random vectors, or views. A sensor network usually has many nodes, an fMRI study has many subjects, and a cellular uplink is heard by several base stations. \emph{Generalized canonical correlation analysis} (GCCA) conceptually extends CCA to $K \ge 2$ views. Rather than aligning a single pair, GCCA aims to align all views at once, producing reduced views $\bX_k\bQ_k$ that are close to each other. Unlike the two-view case, though, GCCA is not a single method. The quality of a $K$-view alignment can be measured in several inequivalent ways, and for $K > 2$ these give genuinely different formulations, and different solutions. In this tutorial, we focus on the five classical formulations collected by Kettenring~\cite{kettenring1971}, two of which he introduced. For two views, all five reduce to Hotelling's classical CCA~\cite{kettenring1971}. 

\subsubsection{Kettenring's Five Formulations}

Kettenring~\cite{kettenring1971} laid out five criteria for generalized
CCA. Each is a functional of the \linebreak $K\times K$ correlation matrix $\bPhi\left(\mathbf{q}_1,\ldots,\mathbf{q}_K\right)$ of the projected views. Its entries are \linebreak $\Phi\left(\mathbf{q}_1,\ldots,\mathbf{q}_K\right)_{kl} = \bq_k^\top\bSigma_{kl}\bq_l$, where $\bSigma_{kl} = \Cov(\bx_k,\bx_l)$ is the cross-covariance between views $k$ and $l$, while the unit-variance constraints
$\bq_k^\top\bSigma_{kk}\bq_k = 1$ fix the diagonal to $\Phi\left(\mathbf{q}_1,\ldots,\mathbf{q}_K\right)_{kk}=1$. When the projected views are mutually uncorrelated, $\bPhi\left(\mathbf{q}_1,\ldots,\mathbf{q}_K\right)$ is the identity, since its diagonal is always one and its off-diagonals vanish. Alignment pushes $\bPhi\left(\mathbf{q}_1,\ldots,\mathbf{q}_K\right)$ away from this baseline, spreading its eigenvalues away from one, and each of the five criteria measures that departure in a different way. As this may not be the first framing that comes to mind, we will see that each of the five also admits a more direct and intuitive problem formulation. Next, all five are stated here for a single component, one direction $\bq_k$ per view. Each extends to a $D$-dimensional representation by deflation, extracting further components in turn under a restriction that keeps them from repeating earlier ones. A simple choice requires the new variate of each view to be uncorrelated with the earlier variates of that view, and Kettenring discusses several alternatives, which can lead to different solutions~\cite{kettenring1971}.

\paragraph{SUMCOR (Sum of Correlations)}
The direct goal is to make the projected views, $z_k = \mathbf{q}_k^\top\mathbf{x}_k$, as correlated as possible on average, by maximizing the total pairwise correlation across the views,
\begin{linenomath}
\begin{equation}
  \max_{\{\bq_k\}_{k=1}^K} \; \sum_{k=1}^{K}\sum_{l=1}^K\mathrm{corr}(z_k,z_l)=\sum_{k=1}^{K}\sum_{l=1}^K \bq_k^\top\bSigma_{kl}\bq_l 
  \quad\text{s.t.}\quad \bq_k^\top\bSigma_{kk}\bq_k = 1, \;\; k=1,\ldots,K.
\end{equation}
\end{linenomath}
Since each term is
$\mathrm{corr}(z_k,z_l) = \bq_k^\top\bSigma_{kl}\bq_l = \Phi\left(\mathbf{q}_1,\ldots,\mathbf{q}_K\right)_{kl}$, the objective reduces to the sum of all entries of $\bPhi\left(\mathbf{q}_1,\ldots,\mathbf{q}_K\right)$,
$\mathbf{1}^\top\bPhi\left(\mathbf{q}_1,\ldots,\mathbf{q}_K\right)\,\mathbf{1}$. SUMCOR is the only formulation of the five whose value changes under sign flips of the $\bq_k$, and the formulation is due to Horst~\cite{horst1961b}.

\paragraph{MAXVAR (Maximum Variance)}
The goal of this formulation is to find a single consensus variable with
which all $K$ projected views are as correlated as possible on average.
Introducing an auxiliary consensus random variable $g$ of unit variance, MAXVAR maximizes the total squared correlation of the projected views with $g$,
\begin{linenomath}
\begin{equation}
  \max_{g,\{\bq_k\}_{k=1}^K} \; \sum_{k=1}^{K}\mathrm{corr}^2(z_k, g)
  \quad\text{s.t.}\quad \bq_k^\top\bSigma_{kk}\bq_k = 1, \;\; \Var(g)=1.
  \label{maxvar}
\end{equation}
\end{linenomath}
It can be shown that for fixed vectors $\bq_k$, the maximum over $g$ equals the largest eigenvalue $\lambda_{\max}(\bPhi(\mathbf{q}_1,\ldots,\mathbf{q}_K))$, so MAXVAR maximizes $\lambda_{\max}(\bPhi(\mathbf{q}_1,\ldots,\mathbf{q}_K))$ over the $\bq_k$. The largest value of $\lambda_{\max}$ is attained when $\bPhi\left(\mathbf{q}_1,\ldots,\mathbf{q}_K\right)$ becomes rank one, indicating that all views exactly agree on the single consensus $g$, up to sign. To see where the name comes from, consider a combination $w = \sum_{k} a_k z_k$ with $\|\mathbf{a}\|_2 = 1$. The variance of $w$ is $\mathbf{a}^\top\bPhi(\mathbf{q}_1,\ldots,\mathbf{q}_K)\,\mathbf{a}$, which is at most $\lambda_{\max}$, with equality when $\mathbf{a}$ is the top eigenvector. MAXVAR therefore maximizes the variance of the best such combination, hence the name. This consensus formulation originates with Horst~\cite{horst1961b} and Carroll~\cite{carroll1968}, and this formulation is the one we develop further below. The main reason is that, as we will see later, MAXVAR yields a direct eigendecomposition rather than the iterative solution SUMCOR requires.

\paragraph{SSQCOR (Sum of Squared Correlations)}
Here, the goal is again to make the projected views as correlated as
possible, but now measuring correlation without regard to sign, by
maximizing the total squared pairwise correlation across the views,
\begin{linenomath}
\begin{equation}
  \max_{\{\bq_k\}_{k=1}^K} \; \sum_{k=1}^{K}\sum_{l=1}^{K}\mathrm{corr}^2(z_k,z_l)
  \quad\text{s.t.}\quad \bq_k^\top\bSigma_{kk}\bq_k = 1, \;\; k=1,\ldots,K.
\end{equation}
\end{linenomath}
Since $\mathrm{corr}(z_k,z_l) = \bq_k^\top\bSigma_{kl}\bq_l = \bPhi\left(\mathbf{q}_1,\ldots,\mathbf{q}_K\right)_{kl}$
under the constraints, the objective reduces to the sum of squared entries of $\bPhi\left(\mathbf{q}_1,\ldots,\mathbf{q}_K\right)$, $\|\bPhi\left(\mathbf{q}_1,\ldots,\mathbf{q}_K\right)\|_F^2 = \sum_i \lambda_i(\bPhi\left(\mathbf{q}_1,\ldots,\mathbf{q}_K\right))^2$, and hence the sum of its squared eigenvalues. SSQCOR sits between SUMCOR and MAXVAR. SSQCOR is insensitive to sign like MAXVAR, and its objective is quadratic in the number of views like SUMCOR, but quartic in the directions $\bq_k$. SSQCOR was originally proposed by Kettenring~\cite{kettenring1971}.

\paragraph{MINVAR (Minimum Variance)}
Contrary to MAXVAR, where all views try to reach a consensus, MINVAR seeks a direction along which the projected views nearly cancel, that is, a weighted combination of them with as little variance as possible. Introducing a vector of weights $\bw\in\mathbb{R}^K$ of unit norm, MINVAR minimizes the variance of that combination,
\begin{linenomath}
\begin{equation}
  \min_{\{\bq_k\}_{k=1}^K,\; \|\bw\|=1} \; \Var\Big(\sum_{k=1}^K w_k z_k\Big)
  \quad\text{s.t.}\quad \bq_k^\top\bSigma_{kk}\bq_k = 1, \;\; k=1,\ldots,K.
\end{equation}
\end{linenomath}
Since $\Var(\sum_k w_k z_k) = \bw^\top\bPhi(\bq_1,\ldots,\bq_K)\bw$, for fixed vectors $\bq_1,\ldots,\bq_K$, the inner minimization over $\bw$ returns the smallest eigenvalue. As a result, MINVAR minimizes $\lambda_{\min}(\bPhi(\bq_1,\ldots,\bq_K))$. Driving $\lambda_{\min}$ toward zero pushes $\bPhi(\bq_1,\ldots,\bq_K)$ toward singularity, so the projected views come as close as possible to a linear dependence. Such singularity can emerge when any two views become collinear, since two rows equal up to sign make $\bPhi(\bq_1,\ldots,\bq_K)$ singular, regardless of the other $K-2$ views. Although this is an extreme form of singularity, Kettenring considered MINVAR a heuristic for finding the most correlated pair of views. Compared with the other four formulations, MINVAR is the least direct at aligning all views equally, since its value keys on a single collinear pair rather than on the whole set. MINVAR was also originally proposed by Kettenring~\cite{kettenring1971}.

\paragraph{GENVAR (Generalized Variance)}

The last formulation of GCCA is closest to MINVAR in terms of philosophy.
GENVAR is also trying to drive the projected views toward dependence,
now measured by their \emph{generalized variance}, the determinant of their
correlation matrix, i.e.,
\vspace{-6pt}
\begin{adjustwidth}{-\extralength}{0cm}
\begin{linenomath}
\begin{equation}
  \min_{\{\bq_k\}_{k=1}^K} \; \det\bPhi(\bq_1,\ldots,\bq_K)
  = \prod_i \lambda_i(\bPhi(\bq_1,\ldots,\bq_K))
  \quad\text{s.t.}\quad \bq_k^\top\bSigma_{kk}\bq_k = 1, \;\; k=1,\ldots,K.
\end{equation}
\end{linenomath}
\end{adjustwidth}
The determinant measures the volume spanned by the projected views, equal to one when they are uncorrelated ($\bPhi = \bI$) and vanishing as they approach linear dependence. For jointly Gaussian views, the determinant has a clean information-theoretic reading, since $-\tfrac{1}{2}\log\det\bPhi$ is their total correlation. In that case, GENVAR maximizes the total statistical dependence among the $K$ projected views. Because the determinant is the product of all eigenvalues, the determinant vanishes whenever any one of them does. However, unlike MINVAR, which watches only the smallest eigenvalue, GENVAR penalizes the whole spectrum, a distinction that matters only when perfect alignment, and hence a vanishing smallest eigenvalue, cannot occur. GENVAR is due to Steel~\cite{steel1951}.

The five agree in spirit and differ in emphasis. At the fully aligned
extreme $\bPhi = \mathbf{1}\mathbf{1}^\top$, SUMCOR and SSQCOR attain $K^2$, MAXVAR
attains $K$, and both $\det(\bPhi)$ and $\lambda_{\min}$ vanish.
Computationally, MAXVAR and MINVAR reduce to a single eigenvalue problem,
while SUMCOR, SSQCOR, and GENVAR require iterative
procedures~\cite{kettenring1971}. Beyond these five, the literature offers
many further generalizations. Nielsen~\cite{nielsen2002} pairs each
criterion with several constraint types. V\'ia, Santamar\'ia, and
P\'erez~\cite{via2007} give adaptive and least-squares formulations and
show the least-squares version coincides with MAXVAR. Regularized and
sparse variants extend GCCA to high-dimensional and structured
data~\cite{tenenhaus2011, witten2009}, and scalable algorithms handle very large problems~\cite{fu2017}. Nonlinear and deep extensions are deferred to Section~\ref{sec:nonlinear}.

\subsubsection{Revisiting the MAXVAR Formulation}
\label{sec:maxvar}

Among the five, MAXVAR is the one we focus on next, because MAXVAR admits a clean and scalable solution. Before analyzing it, we note that the formulation also admits a friendlier form than~\eqref{maxvar}, one that directly handles the extraction of $D > 1$ maximally correlated components. This form seeks a common representation $\bG \in \R^{N \times D}$ with orthonormal columns and a set of projections $\bQ_k$ that bring every view as close to $\bG$ as possible,
\begin{linenomath}
\begin{equation}
  \min_{\bG,\,\{\bQ_k\}_{k=1}^K}
  \sum_{k=1}^K \|\bX_k\bQ_k - \bG\|_F^2
  \quad\text{s.t.}\quad \bG^\top\bG = \bI_D.
  \label{eq:maxvar2}
\end{equation}
\end{linenomath}
This consensus formulation is due to Carroll~\cite{carroll1968}. For $D = 1$, it is equivalent to~\eqref{maxvar}. For $D > 1$, the constraint $\bG^\top\bG = \bI_D$ makes the consensus variables of successive components uncorrelated, which is one of the deflation restrictions discussed by Kettenring~\cite{kettenring1971}.

As we stated earlier, MAXVAR admits a closed-form solution. To see that, first, fix the common representation $\bG$ and minimize over the projections. Each term $\|\bX_k\bQ_k - \bG\|_F^2$ is a least-squares fit, minimized at\vspace{12pt}
\begin{linenomath}
\begin{equation}
  \bQ_k^* = \bX_k^\dagger\bG,
\end{equation}
\end{linenomath}
where $\dagger$ is the Moore--Penrose pseudoinverse. The fitted view is then $\bX_k\bQ_k^* = \bP_k\bG$, with $\bP_k = \bX_k\bX_k^\dagger =
\bU_k\bU_k^\top$ being the orthogonal projector onto $\col(\bX_k)$ and $\bU_k$ an orthonormal basis of $\col(\bX_k)$. Substituting back and using
$\|\bP_k\bG - \bG\|_F^2 = D - \Tr(\bG^\top\bP_k\bG)$, the objective becomes,
up to a constant, an eigenvalue problem in $\bG$,
\begin{linenomath}
\begin{equation}
  \max_{\bG^\top\bG = \bI_D} \;
  \Tr\!\Big(\bG^\top \Big(\textstyle\sum_{k=1}^K \bU_k\bU_k^\top\Big)\bG\Big).
  \label{eq:maxvar_reduced}
\end{equation}
\end{linenomath}
The solution of~\eqref{eq:maxvar_reduced} is given by the top $D$ eigenvectors of $\sum_k \bU_k\bU_k^\top$. A full eigendecomposition of $\sum_k \bU_k\bU_k^\top$  costs $\mathcal{O}(N^3)$ operations, which makes computing the solution challenging.

The authors of \cite{peppas2023harnessing} and \cite{peppas2025signal} noticed that this complexity can be reduced as the same solution is given equivalently by the top $D$ left singular vectors of the concatenated basis matrix $[\bU_1,\ldots,\bU_K] \in \R^{N \times \sum_k M_k}$ ({{here} 
we assume that each view has full column rank; otherwise, the basis $\bU_k$ of a rank-deficient view has fewer than $M_k$ columns}). So $\bG$ can be obtained from one thin SVD per view, to get the $\bU_k$, followed by a single SVD of the stacked bases. The computation costs $\mathcal{O}\big(N(\sum_k M_k)^2\big)$, linear in the number of samples $N$, which is what lets MAXVAR scale to large problems in terms of samples. Which route is preferable depends on the two dimensions. When the total view dimension $\sum_k M_k$ is smaller than the number of samples $N$, as in most large-sample settings, the second route wins, since it never forms or decomposes the $N \times N$ matrix. When instead $N < \sum_k M_k$, both routes cost $\mathcal{O}\big(N^2\sum_k M_k\big)$, since forming the $N \times N$ matrix already costs as much as the SVD of the stacked bases.

\subsubsection{Identifiability for $K$ Views}
\label{sec:gcca_id}

The two-view identifiability of the previous subsection extends to $K$
views, and the extension is favorable. More views impose more agreement, so the conditions for recovering the common subspace relax as $K$ grows.
S{\o}rensen et~al.~\cite{sorensen2021} made this
precise by reading GCCA as subspace intersection.

\begin{Theorem}[{\cite{sorensen2021}, Theorem~4.2}]
\label{thm:gcca_id}
  Let $\bX_k = \bG\bS_k^\top + \bC_k\bH_k^\top$ for $k = 1,\ldots,K$, with
  $\bG \in \R^{N \times D}$ spanning the common subspace and
  $\bC_k \in \R^{N \times L_k}$ the private subspace of view $k$. If
  \begin{linenomath}
  \begin{equation*}
    \boldsymbol{\Gamma}_K =
    \begin{bmatrix}
      \bC_1 & -\bG & -\bC_2 &        &       &          \\
      \bC_1 & &   &-\bG & -\bC_3 &        &                \\
      \vdots &    & &      &      \ddots    & \ddots  &     \\
      \bC_1 &     & &        &    &      -\bG  &  -\bC_K
    \end{bmatrix}
    \in \R^{(K-1)N \times \left((K-1)D + \sum_{k=1}^K L_k\right)}
  \end{equation*}
  \end{linenomath}
  has full column rank and $[\bS_k,\bH_k]$ has full column rank for $k = 1,\ldots,K$, then
  \linebreak $\dim\big(\bigcap_{k=1}^K \col(\bX_k)\big) = D$ and
  $\bigcap_{k=1}^K \col(\bX_k) = \col(\bG)$.
\end{Theorem}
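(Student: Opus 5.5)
The plan has three steps. First, reduce every column space to a span of the generating factors. Second, characterize the intersection by a linear system. Third, show that full column rank of $\boldsymbol{\Gamma}_K$ forces every solution of that system to lie in $\col(\bG)$.

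\emph{Step 1.} Because $[\bS_k,\bH_k]$ has full column rank, the factorization $\bX_k = [\bG,\bC_k][\bS_k,\bH_k]^\top$ gives $\col(\bX_k) = \col([\bG,\bC_k])$. Indeed, $[\bS_k,\bH_k]^\top$ then has full row rank, so it maps onto $\R^{D+L_k}$. Hence
\begin{equation*}
\bigcap_{k=1}^K \col(\bX_k) = \bigcap_{k=1}^K \col([\bG,\bC_k]),
\end{equation*}
and it is enough to prove that this intersection equals $\col(\bG)$. The inclusion $\col(\bG) \subseteq \bigcap_k \col([\bG,\bC_k])$ is immediate.

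\emph{Step 2.} For the reverse inclusion, take $\mathbf{y}$ in the intersection. For each $k$ write $\mathbf{y} = \bG\mathbf{a}_k + \bC_k\mathbf{b}_k$. Equating the representation for $k=1$ with that for each $k=2,\dots,K$ gives
\begin{equation*}
\bC_1\mathbf{b}_1 + \bG\mathbf{a}_1 - \bG\mathbf{a}_k - \bC_k\mathbf{b}_k = \mathbf{0}, \qquad k = 2,\dots,K.
\end{equation*}
This is one block row of a homogeneous system in the unknowns $(\mathbf{b}_1,\mathbf{a}_1,\mathbf{a}_2,\mathbf{b}_2,\dots,\mathbf{a}_K,\mathbf{b}_K)$. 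The block structure of $\boldsymbol{\Gamma}_K$ has no column for $\mathbf{a}_1$. To fit the system to it, I will absorb $\mathbf{a}_1$ by changing variables to $\mathbf{c}_k = \mathbf{a}_k - \mathbf{a}_1$. The $k$-th block row then reads $\bC_1\mathbf{b}_1 - \bG\mathbf{c}_k - \bC_k\mathbf{b}_k = \mathbf{0}$. In matrix form this is exactly $\boldsymbol{\Gamma}_K[\mathbf{b}_1;\mathbf{c}_2;\mathbf{b}_2;\dots;\mathbf{c}_K;\mathbf{b}_K] = \mathbf{0}$, with $(K-1)D + \sum_k L_k$ unknowns, which matches the column count stated in the theorem.

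\emph{Step 3.} Full column rank of $\boldsymbol{\Gamma}_K$ forces $\mathbf{b}_1 = \mathbf{0}$, $\mathbf{c}_k = \mathbf{0}$ and $\mathbf{b}_k = \mathbf{0}$ for all $k$. Therefore $\mathbf{y} = \bG\mathbf{a}_1 + \bC_1\mathbf{b}_1 = \bG\mathbf{a}_1 \in \col(\bG)$, which proves the reverse inclusion.

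For the dimension claim, $\bG$ must have full column rank. If $\bG\mathbf{c} = \mathbf{0}$ with $\mathbf{c}\ne\mathbf{0}$, placing $\mathbf{c}$ in any $\mathbf{c}_k$ slot and zeros elsewhere gives a nonzero vector in the null space of $\boldsymbol{\Gamma}_K$, a contradiction. Thus $\dim\col(\bG) = D$, and the intersection has dimension $D$.

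The step needing the most care is the bookkeeping in Step 2. One must check that the unconstrained $\mathbf{a}_1$ is correctly removed by the change of variables, so that no column of $\bG$ is missing from the pattern of $\boldsymbol{\Gamma}_K$, and that the block ordering matches the displayed matrix. The rest is routine.
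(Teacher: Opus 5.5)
Your proof is correct and follows the reasoning the paper itself sketches: a vector of the intersection lying outside $\col(\bG)$, written through view~1 and through each other view, would give a nonzero null vector of $\boldsymbol{\Gamma}_K$, and your change of variables $\mathbf{c}_k = \mathbf{a}_k - \mathbf{a}_1$ makes that bookkeeping explicit. Your derivation of full column rank of $\bG$ from full column rank of $\boldsymbol{\Gamma}_K$ (which uses $K \ge 2$) correctly settles the dimension claim.
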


The conditions above can be understood as follows. In the case of two views, full column rank of $\boldsymbol{\Gamma}_K$ boils down to the first identifiability condition for the two-view case of Ibrahim and Sidiropoulos~\cite{ibrahim2019}. 
Based on the assumed model, the intersection of the $K$ column spaces is $\col(\bG)$ enlarged by any directions that the private subspaces share outside $\col(\bG)$~\cite{sorensen2021}. Intuitively, $\boldsymbol{\Gamma}_K$ takes view 1 as a reference and asks whether any of its directions outside $\col(\bG)$ can be reproduced by every other view at the same time. Full column rank of $\boldsymbol{\Gamma}_K$ says that no such direction exists, since any one would yield a nonzero vector in the null space of $\boldsymbol{\Gamma}_K$, so $\col(\bG)$ is the only part shared by all views. Pairs of views may still share private directions, as long as the directions are not shared by every view. The rank condition is cumbersome to check directly, but it requires a dimension budget, $D + \tfrac{1}{K-1}\sum_k L_k \le N$. The same holds for the rank condition on $[\bS_k,\bH_k]$, which requires $D + L_k \le M_k$ for all $k$. Numerical experiments suggest that satisfying these budgets is also generically sufficient~\cite{sorensen2021}. The factor $\tfrac{1}{K-1}$ relaxes the budget as views are added. Hence, a common subspace that two views alone cannot separate can become identifiable once more views are available. For instance, with $N = 1000$, a common subspace of dimension $D = 10$, and private subspaces $L_k = 800$ in every view, the budget $D + \tfrac{1}{K-1}\sum_k L_k$ equals $1010$ at $K = 5$, exceeding $N$, so $\col(\bG)$ cannot be identified. At $K = 6$ the budget drops to $970$, so a sixth view makes identification possible. At $K = 2$ the budget reads $D + L_1 + L_2 \le N$, which is necessary for $[\bG,\bC_1,\bC_2]$ to have full column rank.

What recovers $\col(\bG)$ is the subspace intersection itself, not MAXVAR as such.
MAXVAR returns $\col(\bG)$ when the views are effectively full rank. In the experiments of S{\o}rensen et~al., MAXVAR fails to identify $\col(\bG)$ when the views are approximately low rank, whereas their subspace-intersection algorithm, which prescribes the signal rank of each view, still recovers $\col(\bG)$~\cite{sorensen2021}. The identifiability above is a property of the intersection, and any method that computes the intersection inherits the guarantee. As in the two-view case, the model is deterministic, and $\col(\bG)$ plays the role of the Gács--Körner common part of Section~\ref{sec:common_info}, the part that every view can reproduce on its own.

\subsection{Summary}
This section developed CCA as the second-order, sample-based counterpart of the common information of Section~\ref{sec:common_info}. Here, the protagonists were no longer scalar random variables but random vectors. Then, the question shifted to how many correlated variables can be formed by taking linear combinations within each view, and how to estimate them from data. Before moving on, it is worth collecting the main ideas.
\begin{itemize}
  \item \textbf{CCA finds the most correlated directions between two views.}
    Hotelling's problem is solved in closed form through an eigenvalue decomposition. The canonical correlations
    are the cosines of the principal angles between the two view subspaces, so CCA is at heart a statement about the relative geometry of those subspaces.
  \item \textbf{For the Gaussian case CCA computes Wyner's common information.} The canonical correlations determine both the mutual information, as shown by Scharf and Mullis and later by Borga, and Wyner's common information, as shown by Satpathy and Cuff. The balanced latent of the Bach--Jordan probabilistic model attains Wyner's common information. By analogy, the directions with canonical correlation one span the common subspace and play the role of the Gács--Körner common part of Section~\ref{sec:common_info}. Wyner's common information is finite and positive when some canonical correlations are nonzero and all are below one, that is, when the views are correlated but share no common subspace.
  \item \textbf{CCA identifies the common subspace under a generative model.} 
  When each view is a common component plus a private one, simple rank
conditions say when the shared subspace is recoverable from noiseless data, separating what CCA can and cannot~\mbox{disentangle}.
  \item \textbf{The framework extends to $K$ views.} The five formulations collected by Kettenring coincide for two views and diverge beyond. MAXVAR yields $D$ consensus variables, the columns of an orthonormal basis $\bG$, through an eigenvalue problem. Finally, $K$-view identifiability, cast as an intersection of subspaces, generically improves as more views are added.
\end{itemize}


The three problems that opened this section can now be addressed. For the sensor
network, the pairs of nodes are sorted by their canonical correlations, and the
strongly correlated pairs can now be encoded jointly rather than separately. For the
two voxel neighborhoods, the same quantities report whether the activity is
correlated, while the canonical variates carry the shared activity itself and not
only its strength. For the detection and decoding of cell-edge users, neighboring
base stations can forward their received measurements to a central unit. That unit
computes the canonical correlations and uses them as a detection criterion, i.e., a
high canonical correlation indicates the presence of one or more weak users. The
extracted variates can then be processed further to decode those users.
Section~\ref{sec:cca_id} gives the conditions under which the recovered subspace is
the one spanned by the cell-edge users. Generalized CCA handles the many nodes of a real network, and the many subjects of an fMRI study, in one step rather than pair by pair.

Our vantage point in this section was mostly statistical. Canonical correlation analysis and its generalizations were estimators of a shared structure from second-order statistics, and the probabilistic reading tied that structure back to the common-information notions of Section~\ref{sec:common_info}. Signal processing reaches the same object from a different starting point. There the common subspace arises from a concrete detection problem, where several sources are received across an array and the task is to tell them apart. The tools are the ones we have already used, common subspaces and the singular value decomposition, but there the guarantee that matters is operational, namely how little a receiver must know in advance to recover the sources. The next section follows that thread. That section starts with a single array and adds a second, at which point the common subspace of CCA reappears.

\section{The Signal Processing Thread, from Subspace Methods to CCA}
\label{sec:sp}

We now leave the statistical perspective of the previous section and take up
the one of signal processing. Subspace methods are a cornerstone of signal processing, and their starting point is a single observation about multichannel data. A multichannel receiver records $M$ channels per snapshot, which may be the elements of an antenna array or, as we see below, the chips of a spreading code.
Gathering a whole recording into a matrix, we let each row hold one snapshot across all channels and each column hold the entire history of one channel. As a result, $M$ channels observed over $N$ snapshots give a matrix $\bX \in \R^{N \times M}$. {{All} quantities in this section are real valued. The real convention costs nothing, since a complex baseband record splits into its real and imaginary
parts, which stack into a real matrix with twice as many channels.} 
What the receiver records is a superposition of a few sources, such as communication users, radar echoes or acoustic sources. When the sources are few and the channels many, the noiseless data matrix has low rank.

The row space of the noiseless data matrix is a \emph{signal subspace} and everything orthogonal to that subspace is a \emph{noise subspace} ({{array} processing usually arranges the snapshots as columns of the data matrix, in which case the signal subspace is the column space; with the snapshots as rows, as here, the same subspace is the row space}). The signal subspace matters because every source contributes to the data only through its direction across the $M$ channels, and all these directions lie in the signal subspace. Restricting attention to the signal subspace therefore keeps everything the sources contribute, discards the noise outside that subspace, and reduces the problem from $M$ dimensions to the number of sources. As long as the sources are strong enough, the two subspaces can be read off the eigendecomposition of a sample covariance matrix, or equivalently the SVD of $\bX$, without knowing anything about the sources themselves. This subspace estimate is the first step of a broad family of methods, from high-resolution direction-of-arrival estimation to interference suppression, blind channel estimation, and the multiuser detection we develop~here.

The signal subspace holds the contributions of all sources together, while a receiver is often interested in only some of them, such as the users that the receiver has to decode. {\color{black} To recover those users, the receiver must suppress the interference of all other users, who can be quite strong. Suppressing that interference amounts to finding the subspace that the desired users occupy, after which the receiver decodes the individual users within that subspace. Both steps rely on prior information, and this section asks how little a receiver must know in advance to carry them out. The question becomes most challenging for the weak users of interest, whose low received power makes their detection hard, especially in the presence of strong interference. We progress from receivers that rely on prior information about the users of interest to multiview receivers in which a second recording of the same transmission takes the place of most of that information. The second recording is provided either by a second receiver or by the transmitter itself, which sends its data twice. }

We can already see from this discussion how this problem connects to
Sections~\ref{sec:common_info} and~\ref{sec:cca}. The cell-edge users of Section~\ref{subsec:crossing-disciplines} are hard to decode, because every station receives them at low power. However, two stations that both receive a cell-edge user record the same transmission of that user. Similarly, when a user repeats a block, the two halves of the received block carry the same transmission. These recordings are natural candidates for views. Estimating the subspace occupied by the weak users then becomes the problem of estimating a
common subspace that the views share. The problem of interest here therefore
has an intersective character, aligned with the Gács--Körner notion of common
information from Section~\ref{sec:common_info}. At the same time, the several channels of each recording form a random vector, which serves as a view. A limited
number of snapshots for learning the joint distribution, together with an
object of interest that is itself a subspace, then point to algebraic methods
for computing the subspace intersection with CCA. The sufficient conditions for
recovering the common subspace, established abstractly in
Section~\ref{sec:cca_id}, therefore apply here as well, with the dimensions
counted in channels and block lengths.

The rest of this section is organized as follows. Section~\ref{sec:sp:model} states the signal model and the problem. Section~\ref{sec:sp:wangpoor} begins with a single-antenna receiver and code-division multiple access~(CDMA), where each user spreads its symbols with a signature known to the receiver, and presents the subspace approach of Wang and Poor, in which only the signature of the user of interest must be known. Section~\ref{sec:sp:pilots} turns to a receiver with multiple antennas, where known pilot symbols take the place of the signature. When only the pilots of the user of interest are known, however, strong interference degrades the estimates they provide. Section~\ref{sec:sp:leap} shows that a second station receiving the same users provides a multiview setup, in which CCA discards the interference that is not shared between the stations, without prior information. Section~\ref{sec:sp:celledge} applies this idea to recovering cell-edge users. Section~\ref{sec:sp:mmse} creates the second view at the transmitter, by repetition, and shows that CCA then yields the linear minimum mean-square-error~(MMSE) detector without channel estimation. Finally, Section~\ref{sec:sp:synthesis} compares what each receiver must know and what that knowledge costs, and discusses two assumptions whose relaxation leads to the rest of this~tutorial.

\subsection{The Signal Model and the Problem}
   \label{sec:sp:model}

Let $\bX \in \R^{N \times M}$ be the matrix of noisy multichannel recordings of
$R \ll M$ sources,
\begin{linenomath}
\begin{equation}
  \bX = \bB\bH^\top + \bE \in \R^{N \times M}.
  \label{eq:signal_model}
\end{equation}
\end{linenomath}
Here, $\bB \in \R^{N \times R}$ is the \emph{source matrix}, whose columns are
the source waveforms across $N$ snapshots, $\bH \in \R^{M \times R}$ holds the
channel responses or mixing coefficients, and $\bE$ is noise. When $\bB$ and
$\bH$ have full column rank, as assumed here, the noiseless data matrix $\bB\bH^\top$ has rank $R$. This low-rank structure is what the subspace decomposition exploits. In the noiseless case, the SVD
$\bX = \bU\,\mathrm{diag}(s_1, s_2, \ldots)\,\bV^\top$, with
$s_1 \ge s_2 \ge \cdots \ge 0$, exposes two subspaces at once. The first $R$ columns of $\bU$ span $\col(\bB) \subset \R^{N}$, the subspace in which the
source waveforms live, while the first $R$ columns of $\bV$ span $\col(\bH) \subset \R^{M}$, the subspace spanned by the channel vectors. The span $\col(\bH)$ is what array processing conventionally calls the signal subspace, and many classical detectors operate on $\col(\bH)$. The span $\col(\bB)$ is its counterpart in snapshot space. 

For a base station, for example, the sources are the users in its cell, whose transmitted waveforms span $\col(\bB)$ and whose channel vectors span $\col(\bH)$. Two base stations that both receive the cell-edge users record two different matrices $\bX_1$ and $\bX_2$. Since both recordings capture the cell-edge users, the transmissions of those users lie in both $\col(\bB_1)$ and $\col(\bB_2)$, and hence in their intersection $\col(\bB_1)\cap\col(\bB_2)$. This is why the snapshot space becomes the object of interest from Section~\ref{sec:sp:leap} onward.

The two subspaces $\col(\bH)$ and $\col(\bB)$ are the leading eigenspaces of two matrices formed from $\bX$. The $M \times M$ sample covariance $\tfrac{1}{N}\bX^\top\bX$ exposes $\col(\bH)$, and the $N \times N$ Gram matrix $\tfrac{1}{M}\bX\bX^\top$ exposes $\col(\bB)$. The decomposition is exact when $\bE = \mathbf 0$ and stays reliable as long as the signal-to-noise ratio keeps the $R$ signal singular values above the noise floor. {{Recovering} the source matrix $\bB$
itself from $\bX$ is the problem of blind source separation, and a prominent
method for it is independent component analysis (ICA)~\cite{hyvarinen2001}.
ICA assumes the sources to be statistically independent and non-Gaussian.
Borga's generalized-eigenvalue framework~\cite{borga2001} places CCA, PCA, PLS
and multilinear regression in a single family built on second-order
statistics, and ICA lies outside that family, relying instead on higher-order~statistics.}

The problem we follow through this section is how a receiver can recover some of
the columns of $\bB$, namely the waveforms of the users of interest, from $\bX$,
while the remaining columns act as interference. How this can be done depends on
what the receiver knows in advance about $\bB$ and $\bH$. The receiver may know the directions of the columns of $\bH$, or a few entries of $\bB$, or only the values that these entries can take. Alternatively, the transmitter may repeat the entries of a column of $\bB$ by design. The subsections that follow differ precisely in what is built into the transmission and known to the receiver.

\subsection{Multiuser Detection in CDMA: The Wang--Poor Approach}
\label{sec:sp:wangpoor}

Subspace methods did not begin in communications. They were originally
developed for sensor array processing and spectral estimation, and they were
carried into multiuser detection later~\cite{wangpoor1998}. As a first step
toward recovering the symbol streams of some users, we assume structure that
makes the problem easier. Specifically, we assume that the subspace
$\col(\bH)$ is fixed by design, and that the receiver knows the column of
$\bH$ that corresponds to a user of interest, up to a scalar. This assumption
is realistic in the classical setting of multiuser detection, code-division
multiple access~(CDMA), where the prior information of the receiver takes a
concrete form, the signature and timing of each user.

Under CDMA, several users transmit simultaneously over the same frequency band,
and each is assigned a distinct signature waveform, or spreading code, that the
receiver uses to tell them apart. A signature is a sequence of $M$ chips, where a chip corresponds to the elementary pulse from which the code is built, and $M$ is called the processing gain. Here, we assume a single-antenna receiver and synchronous users. User $r$ is assigned a signature $\bd_r \in \R^{M}$, normalized to unit norm, and transmits each of its symbols $b_r[n]$ by multiplying the symbol with $\bd_r$, so that each symbol is spread over $M$ chips. Each user therefore sends one symbol per $M$ chips, and its rate drops to $1/M$ of what an unspread
transmission would achieve. The channel scales the transmission of user $r$ by a gain $h_r$, and the receiver records the sum over all users. Sampling the received signal once per chip during symbol interval $n$ gives the $M$-dimensional snapshot\vspace{12pt}
\begin{linenomath}
\begin{equation}
  \bx[n] = \sum_{r=1}^{R} h_r\, b_r[n]\, \bd_r + \be[n],
  \label{eq:cdma_snapshot}
\end{equation}
\end{linenomath}
where $\be[n]$ is noise. This is the synchronous model of~\cite{wangpoor1998}. Every snapshot is a linear combination of the same $R$ vectors $h_r\bd_r$, and only the coefficients $b_r[n]$ change from one symbol to the next.

Stacking the snapshots $\bx[n]^\top$ over $N$ symbol intervals as the rows of $\bX$ gives exactly~\eqref{eq:signal_model}, where the columns of $\bB \in \R^{N \times R}$ are now the users' symbol streams, which we take to be uncorrelated with unit power, so that each row of $\bB$ has identity covariance. With $\bD \in \R^{M \times R}$ holding the signatures $\bd_r$ in its columns and $\bh \in \R^{R}$ the gains $h_r$, the mixing matrix is
\begin{linenomath}
\begin{equation}
  \bH = \bD\,\mathrm{diag}(\bh).
  \label{eq:effective_signatures}
\end{equation}
\end{linenomath}
The noiseless data matrix has rank $R$ as long as $R \le M$ and the
signatures are linearly independent, which is what the processing gain buys,
since $M$ chips leave room for up to $M$ independent signatures. The columns of $\bH$ are \emph{effective} signatures, in which the assigned code fixes the direction and the channel fixes only the norm and sign. As a result, $\col(\bH) = \col(\bD)$ whenever every $h_r \neq 0$, so the signal subspace is fixed by design. 

The factorization~\eqref{eq:effective_signatures} separates two kinds of prior
information. The factor $\bD$ is \emph{designed}, fixed by the system before
any transmission, so a receiver may know $\bD$ exactly. The factor
$\mathrm{diag}(\bh)$ is \emph{propagated}, set by the physical channel from
each user to the receiver, so a receiver can only estimate $\mathrm{diag}(\bh)$
from the data, and the estimate is poorest for the weak users. The task is to recover a given user's symbols while suppressing the others, whose transmissions act as multiple-access interference. The receivers we consider next are linear filters $\bw \in \R^M$, known as \emph{detectors}, applied as $\bX\bw$.

The classical linear detectors set the baseline. The decorrelating detector of Lupas and Verd\'u~\cite{lupasverdu1989}, also called the zero-forcing detector because it drives the interference to zero, applies the inverse of the
cross-correlation matrix $\bD^\top\bD$ of the assigned signatures to the
matched-filter outputs, which cancels the multiple-access interference entirely
at the price of amplifying the noise. Note that $\bh$ does not appear in
$\bD^\top\bD$, so the detector cannot depend on how strong the interfering users
are. This is the property known as \emph{near--far resistance}, and it will
matter again in Section~\ref{sec:sp:leap}. The linear MMSE receiver of Madhow and Honig~\cite{madhowhonig1994} instead minimizes the MSE between its output and the transmitted symbol, balancing interference suppression against noise enhancement. To strike that balance, the MMSE receiver inverts the
received covariance
\begin{linenomath}
\begin{equation}
  \bSigma = \bD\,\mathrm{diag}(\bh)^2\,\bD^\top + \sigma^2\bI,
  \label{eq:cdma_cov}
\end{equation}
\end{linenomath}
which follows from~\eqref{eq:signal_model} and~\eqref{eq:effective_signatures}
for white noise of variance $\sigma^2$. The MMSE receiver is near--far
resistant as well~\cite{wangpoor1998}, since $\bSigma^{-1}$ suppresses each
interferer more strongly the stronger that interferer is received. The interferers need not use CDMA either, since the linear MMSE detector also
suppresses a rich variety of narrowband interference from coexisting
systems~\cite{poorwang1997part1, buzzi2002overlay}.

Both detectors are treated in the standard reference on the subject~\cite{verdu1998}. Their cost in prior information is substantial and asymmetric. The zero-forcing detector needs the signatures of \emph{all} users. The MMSE receiver needs those signatures too, together with the users' \emph{received amplitudes}, i.e., the entries of $\bh$, and the \emph{noise variance} $\sigma^2$, since $\bh$ sits between the assigned signatures in~\eqref{eq:cdma_cov} and cannot be factored out. The received amplitudes depend on the transmit power and the path loss, so they are propagated quantities rather than designed ones. For the desired user, whose signature is the $r$-th column $\bd_r$ of $\bD$, the
two classical detectors are\vspace{12pt}
\begin{linenomath}
\begin{equation}
  \bw_r^{\mathrm{zf}} \;\propto\; \bD(\bD^\top\bD)^{-1}\mathbf{e}_r,
  \qquad
  \bw_r^{\mathrm{mmse}} \;\propto\; \bSigma^{-1}\bd_r,
  \label{eq:classical_detectors}
\end{equation}
\end{linenomath}
where $\mathbf{e}_r$ is the $r$-th standard basis vector. For a row $\bx^\top$
of $\bX$, the output $\bx^\top\bw_r^{\mathrm{zf}}$ is the $r$-th entry of
$(\bD^\top\bD)^{-1}\bD^\top\bx$, which is the decorrelating operation described
above applied to the matched-filter outputs $\bD^\top\bx$.

\subsubsection*{{What} 
 the Signal Subspace Buys}

Wang and Poor~\cite{wangpoor1998} showed that both detectors can be built from
the \emph{signal subspace} alone. Take the $M \times M$ sample covariance
$\hat\bSigma = \tfrac{1}{N}\bX^\top\bX$ of Section~\ref{sec:sp:model}. Its
population form is the one in~\eqref{eq:cdma_cov}, and the eigendecomposition of that form separates into a signal part and a noise part,
\begin{linenomath}
\begin{equation}
  \bSigma = \bU_s\bLambda_s\bU_s^\top + \sigma^2\bU_n\bU_n^\top,
  \label{eq:wp_evd}
\end{equation}
\end{linenomath}
where $\col(\bU_s) = \col(\bD)$, assuming $\mathrm{diag}(\bh)$
in~\eqref{eq:effective_signatures} is invertible. What the eigendecomposition
exposes is the span of the assigned signatures, which the channel gains do not
change, and the leading eigenvectors of $\hat\bSigma$ estimate that span.

In terms of the signal subspace, the two detectors
of~\eqref{eq:classical_detectors} can be written as
\begin{linenomath}
\begin{equation}
  \bw_r^{\mathrm{zf}} \;\propto\;
    \bU_s(\bLambda_s - \sigma^2\bI)^{-1}\bU_s^\top\bd_r,
  \qquad
  \bw_r^{\mathrm{mmse}} \;\propto\;
    \bU_s\bLambda_s^{-1}\bU_s^\top\bd_r,
  \label{eq:wp_detectors}
\end{equation}
\end{linenomath}
each up to a positive scale factor. Everything on the right-hand side except $\bd_r$ is estimated from the received data. The interfering users' signatures are never needed, and neither are the received amplitudes, because comparing~\eqref{eq:wp_evd}
with~\eqref{eq:cdma_cov} gives
\begin{linenomath}
\begin{equation}
  \bU_s(\bLambda_s - \sigma^2\bI)\bU_s^\top
    = \bD\,\mathrm{diag}(\bh)^2\,\bD^\top,
  \label{eq:powers_from_eigs}
\end{equation}
\end{linenomath}
so the signal eigenpairs provide exactly the amplitude information that the
classical MMSE receiver had to be told. This is \emph{blind multiuser detection}, blind in the sense that it requires neither training symbols nor any knowledge of the interfering users, needing only the signature and timing of the user of
interest.

Two consequences of~\eqref{eq:wp_detectors} matter for what follows. First,
the two detectors of~\eqref{eq:wp_detectors} coincide as $\sigma \to 0$, so at high signal-to-noise ratio the distinction between zero-forcing and MMSE disappears.
Second, and most important here, the receiver must still be handed the
signature $\bd$. Given the signature, the timing can itself be estimated from
the data. The standard sliding correlator fails when the received powers are
very unequal~\cite{strom1996}, but subspace-based and maximum-likelihood delay
estimators remain near--far r\mbox{esistant~\cite{strom1996, bensleyaazhang1996}}.
Without the signature, however, nothing in a single covariance singles out a
weak user from the interferers that share the signal subspace, for
synchronization or for detection.

The subspace approach also survives departures from the model above. Under
multipath, the effective signature is no longer a scaled copy of the assigned
one, but the effective signature can still be estimated from the same subspace,
given the assigned one~\cite{wangpoor1998}. The number of active users need not
be known either, since information-theoretic criteria estimate the rank of the
signal subspace from the eigenvalues of $\hat\bSigma$~\cite{waxkailath1985}.
Wang and Poor's monograph~\cite{wangpoor2003book} covers the approach in full.
In all of these variants, however, the receiver must be handed the assigned
signature of the user of interest. That signature is the one piece of prior
information that the signal subspace cannot replace, and for a weak user, such
as a user at the cell edge, the base station may not know the signature at all.
The rest of this section drops the signatures and turns to receivers with
multiple antennas. Such a receiver is first handed some of the symbols of the
user of interest, and later a second view of the same transmission takes the
place of most of this prior information.

\subsection{Multiple Antennas and Pilot Symbols}
\label{sec:sp:pilots}

The receiver of Section~\ref{sec:sp:wangpoor} was aided by knowing the
signature of the user of interest, which is fixed by design. Next, instead of
spreading the transmissions over chips with a code, we consider a receiver with
multiple antennas, and keep $M$ for the number of antennas. Each user sends one
symbol per snapshot, and the snapshot equation becomes
\begin{linenomath}
\begin{equation}
  \bx[n] = \sum_{r=1}^{R} b_r[n]\, \bh_r + \be[n],
  \label{eq:ma_snapshot}
\end{equation}
\end{linenomath}
where $\be[n]$ is noise. The model remains~\eqref{eq:signal_model}, but the
$r$th column $\bh_r$ of $\bH$ is now the array response of user $r$, set
entirely by propagation, so the receiver knows neither any column of $\bH$ nor $\col(\bH)$ in advance. On the other hand, under the assumptions of Section~\ref{sec:sp:wangpoor}, the two classical linear detectors for user $r$ are
\begin{linenomath}
\begin{equation}
  \bw_r^{\mathrm{zf}} = \bH(\bH^\top\bH)^{-1}\mathbf{e}_r,
  \qquad
  \bw_r^{\mathrm{mmse}} = \bSigma^{-1}\bh_r,
  \qquad
  \bSigma = \bH\bH^\top + \sigma^2\bI.
  \label{eq:pilot_detectors}
\end{equation}
\end{linenomath}
Since $\bH(\bH^\top\bH)^{-1}\mathbf{e}_r = (\bH\bH^\top)^{+}\bh_r$ with
$\bH\bH^\top = \bSigma - \sigma^2\bI$, both can be estimated, exactly as
in~\eqref{eq:wp_detectors}, from the signal subspace of $\hat\bSigma$ together
with the single direction $\bh_r$. The designed signature $\bd_r$ of the CDMA
receiver is thus replaced by the propagated response $\bh_r$, which is unknown
to the receiver.

The covariance $\bSigma$ and $\sigma^2$ can be easily estimated from the received data, so what the receiver needs is the direction $\bh_r$. Designed knowledge makes the estimation of that direction possible through \emph{pilot} symbols, which
precede the data symbols and are known to the receiver. As a result, the
receiver knows a few rows of $\bB$ rather than a column of $\bH$. There are two
options for using the pilots, which differ in whose pilots the receiver must~know.

In the first option, every user sends $N_p$ pilot symbols over the same interval.
Collecting the pilot snapshots in $\bX_p \in \R^{N_p \times M}$ and the pilot
symbols in $\bB_p \in \R^{N_p \times R}$, the least-squares estimate of all the
channels is
\begin{linenomath}
\begin{equation}
  \hat\bH^\top = (\bB_p^\top\bB_p)^{-1}\bB_p^\top\bX_p,
  \label{eq:pilot_ls}
\end{equation}
\end{linenomath}
which requires $N_p \ge R$ and is simplest for orthogonal pilots, with
$\bB_p^\top\bB_p = N_p\bI$. Either detector in~\eqref{eq:pilot_detectors}
then follows by substitution.

In the second option, only the pilots of the user of interest, collected in
$\bb_{r,p} \in \R^{N_p}$, are known to the receiver. The least-squares fit of
$\bX_p$ by $\bb_{r,p}\bh^\top$ gives, for unit-power pilots, the estimate
$\hat\bh_r = \tfrac{1}{N_p}\bX_p^\top\bb_{r,p}$, which treats the data of the
other users as part of the noise. Either detector
in~\eqref{eq:pilot_detectors} then follows by substituting $\hat\bh_r$
together with $\hat\bSigma$ computed from all $N$ snapshots. For the MMSE
detector, a second estimator is available, which fits $\bX_p\bw$ to
$\bb_{r,p}$ in the least-squares sense and gives
$(\bX_p^\top\bX_p)^{-1}\bX_p^\top\bb_{r,p}$. This estimator implements the same
detector $\bSigma^{-1}\bh_r$, but estimates $\bSigma$ from the $N_p$ pilot
snapshots alone, so it uses less data and requires $N_p \ge M$.

The two options trade coordination against accuracy. The first requires the
receiver to know the pilots of every user, sent over an aligned interval, which
is a strong assumption, especially when the users belong to different cells. In return, the receiver can separate all users during the pilots, so only the noise
corrupts the estimate of $\bh_r$. The second requires only the pilots of the
user of interest, but the data of the other users then act as interference on
the estimate, and strong interferers corrupt the estimate, especially that of a weak user's response. In both cases the estimate improves as $N_p$ grows, so a weak user needs long pilots, particularly in the presence of strong interferers. 
Pilots also serve synchronization, since the receiver locates the pilot block by correlating the received signal with the known pilots, and in the second option this correlation suffers from the same interference. However, pilot snapshots carry no data, and they must be repeated whenever the channel changes. 

In the rest of this section, a second view of the same transmission takes over
the role of the pilots in suppressing the interference, so that at most a few
known symbols, or only the values that the symbols can take, are needed
afterwards.

\subsection{The Multi-Array Leap from Signal Subspace to Common Subspace}
\label{sec:sp:leap}
 
Consider again the cell-edge problem of Section~\ref{subsec:crossing-disciplines}, illustrated in Figure~\ref{fig:celledge}. Users near the shared boundary of two cells are attenuated by path loss, so their signals arrive at each base station at low signal-to-interference-plus-noise ratio. At the same time, they are received by
\emph{both} stations at comparable power, whereas a user close to one station
is received strongly there and lies near the noise floor at the other. This is
the near--far problem of Section~\ref{sec:sp:wangpoor}, with the cell-edge users in the far position at both stations at once. The receivers of the previous subsections handle that problem only when they are handed the signature of the user of interest or long pilots, which are exactly what may not be known for a cell-edge user. Without them, the cell-edge transmissions cannot be extracted from the signal subspace that they span together with the strong cell-center users. Even when the pilots of a cell-edge user are known, the strong cell-center users corrupt both the timing and the estimated response of that user unless the pilots are sufficiently long, as discussed in Sections~\ref{sec:sp:wangpoor} and~\ref{sec:sp:pilots}.
 
Ibrahim and Sidiropoulos~\cite{ibrahim2019, ibrahim2020} proposed to combine the recordings of the two stations in order to recover the cell-edge users, and they named their strategy base-station interferometry. They assume that both stations
receive the same cell-edge transmissions, so, as noted in
Section~\ref{sec:sp:model}, the cell-edge waveforms lie in
$\col(\bB_1)\cap\col(\bB_2)$. If each cell-center user is received essentially
by one station only, the cell-edge transmissions are the only ones that appear
in $\col(\bB_1)\cap\col(\bB_2)$. A single station has nothing that singles out
the cell-edge users, but the two stations together do, since the cell-edge
users are exactly the ones they share. Finding the subspace shared by two data
matrices is the problem that CCA solves in Section~\ref{sec:cca}. 
Under the identifiability conditions of Section~\ref{sec:cca_id}, CCA recovers
$\col(\bB_{ce}) = \col(\bB_1)\cap\col(\bB_2)$, where the columns of
$\bB_{ce}$ are the cell-edge transmissions, without a signature or pilots, and
regardless of the power of the cell-center users at either~station.

The indifference to interference power follows from the invariance property of
\linebreak Section~\ref{sec:cca}, by which canonical correlations are unchanged under any
invertible linear transformation of either view. The power of each user enters
a view through its mixing matrix, which is such a transformation, so a
cell-center user is discarded for being private to one station, however
strongly that user is received. Noise, on the other hand, is added to a view
rather than applied to it as a transformation, so noise does affect the
canonical correlations. Hence, the cell-edge users must still be received above the noise at each station for a common subspace to emerge. The second station
therefore removes the effect of interference power, but not the effect of
noise. Next, we present the work of Ibrahim and
Sidiropoulos~\cite{ibrahim2019, ibrahim2020} in detail.

\subsection{Cell-Edge Detection}
\label{sec:sp:celledge}

Two base stations, each with $M_k$ antennas, jointly observe $D$ cell-edge
users, which sit near a shared boundary and are received by both. At the same time,
$L_k$ cell-center users are received only by station $k$. Which of the two
stations administratively serves a given cell-edge user plays no role here,
since what matters is only that both receive it. Over $N$ snapshots, station $k$
records
\begin{linenomath}
\begin{equation}
  \bX_k
  = \bG\,\bS_k^\top
  + \bC_k\,\bH_k^\top
  + \bE_k
  \in \R^{N \times M_k},
  \qquad k = 1,2,
  \label{eq:cell_edge_model}
\end{equation}
\end{linenomath}
and both of them forward their recordings to a central processing unit.
The matrix \linebreak $\bG \in \R^{N \times D}$ carries the $D$ cell-edge symbol streams,
which are \emph{common} because both stations receive these users, and
$\bC_k \in \R^{N \times L_k}$ carries the cell-center streams that are
\emph{private} to station $k$. The matrices $\bS_k$ and $\bH_k$ hold
the corresponding channels, and $\bE_k$ is noise plus residual
interference. In terms of~\eqref{eq:signal_model}, station $k$ has source
matrix $\bB_k = [\bG,\ \bC_k]$ and mixing matrix $[\bS_k,\ \bH_k]$,
so the cell-edge subspace $\col(\bB_{ce})$ of Section~\ref{sec:sp:leap} is
$\col(\bG)$. Returning to Figure~\ref{fig:celledge}, $\bG$ collects the
transmissions of the users at the boundary and $\bC_k$ those of the users
well inside cell $k$.
 
Equation~\eqref{eq:cell_edge_model} is not merely analogous to the generative
model~\eqref{eq:gen_two_view} of Section~\ref{sec:cca_id}. It is that model,
symbol for symbol, with the cell-edge symbol matrix $\bG$ in the role of the
common component and the cell-center symbol matrices $\bC_1, \bC_2$ in the role
of the private ones. The identifiability theory therefore transfers without
modification.
 
\begin{Theorem}[\cite{ibrahim2019, ibrahim2020}]
\label{thm:celledge_id}
In the noiseless case, if $[\bG, \bC_1, \bC_2] \in \R^{N \times (D+L_1+L_2)}$ has
full column rank and $[\bS_k, \bH_k]$ has full column rank for
$k = 1,2$, then the MAXVAR solution of the two-view CCA
problem~\eqref{eq:maxvar2} recovers $\col(\bG)$. Explicitly, the optimal common
representation is $\bG\bP$ for a nonsingular $D \times D$ matrix $\bP$.
\end{Theorem}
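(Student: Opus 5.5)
The plan is to reduce the statement to a subspace-intersection computation and then read off the MAXVAR solution from the reduced eigenproblem \eqref{eq:maxvar_reduced} with $K=2$. In the noiseless case $\bX_k = [\bG,\bC_k][\bS_k,\bH_k]^\top$. Full column rank of $[\bS_k,\bH_k]$ means $[\bS_k,\bH_k]^\top$ has full row rank, so it is onto $\R^{D+L_k}$. Hence $\col(\bX_k)=\col([\bG,\bC_k])$, and this space has dimension $D+L_k$ because $[\bG,\bC_k]$ is a column subset of the full-column-rank matrix $[\bG,\bC_1,\bC_2]$.

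The next step is to show $\col(\bX_1)\cap\col(\bX_2)=\col(\bG)$. The inclusion $\supseteq$ is immediate. For $\subseteq$, take $\bG\ba_1+\bC_1\bb_1=\bG\ba_2+\bC_2\bb_2$. Then $[\bG,\bC_1,\bC_2][\ba_1-\ba_2;\bb_1;-\bb_2]=\mathbf 0$, and full column rank forces $\bb_1=\bb_2=\mathbf 0$, so the vector lies in $\col(\bG)$. This is exactly the uniqueness-of-split argument sketched after Theorem~\ref{thm:cca_id}.

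Then I would analyze $\bU_1\bU_1^\top+\bU_2\bU_2^\top=\bP_1+\bP_2$, the sum of the orthogonal projectors onto the two column spaces. For any unit $\mathbf g$, $\mathbf g^\top(\bP_1+\bP_2)\mathbf g=\|\bP_1\mathbf g\|^2+\|\bP_2\mathbf g\|^2\le 2$. Equality holds if and only if $\mathbf g$ lies in both column spaces, that is, in $\col(\bG)$ by the previous step. So the eigenvalue $2$ has eigenspace exactly $\col(\bG)$, of dimension $D$, and every other eigenvalue is strictly below $2$. The main obstacle is this strict-gap claim, which guarantees the top-$D$ eigenspace is unique. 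I would argue it from finite dimensionality: on the orthogonal complement of $\col(\bG)$, which is an invariant subspace, the maximum of the Rayleigh quotient is attained and cannot equal $2$. Equivalently, the eigenvalues of $\bP_1+\bP_2$ are $1+\cos\theta_i$ over the principal angles, and only zero angles give $2$, which ties back to \eqref{eq:principal_angles}.

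It then follows that every maximizer of \eqref{eq:maxvar_reduced} has $\col(\bG^*)=\col(\bG)$. Since both $\bG^*$ and $\bG$ have $D$ linearly independent columns spanning the same space, $\bG^*=\bG\bP$ for a nonsingular $D\times D$ matrix $\bP$, namely $\bP=\bG^\dagger\bG^*$. Finally, I would note that $\bQ_k^*=\bX_k^\dagger\bG^*$ gives $\bX_k\bQ_k^*=\bP_k\bG^*=\bG^*$, so each view reproduces the common representation exactly and the MAXVAR objective attains its minimum value $0$. This matches the CCA conclusion of Theorem~\ref{thm:cca_id}.
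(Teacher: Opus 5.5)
Your proposal is correct and follows essentially the route the paper indicates. The two rank conditions give $\col(\bX_k)=\col([\bG,\bC_k])$ and, by the unique-split argument, $\col(\bX_1)\cap\col(\bX_2)=\col(\bG)$; the MAXVAR eigenproblem~\eqref{eq:maxvar_reduced} with $\bP_1+\bP_2$ then returns exactly that intersection. The eigengap you flag as the main obstacle is actually immediate: for the orthonormal columns $\mathbf{g}_j$ of $\bG^*$, $\Tr(\bG^{*\top}(\bP_1+\bP_2)\bG^*)=\sum_{j=1}^{D}\big(\|\bP_1\mathbf{g}_j\|^2+\|\bP_2\mathbf{g}_j\|^2\big)\le 2D$, with equality if and only if every column lies in $\col(\bG)$.
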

 
The hypotheses are exactly those of Theorem~\ref{thm:cca_id}, and they read here
as concrete system-design constraints, which is the payoff promised in
Section~\ref{sec:cca_id}. Full column rank of $[\bG,\bC_1,\bC_2]$ requires
$N \ge D + L_1 + L_2$ and the users' transmitted sequences to be linearly
independent, which is automatic with high probability for finite-alphabet signals and modest $N$, since the users transmit independently. Full column rank of $[\bS_k,\bH_k]$ relates antennas to users, in that $M_k \ge D + L_k$, so a station must have at least as many antennas as it has users to receive. Beyond that dimensionality budget, the condition holds with probability one for channels drawn from a jointly continuous distribution. {The cell-edge papers}~\cite{ibrahim2019,ibrahim2020} {work with complex baseband records and stack their real and imaginary parts into a real matrix with twice as many channels, which is where the factor of two in the antenna condition of}~\cite{ibrahim2020} {comes from.} {The results we present extend to the complex case in a straightforward way, as shown in those two papers.} 
 
It is worth pausing on what the first stage has achieved. The cell-edge subspace comes for free, from the two recordings alone. Prior knowledge could also handle the cell-edge users at a single station, but only with coordination between the two cells in advance. Spreading the cell-edge users with signatures would be one option, at the rate cost of Section~\ref{sec:sp:wangpoor}. Although Section~\ref{sec:sp:wangpoor} focused on a single-antenna receiver, extensions of CDMA to multi-antenna receivers are possible~\cite{sidiropoulos2000parafac}, but they lie beyond the focus of this tutorial. If instead a station knows orthogonal pilots of every user it receives, the channel estimates of Section~\ref{sec:sp:pilots} have an accuracy that does not depend on the interference power.

A weaker form of coordination, in which only the pilots of the cell-edge users are made known to the stations, would help little, since the strong cell-center
interference corrupts both the correlation-based synchronization and the channel
estimate of Section~\ref{sec:sp:pilots}. Once the common subspace is available,
however, this weaker form becomes sufficient, because the cell-center
interference does not enter $\col(\bG)$. Still, going from the subspace to the individual cell-edge streams requires additional structure, because the recovered $\bG\bP$ is a mixture of these streams and unmixing it calls for a second stage that exploits prior information. The known pilots of the cell-edge users are one such prior, and they now act on a mixture free of cell-center interference. In some cases, even these pilots are not needed for the unmixing.

One such case is the constant modulus of the transmitted symbols, which for the
binary alphabet $\{\pm 1\}$ used in~\cite{ibrahim2019, ibrahim2020} makes every
entry of $\bG$ equal in magnitude. Unmixing then amounts to factoring the
recovered matrix into a matrix of $\pm 1$'s and a nonsingular $D \times D$ matrix, a
problem that is NP-hard even when the latter is known~\cite{ibrahim2019,
ibrahim2020}. Van der Veen's Real Analytical Constant Modulus Algorithm
(RACMA)~\cite{vanderveen1997} does not attempt an optimal solution. Assuming the
noise is small, it reduces the factorization to a generalized eigenvalue problem
and returns the streams up to permutation and sign. Each user then carries a
unique identification sequence, and correlating the recovered streams with these
sequences resolves both ambiguities~\cite{ibrahim2019, ibrahim2020}.

\subsubsection{When Is a User Common?}

Theorem~\ref{thm:celledge_id} is noiseless, and in practice every user is
overheard, to some degree, at both stations. The appendix of~\cite{ibrahim2020}
analyzes this case and shows that, for nearly orthogonal sequences and large
arrays, the canonical correlation of each user is the product of two factors,
one per station. Each factor lies between zero and one and increases with the
signal-to-noise ratio at which that station receives the user. A high canonical
correlation therefore requires both factors to be high, and a single high factor
does not suffice. A cell-center user is received strongly by its serving station
but weakly by the other, so its canonical correlation stays low regardless of
how strong its signal is at the serving station. A cell-edge user is received by both stations, so its canonical correlation is limited only by its signal-to-noise
ratios. Strong interference is thus harmless to CCA precisely because it is
one-sided.

\subsubsection{The Same Computation Synchronizes the Two Stations}

So far, the two recordings were assumed aligned in time. If they are not, the
cell-edge streams in $\bX_1$ and $\bX_2$ no longer match row by row. Since
independent symbol streams shifted by even one symbol are already uncorrelated,
no common subspace remains and even the largest canonical correlation is small.
Computing the largest canonical correlation over a range of relative delays and
selecting the delay at which it peaks therefore recovers the
alignment~\cite{ibrahim2020}. Unlike the correlation-based synchronization of
Section~\ref{sec:sp:pilots}, which the strong cell-center interference corrupts,
the canonical-correlation-based approach requires no pilots. As a result, without any side information, CCA can (i) reveal whether cell-edge
users are present, (ii) synchronize the two stations when they are, and
(iii) estimate the cell-edge subspace, leaving potential prior information to be
used only for unmixing the cell-edge streams. 

{\color{black}Revealing the presence of
cell-edge users is a hypothesis testing problem. A closely related problem has been studied in the context of passive radar~\cite{santamaria2017passive}. There, two multi-antenna arrays, playing the role of our two stations, record signals, and the question is whether they share a common subspace signal of known dimension. The noise at each array may be arbitrarily correlated across its antennas but is uncorrelated with the noise at the other array. This setting is similar to the cell-edge one, as the cell-center interference can be seen as noise of this kind when its leakage to the other station is neglected. Modeling the common symbols and the noises as Gaussian, Santamar\'{\i}a et~al.~\cite{santamaria2017passive} derived a generalized likelihood ratio test that can be used for deciding whether a common signal is present by comparing a product of functions of the leading canonical
correlations to a threshold.}

\subsection{A Designed Second View and Blind MMSE Detection}
\label{sec:sp:mmse}

In Section~\ref{sec:sp:celledge}, the second view was provided by the network under specific circumstances. Users located at the shared boundary of two base
stations create an opportunity for a multiview approach to decoding them. Taking this opportunity, however, comes at the cost of infrastructure, namely the cooperation of two base stations and a central processing unit, while the CCA-based approach is a good fit only for the cell-edge users. In this subsection, the transmitter creates the second view by design. We return to the single multi-antenna receiver of Section~\ref{sec:sp:pilots}, with snapshots~\eqref{eq:ma_snapshot} and the MMSE detector $\bw_r^{\mathrm{mmse}} = \bSigma^{-1}\bh_r$ of~\eqref{eq:pilot_detectors}. There, the receiver estimates $\bh_r$ from pilots; 
 hence the detector is only as good as the channel estimate, and the pilots carry costs of their own. Long pilots cost spectral efficiency and, on time-varying channels, even degrade the channel estimate, while short ones average poorly and make pilot contamination more severe~\cite{ibrahim2024mmse}.

The scheme of Ibrahim, Karakasis and Sidiropoulos~\cite{ibrahim2024mmse} removes
the pilots used for channel estimation. Each user $r$ collects its $N_d$ data
symbols in $\bb_{c,r} \in \R^{N_d}$ and sends them twice, so that $N = 2N_d$,
permuting each copy with its own user-specific interleaver. The symbol stream of
user $r$ is thus
$\bb_r = [(\bPi_r^{(1)}\bb_{c,r})^\top\ (\bPi_r^{(2)}\bb_{c,r})^\top]^\top$, with
$\bPi_r^{(1)}, \bPi_r^{(2)}$ known to the receiver. When only one user repeats, the interleavers can be omitted, since the other users already transmit different symbols in the two halves. At the receiver, applying
$\bPi_r^{(i)\top}$ to the $i$th half of the recording reinstates the repetition
for user $r$ and leaves every other user scrambled differently in the two
halves. The two de-permuted halves form two views
\begin{linenomath}
\begin{equation}
  \bX_r^{(i)} = \bb_{c,r}\,\bh_r^{(i)\top}
  + \sum_{j \neq r} \bb_{j,r}^{(i)}\,\bh_j^{(i)\top} + \bE_r^{(i)}
  \in \R^{N_d \times M},
  \qquad i = 1,2,
  \label{eq:repetition_views}
\end{equation}
\end{linenomath}
where $\bh_j^{(i)}$ is the channel of user $j$ during the $i$th half and
$\bb_{j,r}^{(i)}$ is the $i$th copy of the symbols of user $j$ after
de-permutation with $\bPi_r^{(i)\top}$.

Since the channels may change between the two halves, the scheme also suits time-varying channels. The symbols $\bb_{c,r}$ of user $r$ are common to the two views, while the streams of all other users are rendered private by the interleaving. The two views thus follow the common-plus-private model~\eqref{eq:gen_two_view},
with the difference that the common component is now created by the transmitter's coding instead of the network deployment. Moreover, the common component is one-dimensional by design. According to~\cite{ibrahim2024mmse}, this allows more flexible waveforms, which may even be analog, and considerably reduces the complexity of the receiver. The other users need not follow the protocol or coordinate with user $r$ in any way, and their signals need not even be digital~\cite{ibrahim2024mmse}. Applying CCA to the pair of matrices $\bX_r^{(1)}$ and $\bX_r^{(2)}$ returns canonical directions $\bq_r^{(1)}, \bq_r^{(2)} \in \R^{M}$ and canonical variates $\bX_r^{(i)}\bq_r^{(i)}$, exactly as in Section~\ref{sec:cca}. Since the canonical directions act as linear detectors on the received snapshots, it is natural to ask how they relate to the classical detectors of~\eqref{eq:pilot_detectors}. To provide an answer, let
$\bH^{(i)} = [\bh_1^{(i)}, \ldots, \bh_R^{(i)}]$ collect the channels during the
$i$th half. Ibrahim, Karakasis and Sidiropoulos showed that the canonical directions are, up to scaling, the MMSE detectors of user $r$ for the two halves. {They state this result for complex baseband models, of which our real model is a special case.}
 
\begin{Theorem}[\cite{ibrahim2024mmse}]
\label{thm:blind_mmse}
Suppose the different user transmissions are uncorrelated, the symbol and
channel matrices of~\eqref{eq:repetition_views} have full column rank, and the
CCA problem is solved using ensemble correlation matrices. Then, the optimal
canonical vectors are scaled MMSE detectors,
\begin{linenomath}
\begin{equation}
  \bq_r^{(i)*}
  = \gamma_r\big(\bH^{(i)}\bH^{(i)\top} + \sigma^2\bI\big)^{-1}\bh_r^{(i)},
  \qquad \gamma_r \neq 0.
\end{equation}
\end{linenomath}
\end{Theorem}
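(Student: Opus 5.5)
The plan is to write the ensemble second-order statistics of the two views in~\eqref{eq:repetition_views} explicitly and then feed them into the generalized eigenvalue characterization~\eqref{eq:gen_eig} of Section~\ref{sec:cca}. I will take unit-power, zero-mean symbols and white noise of variance $\sigma^2$, independent across halves and of the symbols, as in Section~\ref{sec:sp:pilots}.

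\textbf{Auto-covariances.} Let $\bx^{(i)}$ denote a generic row of $\bX_r^{(i)}$. De-permutation only reorders the symbols of each user within a half, so every user still contributes one unit-power symbol per snapshot. Because different users are uncorrelated, the within-view covariance is
\begin{linenomath}
\begin{equation*}
\bSigma_{ii} = \sum_{j} \bh_j^{(i)}\bh_j^{(i)\top} + \sigma^2\bI = \bH^{(i)}\bH^{(i)\top} + \sigma^2\bI ,
\end{equation*}
\end{linenomath}
which is positive definite and hence invertible.

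\textbf{Cross-covariance.} For user $r$, the same symbol $b_{c,r}[n]$ appears in both rows and contributes $\bh_r^{(1)}\bh_r^{(2)\top}$. For every $j\neq r$, the de-permuted copies $\bb_{j,r}^{(1)}$ and $\bb_{j,r}^{(2)}$ place different symbols of user $j$ at the same row index, because the interleavers of user $j$ differ from those of user $r$. These symbols are zero mean and uncorrelated, so the cross terms vanish. Noise terms vanish by independence. Hence
\begin{linenomath}
\begin{equation*}
\bSigma_{12} = \bh_r^{(1)}\bh_r^{(2)\top},
\end{equation*}
\end{linenomath}
a rank-one matrix.

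\textbf{Main obstacle.} I expect this cross-covariance step to be the delicate one. It requires that, for each $j\neq r$, the composition of user $j$'s interleaving with user $r$'s de-interleaving does not map a symbol of user $j$ onto the same index in both halves. I would handle this either by averaging over random interleavers, or by reading it as the meaning of ``uncorrelated transmissions'' in the hypothesis. If a few indices collide, their contribution is $O(1/N_d)$ and vanishes in the ensemble statement.

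\textbf{Solving the eigenproblem.} Substituting into~\eqref{eq:gen_eig}, the matrix
\begin{linenomath}
\begin{equation*}
\bSigma_{11}^{-1}\bSigma_{12}\bSigma_{22}^{-1}\bSigma_{21} = \big(\bh_r^{(2)\top}\bSigma_{22}^{-1}\bh_r^{(2)}\big)\,\bSigma_{11}^{-1}\bh_r^{(1)}\bh_r^{(1)\top}
\end{equation*}
\end{linenomath}
has rank one. Its only nonzero eigenvalue is
\begin{linenomath}
\begin{equation*}
\rho^2 = \big(\bh_r^{(1)\top}\bSigma_{11}^{-1}\bh_r^{(1)}\big)\big(\bh_r^{(2)\top}\bSigma_{22}^{-1}\bh_r^{(2)}\big).
\end{equation*}
\end{linenomath}
This is strictly positive because each $\bSigma_{ii}\succ 0$ and the full column rank of $\bH^{(i)}$ gives $\bh_r^{(i)}\neq\mathbf 0$. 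The corresponding eigenvector is $\bq_r^{(1)}\propto\bSigma_{11}^{-1}\bh_r^{(1)}$.

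The matching direction follows from~\eqref{eq:kkt2}: $\bq_r^{(2)}\propto\bSigma_{22}^{-1}\bSigma_{21}\bq_r^{(1)}\propto\bSigma_{22}^{-1}\bh_r^{(2)}$. The unit-variance constraints of~\eqref{eq:cca_opt} then fix the scale factors, up to the usual common sign ambiguity. These give the nonzero constants $\gamma_r$ in the statement, and the resulting vectors are exactly the MMSE detectors of~\eqref{eq:pilot_detectors} for each half.
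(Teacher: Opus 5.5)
Your argument is correct. The paper gives no proof of its own and defers to \cite{ibrahim2024mmse}; your route is the natural one. The ensemble cross-covariance collapses to the rank-one $\bh_r^{(1)}\bh_r^{(2)\top}$, so the only eigenvector of~\eqref{eq:gen_eig} with a nonzero eigenvalue is $\bSigma_{11}^{-1}\bh_r^{(1)}$, and~\eqref{eq:kkt2} then returns $\bSigma_{22}^{-1}\bh_r^{(2)}$.

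One side remark in your obstacle paragraph is wrong and should be dropped. Averaging over random interleavers does not make a repeating interferer's cross term vanish. Its two de-permuted copies land on the same symbol at a given row with probability $1/N_d$. This leaves $\tfrac{1}{N_d}\bh_j^{(1)}\bh_j^{(2)\top}$ in $\bSigma_{12}$, which breaks the rank-one structure for every finite $N_d$, so the collision term disappears only asymptotically. The exact statement therefore rests on your other reading: the hypothesis asserts that the two de-permuted copies of each interferer are uncorrelated, or, equivalently in practice, the interleavers are designed so that no collisions occur.

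Also, the unit-variance constraints make the scale view-dependent, $\gamma_r^{(i)}=\pm\big(\bh_r^{(i)\top}\bSigma_{ii}^{-1}\bh_r^{(i)}\big)^{-1/2}$, with a common sign. The single $\gamma_r$ in the statement should therefore be read per view.
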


In the noiseless case, the identifiability results of Section~\ref{sec:cca_id}
provide mild conditions under which CCA recovers the symbols of user $r$ up to a scalar. Theorem~\ref{thm:blind_mmse} shows that, in the presence of noise and
interference, CCA does more. Its canonical directions are the MMSE detectors themselves, and its canonical variates are therefore the linear MMSE estimates, up to a scaling that one pilot symbol resolves. It needs no training for channel estimation and no estimate of the noise power, and the result holds even when the channels change between the two views, since $\bH^{(1)}$ and $\bH^{(2)}$ are allowed to differ. When the channels are constant over both halves, $\bh_j^{(1)} = \bh_j^{(2)} = \bh_j$, the two canonical directions coincide and equal, up to scaling, the MMSE detector $\bw_r^{\mathrm{mmse}} = \bSigma^{-1}\bh_r$ of~\eqref{eq:pilot_detectors}~\cite{ibrahim2024mmse}. With sample rather than ensemble correlations, Theorem~\ref{thm:blind_mmse}
holds asymptotically in the block length~\cite{ibrahim2024mmse}. Empirically,
the estimate from a single view approaches the MMSE detector with perfect
channel knowledge as the block grows, and averaging the estimates from the two
views outperforms it in symbol error rate, owing to the combining gain of sending
the data twice~\cite{ibrahim2024mmse}.

In summary, repeating the data at the transmitter, and interleaving it when
several users repeat, provides the second view that the network provided in
Section~\ref{sec:sp:celledge}, now with a single multi-antenna receiver. On this
designed view, CCA computes the MMSE detector of each repeating user without channel
estimation, at the cost of sending each symbol twice. Next, we compare the
receivers of this section in terms of what each requires to be designed at the
transmitter and known at the receiver.

\subsection{Summary}
\label{sec:sp:synthesis}

Throughout this section, the problem was the factorization
$\bX = \bB\bH^\top + \bE$ of~\eqref{eq:signal_model}, and the goal was to
retrieve a subset of the columns of $\bB$, namely the symbol streams of the users
of interest. We began with the most favorable case, in which the directions of
the columns of $\bH$ are fixed by design. Such a case arises in CDMA with a
single antenna. There, all users follow the same protocol, and their spreading
signatures provide the observation dimensions that the missing antennas cannot,
at the cost of rate. For this case, the classical zero-forcing and MMSE
detectors are written in terms of the signatures of all users, since nulling an
interferer requires knowing its direction. From that point on, the receiver was
given less prior information, and the coordination among users became weaker.
Cell-edge detection motivated why retrieving only a subset of the columns of
$\bB$ is of interest, and it transferred the coordination from the users to the
infrastructure, namely two multi-antenna base stations and a central processing
unit. Section~\ref{sec:sp:mmse} carried the same idea to a single multi-antenna
receiver, where prior coordination is needed only between the receiver and the
users of interest, as the other users need not follow the same protocol. Apart
from this coordination, the price is a multi-antenna receiver and half the
transmission rate of the users of interest. In exchange, the receiver
asymptotically attains the linear MMSE detector knowing only the interleavers of
the desired user. Table~\ref{tab:sp_ladder} lays the sequence out.

The first two rows are the classical detectors for CDMA. In its classical form,
the zero-forcing detector needs the signatures of all users, and the MMSE
detector also needs the received amplitudes and the noise variance. Wang and
Poor~\cite{wangpoor1998} showed that the signal subspace, estimated from the
data, provides the directions of the interferers. As a result, both detectors
can be written in closed form from this subspace, with the desired user's
signature as the only prior knowledge, since the timing can then be estimated
from the data~\cite{strom1996, bensleyaazhang1996}. These receivers can detect
only users that spread their symbols with their own signatures, since the
detectors are built from these signatures. Each such user sends one symbol per
$M$ chips, so its rate drops to $1/M$. Other CDMA users must be assigned distinct signatures, but the CDMA users can also share the band with preexisting narrowband systems, since the linear MMSE detector suppresses a rich variety of narrowband
interference~\cite{poorwang1997part1, buzzi2002overlay}. With antennas in place of chips, the columns of $\bH$ are set by propagation rather than by design, so the designed knowledge must enter elsewhere. Pilots can take this role, as the designed knowledge from which the MMSE detector is built. However, when only the pilots of the desired user are known, strong interference degrades the channel estimate, and longer pilot sequences are required.

\begin{table}[H]
\caption{{Prior} 
 information required by each receiver, and its cost.
\label{tab:sp_ladder}}

\begin{adjustwidth}{-\extralength}{0cm}
\begin{tabularx}{\fulllength}{lCCC}
\toprule
\textbf{Receiver} & \textbf{Designed Knowledge} &
\textbf{Propagated Knowledge} & \textbf{Cost} \\
\midrule
\multicolumn{4}{l}{\emph{{CDMA, single antenna} 
}} \\\midrule
Zero-forcing~\cite{lupasverdu1989}
  & all signatures & none & detected users spread, rate $1/M$ \\\midrule
Linear MMSE, closed form~\cite{madhowhonig1994}
  & all signatures & received amplitudes, noise variance
  & detected users spread, rate $1/M$ \\\midrule
Subspace blind multiuser detection~\cite{wangpoor1998}
  & own signature & none & detected users spread, rate $1/M$ \\
\midrule
\multicolumn{4}{l}{\emph{{Multiple antennas}}} \\\midrule
Pilot-based MMSE (Section~\ref{sec:sp:pilots})
  & pilots & none & pilot overhead, rate $1 - N_p/N$ \\\midrule
Cell-edge CCA~\cite{ibrahim2020}
  & symbol alphabet for unmixing & none
  & two stations, central unit, full rate\\\midrule
Blind MMSE CCA~\cite{ibrahim2024mmse}
  & own interleavers if several users repeat & none
  & rate $1/2$ for the desired user only \\
\bottomrule
\end{tabularx}
\end{adjustwidth}
\end{table}
The last two rows of Table~\ref{tab:sp_ladder} weaken the pilot requirement, and
they pay for it differently. Cell-edge detection needs no designed knowledge to
find the common subspace, and requires in exchange a second station and a
central unit. However, going from the common subspace to the individual symbol streams still requires prior information. The pilots of the cell-edge users can enable this recovery, since the subspace is free of cell-center interference. In some cases, knowing the symbol alphabet alone suffices, up to the permutation and sign ambiguities that short identification sequences resolve.

CCA-based blind MMSE detection needs a single array, provided the desired user repeats its data and, if more than one user repeats, also interleaves it. The other users need not follow any protocol or coordinate with the desired user in any way, and their signals are not restricted to be narrowband, or even digital~\cite{ibrahim2024mmse}. The array only needs at least as many antennas as
users. The interleavers are the designed counterpart of the signature, as they enable the recovery of the desired user. Like the subspace detector of Wang and Poor, the CCA receiver computes the MMSE detector blindly, here from two views jointly, without having to estimate either a signal subspace or the noise power. 
The price is half the rate of the desired user and, when several users repeat,
interleavers known in advance to the receiver. In both cell-edge detection and blind MMSE detection, the second view is what replaces the pilots for channel estimation, provided once by the network and once by the code.

The two CCA receivers also come with different kinds of guarantees.
Theorem~\ref{thm:celledge_id} is an \emph{identifiability} statement. It ensures
that, in the absence of noise, CCA recovers the subspace of the cell-edge users.
Theorem~\ref{thm:blind_mmse} is an \emph{optimality} statement. It ensures that,
with ensemble correlations, the canonical directions coincide, up to scaling,
with the linear MMSE detector, which maximizes the output
signal-to-interference-plus-noise ratio among all linear detectors. Together,
the two results suggest that a multiview receiver with little prior knowledge
can do more than detect a potentially weak desired user. When the transmitter
creates a second view by design, such a receiver can also attain the best linear~\mbox{performance}.

Finally, the signal processing thread of this section arrives at the same
object that Section~\ref{sec:cca} reached from the statistical and
information-theoretic directions. There, CCA appeared as the maximum-likelihood
solution of a latent-factor model, its canonical correlations determined Wyner's
common information for jointly Gaussian views, and Section~\ref{sec:cca_id} stated its identifiability conditions for a general two-view generative model. Here,
CCA appears as a detector, reached from a low-rank data matrix and a question
about interference. As we saw, the same identifiability conditions re-emerge as
requirements on the numbers of antennas and the block length. The statistical,
information-theoretic, and signal processing threads thus meet at CCA, each
contributing its own reading of the common component.

Two assumptions ran through this section, and relaxing each leads to a problem
that the rest of this tutorial takes up. First, both routes to a second view
assume that the pairing of observations is known in advance. The two base
stations are fixed by the deployment, and the two halves of a repeated block are
fixed by the code. Ibrahim and Sidiropoulos name the case of more than two
stations as future work in the conclusion of~\cite{ibrahim2020}, and take it up
in~\cite{ibrahim2020selective}, where generalized CCA is applied to a selected
subset of stations. More views relax the identifiability conditions, as
Theorem~\ref{thm:gcca_id} shows. However, a realistic deployment has several groups of cell-edge users, each received by a different subset of stations, and some of
those stations may themselves be mobile, as with base stations mounted on
unmanned aerial vehicles. Which stations share which users is then unknown and
must be estimated jointly with the shared structure, a problem with close ties
to subspace clustering. Section~\ref{sec:open} briefly covers the work of Karakasis and Sidiropoulos~\cite{karakasis2026subspace} on unifying subspace clustering and generalized CCA.

Second, every model in this section, beginning with~\eqref{eq:signal_model},
takes the received signal to be a linear mixture of the transmitted symbols.
This is an idealization of the radio front ends. At each transmitter, the power
amplifier compresses as it approaches saturation~\cite{cripps_pa}, so a map $f$
acts on each entry of $\bB$. At the receiver, amplifiers saturate and
low-resolution converters clip and
quantize~\cite{mollen2017, schenk_rf_impairments}, so a map $g$ acts on each
entry of the noisy mixture, and~\eqref{eq:signal_model} becomes
\begin{linenomath}
\begin{equation}
  \bX = g\big(f(\bB)\,\bH^\top + \bE\big) \in \R^{N \times M}.
  \label{eq:signal_model_nl}
\end{equation}
\end{linenomath}
The nonlinearity of the transmitter alone keeps the mixture linear, now in the
distorted symbols $f(\bB)$, so a common subspace still exists. The
nonlinearity of the receiver, however, acts after mixing. The left singular vectors of $\bX$ then no longer span $\col(f(\bB))$, so what two views share is no longer a
common subspace of the received data. This motivates computing canonical
correlations and variates through nonlinear transformations instead of linear
ones. Section~\ref{sec:nonlinear} studies nonlinear CCA and asks what can still
be recovered, and under what conditions.
\section{Beyond Linear: Nonlinear and Deep CCA}
\label{sec:nonlinear}

To this point, we have restricted ourselves to linear transformations of the observed views. However, relaxing this assumption is essential in many real-world use cases, as illustrated next. 

We begin with the communication models of Section~\ref{sec:sp}. Every model there takes the received signal to be a linear mixture of the transmitted symbols, which assumes ideal radio front ends. As discussed in Section~\ref{sec:sp:synthesis}, practical front ends are nonlinear, which turns the model into~\eqref{eq:signal_model_nl}, and where the nonlinearity acts decides whether a common subspace survives. The transmitter map $f$ keeps the mixture linear in the distorted symbols $f(\bB)$. The receiver map $g$, which models a saturating amplifier or a low-resolution analog-to-digital converter that clips and quantizes, acts on each entry after the mixing. A linear mixture followed by such an entrywise map is known as a post-nonlinear mixture~\cite{taleb_pnl, yang2020nonlinear}. The common subspace that CCA recovered in Section~\ref{sec:sp} is then no longer a subspace of the received data, and linear CCA may no longer recover the cell-edge users.

A similar issue arises in neuroimaging. Consider two neighborhoods of voxels in the brain, each recorded over time, as in Section~\ref{sec:cca}, and suppose a common neural process drives both. These neighborhoods can be from the same person, or from two different subjects presented with the same stimulus. Here, the information recorded is not that process, but the blood-oxygen-level-dependent (BOLD) signal it induces. The map from neural activity to the BOLD signal passes through the hemodynamic response, i.e., the changes in blood flow and oxygenation that the activity triggers, and this map is nonlinear. The usual linear convolution model is only its first-order approximation~\cite{buxton1998, friston2000nonlinear}. Because these hemodynamic responses may vary from region to region~\cite{handwerker2004}, and even more across different subjects, the two neighborhoods project the shared process through two different nonlinear maps. As a result, each recording is a distinct nonlinear function of the common process together with activity private to that neighborhood. A linear analysis, such as the canonical correlations of Section~\ref{sec:cca}, may understate, or miss, the coupling that the shared process induces.

Finally, consider two views of one phenomenon from which we would like to extract the common factors. A natural instance arises in the study of speech production. One view can be a set of articulatory measurements, namely the positions of the tongue, the lips, and the soft palate as the vocal tract moves. Another can be a set of features extracted from a recording of the acoustic waveform. Figure~\ref{fig:speech} shows an example of the two views over the same time interval. Several factors are common to the two views and tie them together, among them the phoneme being produced and the speaker's identity and age. Each of these shapes both views. The phoneme is realized as a particular configuration of the vocal tract, and the speaker's anatomy, which varies with identity and age, sets the geometry of that vocal tract. Hence the phoneme, the identity, and the age are all imprinted on the articulatory measurements. The same factors shape the sound as well, since the phoneme and the speaker's vocal characteristics jointly determine the sound that is heard~\cite{petersonbarney1952}. Nevertheless, the map from the articulatory view to the acoustic view is nonlinear, and even many-to-one, since different vocal-tract shapes can produce nearly the same sound~\cite{atal1978}. Aligning the views through linear combinations may therefore recover little of what they share, and the common factors have to be extracted by nonlinear means.

\begin{figure}[H]
\begin{minipage}[c]{0.38\textwidth}
\centering
\includegraphics[width=\textwidth, trim={195pt 181pt 0 0}, clip]{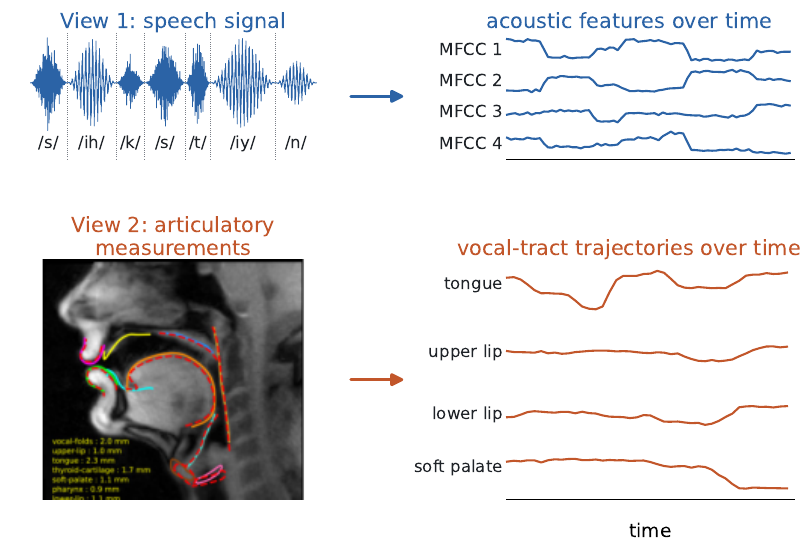}\\[-1mm]
{\footnotesize\hspace*{8mm}time}\\[1mm]
\text{(\textbf{a})}
\end{minipage}\hspace{1cm}%
\begin{minipage}[c]{0.38\textwidth}
\centering
\includegraphics[width=\textwidth, trim={195pt 22pt 0 110pt}, clip]{Figures/speech_setup.pdf}\\[-1mm]
{\footnotesize\hspace*{8mm}time}\\[1mm]
\text{(\textbf{b})}
\end{minipage}
\caption{The two views of the speech example over the same time interval. (\textbf{a}) The acoustic view, summarized by the first four mel-frequency cepstral coefficients (MFCCs), a standard description of the short-time spectrum of the sound. (\textbf{b}) The articulatory view, namely the trajectories of sensors on the tongue, the upper and lower lips, and the soft palate. The two views bear no obvious resemblance to each other, yet both change at the same moments, since the phoneme being produced shapes both.}
\label{fig:speech}
\end{figure}

There are two reasons for considering CCA with nonlinear transformations. The first is to measure how strongly the views are correlated. The canonical correlations answer this question, and the measure they provide is useful in its own right. In the fMRI example, a measure of correlation between brain regions can help improve our understanding of how the brain functions. The second is the recovery of the common latent factors. In the nonlinear communications example, CCA with nonlinear transformations, together with a few known symbols to resolve the remaining ambiguities, could recover the weak cell-edge users. More broadly, the recovered common factors can be used for a downstream task. The benefit is clearest in a particular regime, when samples from jointly observed high-dimensional views are plentiful, the low-dimensional common factors are known to carry the information a task needs, but the labeled data at hand are scarce. There, requiring the learned functions of the two views to agree on the unlabeled samples restricts the set of candidate pairs of functions, and with it the number of labeled examples needed to find a pair that performs the task well~\cite{sridharan2008}.

Both of these reasons rest on the same premise, which Section~\ref{sec:cca} made explicit for linear maps. Behind each collection of views lies a common factor that drives all of them, as in the probabilistic model of Bach and Jordan~\cite{bach2005} and the common-plus-private model~\eqref{eq:gen_two_view}. Here, each view is formed from the common factor and a private factor of its own through its own, possibly nonlinear, function. This is the modeling assumption we follow in the rest of this section. It is motivated by examples like those above, where in some cases a common factor is easy to name, though extracting it is not trivial. A more general problem is to recover both the common and the private factors, a task known as latent disentanglement. Although that is an active research topic, in this tutorial we focus mostly on what it takes to recover the common factor.  Under this model, the task of this section is to rigorously recover the common factor, or at least to reliably gauge how strongly the views express it.

The natural way to look for the common factor is to find view-wise transformations that make the views maximally correlated. When the transformations are linear, this is Hotelling's canonical correlation. When the views are single variables and the transformations are arbitrary functions, normalized to have zero mean and unit variance outputs, we obtain the Hirschfeld--Gebelein--R\'enyi (HGR) maximal correlation of Section~\ref{sec:common_info}. This section works at their meeting point, maximizing correlation over nonlinear transformations of multi-variable views.

Maximizing correlation is an objective, not a guarantee, and the gap between the two is what this section is really about. Two challenges define that gap. First,
correlation alone is too permissive. Once arbitrary transformations are allowed,
correlation can be driven high by degenerate or overfitted choices that agree on
nothing meaningful. Hence, the transformed views may agree without reflecting the common factor we assumed. The normalization constraints of HGR, zero-mean and unit-variance outputs, help but do not fully resolve this. Recall from Section~\ref{sec:common_info} that a pair of functions attaining $S(X,Y)=1$ certifies that the G\'acs--K\"orner common information is positive, but need not recover the common part $W^*$, only a possibly coarser function $c(W^*)$ of it. This observation brings us to the second challenge. Even when the transformations agree on a common variable, the variable may be any coarsening $c(W^*)$ of the common part, from a coarse grouping of its values up to $W^*$ itself, and these coarsenings carry different amounts of common information. It is not clear which of them the objective singles out. Finally, learning the transformations from finite samples sharpens both difficulties.

The rest of this section is organized as follows. Section~\ref{sec:optimal-scaling} presents optimal scaling, an early form of nonlinear CCA in which each variable of a view is transformed separately. Section~\ref{sec:kcca} presents kernel CCA, which maps each view into a space of functions where ordinary CCA applies, and Section~\ref{sec:kcca_reg} shows why that approach must be regularized. Sections~\ref{sec:bridge} and~\ref{sec:dcca} replace that fixed map with a neural network learned from data, known as deep CCA, and show that the networks can agree on the training samples while carrying nothing the views share. Section~\ref{sec:vcca} turns to formulations that fit a probabilistic model of the views rather than maximize a correlation. Section~\ref{sec:idgcca} gives two sets of conditions under which the common factor is provably recovered, and Section~\ref{sec:ssl} relates these methods to self-supervised learning, where the views are constructed from one sample rather than observed. Finally, Section~\ref{sec:nonlinear:summary} summarizes the section.

\subsection{Optimal Scaling: The Pre-Neural Nonlinear CCA}
\label{sec:optimal-scaling}

The first step beyond linear CCA was a half-step, taken through a method known as \emph{optimal scaling}. Optimal scaling is a general method for handling variables that are categorical or recorded on an unknown scale. Each observed variable of a view is replaced by a transformed version, where each transformation is drawn from a class fixed by the type of the variable and chosen so as to maximize the correlation. All the transformations are estimated jointly with the canonical weights. Three variable types are \mbox{distinguished~\cite{stevens1946, vanderburg1994}}. A numerical variable takes values on a numerical scale, such as a temperature. An ordinal variable takes categories with a natural order but no meaningful distances between them, such as low, medium and high. A nominal variable takes categories with no order at all, such as eye color. Accordingly, a numerical variable admits any affine transformation, an ordinal variable any monotone nondecreasing one, and a nominal variable any assignment of values to its categories. The idea goes back to Fisher, who assigned scores to the categories of a contingency table so that the two sets of scores were maximally correlated~\cite{fisher1940}, and psychometrics developed it into the general method described here~\cite{youngdeleeuwtakane1976}.

Placing the transformed variables inside the MAXVAR consensus objective
\eqref{eq:maxvar2} gives a nonlinear version of CCA, proposed for two views by
Young, de Leeuw and Takane as CORALS \cite{youngdeleeuwtakane1976} and
extended to $K$ views by van der Burg, de Leeuw and Verdegaal as OVERALS
\cite{vanderburg1988}. A variant called CANALS~\cite{vanderburgdeleeuw1983} applies optimal scaling to two views but maximizes the sum of the squared canonical correlations rather than their sum, and coincides with CORALS for a single component. A direct consequence of the admissible classes above is that optimal scaling extends CCA beyond the linear case only through ordinal and nominal variables, since with all variables numerical the method reduces to ordinary CCA \cite{vanderburg1988}.

Two features matter for what follows. First, each view-encoder is still
additive, in the~sense
\begin{equation}
\mathbf{f}_k(\mathbf{x}_k) \;=\; \sum_{t=1}^{M_k}
\mathbf{A}_k[t,:]^\top \, \varphi_{k,t}(x_{k,t}) ,
\label{eq:additive-encoder}
\end{equation}
where $x_{k,t}$ is the $t$-th coordinate of $\mathbf{x}_k$, the scalar map
$\varphi_{k,t}$ is the transformation assigned to that coordinate, and
$\mathbf{A}_k[t,:]^\top \in \mathbb{R}^{D}$ is the corresponding row of reduction matrix \linebreak $\mathbf{A}_k\in\mathbb{R}^{M_k \times D}$. Hence, each coordinate is distorted on its own and the mixing
that follows is linear. This is the reverse of the post-nonlinear mixtures
\cite{taleb_pnl, yang2020nonlinear} discussed in the introduction of this
section. Writing both as forward models,
\begin{equation}
\underbrace{\mathbf{y} = \boldsymbol{\theta}\big(\mathbf{H}\mathbf{s}\big)}_{\text{post-nonlinear}}
\qquad\text{versus}\qquad
\underbrace{\mathbf{y} = \mathbf{H}\,\boldsymbol{\varphi}(\mathbf{s})}_{\text{pre-nonlinear}} ,
\end{equation}
with $\boldsymbol{\theta}$ and $\boldsymbol{\varphi}$ being vector valued
functions acting entrywise. Therefore, the nonlinearity comes after the mixing
in the first and before it in the second. The front-end model~\eqref{eq:signal_model_nl} of Section~\ref{sec:sp:synthesis} contains both cases, since each row of $\bX$ is a snapshot of the form above. With the transmitter map $f$ taken as the identity, \eqref{eq:signal_model_nl} is a post-nonlinear mixture, and with the receiver map $g$ taken as the identity, it is a pre-nonlinear one. However, neither allows for general nonlinear transformations.

Second, the consensus objectives reward agreement between the transformed views, and the transformations are learned. Nothing in the problem formulations distinguishes agreement produced by the common factor from agreement produced by the transformations themselves. The admissible classes make the latter available, since a nominal transformation assigns one free value per category and a monotone one is required only to be nondecreasing, so both can merge distinct values. Young, de Leeuw and Takane warn that under extreme nonlinearities and high noise their algorithm may yield degenerate results~\cite{youngdeleeuwtakane1976}. This is the first appearance of a difficulty that concerns every method in this section.
Sections~\ref{sec:kcca} and~\ref{sec:kcca_reg} are where the transformation class becomes rich enough for the degeneracy to be written down in closed form, and Section~\ref{sec:dcca} is where it is met with no restriction at all on the learned transformations.
 

\subsection{Kernel CCA}
\label{sec:kcca}

Optimal scaling relaxed linearity under two restrictions that are worth separating. Its encoders in~\eqref{eq:additive-encoder} nonlinearly transform each coordinate of a view separately and then mix the transformed coordinates linearly. Hence, no nonlinear transformation may act on several coordinates at once. The second restriction is that each coordinate transformation, although estimated from data jointly with the linear reduction, is confined to a restricted class fixed in
advance by the type of the variable. When every variable is numerical the
admissible class is affine and optimal scaling reduces to ordinary
CCA~\cite{vanderburg1988}. Hence, nothing is gained on precisely the continuous
views that motivated this section, the voxel time series, the received signals, and the acoustic features.

Kernel CCA lifts the first restriction and relaxes the second. For each view, a single transformation is learned and applied to the view as a whole rather than coordinate by coordinate. The view is first mapped into a space of very high, possibly infinite, dimension, in which the transformation becomes a linear operation. The class of candidate transformations is then a reproducing kernel Hilbert space (RKHS), fixed by the choice of a kernel for each view, and it is far larger than the classes of optimal scaling. The price is that, once the class is this rich, agreement between the transformed views can be produced by the transformations alone. This is the difficulty already met in Section~\ref{sec:optimal-scaling}, where it was an empirical hazard under extreme nonlinearity and high noise. In the kernel setting it can be detected, as we will see in Section~\ref{sec:kcca_reg}, and for the kernels in common use it is the generic outcome rather than an edge case.

The construction begins with a kernel, a symmetric function $\kappa_k(\bx,\bx')=\kappa_k(\bx',\bx)$ that measures the similarity between two points of view $k$. The kernel is positive definite when, for any finite set of points $\bx^{(1)},\ldots,\bx^{(N)}$, the $N\times N$ matrix with entries $\kappa_k(\bx^{(m)},\bx^{(n)})$ is positive semidefinite. Fixing such a kernel defines a space of functions along with it, and fixing one of the two arguments of the kernel gives a function in that space. Both arguments take values in $\mathcal{X}_k$, the space in which view $k$ takes its values, which for the continuous views of this section can be $\R^{M_k}$. A Hilbert space $\mathcal{H}_k$ of real-valued functions, defined on $\mathcal{X}_k$, is a \emph{reproducing kernel Hilbert space} (RKHS) with kernel $\kappa_k$ when
$\kappa_k(\bx,\cdot) \in \mathcal{H}_k$ for every $\bx \in \mathcal{X}_k$, and
\begin{linenomath}
\begin{equation}
  f_k(\bx) \;=\; \big\langle f_k,\, \kappa_k(\bx,\cdot)
  \big\rangle_{\mathcal{H}_k}
  \qquad \text{for all } f_k \in \mathcal{H}_k,\; \bx \in \mathcal{X}_k,
  \label{eq:reproducing}
\end{equation}
\end{linenomath}
where $\big\langle \cdot, \cdot\big\rangle_{\mathcal{H}_k}$ is the inner product on $\mathcal{H}_k$. By the theorem of Moore and Aronszajn \cite{aronszajn1950}, such a space exists and is unique for every positive definite kernel.

Fixing one argument maps an arbitrary point $\bx'$ to the function $f_k(\cdot)=\kappa_k(\bx',\cdot)$ in $\mathcal{H}_k$, a representation of $\bx'$ that may be infinite dimensional. Evaluating this function at another point $\bx$ gives $f_k(\bx)=\kappa_k(\bx',\bx)$. At the same time, following~\eqref{eq:reproducing} we obtain that $f_k(\bx)$ is also equal to the inner product between the representations of $\bx'$ and $\bx$, i.e., \linebreak $f_k(\bx)=\langle \kappa_k(\bx',\cdot),\kappa_k(\bx,\cdot)\rangle_{\mathcal{H}_k}$.
Hence, every positive definite kernel $\kappa_k(\bx',\bx)$ equals an inner product between the images of its two arguments under the feature map \linebreak $\phi_k(\bx) = \kappa_k(\bx,\cdot)$, and computes that inner product without forming the images. Inner products between mapped points, and evaluations of functions built from them, thus reduce to kernel evaluations. This is what keeps the approach practical even when $\mathcal{H}_k$ is infinite dimensional. Note also that the mapped training samples sit in $\mathcal{H}_k$ alongside the candidate transformations themselves, so the learned transformation can be expanded in terms of the representations of the training samples in $\mathcal{H}_k$.

Kernel CCA now poses the following problem over $\mathcal{H}_1$ and
$\mathcal{H}_2$,
\begin{linenomath}
\begin{equation}
  \max_{f_1 \in \mathcal{H}_1,\, f_2 \in \mathcal{H}_2}\;
  \frac{\Cov\big(f_1(\bx_1),\, f_2(\bx_2)\big)}
       {\sqrt{\Var[f_1(\bx_1)]}\;\sqrt{\Var[f_2(\bx_2)]}}\, ,
  \label{eq:kcca_pop}
\end{equation}
\end{linenomath}
and the value it attains is the first kernel canonical correlation. No constraints appear, because the ratio is unchanged when either function is rescaled by a positive constant or shifted by a constant, covariance and variance being central moments. The same ratio has appeared twice before, and only the freedom given to $f_1$ and $f_2$ distinguishes the three problems. In Hotelling's
problem~\eqref{eq:cca_corr} the two functions are linear functionals of the views. In the Hirschfeld--Gebelein--R\'enyi maximal correlation of Section~\ref{sec:gk} they range over all measurable functions, a class larger than any RKHS, with the normalization to zero mean and unit variance stated as constraints rather than absorbed into a ratio. 

The kernel is what determines how much of that freedom is granted. A linear kernel makes each feature map $\phi_k$ the identity and reduces~\eqref{eq:kcca_pop}  to Hotelling's problem~\eqref{eq:cca_corr}~\cite{kussgraepel2003}. A larger RKHS, on the other hand, cannot decrease the attainable value, being a maximum over a larger
feasible set. For translation-invariant kernels, which depend on the two points only through their difference, such as the Gaussian kernel $\kappa_k(\bx,\bx') = \exp\big(-\|\bx-\bx'\|^2/(2\sigma^2)\big)$ with bandwidth $\sigma$, Bach and Jordan report that the first kernel canonical correlation increases as the bandwidth shrinks~\cite{bachjordan2002}. Whether a value
obtained this way can reliably say something about the two views is the question
Section~\ref{sec:kcca_reg} takes up.

Given $N$ paired samples, $\left\{\left(\bx_1^{(n)}, \bx_2^{(n)}\right)\right\}_{n=1}^N$, write $\tilde\bK_k\in\mathbb{R}^{N\times N}$ for the Gram matrix with
entries $[\tilde\bK_k]_{mn} = \kappa_k(\bx_k^{(m)}, \bx_k^{(n)})$ and
$\bK_k = \bJ\,\tilde\bK_k\,\bJ$ for its centered version, where
\linebreak $\bJ = \bI_N - \tfrac{1}{N}\mathbf{1}\mathbf{1}^\top$ and $\mathbf{1}$ is the $N$-dimensional vector of ones. Centering in feature space is what implements the zero-mean requirement, and it can be carried out on the Gram matrix alone, without access to $\phi_k$. From here on, a mapped sample is understood to be centered, so every Gram matrix below is $\bK_k$ rather than $\tilde\bK_k$.

Split any candidate transformation as $f_k = f_k^{\parallel} + f_k^{\perp}$, with
$f_k^{\parallel}$ in the span of the mapped training samples and $f_k^{\perp}$
orthogonal to it. By the reproducing property~\eqref{eq:reproducing}, we get
\begin{linenomath}
\begin{equation}
  f_k\big(\bx_k^{(n)}\big)
  = \big\langle f_k,\, \phi_k(\bx_k^{(n)})\big\rangle_{\mathcal{H}_k}
  = \big\langle f_k^{\parallel},\, \phi_k(\bx_k^{(n)})\big\rangle_{\mathcal{H}_k}
  = f_k^{\parallel}\big(\bx_k^{(n)}\big),
  \label{eq:orthogonal_invisible}
\end{equation}
\end{linenomath}
since $\phi_k(\bx_k^{(n)})$ lies in the span and $f_k^{\perp}$ is orthogonal
to it. The component $f_k^{\perp}$ is therefore invisible to every empirical quantity computed from the samples. Since the empirical objective value depends on
$f_k$ only through these evaluations, any candidate and its projection
$f_k^{\parallel}$ attain the same value, so nothing is lost by restricting our
attention to
\begin{linenomath}
\begin{equation}
  f_k(\cdot) \;=\; \sum_{n=1}^{N} \alpha_{k,n}\,
  \phi_k\big(\bx_k^{(n)}\big)(\cdot),
  \qquad \balpha_k \in \R^{N}.
  \label{eq:representer}
\end{equation}
\end{linenomath}
This is the expansion in terms of the training samples noted above, and it reduces the search over $\mathcal{H}_k$ to a search over $N$ coefficients. Since the objective cannot see $f_k^{\perp}$, nothing forces it to vanish, so the optimal $f_k$ is not unique in $\mathcal{H}_k$. Within the span, the coefficients $\balpha_k$ are not necessarily unique either, since different $\balpha_k$ give the same function whenever the Gram matrix is singular~\cite{kussgraepel2003}. The freedom in $f_k^{\perp}$ disappears once the squared norm $\|f_k\|^2_{\mathcal{H}_k}$ is added to the variance of $f_k$ in the denominator of~\eqref{eq:kcca_pop}, as is done in practice. An orthogonal component then increases the denominator while leaving the numerator unchanged. Hence, in the penalized case, every optimal $f_k$ has $f_k^{\perp} = 0$.

Three quantities are needed in what follows, the empirical counterparts of the two moments in~\eqref{eq:kcca_pop} and the RKHS norm used by the penalty introduced above. The first two are averages over the training samples and can be computed simply. Substituting~\eqref{eq:representer} into the reproducing property~\eqref{eq:reproducing}, and then using linearity of the inner product and the fact that the inner products of the centered mapped samples are the entries of $\bK_k$, gives the evaluation of a centered candidate transformation at the $n$-th training sample,
\begin{linenomath}
\begin{equation}
  {f}_k\big(\bx_k^{(n)}\big)
  = \Big\langle \sum_{m=1}^{N} \alpha_{k,m}\,{\phi}_k\big(\bx_k^{(m)}\big),\;
    {\phi}_k\big(\bx_k^{(n)}\big)\Big\rangle_{\mathcal{H}_k}
  = \sum_{m=1}^{N} \alpha_{k,m}\,
       \big[\bK_k\big]_{mn}
  = \big[{\bK}_k\balpha_k\big]_n.
  \label{eq:eval_vector}
\end{equation}
\end{linenomath}
Hence, the vector of evaluations on the training samples is ${\bK}_k\balpha_k$, and each empirical moment is an average of products of entries of two such vectors. For the cross-covariance,
\vspace{-6pt}
\begin{adjustwidth}{-\extralength}{0cm}
\begin{linenomath}
\begin{equation}
  \widehat{\Cov}\big({f}_1(\bx_1), {f}_2(\bx_2)\big)
  = \frac{1}{N}\sum_{n=1}^{N}
    {f}_1\big(\bx_1^{(n)}\big)\, {f}_2\big(\bx_2^{(n)}\big)
  = \frac{1}{N} \big(\bK_1\balpha_1\big)^\top \big(\bK_2\balpha_2\big)
  = \frac{1}{N}\,\balpha_1^\top \bK_1\bK_2\, \balpha_2 ,
  \label{eq:kcca_cov}
\end{equation}
\end{linenomath}
\end{adjustwidth}
where in the last step we used the symmetry of $\bK_1$. Similarly, for each view we can compute the empirical variance as,
\begin{linenomath}
\begin{equation}
  \widehat{\Var}\big(f_k\big(\bx_k\big)\big)
  = \frac{1}{N}\sum_{n=1}^{N}  f_k\big(\bx_k^{(n)}\big)^2
  = \frac{1}{N} \big(\bK_k\balpha_k\big)^\top \big(\bK_k\balpha_k\big)
  = \frac{1}{N}\,\balpha_k^\top \bK_k^2\, \balpha_k .
  \label{eq:kcca_var}
\end{equation}
\end{linenomath}

The RKHS norm is obtained differently, since it is not an average over the training samples but an inner product in $\mathcal{H}_k$. Expanding both arguments through~\eqref{eq:representer} and using that the inner products of the centered mapped samples are the entries of $\bK_k$ leaves the double sum,
\vspace{-6pt}
\begin{adjustwidth}{-\extralength}{0cm}
\begin{linenomath}
\begin{equation}
  \|f_k\|_{\mathcal{H}_k}^2
  = \Big\langle \sum_{m=1}^{N} \alpha_{k,m}\,\phi_k\big(\bx_k^{(m)}\big),\;
    \sum_{n=1}^{N} \alpha_{k,n}\,\phi_k\big(\bx_k^{(n)}\big)
    \Big\rangle_{\mathcal{H}_k}
   = \sum_{m=1}^{N}\sum_{n=1}^{N} \alpha_{k,m}\,\alpha_{k,n}\,
    \big[\bK_k\big]_{mn}
  = \balpha_k^\top \bK_k\, \balpha_k .
  \label{eq:kcca_norm}
\end{equation}
\end{linenomath}
\end{adjustwidth}
The difference between the two derivations is worth noting. Each empirical
moment applies the Gram matrix twice, once for each evaluation vector, while
the norm applies it once. Note also that
the moments carry a factor $1/N$ and the norm does not, since only the former
average over the training samples.

Substituting the first two into~\eqref{eq:kcca_pop} and using the scale invariance of the ratio, exactly as in going from~\eqref{eq:cca_corr} to~\eqref{eq:cca_opt}, gives the empirical problem
\begin{linenomath}
\begin{equation}
  \max_{\balpha_1, \balpha_2}\;
  \balpha_1^\top \bK_1\bK_2\, \balpha_2
  \quad \text{s.t.} \quad
  \balpha_k^\top \bK_k^2\, \balpha_k = 1, \;\; k = 1,2,
  \label{eq:kcca_plain}
\end{equation}
\end{linenomath}
where the factors of $1/N$ have been dropped throughout, since a positive
multiplicative constant on the objective does not move the maximizer and one
in the constraint only rescales $\balpha_k$, which the same scale invariance
absorbs. Problem~\eqref{eq:kcca_plain} is the direct kernelization, and it is what several of the early formulations of kernel CCA solve~\cite{melzer2001, kussgraepel2003}. It is also, as Section~\ref{sec:kcca_reg} shows, degenerate for commonly used kernels.

The common safeguard adopted across the literature is to penalize the RKHS norms
of $f_1$ and $f_2$ alongside their variances, which bounds the size of the
candidate functions in $\mathcal{H}_k$, and with it the class from which
agreement can be manufactured. Carrying that penalty, with weight $\lambda_k \geq 0$ for view $k$, through~\eqref{eq:kcca_norm} turns the constraints of~\eqref{eq:kcca_plain} into a ridge on the Gram matrices,
and norm-penalized kernel CCA reads\vspace{12pt} 
\begin{linenomath}
\begin{equation}
  \max_{\balpha_1, \balpha_2}\;
  \balpha_1^\top \bK_1\bK_2\, \balpha_2
  \quad \text{s.t.} \quad
  \balpha_k^\top \big(\bK_k^2 + \lambda_k \bK_k\big)\balpha_k = 1,
  \;\; k = 1,2.
  \label{eq:kcca_emp}
\end{equation}
\end{linenomath}
{{Implementations} differ in how the penalty is carried into the constraint. Equation~\eqref{eq:kcca_emp} follows Akaho~\cite{akaho2001} and Hardoon {et~al.}~\cite{hardoon2004} in penalizing $\|f_k\|_{\mathcal{H}_k}^2$, which by~\eqref{eq:kcca_norm} equals $\balpha_k^\top\bK_k\balpha_k$ and therefore adds $\lambda_k\bK_k$ to the constraint matrix. Akaho also notes that penalizing the coefficient norms $\|\balpha_k\|^2$ instead, which contributes $\lambda_k\bI$, carries the same complexity, and that he found no significant difference in performance in his experiments with synthetic data.} The stationarity conditions of~\eqref{eq:kcca_emp} give the generalized
eigenvalue problem
\begin{linenomath}
\begin{equation}
  \begin{bmatrix} \mathbf{0} & \bK_1\bK_2 \\ \bK_2\bK_1 & \mathbf{0} \end{bmatrix}
  \begin{bmatrix} \balpha_1 \\ \balpha_2 \end{bmatrix}
  = \sigma
  \begin{bmatrix} \bK_1^2 + \lambda_1\bK_1 & \mathbf{0} \\
                  \mathbf{0} & \bK_2^2 + \lambda_2\bK_2 \end{bmatrix}
  \begin{bmatrix} \balpha_1 \\ \balpha_2 \end{bmatrix} ,
  \label{eq:kcca_gev}
\end{equation}
\end{linenomath}
whose leading generalized eigenvalue is the first penalized kernel canonical correlation, and at $\lambda_1 = \lambda_2 = 0$ the first kernel canonical correlation of~\eqref{eq:kcca_plain}. Further components are obtained from the subsequent eigenvectors, as in Section~\ref{sec:cca}, and the same applies to the unpenalized problem~\eqref{eq:kcca_plain}.

\subsubsection*{Attribution}

Several groups reached kernel CCA at once~\cite{kussgraepel2003}, and the first version is not the one in use today. Lai and Fyfe~\cite{laifyfe2000} published the earliest of the versions discussed here, alongside a nonlinear neural implementation of CCA, but their construction is not the one above. They form the cross-block entrywise as $\kappa(\bx_1^{(i)}, \bx_2^{(j)})$, with one argument drawn from each view, rather than from the product $\bK_1\bK_2$ of two within-view Gram matrices. The kernel must therefore accept a point of either view, so the two views have to live in the same space and have the same number of features. Their
construction does not produce a feature-space cross-covariance, which is why
the numbers it returns need not be correlations. Finally, no penalized variants were considered in their work. Their own discussion records both consequences, i.e., that the method is prone to singularities and that it can return correlations exceeding one.

Akaho~\cite{akaho2001}, working independently, gave the formulation used here,
with a separate kernel for each view and the penalized constraint
of~\eqref{eq:kcca_emp}. Melzer, Reiter and Bischof~\cite{melzer2001} applied the unpenalized form~\eqref{eq:kcca_plain} to feature extraction for pose estimation.
Van Gestel {et~al.}~\cite{vangestel2001} consider the supervised case,
using CCA between the feature map of the inputs and a vector of targets. They
show that for a scalar target the regularized problem reduces to least-squares
support vector machine regression. Bach and Jordan~\cite{bachjordan2002} arrived at the same object from independent component analysis (ICA). Starting from two random variables, they use a feature map to expand each into a random vector, and then consider linear CCA in the feature space. The maximal correlation over the RKHS that this process returns is what they minimize as a measure of dependence, and is the basis of the kernel ICA contrast functions they propose.  Hardoon, Szedmak and Shawe-Taylor~\cite{hardoon2004} gave an overview of the method aimed at the learning community, together with an application to retrieving images from text queries.

Of the works discussed, those that carry a penalty~\cite{akaho2001,
vangestel2001, bachjordan2002, hardoon2004} bound the class and those that do
not~\cite{laifyfe2000, melzer2001} leave it unbounded. The next subsection is
about why that distinction decides whether the reported number means anything.


\subsection{Why Regularization Is Not Optional}
\label{sec:kcca_reg}

Norm penalization in~\eqref{eq:kcca_emp} ($\lambda_k > 0$) is not a numerical convenience. Setting $\lambda_k = 0$ returns the unpenalized problem~\eqref{eq:kcca_plain}, in which nothing bounds the size of the learned transformations. The correlations obtained can then reflect that freedom rather than the common factor. This is the degeneracy mentioned throughout the section. What the kernel setting adds is that here the degeneracy can be stated exactly, since only $N$ training samples are mapped. Hence, the span of the mapped training samples of each view, which Kuss and Graepel call the effective feature space~\cite{kussgraepel2003}, has dimension at most $N$, regardless of how large $\mathcal{H}_k$ is, and the failure can be read off from the dimensions of the two effective feature spaces.

\subsubsection{The Degeneracy Is a Dimension Count}
Kuss and Graepel~\cite{kussgraepel2003} give the cleanest account of why degeneracy can happen.
Collect the representations of all training samples as
$\bPhi_k = [\phi_k(\bx_k^{(1)}),\ldots,\phi_k(\bx_k^{(N)})]^\top$, so that the
$n$-th row of $\bPhi_k$ is the representation of the $n$-th training sample $\phi_k(\bx_k^{(n)})$.
Since these are centered, as agreed in Section~\ref{sec:kcca}, we have
$\bK_k = \bPhi_k\bPhi_k^\top$. Written this way, $\bPhi_k$ acts in two~directions.

The expansion~\eqref{eq:representer} reads $f_k = \bPhi_k^\top\balpha_k$, so
$\bPhi_k^\top$ maps a coefficient vector $\balpha_k \in \R^N$ to a candidate
transformation in $\mathcal{H}_k$. Evaluating that candidate on the $N$
training samples gives a vector of values, which by~\eqref{eq:eval_vector} is
\begin{linenomath}
\begin{equation}
  \bz_k \;=\; \bPhi_k f_k \;=\; \bPhi_k\bPhi_k^\top\balpha_k
  \;=\; \bK_k\balpha_k .
  \label{eq:variate_map}
\end{equation}
\end{linenomath}
Hence $\bPhi_k$ maps a transformation in $\mathcal{H}_k$ back to a vector in $\R^N$. Two subspaces follow, one at each end. The candidate transformations fill $\col(\bPhi_k^\top) \subseteq \mathcal{H}_k$, the effective feature space introduced above. Their variates fill $\col(\bK_k) \subseteq \R^N$. Both have dimension $\text{rank}(\bK_k)$, which is at most $N$, since $\bK_k$ is $N \times N$. Therefore, regardless of how large $\mathcal{H}_k$ is, the candidate transformations span at most $N$ dimensions, and Kuss and Graepel identify their span with $\col(\bK_k)$, the space of their variates~\cite{kussgraepel2003}. Note that $\mathcal{H}_1$ and $\mathcal{H}_2$ are different spaces, while the variates of both views live in the same $\R^N$. Comparing the two views is therefore a question about subspaces of $\R^N$.

The geometry of Section~\ref{sec:cca} applies to the unpenalized problem~\eqref{eq:kcca_plain}, with $\bX_k$ replaced by $\bK_k$. The canonical correlations $\sigma_i$ are the cosines of the principal angles between $\col(\bK_1)$ and $\col(\bK_2)$. The solution can also be obtained by applying linear CCA to two ordinary data matrices. Let $\bK_k = \bV_k\bLambda_k\bV_k^\top$ be the eigendecomposition of the Gram matrix, and keep the $d_k = \text{rank}(\bK_k)$ nonzero columns of $\bV_k\bLambda_k^{1/2}$ in $\bC_k \in \R^{N \times d_k}$. Row $n$ of $\bC_k$ holds $d_k$ numbers describing the $n$-th training sample of view $k$, and these are its kernel principal component scores. By construction $\bC_k\bC_k^\top = \bK_k$ and $\col(\bC_k) = \col(\bK_k)$, so applying the linear CCA of Section~\ref{sec:cca} to $\bC_1$ and $\bC_2$ returns the kernel canonical correlations. Note that $d_k$ is not a modeling choice. It equals $\dim \col(\bK_k)$, so the chosen kernels and the training samples determine it. A linear kernel gives $d_k = \text{rank}(\bX_k) \le \min(N, M_k)$. The Gaussian kernel is strictly positive definite~\cite{micchelli1986}, i.e., its Gram matrix on any set of distinct points is positive definite, so with distinct training samples the uncentered Gram matrix is nonsingular. Centering can be performed by multiplying $\tilde\bK_k$ by $\bJ$ on both sides, and $\bJ$ projects onto $\mathbf{1}^{\perp}$. Hence exactly one direction is lost, $\col(\bK_k)$ is the whole of $\mathbf{1}^{\perp}$, and $d_k = N-1$.

To see what can go wrong, suppose $\col(\bK_1)$ and $\col(\bK_2)$ intersect in a subspace of dimension $h$. Then, the first $h$ canonical variates coincide and
$\sigma_i = 1$ for $i \le h$. Together the two subspaces span
$\col([\bK_1\ \bK_2])$, whose dimension is $d_1 + d_2 - h$. That column space
sits in $\R^N$, so its dimension is at most $N$. Hence
$h \ge \max\{d_1 + d_2 - N, 0\}$, and the intersection contains a nonzero vector whenever
\begin{linenomath}
\begin{equation}
  d_1 + d_2 \;>\; N .
  \label{eq:kcca_collapse}
\end{equation}
\end{linenomath}

Condition~\eqref{eq:kcca_collapse} is easily met. A Gaussian kernel gives $d_1 = d_2 = N-1$, so it holds for every $N > 2$. In fact, $\col(\bK_1)$ and $\col(\bK_2)$ are then both the whole of $\mathbf{1}^{\perp}$, hence identical. Everything belongs in the intersection, and therefore every canonical correlation equals one regardless of the training set. A similar budget appeared in the linear case. Theorem~\ref{thm:cca_id} of Ibrahim and Sidiropoulos asked the common and private subspaces to fit inside $\R^N$, i.e., $N \ge D + L_1 + L_2$, which is what leaves the intersection of the two column spaces meaningful and equal to the modeled $\col(\bG)$. The feature map fills $\R^N$ in a way that generates fictitious subspace overlaps and hence spurious common subspaces. Kuss and Graepel conclude that the kernel canonical correlations must be interpreted with caution, and that kernel CCA is better regarded as a geometric algorithm for constructing highly correlated features~\cite{kussgraepel2003}.

As a last note, recall from Section~\ref{sec:kcca} that Bach and Jordan report the first kernel canonical correlation to increase as the bandwidth of the Gaussian kernel shrinks. That is a statement about the value attained, not about $d_k$. 
The uncentered Gram matrix $\tilde\bK_k$ is nonsingular for every $\sigma > 0$, so $d_k = N-1$ regardless of the chosen bandwidth, and~\eqref{eq:kcca_collapse} holds either way. What $\sigma$ changes is the spectrum of $\bK_k$, and that is what the penalty acts on. As $\sigma \to 0$, the off-diagonal entries of $\tilde\bK_k$ vanish while its diagonal entries stay at one, so $\tilde\bK_k \to \bI$ and $\bK_k = \bJ\tilde\bK_k\bJ \to \bJ$ for both views. Every vector $\bz \in \mathbf{1}^{\perp}$ is then a variate of both views, and with $\bK_k\balpha_k = \bz$ and $\lambda_1 = \lambda_2 = \lambda$, the penalized problem~\eqref{eq:kcca_emp} attains a penalized kernel canonical correlation of $1/(1+\lambda)$ for every such $\bz$, regardless of the data. 

As $\sigma \to \infty$, a first-order expansion of the exponential gives $\bK_k \approx \bar\bX_k\bar\bX_k^\top/\sigma^2$, with $\bar\bX_k$ the centered view, which is a scaled linear kernel. All $N-1$ directions remain, so~\eqref{eq:kcca_collapse} still holds, but only $\text{rank}(\bar\bX_k)$ of them carry non-negligible eigenvalues. Since the entries of $\bK_k$ are of order $1/\sigma^2$, for fixed $\lambda_k$ the penalty term $\lambda_k\bK_k$ dominates $\bK_k^2$ in the constraint of~\eqref{eq:kcca_emp}, and the problem approaches linear partial least squares, which maximizes the covariance rather than the correlation of linear projections of the views. Linear CCA is recovered only if $\lambda_k$ shrinks faster than $1/\sigma^2$. The two limits show that the value the method reports can depend on the bandwidth and the penalty as much as on the views. A reported correlation is therefore informative only when compared with the value that the same bandwidth and penalty return for unrelated views.

\subsubsection{The Population Target, and Consistency}
Bach and Jordan~\cite{bachjordan2002} reached the first kernel canonical
correlation of~\eqref{eq:kcca_pop} from independent component analysis (ICA).
They use it as a contrast function and call it the $\mathcal{F}$-correlation.
An ICA criterion must vanish exactly when the two variables are independent.
The $\mathcal{F}$-correlation does so when the RKHS is large enough. However, its empirical version does not, since, by~\eqref{eq:kcca_collapse}, it equals one regardless of the data. Hence, they penalize the RKHS norms, and the population object their penalized estimator targets is
\begin{linenomath}
\begin{equation}
  \sigma_{\mathcal{F}}^{\lambda}
  = \max_{f_1, f_2}
  \frac{\Cov\big(f_1(\bx_1), f_2(\bx_2)\big)}
  {\big(\Var[f_1(\bx_1)] + \lambda \|f_1\|_{\mathcal{H}_1}^2\big)^{1/2}
   \big(\Var[f_2(\bx_2)] + \lambda \|f_2\|_{\mathcal{H}_2}^2\big)^{1/2}},
  \label{eq:kcca_pop_reg}
\end{equation}
\end{linenomath}
where $\Cov$ and $\Var$ are taken under the true distribution rather than over
training samples. This is the population counterpart of~\eqref{eq:kcca_emp},
with a single constant $\lambda$ in place of the per-view $\lambda_k$. They state that the penalized empirical canonical correlation is a consistent estimator of
$\sigma_{\mathcal{F}}^{\lambda}$, and defer the proof. With a Gaussian kernel, the unpenalized value ($\lambda=0$) estimates no population quantity, since it equals one for every data set. Bach and Jordan's target $\sigma_{\mathcal{F}}^{\lambda}$ still depends on $\lambda$. Fukumizu, Bach and Gretton~\cite{fukumizu2007} reach the functions $f_1$ and $f_2$ that solve the unpenalized population kernel CCA in~\eqref{eq:kcca_pop} instead. They let $\lambda$ shrink with the sample size $N$ and prove, under regularity conditions on the joint distribution, that the estimated functions converge to the optimal ones in the $L^2$ sense. A sufficient condition on $\lambda$ is that it vanish more slowly than $N^{-1/3}$, that is, $\lambda \to 0$ with $\lambda N^{1/3} \to \infty$.

Whether a penalty is needed for consistency had in fact been settled once
before, in a scenario that predates the kernel literature. Leurgans, Moyeed and
Silverman~\cite{leurgans1993} studied CCA under the setting where each sample of a view is a curve rather than a vector. They show that without a roughness penalty, the method breaks down completely. With finitely many samples, every linear functional of the first view has a perfectly correlated partner in the second. All sample canonical correlations are therefore one, regardless of the true ones, which is the functional counterpart of the collapse that~\eqref{eq:kcca_collapse} forces. They restore consistency with a roughness penalty on the canonical directions, which are themselves curves in this setting, and choose a single smoothing parameter, which plays the role of a common $\lambda_1 = \lambda_2$ in~\eqref{eq:kcca_emp}, by cross-validation. Since their functions lie in a Sobolev space, which is itself an RKHS, their method can be read as a form of penalized kernel CCA, with the roughness penalty playing the role of the RKHS norm in~\eqref{eq:kcca_pop_reg}.

\subsubsection{The Unrestricted Problem Has Its Own Lineage}

Every method of this section so far draws the transformations from a restricted class, and Section~\ref{sec:kcca} noted that the Hirschfeld--Gebelein--R\'enyi maximal correlation ranges over all measurable functions instead. That problem was studied in the population setting by Lancaster~\cite{lancaster1958} and is solved algorithmically by alternating conditional expectation, the ACE procedure of Breiman and Friedman~\cite{breimanfriedman1985}, which alternates between conditional expectations of one transformed view given the other until the transformations stop changing. Michaeli, Wang and Livescu return to Lancaster's theory and show that the optimal transformations are given by the singular value decomposition of an operator built from the joint density of the views~\cite{michaeli2016ncca}. Replacing the joint density by an estimate from the training samples turns this into a finite eigenvalue problem. In their implementation, a Gaussian kernel density estimate restricted to a few nearest neighbors yields a sparse matrix normalized by a diagonal one, so that, unlike~\eqref{eq:kcca_gev}, no kernel matrix has to be inverted. They report that the resulting method outperforms kernel CCA and is competitive with methods that learn the transformations with neural networks, at a lower computational~cost.

Note where this line of research leaves the lesson of this subsection. Nothing bounds the transformations in the Lancaster-based approach, so the penalty of~\eqref{eq:kcca_emp} has no counterpart. What takes its place is the density estimate. The resolution of the estimated joint density is what limits how finely the transformations can distinguish points, so the safeguard against the degeneracy of~\eqref{eq:kcca_collapse} is not removed but relocated, from a norm penalty to the bandwidth or neighborhood size of a density estimator.

Two caveats attach to that relocation. The first is that estimating the joint
density is a harder problem than the one we are after, since the canonical
transformations are a functional of the density and not the density itself.
Vapnik's principle advises against exactly this, stating that ``when solving a given problem, try to avoid solving a more general problem as an intermediate step''~(\cite{vapnik2000}, Section~1.9). The second is quantitative. The sample size needed to estimate the joint density of the views grows exponentially in their total dimension $M_1 + M_2$, and the views that motivated this section, the voxel time series, the received signal blocks and the acoustic features, are all high-dimensional. Michaeli, Wang and Livescu work with a kernel density estimate over a small number of nearest neighbors, which is what keeps the method practical, and the recovered transformations rest on that estimate regardless.
 

\subsection{From Kernels to Networks}
\label{sec:bridge}

The consistency question opened by Bach and Jordan was answered by letting the penalty vanish slowly enough as the sample size grows~\cite{fukumizu2007}. Computationally, the $N \times N$ Gram matrices remained the binding constraint, since decomposing them costs $\mathcal{O}(N^3)$ if done directly, and a succession of approximations was developed to avoid it, among them incomplete Cholesky factorization~\cite{bachjordan2002}, partial Gram--Schmidt~\cite{hardoon2004}, the Nystr\"om method~\cite{williams2001}, randomized nonlinear CCA~\cite{lopezpaz2014}, and incremental SVD~\cite{brand2002}. Arora and Livescu~\cite{arora2012} brought the last of these to kernel CCA, reducing the decomposition to $\mathcal{O}(m^2N)$ time and $\mathcal{O}(mN)$ space for a truncation rank $m \ll N$, and applied it to the speech production setting discussed at the beginning of this section.

Deep CCA was proposed against that computational background. Andrew
{et~al.}~\cite{andrew2013} give two reasons for replacing the kernel by a
network. The kernel is chosen rather than learned, so it limits the
representation. Moreover, by~\eqref{eq:representer}, kernel CCA carries one
coefficient per training sample, so training scales badly with $N$ and every
new point has to be compared against the whole training set. On the other hand, a network has a fixed number of weights, set by the architecture, so it trains well on large data and can be evaluated on a new point without using the training set.
What is given up in the exchange is the closed-form geometry of
Section~\ref{sec:kcca_reg}, and with it the ability to detect degeneracy by
counting dimensions. Next, we turn to approaches that use neural networks.

\subsection{Deep CCA and the Trivial Solution Problem}
\label{sec:dcca}

Andrew et~al.~\cite{andrew2013} learn the transformations
$\boldsymbol{f}_k:\R^{M_k}\rightarrow\R^{D}$ with neural networks, drawn from a
class $\mathcal{C}$ fixed by the architecture, and maximize the sum of
canonical correlations between the two outputs. Written at the population
level, the problem is 
\begin{linenomath}
\begin{equation}
\begin{aligned}
  \max_{\boldsymbol{f}_1, \boldsymbol{f}_2 \in \mathcal{C}} \;\;
  & R^{(2)}(\boldsymbol{f}_1,\boldsymbol{f}_2) := \E\Big[\Tr\big(\boldsymbol{f}_1(\bx_1) \boldsymbol{f}_2(\bx_2)^\top\big)\Big] \\
  \text{s.t.} \;\;
  & \E\big[\boldsymbol{f}_k(\bx_k) \boldsymbol{f}_k(\bx_k)^\top\big] = \bI_D, \quad
    \E\big[\boldsymbol{f}_k(\bx_k)\big] = \mathbf{0}_D, \quad k = 1,2 .
\end{aligned}
  \label{eq:dcca}
\end{equation}
\end{linenomath}
\textls[-15]{The constraint on each view makes its $D$ variates centered, uncorrelated, and of unit variance.}

The empirical formulation replaces the expectations in~\eqref{eq:dcca} with
averages over the $N$ training samples. More specifically, the raw network outputs are first centered by subtracting the empirical mean, giving $\bar\bF_k \in \R^{N \times D}$. The within-view covariance matrices are estimated with a ridge,
$\widehat\bSigma_{kk} = \tfrac{1}{N-1}\bar\bF_k^\top\bar\bF_k + r_k\bI$ with
$r_k > 0$, and the cross-covariance matrix via $\widehat\bSigma_{12} = \tfrac{1}{N-1}\bar\bF_1^\top\bar\bF_2$. The linear CCA projections then follow in closed form to align the views, and the objective becomes the sum of the singular values of
$\widehat{\mathbf{T}} = \widehat\bSigma_{11}^{-1/2}\widehat\bSigma_{12}\widehat\bSigma_{22}^{-1/2}$,
that is, the sum of the canonical correlations of the two outputs computed from
these estimates. The authors note that the ridge both keeps the covariance
estimates nonsingular and reduces the detection of spurious correlations in the
training data.

Two features of the formulation matter later. The constraints couple all
training samples, so the objective does not decompose over data points and
stochastic methods that process one sample at a time do not apply directly.
The original work tried minibatches but obtained much better results with
full-batch L-BFGS, and later work finds that sufficiently large minibatches
suffice~\cite{wang2015unsupervised}. More importantly, $r_k$ regularizes the
covariance \emph{estimate} of the raw network outputs rather than the function class $\mathcal{C}$, which is why it does not prevent the degeneracy described below.

Benton et~al.~\cite{benton2017} extend deep CCA to $K$ views by introducing neural network-based transformations in the MAXVAR criterion of Section~\ref{sec:gcca}. They learn view networks $\boldsymbol{f}_k : \R^{M_k} \rightarrow \R^{D_k}$ and linear transformations $\bU_k \in \R^{D_k \times D}$ of the network outputs, such that all views agree on a consensus random vector $\bg \in \R^{D}$ as closely as possible. At the population level, the problem is
\begin{linenomath}
\begin{equation}
\begin{aligned}
  \min_{\{\boldsymbol{f}_k \in \mathcal{C}\}_{k=1}^K,\, \{\bU_k\}_{k=1}^K,\, \bg} \;\;
  &  \sum_{k=1}^{K}
    \E\Big[\big\|\bg - \bU_k^\top \boldsymbol{f}_k(\bx_k)\big\|^2\Big] \\
  \text{s.t.} \;\;
  & \E\big[\bg\bg^\top\big] = \bI_D, \quad \E\big[\bg\big] = \mathbf{0}_D .
\end{aligned}
  \label{eq:dgcca_U}
\end{equation}
\end{linenomath}
The empirical formulation collects the network outputs in
$\bF_k \in \R^{N \times D_k}$, whose $n$-th row is the transformed $n$-th
training sample of the $k$-th view, and the corresponding realizations of $\bg$
in $\bG \in \R^{N \times D}$. As in the two-view derivation, let
$\bar\bF_k \in \R^{N \times D_k}$ be the centered versions. With the networks
fixed, the conditionally optimal $\bG$ and the $\bU_k$ solve
$\min \sum_k \|\bG - \bar\bF_k\bU_k\|_F^2$ subject to $\bG^\top\bG = \bI_D$,
which is~\eqref{eq:maxvar2} with the fixed views replaced by nonlinearly transformed ones. The factor of $N$ separating this constraint from the population one, $\E[\bg\bg^\top] = \bI_D$, is a rescaling of $\bG$ that the $\bU_k$'s absorb. The solution meets both conditions on $\bG$. Orthonormality of its columns is
imposed directly, and they are orthogonal to $\mathbf{1}_N$ because the
$\bar\bF_k$ are centered.

For fixed networks, the problem of finding $\bG$ and the $\bU_k$ matrices has the
closed form of Section~\ref{sec:maxvar}. To obtain $\bG$, writing
$\bP_k = \bar\bF_k(\bar\bF_k^\top\bar\bF_k)^{-1}\bar\bF_k^\top$ for the
projector onto $\col(\bar\bF_k)$ and $\bM = \sum_k \bP_k$, the columns of $\bG$
are the top $D$ eigenvectors of $\bM$, as the objective equals
$DK - \Tr(\bG^\top\bM\bG)$ (see relation~\eqref{eq:maxvar_reduced}). Benton
et~al.\ therefore alternate between solving this eigenproblem for fixed
networks and, with $\bG$ fixed, backpropagating the gradient of
$\sum_k \|\bG - \bar\bF_k\bU_k\|_F^2$ through the networks. The only difference
from the linear case is that $\bM$ depends on the network weights, so the
eigenproblem is solved at every step rather than~once.

Since each $\bU_k$ is a linear map of the network output, it can be absorbed
into $\boldsymbol{f}_k$ for a sufficiently rich class $\mathcal{C}$. From now on we therefore consider functions $\boldsymbol{f}_k:\R^{M_k}\rightarrow\R^{D}$ and write the problem as \begin{linenomath}
\begin{equation}
\begin{aligned}
  \min_{\{\boldsymbol{f}_k \in \mathcal{C}\}_{k=1}^K,\, \bg} \;\;
  & R^{(K)}(\boldsymbol{f}_1,\ldots,\boldsymbol{f}_K,\bg) := \sum_{k=1}^{K}
    \E\Big[\big\|\bg - \boldsymbol{f}_k(\bx_k)\big\|^2\Big] \\
  \text{s.t.} \;\;
  & \E\big[\bg\bg^\top\big] = \bI_D, \quad \E\big[\bg\big] = \mathbf{0}_D .
\end{aligned}
  \label{eq:dgcca}
\end{equation}
\end{linenomath}

\subsubsection{Trivial Solutions Can Emerge}
In Section~\ref{sec:kcca_reg} we already met a degeneracy of this kind, forced
there by a dimension count. Here the mechanism is different, and the argument is
due to Karakasis and Sidiropoulos~\cite{karakasis2023}. With $N$ samples, the
empirical constraints of~\eqref{eq:dgcca} become $\bG^\top\bG = N\bI_D$ and
$\bG^\top\mathbf{1}_N = \mathbf{0}_D$. These can be satisfied when $N - D - 1$
rows of $\bG$ are equal to $\mathbf{0}_D$ and the remaining $D+1$ are the rows
of any matrix $\bV \in \R^{(D+1) \times D}$ satisfying $\bV^\top\bV = N\bI_D$ and
$\bV^\top\mathbf{1}_{D+1} = \mathbf{0}_D$, which exists since
$\mathbf{1}^{\perp}$ in $\R^{D+1}$ has dimension exactly $D$. At least $D+1$
rather than $D$ nonzero rows of $\bG$ are needed, since $\mathbf{1}^{\perp}$ in
$\R^{D}$ would leave only $D-1$ directions. More generally, for continuous view
distributions and a sufficiently rich $\mathcal{C}$, transformations mapping the
finitely many view training samples onto the rows of any feasible $\bG$ exist.
When the transformations of all $K$ views map onto the same rows, every term of
the empirical objective is zero, so each such $\bG$ is optimal. The argument is
therefore not specific to~\eqref{eq:dgcca}. It applies to any nonlinear
formulation carrying these constraints, and the MAXVAR form only makes it easy
to state, since the consensus matrix $\bG$ appears explicitly. The two
constraints do rule out a constant encoder, since a nonzero constant violates
the second and the zero constant violates the first. Nevertheless, any empirical formulation relying solely on consensus over finitely many training points admits trivial solutions, once the transformations are rich enough.

It is tempting to read this as the coarsening of Section~\ref{sec:gk}, but the
two differ in what feasibility means. Under Gács--Körner, the constraint
$W = f(X) = g(Y)$ holds almost surely, so the feasible set is fixed by the
joint law and every feasible $W$ is measurable with respect to the meet
$\sigma(X) \cap \sigma(Y)$. Here, agreement is imposed at $N$ points only.
For a rich enough $\mathcal{C}$ the attainable set of $\bG$ is therefore all of
$\{\bG : \bG^\top\bG = N\bI_D,\ \bG^\top\mathbf{1}_N = \mathbf{0}_D\}$, which
does not depend on $\bX_1,\ldots,\bX_K$, and the empirical version of $R^{(K)}$ vanishes on every point of it. The solution set of the empirical problem can thus be the entire feasible set, and it carries no information about the joint law. Returning to the population settings, Witsenhausen's criterion of Section~\ref{sec:gk} is relevant. Under mild conditions, it states that $K_{GK}(X;Y) > 0$ if and only if $S(X,Y) = 1$, which certifies that a nontrivial common part exists, $H(W^*) > 0$, without identifying $W^*$. As we will see later, assuming the existence of a latent common factor $\bg$, almost-sure agreement buys
identifiability of a potentially non-invertible function of it, $\psi(\bg)$.
The functions $\psi$ form the spectrum of common latent factors described at the
beginning of this section, and when $\bg$ is discrete, the Gács--Körner
formulation suggests looking among them for the $\psi$ maximizing the entropy of
$\psi(\bg)$. Section~\ref{sec:ssl} returns to this idea in the weaker form of a
variance floor.

\subsubsection{Reconstruction as a Remedy}

The trivial solution is not what Wang {et~al.}~\cite{wang2015} set out to
remove. The problem they study is learning multi-view representations for
downstream tasks. Specifically, they are interested in the setting where two or
more views are available for training and only one at test time. Hence, they
seek a transformation of the view available at test time that captures the
information the remaining views provide, but without using any labels from the
downstream tasks. Note that, as stated, the task is not the recovery of a latent
factor per se. Wang {et~al.} consider two families of neural objectives for
that problem. The two differ in what they ask a learned representation to
preserve. Reconstruction asks the representation to retain enough information to
reconstruct the inputs. Correlation asks the representations of the two views to
agree.

The first family is represented by the deep autoencoder of Ngiam {et~al.}~\cite{ngiam2011}, which Wang {et~al.} call the split autoencoder. A
single encoder is applied to the view available at test time. Its output must
reconstruct both views. Hence, the second view enters as the target of a decoder
rather than through an agreement term. The second family is represented by the deep CCA of~\eqref{eq:dcca}. Wang {et~al.} perform a comparison of such techniques across several tasks and report an advantage for the correlation-based
objectives. The comparison includes a variant of their own, called deep canonically
correlated autoencoders (DCCAE), which combines the two and gives the best results on most of their tasks.

Decoders $\boldsymbol{w}_k$ mapping $\R^{D}$ to $\R^{M_k}$ are introduced, with reconstruction penalty
\begin{linenomath}
\begin{equation}
  L^{(K)}(\boldsymbol{f}_1,\boldsymbol{w}_1,\ldots,\boldsymbol{f}_K,\boldsymbol{w}_K)
  = \sum_{k=1}^{K} \E\Big[\big\|\bx_k
    - \boldsymbol{w}_k\big(\boldsymbol{f}_k(\bx_k)\big)\big\|^2\Big],
  \label{eq:dccae_recon}
\end{equation}
\end{linenomath}
and, for two views, DCCAE solves the problem
\begin{linenomath}
\begin{equation}
\begin{aligned}
  \max_{\{\boldsymbol{f}_k, \boldsymbol{w}_k \in \mathcal{C}\}_{k=1,2}} \;\;
  & R^{(2)}(\boldsymbol{f}_1,\boldsymbol{f}_2) \;-\; \lambda\,
    L^{(2)}(\boldsymbol{f}_1,\boldsymbol{w}_1,\boldsymbol{f}_2,\boldsymbol{w}_2) \\
  \text{s.t.} \;\;
  & \E\big[\boldsymbol{f}_k(\bx_k) \boldsymbol{f}_k(\bx_k)^\top\big] = \bI_D,
    \quad
    \E\big[\boldsymbol{f}_k(\bx_k)\big] = \mathbf{0}_D, \quad k = 1,2,
\end{aligned}
  \label{eq:dccae}
\end{equation}
\end{linenomath}
where $\lambda>0$ sets the trade-off between the two terms, with larger values
favoring reconstruction. Note that in~\eqref{eq:dccae_recon} each view is 
reconstructed from its own encoding, unlike the split autoencoder, where one encoding reconstructs both.

The trade-off has an information-theoretic reading, which the authors give
explicitly. Taken alone, the first term maximizes the canonical correlations
between the two transformed views. Wang {et~al.} read this as maximizing
their mutual information, which by~\eqref{eq:mi_cca} increases with those
correlations for jointly Gaussian views. The reconstruction term alone maximizes
a lower bound on the mutual information between each view and its own encoding, which is the reading Vincent {et~al.}~\cite{vincent2010} give of any autoencoder. Hence, the objective mediates between what a view shares with its own encoding and what the two encodings share with each other. Wang {et~al.} read this as the trade-off of the information bottleneck~\cite{tishby1999}. 

The reconstruction term does rule out the trivial solutions discussed earlier. In
the construction given there, $N - D - 1$ training samples of view $k$ are mapped
to $\mathbf{0}_D$. A decoder is a function, so it returns one vector for all of
them, and~\eqref{eq:dccae_recon} therefore forces the encoders to keep distinct
training samples apart. However, we cannot guarantee recovery of anything else, since reconstruction is imposed at finitely many points as well. Assume the two transformed views agree exactly on the training samples. Because of the constraints and that perfect alignment, the transformed samples of both
views form the same matrix $\bG$, with orthogonal columns that are also orthogonal to $\mathbf{1}_N$. Suppose further that its rows are distinct. For any such $\bG$,
given continuous view distributions and a sufficiently rich function class
$\mathcal{C}$, there are encoders mapping $\bx_k^{(n)}$ to the $n$-th row of
$\bG$ and decoders mapping that row back to $\bx_k^{(n)}$. The correlation term
is then maximal and the reconstruction penalty vanishes, so the two are optimized
together and $\lambda$ plays no part. Therefore, the solution set is narrowed from every feasible matrix to every feasible matrix with distinct rows. A richer class $\mathcal{C}$ attains more of these matrices, so the richer it is, the less the solution set depends on the data.

Apart from that extreme, it is not clear if both terms of the problem formulation
can be met exactly. The encodings have to reflect both to some extent. Information that is not shared among the views, i.e., private information, may be needed for satisfying reconstruction. Following Karakasis and Sidiropoulos~\cite{karakasis2023}, we call the admission of such information into the encoding \emph{leakage}. At the same time, leakage makes the encoding alignment harder. Although it is unclear what exactly DCCAE learns, its formulation differs from those of Andrew {et~al.} \cite{andrew2013} and Benton {et~al.} \cite{benton2017} in one important respect. The encodings must also serve reconstruction, so they cannot be free of the views they come from. Neither~\eqref{eq:dcca} nor~\eqref{eq:dgcca} asks for that, and the correlation they report is between representations that do not necessarily reflect 
their own view.

 
\subsection{Probabilistic and Variational Formulations}
\label{sec:vcca}


Most of the nonlinear methods so far have learned transformations that make the
views correlate as much as possible, and have treated the transformed views as
estimates of the common latent factor. Section~\ref{sec:dcca} ended with the
formulation of DCCAE, which asks each encoding to reconstruct its own view,
through a deterministic decoder, while the two encodings achieve high
cross-correlation. Although this requirement makes the learned encodings carry
information about their own views, it also creates a potential conflict. This is
because the learned encodings are now required to serve two purposes that do not
necessarily align. A deterministic decoder is a function, so it assigns the
same reconstruction to all view samples mapped to the same encoding. To keep the
reconstruction error small, the encoding itself has to tell those samples apart.

A stochastic decoder, i.e., a conditional distribution of the view given the
encoding, handles those samples differently. It spreads its mass over them, so
the variation among them is represented by the conditional distribution,
evaluated at that encoding, rather than by the encoding itself. The problem that
emerges now is that reconstruction error is no longer a reasonable measure of
fit for learning an encoder paired with a stochastic decoder, since the decoder
returns a distribution rather than a single reconstruction. The natural measure
is instead the likelihood this distribution assigns to the observed view.

Learning under stochastic decoders therefore calls for likelihood-based
objectives. The starting point is the probabilistic model~\eqref{eq:prob_cca}
of Section~\ref{sec:cca_prob} rather than Hotelling's ratio, and no objective
below is defined in terms of correlation. The methods of this subsection reverse
the logic of the deterministic ones. They assume a probabilistic generative
model of the views that contains a common latent factor. Then, they fit that
model to the training samples by maximizing their likelihood, or a lower bound on
it, and infer the common factor from the fitted model.

\subsubsection{The Common-Factor Model}
The methods of this first group all assume the same thing. A single latent
factor is described as a random vector $\bg \in \R^{D}$ following a fixed prior $p(\bg)$, and the $K$ views are conditionally independent given it, i.e.,\vspace{12pt}
\vspace{-24pt}
\begin{adjustwidth}{-\extralength}{0cm}
\begin{linenomath}
\begin{equation}
  p(\bx_1,\ldots,\bx_K,\bg)
  \;=\; p(\bg)\prod_{k=1}^{K} p_{\theta_k}(\bx_k \mid \bg),
  \qquad
  p(\bx_1,\ldots,\bx_K)
  \;=\; \int p(\bg) \prod_{k=1}^{K} p_{\theta_k}(\bx_k \mid \bg)\, d\bg ,
  \label{eq:shared_only}
\end{equation}
\end{linenomath}
\end{adjustwidth}
where each observation model $p_{\theta_k}(\bx_k \mid \bg)$ is parameterized by a
neural network with parameters $\theta_k$. The first factorization
in~\eqref{eq:shared_only} is the $K$-view conditional
independence~\eqref{eq:wyner_K_ci} of Section~\ref{sec:wyner_K}. Note that no
variable other than $\bg$ appears. Whatever in view $k$ the common factor does
not account for has to be absorbed into $p_{\theta_k}(\bx_k \mid \bg)$.

Before any of these stochastic methods is described, it is worth stating what
\eqref{eq:shared_only} does and does not fix. It is a feasibility condition on
$\bg$ and not a definition of it. The feasible set spans a range, as does the
feasible set of Wyner's common information in Section~\ref{sec:wyner_K}, which
is defined by the same conditional independence. At one end $\bg$ is a minimally
informative factor rendering the views conditionally independent. At the other
end, the choice $\bg = (\bx_1,\ldots,\bx_K)$ satisfies the condition as well,
since each $p(\bx_k \mid \bg)$ is then a point mass and the product factorizes
for that reason alone. Many feasible factors lie between the two ends. Wyner
resolved the same ambiguity by minimizing the mutual information between the
views and the candidate factor over that feasible set, which selects the least
informative factor in it.

The methods below fit~\eqref{eq:shared_only} by maximizing the likelihood, or a
lower bound on it, i.e., $p(\bx_1,\ldots,\bx_K)$ evaluated on a collection of
training samples, over the model parameters. Since $\bg$ is integrated out, that
objective alone sees it only through the view marginal it induces. Two fitted
models attaining the same empirical likelihood may therefore sit at different
points of the range described above, which leaves unclear which $\bg$ either of them is learning.

\subsubsection{Deep Variational CCA}
For $K=2$ in~\eqref{eq:shared_only}, Wang et~al.~\cite{wangvcca2016}
parameterize the two observation models by neural networks, which gives the
model of~\eqref{eq:prob_cca} with the linear maps replaced by networks. The
resulting marginal likelihood $p(\bx_1,\bx_2)$ in~\eqref{eq:shared_only}
requires integrating the product of two network-parameterized densities against
$p(\bg)$, and the posterior
$p(\bg \mid \bx_1) = p(\bg)\,p(\bx_1 \mid \bg)/p(\bx_1)$ carries an integral of
the same kind in its normalizing constant
$p(\bx_1) = \int p(\bg)\,p(\bx_1 \mid \bg)\,d\bg$. Neither integral has a closed
form.

This limitation can be bypassed by the approach used for the variational
autoencoder (VAE)~\cite{kingma2014}. A stochastic encoder $q(\bg \mid \bx_1)$, also
implemented by a neural network, is introduced to stand in for the posterior.
Assuming the $N$ training samples are drawn independently, the likelihood of the
collection factorizes as $\prod_{n=1}^{N} p\big(\bx_1^{(n)},\bx_2^{(n)}\big)$,
and its logarithm is the sum
$\sum_{n=1}^{N} \log p\big(\bx_1^{(n)},\bx_2^{(n)}\big)$. It is therefore enough
to work with $\log p(\bx_1,\bx_2)$. The derivation of the VAE approach proceeds as
follows. First, Bayes' rule gives
\linebreak $p(\bx_1,\bx_2) = p(\bx_1,\bx_2,\bg) / p(\bg \mid \bx_1,\bx_2)$, and this holds
for every $\bg$ for which the ratio is well defined. Then, multiplying and
dividing its right-hand side by $q(\bg \mid \bx_1)$ and taking the logarithm
gives
\begin{linenomath}
\begin{equation}
  \log p(\bx_1,\bx_2)
  \;=\; \log \frac{q(\bg \mid \bx_1)}{p(\bg \mid \bx_1,\bx_2)}
  \;+\; \log \frac{p(\bx_1,\bx_2,\bg)}{q(\bg \mid \bx_1)} ,
  \label{eq:elbo_split}
\end{equation}
\end{linenomath}
which is still an identity at every $\bg$ for which the ratios are well defined.

Next, we take the expectation of
both sides of~\eqref{eq:elbo_split} under $q(\bg \mid \bx_1)$. The left side is
unchanged, since $\log p(\bx_1,\bx_2)$ does not depend on $\bg$, and the 
expectation of the first term on the right is
by definition a Kullback--Leibler divergence. Hence, we obtain
\begin{linenomath}
\begin{equation}
  \log p(\bx_1,\bx_2)
  \;=\; D_{\mathrm{KL}}\big(q(\bg \mid \bx_1)\,\|\,p(\bg \mid \bx_1,\bx_2)\big)
  \;+\; \E_{q(\bg \mid \bx_1)}\!\left[
    \log \frac{p(\bx_1,\bx_2,\bg)}{q(\bg \mid \bx_1)} \right] .
  \label{eq:elbo_gap}
\end{equation}
\end{linenomath}
The first term of~\eqref{eq:elbo_gap} is always non-negative, so the second term is a lower bound on
$\log p(\bx_1,\bx_2)$. Substituting the factorization~\eqref{eq:shared_only}
into that term and separating the part carrying $p(\bg)/q(\bg \mid \bx_1)$ gives
the evidence lower bound
\begin{linenomath}
\begin{equation}
  \log p(\bx_1,\bx_2) \;\ge\;
  -D_{\mathrm{KL}}\big(q(\bg \mid \bx_1)\,\|\,p(\bg)\big)
  + \E_{q(\bg \mid \bx_1)}\big[\log p(\bx_1 \mid \bg) + \log p(\bx_2 \mid \bg)\big],
  \label{eq:vcca}
\end{equation}
\end{linenomath}
which is what deep variational CCA maximizes over the networks parameterizing
$q$, $p(\bx_1 \mid \bg)$ and $p(\bx_2 \mid \bg)$.

Conditioning the posterior on a single view is deliberate, designed for scenarios
where only one view is present downstream. As an alternative, Wang et~al.~\cite{wangvcca2016} also form a convex combination of two bounds of the form above, one for each view. In their setup, the approximate posterior is Gaussian with a diagonal covariance whose mean and variances are the outputs of a view encoder network, which allows computing the Kullback--Leibler term of~\eqref{eq:vcca} in closed form. The remaining expectation is approximated by sampling. For the $n$-th training sample, the encoder of the first view returns a mean $\bmu\big(\bx_1^{(n)}\big)$ and standard deviations $\bsigma\big(\bx_1^{(n)}\big)$. Then, a draw from $q\big(\bg \mid \bx_1^{(n)}\big)$ is formed as $\bg^{(n)} = \bmu\big(\bx_1^{(n)}\big)
+ \bsigma\big(\bx_1^{(n)}\big) \odot \bepsilon$, where
$\bepsilon \sim \mathcal{N}(\mathbf{0},\bI_D)$ and $\odot$ is the entrywise
product. Hence, the sample-based estimate is a differentiable function of the
network parameters, since the randomness enters only through $\bepsilon$. Unlike
the whitening constraints of~\eqref{eq:dcca} and~\eqref{eq:dgcca}, which couple
all training samples, summing~\eqref{eq:vcca} over the training samples gives an
objective that does not couple them. The two bounds share the decoders, so both
encoders are tied to one latent space. However, the two encoders approximate
different posteriors, $p(\bg \mid \bx_1)$ and $p(\bg \mid \bx_2)$, so their
outputs for a jointly observed pair of views need not agree. The formulation
does not say which of the two to take as the representation, and when both views
are available at test time the authors concatenate them.

\subsubsection{The Multimodal Variational Autoencoders}
A parallel line arrives at~\eqref{eq:shared_only} for general $K$ from
multimodal generation rather than from CCA, and it takes up the question the
previous paragraph left open. Since there are $2^K$ possible subsets of views
that might be observed, a separate inference network for each is not
affordable. Instead, a joint posterior is assembled from the per-view posteriors
$q(\bg \mid \bx_k)$. Wu and Goodman~\cite{wu2018mvae}
assemble it as a product of experts (MVAE)
\begin{linenomath}
\begin{equation}
  q(\bg \mid \bx_1,\ldots,\bx_K) \;\propto\;
  p(\bg)\prod_{k=1}^{K} \tilde q(\bg \mid \bx_k),
\end{equation}
\end{linenomath}
where $\tilde q(\bg \mid \bx_k)$ is the network output, related to the per-view
posterior approximation by $q(\bg \mid \bx_k) \propto \tilde q(\bg \mid \bx_k)\,p(\bg)$.
Shi {et~al.}~\cite{shi2019mmvae}
assemble it as a mixture of experts (MMVAE)
\begin{linenomath}
\begin{equation}
  q(\bg \mid \bx_1,\ldots,\bx_K) \;=\;
  \frac{1}{K}\sum_{k=1}^{K} q(\bg \mid \bx_k),
\end{equation}
\end{linenomath}
and Sutter, Daunhawer and
Vogt~\cite{sutter2021} as a mixture over all subsets (MoPoE-VAE)
\begin{linenomath}
\begin{equation}
  q(\bg \mid \bx_1,\ldots,\bx_K) \;=\;
  \frac{1}{2^{K}} \sum_{\mathcal{S} \subseteq \{1,\ldots,K\}}
  q_{\mathcal{S}}(\bg) ,
  \qquad
   q_{\mathcal{S}}(\bg) \;\propto\; \prod_{k \in \mathcal{S}}
  q(\bg \mid \bx_k).
\end{equation}
\end{linenomath}

Two features matter here. The generative assumption~\eqref{eq:shared_only} is
identical in MVAE, MMVAE and MoPoE-VAE, so this is a development of the
inference and not of the model, and any difference in what is recovered comes
from the posterior and the estimator. Moreover, both Shi {et~al.}\ and Sutter {et~al.}\ open by asking for a latent space that separates what the views share from what is private to each. Shi {et~al.}\ list that separation first among the criteria such a model should meet. Yet Shi {et~al.}\ put no private
variable in the model, and Sutter {et~al.}\ add modality-specific latent
spaces only in one experiment, outside their main~\mbox{formulation}.

\subsubsection{Variational CCA with Private Factors}

A more complete generative model also considers a private factor for each view,
responsible for the variation that the common factor does not explain. In
several applications, having access to the private factors may also be of
interest for downstream tasks, and separating the two is the latent
disentanglement named in the introduction of this section. Moreover, in many
applications, such as image generation, a generative model is assessed by the
quality of the samples it generates, which requires the decoders to capture the
variation of each view. The limitation of considering a common factor alone shows up in reconstruction. For Gaussian observation models, the expectation term of the
bound is a reconstruction error, and the best reconstruction it admits is
$\E[\bx_k \mid \bg]$, so all realizations of view $k$ sharing the same $\bg$
receive the same reconstruction. As Wang {et~al.} report, assuming that the
common factor is sufficient to generate the views can be too restrictive when
the input carries large variations that the common factor cannot explain.
Moreover, such large variations make the bound harder to optimize. When this
happens, it may be beneficial to model private factors explicitly, alongside the
common one.

Wang {et~al.}\ extend VCCA in~\eqref{eq:vcca} by introducing a private
factor $\bc_k \in \R^{L_k}$ for each view, with
$p(\bg,\bc_1,\bc_2) = p(\bg)\,p(\bc_1)\,p(\bc_2)$ and
$\bc_k \sim \mathcal{N}(\mathbf{0},\bI_{L_k})$, and letting each view
observation model be $p(\bx_k \mid \bg, \bc_k)$. The variational posterior $q(\bg,\bc_1,\bc_2 \mid \bx_1,\bx_2)$ is taken in the factored form $q(\bg \mid \bx_1)\,q(\bc_1 \mid \bx_1)\,q(\bc_2 \mid \bx_2)$, one network for each factor, so the private factor of a view is inferred from that view alone. Repeating the derivation of~\eqref{eq:vcca} with this posterior adds one Kullback--Leibler term per private factor and replaces the two expected log-likelihoods with those of $p(\bx_k \mid \bg, \bc_k)$. Hence, the best possible  reconstruction is no longer given by
$\E[\bx_k \mid \bg]$ but by $\E[\bx_k \mid \bg, \bc_k]$. However, the formulation admits a degeneracy of its own, which the authors also state. The path from $\bx_k$
through $\bc_k$ back to $\bx_k$ can reconstruct view $k$ on its own while $\bg$
is ignored, since nothing in the bound requires the fitted $p(\bx_k \mid \bg, \bc_k)$ to depend on $\bg$. They argue that applying dropout to the samples of $\bg$, $\bc_1$ and $\bc_2$ helps prevent this, and find dropout on these samples to give most of their empirical gain. No theoretical guarantee is provided that the
fitted model separates the common factor from the private ones.

We left deterministic nonlinear CCA having seen that requiring each encoding to
reconstruct its own view helps to avoid trivially correlated encodings. Moving
from DCCAE to the probabilistic formulations of this subsection removes the
conflict those encodings face, since reconstruction now arrives as a likelihood
rather than as a competing term. However, generative models of high quality
need decoders that capture the variation of each view, which leads to models
that carry both common and private factors. VCCA with private factors, and the
more recent approaches of Karami and Schuurmans~\cite{karami2021} and Damianou {et~al.}~\cite{damianou2021}, still leave open the question of when common and private factors can be reliably separated in the latent space. A crucial assumption throughout this subsection is that the observed views are conditionally independent given the common factor $\bg$, which is the conditional independence that defines the feasible set of Wyner's common information. Next, we turn to two works that provide conditions and algorithms with disentanglement and identifiability guarantees, under deterministic mappings.


\subsection{What Makes the Common Factor Recoverable}
\label{sec:idgcca}

Every method so far has left the same question open. Maximizing either
correlation or a likelihood produces a latent factor, but in neither case could we
say whether what is learned is purely common, or whether all of the common
information is extracted. Two works close that gap. To answer either question
one has to define first what the common factor is and how it enters the views,
so both works start from an explicit generative model. Both assume that the views are generated from a common factor together with a private factor per view, both
take the views conditionally independent given the common factor, and both give
conditions under which the learned factor is the common one up to an invertible
reparameterization. They differ in how much they assume and in how much they
recover, and that trade is the subject of this subsection.

\subsubsection{The Generative Model}
The works of Lyu {et~al.}~\cite{lyu2022} and Karakasis and Sidiropoulos~\cite{karakasis2023} adopt the same mixing model,
\begin{linenomath}
\begin{equation}
  \bx_k = \boldsymbol{v}_k\!\left(\begin{bmatrix} \bg \\ \bc_k \end{bmatrix}\right),
  \qquad k = 1,\ldots,K,
  \label{eq:gen_model}
\end{equation}
\end{linenomath}
with $\bg \in \R^{D}$ the common latent random vector, $\bc_k \in \R^{L_k}$ the
private one of view $k$, and
$\boldsymbol{v}_k : \R^{D + L_k} \rightarrow \R^{M_k}$ the function that maps
them into view $k$. Note that~\eqref{eq:gen_model} is the model behind
VCCA with private factors, with the observation model $p(\bx_k \mid \bg, \bc_k)$ made deterministic. What separates the two works is the assumption placed on the
joint law of $\bg$ and the $\bc_k$, and on the maps
$\boldsymbol{v}_k$. 

The two works also share the measure of agreement. It is the criterion $R^{(K)}$ of~\eqref{eq:dgcca}, written with
$\widehat\bg$ for the learned consensus to distinguish it from the common factor
$\bg$ of~\eqref{eq:gen_model},
\begin{linenomath}
\begin{equation}
  \bar R^{(K)}\big(\boldsymbol{f}_1,\ldots,\boldsymbol{f}_K,\widehat\bg\big)
  := \sum_{k=1}^{K}
     \E\Big[\big\| \widehat\bg - \boldsymbol{f}_k(\bx_k) \big\|^2\Big] .
  \label{eq:consensus_nored}
\end{equation}
\end{linenomath}

\subsubsection{Invertibility Recovers the Common Factor, Independence Adds the
Private Ones}

Lyu {et~al.}~\cite{lyu2022} consider the two-view case ($K=2$). They assume that the latent factors are jointly group independent
\begin{linenomath}
\begin{equation}
  p(\bg, \bc_1, \bc_2) = p(\bg)\, p(\bc_1)\, p(\bc_2)
  \label{eq:group_indep}
\end{equation}
\end{linenomath}
and that each $\boldsymbol{v}_k$ is smooth and invertible. The output of each encoder $\boldsymbol{f}_k : \R^{M_k} \rightarrow \R^{D + L_k}$ is split into a shared and a private block,
$\boldsymbol{f}_k(\bx_k) = [\boldsymbol{f}_{k,s}(\bx_k)^\top,\,
\boldsymbol{f}_{k,p}(\bx_k)^\top]^\top$ with
$\boldsymbol{f}_{k,s} : \R^{M_k} \rightarrow \R^{D}$ and
$\boldsymbol{f}_{k,p} : \R^{M_k} \rightarrow \R^{L_k}$.

Their learning criterion combines three terms,
\vspace{-4pt}
\begin{adjustwidth}{-\extralength}{0cm}
\begin{linenomath}
\begin{equation}
\begin{aligned}
  \min_{\{\boldsymbol{f}_k\},\{\boldsymbol{w}_k\},\,\widehat\bg} \;
  \max_{\{\phi_k\},\{\tau_k\}} \;\;
  & \bar R^{(2)}\big(\boldsymbol{f}_{1,s},
    \boldsymbol{f}_{2,s}, \widehat\bg\big)
    \;+\; \lambda\, L^{(2)}(\boldsymbol{f}_1,\boldsymbol{w}_1,
    \boldsymbol{f}_2,\boldsymbol{w}_2)
    + \beta\, V^{(2)}(\boldsymbol{f}_1,\phi_1,\tau_1,
    \boldsymbol{f}_2,\phi_2,\tau_2)\\
 \quad \quad \quad \text{s.t.} \quad 
  & \quad \quad \quad \quad \quad \quad\E\big[\widehat\bg\,\widehat\bg^\top\big] = \bI_D, \quad
    \E[\widehat\bg] = \mathbf{0}_D ,
\end{aligned}
  \label{eq:lyu}
\end{equation}
\end{linenomath}
\end{adjustwidth}
where $\lambda, \beta \ge 0$ are weights, $\bar R^{(2)}$ is the consensus criterion in~\eqref{eq:consensus_nored},
$L^{(2)}$ is the reconstruction penalty in~\eqref{eq:dccae_recon} with decoders
$\boldsymbol{w}_k : \R^{D + L_k} \rightarrow \R^{M_k}$, and $V^{(2)}$ is given
in~\eqref{eq:indep_penalty} below. 

The first term asks the shared blocks of the two views to agree. The second promotes invertibility of each $\boldsymbol{f}_k$, which the authors adopt to avoid trivial solutions, such as those of Section~\ref{sec:dcca}. The third drives the shared and private blocks of a single view apart,
\begin{linenomath}
\begin{equation}
  V^{(K)}(\boldsymbol{f}_1,\phi_1,\tau_1,\ldots,
  \boldsymbol{f}_K,\phi_K,\tau_K) := \sum_{k=1}^{K}
  \frac{\Cov\big(\phi_k(\boldsymbol{f}_{k,s}(\bx_k)),\,
        \tau_k(\boldsymbol{f}_{k,p}(\bx_k))\big)}
       {\sqrt{\Var[\phi_k(\boldsymbol{f}_{k,s}(\bx_k))]}\;
        \sqrt{\Var[\tau_k(\boldsymbol{f}_{k,p}(\bx_k))]}} ,
  \label{eq:indep_penalty}
\end{equation}
\end{linenomath}
where $\phi_k : \R^{D} \rightarrow \R$ and $\tau_k : \R^{L_k} \rightarrow \R$.
Each summand of~\eqref{eq:indep_penalty}, for fixed $\boldsymbol{f}_k$ and
maximized over $\phi_k$ and $\tau_k$ as in~\eqref{eq:lyu}, is the
Hirschfeld--Gebelein--R\'enyi maximal correlation of
Section~\ref{sec:common_info} between the two blocks of $\boldsymbol{f}_k$. 
The maximal correlation vanishes if and only if the two blocks are independent~\cite{renyi1959}, so it serves as an independence test, and driving it to zero enforces independence of $\boldsymbol{f}_{k,s}(\bx_k)$ and $\boldsymbol{f}_{k,p}(\bx_k)$.

Next, we present their sufficient conditions for recovering the common latent
factor, as well as for recovering the private latent factors and hence
achieving effective latent disentanglement. Their analysis is carried out not on~\eqref{eq:lyu} but on the problem~\eqref{eq:dcca} applied to the shared blocks, with the invertibility and independence terms imposed as requirements rather than penalties, 
\begin{linenomath}
\begin{equation}
\begin{aligned}
  \max_{\{\boldsymbol{f}_k\}} \quad
  &  \quad \quad \quad \quad \quad \quad \quad  R^{(2)}\big(\boldsymbol{f}_{1,s}, \boldsymbol{f}_{2,s}\big) \\
  \text{s.t.} \quad
  & \E\big[\boldsymbol{f}_{k,s}(\bx_k)\,\boldsymbol{f}_{k,s}(\bx_k)^\top\big] = \bI_D, \quad
    \E\big[\boldsymbol{f}_{k,s}(\bx_k)\big] = \mathbf{0}_D, \\
  & \boldsymbol{f}_k \ \text{invertible}, \quad
    \boldsymbol{f}_{k,s}(\bx_k) \perp\!\!\!\perp \boldsymbol{f}_{k,p}(\bx_k),
    \qquad k = 1,2 .
\end{aligned}
  \label{eq:lyu_exact}
\end{equation}
\end{linenomath}
{For two views, MAXVAR coincides with CCA }(Section~\ref{sec:gcca}). Hence, once the linear maps are absorbed into the encoders, the consensus term of~\eqref{eq:lyu} and the correlation objective of~\eqref{eq:lyu_exact} {lead to the same shared blocks, up to an invertible linear map.}

Their Theorems~1 and~2 combine into the following result.
\begin{Theorem}[\cite{lyu2022}]
\label{thm:lyu}
Consider the generative model~\eqref{eq:gen_model} with $K = 2$, each
$\boldsymbol{v}_k$ smooth and invertible, and let~\eqref{eq:group_indep} hold.
Let
$\mathcal{X}_k = \big\{\boldsymbol{v}_k([\bg^\top,\bc_k^\top]^\top) :
\bg \in \mathcal{G},\ \bc_k \in \mathcal{U}_k\big\}$, with $\mathcal{G}$ and
$\mathcal{U}_k$ the supports of $p(\bg)$ and $p(\bc_k)$. Moreover, let
$\widehat{\boldsymbol{f}}_k$ be any solution of~\eqref{eq:lyu_exact} whose
constraints hold for all $\bx_k \in \mathcal{X}_k$, and assume that the
first-order derivative of
$\widehat{\boldsymbol{f}}_k \circ \boldsymbol{v}_k$ exists. Then
$\widehat{\boldsymbol{f}}_{k,s}(\bx_k) = \boldsymbol{\gamma}(\bg)$, where
$\boldsymbol{\gamma} : \R^{D} \rightarrow \R^{D}$ is an unknown invertible
function, whether or not the independence constraint
of~\eqref{eq:lyu_exact} is enforced. If it is enforced, then in addition
$\widehat{\boldsymbol{f}}_{k,p}(\bx_k) = \boldsymbol{\delta}_k(\bc_k)$, where
$\boldsymbol{\delta}_k : \R^{L_k} \rightarrow \R^{L_k}$ is an unknown invertible
function.
\end{Theorem}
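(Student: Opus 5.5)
The plan is to reduce the optimization~\eqref{eq:lyu_exact} to an almost-sure agreement statement, and then exploit that agreement through the structure of the generative model~\eqref{eq:gen_model}.

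\textbf{Step 1: reduce to perfect agreement.} Under the whitening constraints, $R^{(2)}(\boldsymbol{f}_{1,s},\boldsymbol{f}_{2,s})\le D$ by Cauchy--Schwarz, with equality if and only if $\boldsymbol{f}_{1,s}(\bx_1)=\boldsymbol{f}_{2,s}(\bx_2)$ almost surely. The bound is attained by a feasible pair. Take $\boldsymbol{f}_k=\boldsymbol{v}_k^{-1}$ followed by a fixed invertible map that whitens $\bg$ in the first block and acts identically on $\bc_k$. This pair is invertible, and by~\eqref{eq:group_indep} its two blocks are independent. Hence every solution $\widehat{\boldsymbol{f}}_k$ satisfies
\begin{equation*}
\widehat{\boldsymbol{f}}_{1,s}\big(\boldsymbol{v}_1(\bg,\bc_1)\big)=\widehat{\boldsymbol{f}}_{2,s}\big(\boldsymbol{v}_2(\bg,\bc_2)\big).
\end{equation*}
Since the theorem assumes the constraints hold pointwise, this identity holds for all $(\bg,\bc_1,\bc_2)\in\mathcal{G}\times\mathcal{U}_1\times\mathcal{U}_2$; passing from almost-sure to pointwise uses~\eqref{eq:group_indep}, which makes the support of the joint law a product set.

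\textbf{Step 2: the shared block does not depend on $\bc_k$.} Write $\boldsymbol{h}_k=\widehat{\boldsymbol{f}}_{k,s}\circ\boldsymbol{v}_k$. The left side of the identity does not involve $\bc_2$, and the right side does not involve $\bc_1$, so both depend on $\bg$ alone. Differentiating in $\bc_1$, which is allowed by the first-order derivative assumption, gives $\partial\boldsymbol{h}_1/\partial\bc_1=\mathbf{0}$. If $\mathcal{U}_1$ is connected this yields $\boldsymbol{h}_1(\bg,\bc_1)=\boldsymbol{\gamma}(\bg)$; if it is not, the cross-view identity still forces the same conclusion. Invertibility of $\boldsymbol{\gamma}$ is the step I expect to be the main obstacle. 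The plan is to use the invertibility of $\widehat{\boldsymbol{f}}_k\circ\boldsymbol{v}_k$ on $\R^{D+L_k}$, whose first block is $\boldsymbol{\gamma}(\bg)$. If $\boldsymbol{\gamma}(\bg)=\boldsymbol{\gamma}(\bg')$ with $\bg\neq\bg'$, the private block $\widehat{\boldsymbol{f}}_{k,p}$ would have to separate the two slices $\{\bg\}\times\mathcal{U}_k$ and $\{\bg'\}\times\mathcal{U}_k$ inside a space of only $L_k$ dimensions. I would argue via invariance of domain or a Jacobian block-triangular argument: the Jacobian is $\begin{bmatrix}\partial\boldsymbol{\gamma}/\partial\bg & \mathbf{0}\\ \ast & \partial\widehat{\boldsymbol{f}}_{k,p}/\partial\bc_k\end{bmatrix}$, so if it is nonsingular then $\partial\boldsymbol{\gamma}/\partial\bg$ is nonsingular, and a dimension count then excludes merging of $\bg$ values. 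This gives $\widehat{\boldsymbol{f}}_{k,s}(\bx_k)=\boldsymbol{\gamma}(\bg)$ without using the independence constraint.

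\textbf{Step 3: the private block, with independence enforced.} Now $\widehat{\boldsymbol{f}}_{k,p}(\bx_k)=\boldsymbol{\rho}(\bg,\bc_k)$ with $\boldsymbol{\rho}(\bg,\cdot)$ injective for each $\bg$. It is independent of $\boldsymbol{\gamma}(\bg)$, and therefore of $\bg$ because $\boldsymbol{\gamma}$ is invertible. I want to show that $\boldsymbol{\rho}$ does not depend on $\bg$. Independence means the conditional law of $\boldsymbol{\rho}(\bg,\bc_k)$ given $\bg$ is the same for all $\bg$, and $\bc_k$ is independent of $\bg$. The argument I would try follows the content/style identifiability approach: the maps $\bc_k\mapsto\boldsymbol{\rho}(\bg,\bc_k)$ all push $p(\bc_k)$ to one common law, and combined with global invertibility of $(\bg,\bc_k)\mapsto(\boldsymbol{\gamma}(\bg),\boldsymbol{\rho}(\bg,\bc_k))$ this should force $\partial\boldsymbol{\rho}/\partial\bg=\mathbf{0}$, so that $\boldsymbol{\rho}(\bg,\bc_k)=\boldsymbol{\delta}_k(\bc_k)$ with $\boldsymbol{\delta}_k$ invertible. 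This step is delicate, since measure-preserving reparameterizations could in principle vary with $\bg$. If a direct argument fails, I would fall back on the additional regularity used in~\cite{lyu2022}, namely the Jacobian structure together with independence holding as a functional identity on the support.
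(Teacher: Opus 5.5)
The tutorial gives no proof of this statement (it defers to Lyu et~al.), so I assess your argument on its own terms. Step~1 and the first half of Step~2 are fine after two small repairs.

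First, almost-sure agreement becomes pointwise agreement on $\mathcal{G}\times\mathcal{U}_1\times\mathcal{U}_2$ for a different reason than the one you give. It follows because $\boldsymbol{h}_k=\widehat{\boldsymbol{f}}_{k,s}\circ\boldsymbol{v}_k$ is continuous, and the support of a product law is the product of the supports. Agreement is a property of the optimum, not one of the constraints, so ``the constraints hold pointwise'' does not give it. Second, once you have the pointwise identity, fixing $\bc_2=\bc_2^{0}$ gives $\boldsymbol{h}_1(\bg,\bc_1)=\boldsymbol{h}_2(\bg,\bc_2^{0})=:\boldsymbol{\gamma}(\bg)$ directly, with no differentiation or connectedness needed.

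The genuine gap is the global invertibility of $\boldsymbol{\gamma}$. An invertible map with a derivative need not have a nonsingular Jacobian (think of $t\mapsto t^3$). Even granting nonsingularity, the block-triangular structure only gives $\det(\partial\boldsymbol{\gamma}/\partial\bg)\neq 0$, that is, local invertibility. The ``dimension count'' you invoke to exclude merging is false, because a single private coordinate can keep two preimage slices apart. Here is a counterexample:
\begin{itemize}
\item take $D=2$, $L_k=1$, $\boldsymbol{v}_k$ the identity, $\bg=(a,b)$ uniform on $[0,1]\times[0,4\pi]$, and $\bc_k$ uniform on $[0,1]$;
\item set $\widehat{\boldsymbol{f}}_k(a,b,c)=(\alpha e^{a}\cos b,\ \alpha e^{a}\sin b,\ c+b)$, with $\alpha$ fixed by unit covariance.
\end{itemize}
This encoder is smooth and injective with nonsingular Jacobian. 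The shared blocks are centered, whitened and equal across views, so it solves the version of~\eqref{eq:lyu_exact} without the independence constraint. Yet $\boldsymbol{\gamma}(a,b)=\boldsymbol{\gamma}(a,b+2\pi)$. So this step cannot be closed under the listed hypotheses. For $D=1$ on a connected $\mathcal{G}$ the local-to-global passage is harmless, but not beyond. Step~3 inherits the gap, since ``independent of $\boldsymbol{\gamma}(\bg)$, hence of $\bg$'' uses injectivity of $\boldsymbol{\gamma}$.

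Step~3 is a plan rather than an argument, and the implication it aims at is false once $L_k\ge 2$. Let $D=1$, $\bg\sim\mathcal{N}(0,1)$, $\bc_k\sim\mathcal{N}(\mathbf{0},\bI_2)$, $\boldsymbol{v}_k$ the identity, and $\widehat{\boldsymbol{f}}_k(\bg,\bc_k)=(\bg,\mathbf{R}(\bg)\bc_k)$, where $\mathbf{R}(\theta)$ is the planar rotation by angle $\theta$. Then:
\begin{itemize}
\item the encoder is a diffeomorphism of $\R^3$, and the shared blocks agree and are whitened;
\item $\mathbf{R}(\bg)\bc_k$ is $\mathcal{N}(\mathbf{0},\bI_2)$ conditionally on $\bg$, so the two blocks are independent;
\item still, the private block is not a function of $\bc_k$ alone.
\end{itemize}
This is precisely the $\bg$-dependent measure-preserving reparameterization you were worried about, the classical indeterminacy of nonlinear ICA. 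No extra regularity of the Jacobian removes it.

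Your transport idea does succeed for $L_k=1$ with $\mathcal{G}$ connected and $\mathcal{U}_k$ an interval, provided the private block is known to be independent of $\bg$ itself. Each $\boldsymbol{\rho}(\bg,\cdot)$ is then strictly monotone and carries $p(\bc_k)$ to one fixed law. Such a monotone map is unique given its orientation, and the orientation cannot switch on a connected $\mathcal{G}$.

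Both examples satisfy every hypothesis as stated. What your approach can actually deliver is that the shared block is a function of $\bg$ alone, locally invertible when the Jacobian is nonsingular. Global invertibility of $\boldsymbol{\gamma}$ and the private-factor conclusion need hypotheses the statement does not contain.
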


The independence constraint is therefore needed only for the private factors.
Correlation maximization together with invertibility already delivers the common
one, which makes that constraint, and hence the
penalization of~\eqref{eq:indep_penalty}, optional when $\bg$ is the only target,
as the authors also note. Even so, the full encoders still have to be learned,
together in practice with the decoders that promote their invertibility, since
invertibility is a requirement on the whole of $\boldsymbol{f}_k$ and not on its
shared block alone.

The second conclusion is the strongest in this subsection, and it is bought with
the strongest assumption. Condition~\eqref{eq:group_indep} forbids any
dependence among the private factors and any dependence between a private factor
and the common one. Note that~\eqref{eq:group_indep} is the assumption VCCA with
private factors already made in Section~\ref{sec:vcca}. What Lyu {et~al.}\
add are deterministic, smooth and invertible maps $\boldsymbol{v}_k$, and that is
what turns the same assumption into a guarantee.

Their analysis assumes the latent dimensions to be known in advance.
When they are not, the guarantee no longer applies, and common components can
be modeled as private, since~\eqref{eq:indep_penalty} acts between the two
blocks of a single view and never between $\boldsymbol{f}_{1,p}$ and
$\boldsymbol{f}_{2,p}$. In practice, the authors select the dimensions on a
validation set.

 \subsubsection{Partial Invertibility and Conditional Independence Recover the Common Factor}
Karakasis and Sidiropoulos~\cite{karakasis2023} focus on recovering the common
factor alone, since in many applications the private factors correspond to noise
or interference that is of no interest. Motivated by this, they ask how much the
assumptions of Lyu {et~al.}\ can be weakened. They keep~\eqref{eq:gen_model}
for general $K$ and replace~\eqref{eq:group_indep} with\vspace{12pt}
\begin{linenomath}
\begin{equation}
  p(\bg, \bc_1, \ldots, \bc_K) = p(\bg) \prod_{k=1}^{K} p(\bc_k \mid \bg),
  \label{eq:cond_indep}
\end{equation}
\end{linenomath}
which implies
\begin{linenomath}
\begin{equation}
  p(\bx_1,\ldots,\bx_K)
  = \int p(\bg) \prod_{k=1}^{K} p(\bx_k \mid \bg)\, d\bg.
  \label{eq:views_cond_indep}
\end{equation}
\end{linenomath}
Condition~\eqref{eq:cond_indep} is strictly weaker than~\eqref{eq:group_indep},
since it permits each private factor to depend on the common factor, and hence
the private factors to depend on one another through it. Figure~\ref{fig:latent_models} draws a comparison of the two structures. At the same time, the
views remain conditionally independent given $\bg$, as~\eqref{eq:views_cond_indep}
states. As we will see, the assumptions on the mixing functions
$\boldsymbol{v}_k$ can also be relaxed, since they need only be partially
invertible with respect to the common factor and not with respect to the private
ones. Not having to estimate the private factors, Karakasis and Sidiropoulos take
their encoders to be $\boldsymbol{f}_k : \R^{M_k} \rightarrow \R^{D}$ and their
decoders $\boldsymbol{w}_k : \R^{D} \rightarrow \R^{M_k}$.

\begin{figure}[H]
\centering
\begin{tikzpicture}[
  every node/.style={font=\small},
  lat/.style={circle,draw,thick,minimum size=9mm,inner sep=0pt},
  obs/.style={circle,draw,thick,fill=black!8,minimum size=9mm,inner sep=0pt}]
\begin{scope}
\node[lat] (c1) at (-1.6,1.7) {$\bc_1$};
\node[lat] (g)  at (0,1.7)    {$\bg$};
\node[lat] (c2) at (1.6,1.7)  {$\bc_2$};
\node[obs] (x1) at (-1.0,0)   {$\bx_1$};
\node[obs] (x2) at (1.0,0)    {$\bx_2$};
\draw[->,thick] (g)  -- (x1);
\draw[->,thick] (g)  -- (x2);
\draw[->,thick] (c1) -- (x1);
\draw[->,thick] (c2) -- (x2);
\node at (0,-1.0) {(\textbf{a})};
\end{scope}
\begin{scope}[xshift=6.4cm]
\node[lat] (c1) at (-1.6,1.7) {$\bc_1$};
\node[lat] (g)  at (0,1.7)    {$\bg$};
\node[lat] (c2) at (1.6,1.7)  {$\bc_2$};
\node[obs] (x1) at (-1.0,0)   {$\bx_1$};
\node[obs] (x2) at (1.0,0)    {$\bx_2$};
\draw[->,thick] (g)  -- (x1);
\draw[->,thick] (g)  -- (x2);
\draw[->,thick] (c1) -- (x1);
\draw[->,thick] (c2) -- (x2);
\draw[->,thick] (g) -- (c1);
\draw[->,thick] (g) -- (c2);
\node at (0,-1.0) {(\textbf{b})};
\end{scope}
\end{tikzpicture}
\caption{The two systems of assumptions, drawn for $K = 2$. Each view is
produced from the common factor and its own private factor through the mixing
$\boldsymbol{v}_k$ of~\eqref{eq:gen_model}. In (\textbf{a}) the three latent
factors are mutually independent, which is~\eqref{eq:group_indep} and the
assumption of Lyu {et~al.} In (\textbf{b}) two further edges are added, so
each private factor may depend on the common one, which
is~\eqref{eq:cond_indep} and the assumption of Karakasis and Sidiropoulos. The
views stay conditionally independent given $\bg$ in both cases, since every path
between $\bx_1$ and $\bx_2$ passes through $\bg$.
\label{fig:latent_models}}
\end{figure}

A direct consequence of~\eqref{eq:views_cond_indep} is that the minimum mean-squared-error estimate of view $k$ given $\bg$ and the remaining views is a function of $\bg$ alone. This motivates considering the problem
\begin{linenomath}
\begin{equation}
\begin{aligned}
  \min_{\{\boldsymbol{f}_k\},\, \widehat\bg,\, \{\boldsymbol{w}_k\}} \;\;
  & B^{(K)}(\boldsymbol{w}_1,\ldots,\boldsymbol{w}_K,\widehat\bg)
    := \sum_{k=1}^{K}
    \E\Big[\big\| \boldsymbol{w}_k(\widehat\bg) - \bx_k \big\|^2\Big] \\
  \text{s.t.} \;\;
  & \E\big[\widehat\bg\,\widehat\bg^\top\big] = \bI_D, \quad
    \E[\widehat\bg] = \mathbf{0}_D, \quad
    \bar R^{(K)}(\boldsymbol{f}_1,\ldots,\boldsymbol{f}_K,\widehat\bg) = 0 .
\end{aligned}
  \label{eq:idgcca_exact}
\end{equation}
\end{linenomath}
Comparing~\eqref{eq:lyu_exact} with~\eqref{eq:idgcca_exact} is instructive. Both
ask the learned common factor to have uncorrelated elements of unit variance.
The perfect consensus imposed by the constraint
$\bar R^{(K)}(\boldsymbol{f}_1,\ldots,\boldsymbol{f}_K,\widehat\bg) = 0$
in~\eqref{eq:idgcca_exact} corresponds to the perfect correlation attained at the
optimum of~\eqref{eq:lyu_exact}, and under the respective generative
assumptions, both are achievable.

What is flipped is the role of the reconstruction error. In~\eqref{eq:lyu}, it
is used to promote invertibility, and under the generative assumptions it
vanishes. In~\eqref{eq:idgcca_exact} it is the objective rather than a
constraint, and it cannot vanish when the private factors affect the views,
since $\widehat\bg$ does not carry private information. Yet the reconstruction
error remains informative. The total variance of a view is the mean squared
error of its best constant predictor. Hence, if the learned $\widehat\bg$ pushes
the reconstruction error of a view below that total variance, then
$\widehat\bg$ carries information about the view.

The two requirements correspond to the two questions this subsection opened
with. The constraint $\bar R^{(K)} = 0$ asks every view to agree on
$\widehat\bg$ for every realization, which is what makes it purely common. The
objective asks $\widehat\bg$ to be as informative as possible about each view,
which keeps it from being a trivial consensus. In that respect, the formulation
is close in spirit to DCCAE, where reconstruction also assigns meaning to the
encodings. Since the reconstruction cannot be perfect, however, nothing so far
prevents a coarsening that discards common information the best possible
reconstruction does not need. Karakasis and Sidiropoulos provide a sufficient
condition that rules this out, under which the learned factor is an invertible
reparameterization of $\bg$ and therefore informationally equivalent to~it.

\begin{Theorem}[\cite{karakasis2023}]
\label{thm:idgcca}
Consider a solution of~\eqref{eq:idgcca_exact} under the generative model
\eqref{eq:gen_model} and~\eqref{eq:cond_indep}, and assume that
\begin{enumerate}
\item[(i)] The mixing functions $\boldsymbol{v}_k$ are partially invertible with
respect to $\bg$, meaning that there exist maps $\boldsymbol{u}_k$ from
$\R^{M_k}$ to $\R^{D}$ with
$\boldsymbol{u}_k\big(\boldsymbol{v}_k([\bg^\top,\bc_k^\top]^\top)\big) = \bg$~
for every realization of $(\bg,\bc_k)$ and every $k$;
\item[(ii)] There is mean dependence between $\bg$ and at least one view, that
is, $\E[\bx_{k'} \mid \bg] \neq \E[\bx_{k'}]$ for some $k'$.
\end{enumerate}
{Then} 
 the learned encodings are nontrivial functions of the common factor alone,
$\boldsymbol{f}_k(\bx_k) = \boldsymbol{\psi}(\bg)$ for some map
$\boldsymbol{\psi}$ from $\R^{D}$ to $\R^{D}$ that is not constant. If in addition
\begin{enumerate}
\item[(iii)] The stacked conditional expectation
$\E\big[[\bx_1^\top,\ldots,\bx_K^\top]^\top \mid \bg\big]$ is an invertible
function of $\bg$,
\end{enumerate}
{then} $\boldsymbol{\psi}$ is invertible and $\bg$ is identifiable up to invertible nonlinearities.
\end{Theorem}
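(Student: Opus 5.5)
Three steps: first show that the learned consensus is a function of $\bg$ alone, then that it is not constant, and finally that condition~(iii) makes it invertible.

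\emph{Step 1: purely common.} The constraint $\bar R^{(K)} = 0$ gives $\boldsymbol{f}_k(\bx_k) = \widehat\bg$ almost surely for every $k$. So $\widehat\bg$ is simultaneously a function of $\bx_1$ and of $\bx_2$. I want to show it is measurable with respect to $\sigma(\bg)$. Since each $\bx_k$ is a function of $(\bg,\bc_k)$, write $\widehat\bg = \boldsymbol{h}_1(\bg,\bc_1) = \boldsymbol{h}_2(\bg,\bc_2)$ almost surely. Under~\eqref{eq:cond_indep}, $\bc_1$ and $\bc_2$ are conditionally independent given $\bg$. A random variable that equals both a function of $(\bg,\bc_1)$ and a function of $(\bg,\bc_2)$ is therefore conditionally independent of itself given $\bg$, hence conditionally degenerate. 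Concretely, compute $\E[\|\widehat\bg-\E[\widehat\bg\mid\bg]\|^2]$ through the conditional independence, $\E[\boldsymbol{h}_1\boldsymbol{h}_2^\top\mid\bg]=\E[\boldsymbol{h}_1\mid\bg]\E[\boldsymbol{h}_2\mid\bg]^\top$, and conclude the conditional variance is zero. This gives $\widehat\bg=\boldsymbol{\psi}(\bg)$; it is the stochastic analogue of the Gács--Körner meet argument of Section~\ref{sec:gk}.

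\emph{Step 2: nontrivial.} For any feasible $\widehat\bg=\boldsymbol{\psi}(\bg)$, the best decoder is $\boldsymbol{w}_k(\widehat\bg)=\E[\bx_k\mid\boldsymbol{\psi}(\bg)]$. The objective then equals $\sum_k \E\|\bx_k-\E[\bx_k\mid\boldsymbol{\psi}(\bg)]\|^2$, which is minimized when $\sigma(\boldsymbol{\psi}(\bg))$ is as fine as possible. Partial invertibility~(i) shows that $\widehat\bg$ chosen as a whitened version of $\bg$ is feasible: set $\boldsymbol{f}_k=\bA(\boldsymbol{u}_k(\cdot)-\E\bg)$ with $\bA=\Cov(\bg)^{-1/2}$. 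Its value is $\sum_k\E\|\bx_k-\E[\bx_k\mid\bg]\|^2$, and by~(ii) this is strictly below the total variance $\sum_k\E\|\bx_k-\E\bx_k\|^2$, which is what any constant (or independent) consensus achieves. Hence the optimum is not achieved by a trivial $\boldsymbol{\psi}$. Here the whitening constraint already excludes literal constants; the real content is that $\boldsymbol{\psi}$ retains mean information.

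\emph{Step 3: invertibility.} By optimality the solution must match the benchmark value. The tower property gives $\E\|\bx-\E[\bx\mid\boldsymbol{\psi}]\|^2 = \E\|\bx-\E[\bx\mid\bg]\|^2+\E\|\E[\bx\mid\bg]-\E[\bx\mid\boldsymbol{\psi}]\|^2$, where $\bx$ is the stacked view. Matching the benchmark therefore forces $\E[\bx\mid\bg]=\E[\bx\mid\boldsymbol{\psi}(\bg)]$ almost surely, so $\E[\bx\mid\bg]$ is a function of $\boldsymbol{\psi}(\bg)$. By~(iii), $\bg$ is then a function of $\boldsymbol{\psi}(\bg)$, and $\boldsymbol{\psi}$ is invertible on the support.

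I expect the main obstacle to be Step 1. The measure-theoretic claim that a common function of two conditionally independent pieces given $\bg$ is $\sigma(\bg)$-measurable needs care with almost-sure equalities and with the possibility that $\bc_k$ carries information about $\bg$. I would handle it through the conditional second-moment computation sketched above rather than through atoms.
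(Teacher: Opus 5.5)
Your proof is correct and follows essentially the route the paper indicates. The conditional self-independence argument from the first half of the proof of Proposition~\ref{prop:intersection}, which your second-moment computation makes explicit, gives $\widehat\bg=\boldsymbol{\psi}(\bg)$; the MMSE ordering of coarsenings of $\bg$ then does the rest, with (i) making a whitened $\bg$ feasible, (ii) making it strictly beat the constant predictor, and (iii) turning $\E[\bx\mid\bg]=\E[\bx\mid\boldsymbol{\psi}(\bg)]$ into invertibility of $\boldsymbol{\psi}$ (tacitly assuming, as the paper does, a nonsingular $\Cov(\bg)$ and a measurable inverse in (iii)).
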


Assumption (i) is what replaces the full invertibility of
Theorem~\ref{thm:lyu}. It requires $\bg$ to be recoverable from each view,
regardless of whether the private factor is. Assumption (ii) is the condition of
the previous paragraph in its exact form, since
$\E[\bx_{k'} \mid \bg] \neq \E[\bx_{k'}]$ holds precisely when reconstructing
view $k'$ from $\bg$ improves on its best constant predictor. It is a very mild
condition, since it fails only if every view is mean-independent of $\bg$, i.e., 
$\E[\bx_k \mid \bg] = \E[\bx_k]$ for every $k$, while for the theorem a single view suffices. Assumption (iii) is what closes the gap left above, ruling out the coarsening and giving $\boldsymbol{\psi}$ an inverse.

The works of Lyu {et~al.}\ and Karakasis and Sidiropoulos give conditions
under which, and criteria with which, the common factor can be recovered from
the observed views, and these criteria lead to practical algorithms. We now step
back and relate these results to the common information of
Section~\ref{sec:common_info}. A first step is to check that the generative
structure shared by the two works specifies the common factor uniquely, up to
invertible reparameterizations. The model of Lyu {et~al.}\ is a special case
of that of Karakasis and Sidiropoulos, since full invertibility implies partial
invertibility with respect to $\bg$, and group
independence~\eqref{eq:group_indep} implies the conditional
independence~\eqref{eq:cond_indep}. It therefore suffices to consider the latter.

\begin{Proposition}
\label{prop:intersection}
Consider the generative model~\eqref{eq:gen_model} under~\eqref{eq:cond_indep},
and let each $\boldsymbol{v}_k$ be partially invertible with respect to $\bg$, as
in assumption (i) of Theorem~\ref{thm:idgcca}. Let $\mathbf{w}$ be a random vector such
that $\mathbf{w} = \boldsymbol{f}_k(\bx_k)$ with probability one for every $k$, and such
that the views are conditionally independent given $\mathbf{w}$. Then
$\mathbf{w} = \boldsymbol{\varphi}(\bg)$ with probability one, for an invertible map
$\boldsymbol{\varphi}$.
\end{Proposition}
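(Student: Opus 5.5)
Two inclusions of sigma-algebras give the result: first $\sigma(\mathbf{w}) \subseteq \sigma(\bg)$, which makes $\mathbf{w}$ a function of $\bg$, and then $\sigma(\bg) \subseteq \sigma(\mathbf{w})$, which makes that function invertible. Throughout, equalities are understood almost surely.

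\textbf{Step 1: $\mathbf{w}$ is a function of $\bg$.} Take two distinct views $k \neq l$ (this needs $K\ge 2$). We have $\mathbf{w} = \boldsymbol{f}_k(\bx_k) = \boldsymbol{f}_l(\bx_l)$. By~\eqref{eq:views_cond_indep}, $\bx_k$ and $\bx_l$ are conditionally independent given $\bg$, so $\boldsymbol{f}_k(\bx_k) \perp\!\!\!\perp \boldsymbol{f}_l(\bx_l) \mid \bg$. A random vector that is conditionally independent of an almost-sure copy of itself given $\bg$ must be conditionally degenerate. For any bounded measurable $h$,
\begin{linenomath}
\begin{equation*}
\E[h(\mathbf{w})^2\mid\bg]=\E[h(\mathbf{w})\mid\bg]^2 ,
\end{equation*}
\end{linenomath}
so $\Var(h(\mathbf{w})\mid\bg)=0$. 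Running $h$ over a countable determining class, such as indicators of rational boxes, shows that $\mathbf{w}$ is $\sigma(\bg)$-measurable. Hence $\mathbf{w}=\boldsymbol{\varphi}(\bg)$ for some measurable $\boldsymbol{\varphi}$, by the Doob--Dynkin lemma. This is the Gács--Körner-type step: agreement almost surely plus conditional independence given $\bg$ places $\mathbf{w}$ inside the meet and inside $\sigma(\bg)$. Partial invertibility is not needed for this step.

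\textbf{Step 2: $\bg$ is a function of $\mathbf{w}$.} Here we use the hypothesis that the views are conditionally independent given $\mathbf{w}$, together with partial invertibility (i). By (i), $\bg=\boldsymbol{u}_k(\bx_k)=\boldsymbol{u}_l(\bx_l)$ almost surely. Given $\mathbf{w}$, $\boldsymbol{u}_k(\bx_k)$ and $\boldsymbol{u}_l(\bx_l)$ are conditionally independent, and they are almost surely equal. The same variance argument as in Step 1 then shows that $\bg$ is $\sigma(\mathbf{w})$-measurable, so $\bg=\boldsymbol{\eta}(\mathbf{w})$.

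\textbf{Step 3: combine.} From the two steps, $\bg=\boldsymbol{\eta}(\boldsymbol{\varphi}(\bg))$ and $\mathbf{w}=\boldsymbol{\varphi}(\boldsymbol{\eta}(\mathbf{w}))$ almost surely. So $\boldsymbol{\varphi}$ is invertible on the support of $\bg$, with inverse $\boldsymbol{\eta}$ on the range of $\boldsymbol{\varphi}$. If desired, $\boldsymbol{\varphi}$ can be redefined on a null set to make it a genuine bijection between the supports.

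\textbf{Main obstacle.} The delicate point is the measure-theoretic lemma that conditional independence of two almost-surely equal variables forces conditional degeneracy, made rigorous for vector-valued variables through regular conditional distributions. A second point needs care: the conditional independence in Step 1 must be used for the specific pair $(\boldsymbol{f}_k(\bx_k),\boldsymbol{f}_l(\bx_l))$, which follows because functions of conditionally independent variables remain conditionally independent. A further caveat is that invertibility holds only almost surely, that is, on the support of $\bg$; the statement should be read in that sense.
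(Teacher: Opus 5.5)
Your proposal is correct and follows the same two-directional argument as the paper. First, conditional independence of the views given $\bg$, together with almost-sure agreement of the encodings, makes $\mathbf{w}$ a function of $\bg$; then partial invertibility, together with conditional independence given $\mathbf{w}$, makes $\bg$ a function of $\mathbf{w}$. Your conditional-variance argument over a countable determining class, followed by the Doob--Dynkin lemma, is a rigorous version of the paper's one-line step that a variable conditionally independent of itself is conditionally constant.
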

\begin{proof}
Given $\bg$, the views are conditionally independent
by~\eqref{eq:views_cond_indep}, and $\mathbf{w}$ is a function of each of them, so $\mathbf{w}$ is
conditionally independent of itself given $\bg$. A random vector that is
independent of itself is constant with probability one, so $\mathbf{w}$ is a function of
$\bg$. Conversely, $\bg = \boldsymbol{u}_k(\bx_k)$ is a function of each view,
and the views are conditionally independent given $\mathbf{w}$, so by the same argument
$\bg$ is a function of $\mathbf{w}$. Hence, $\mathbf{w}$ and $\bg$ are functions of each~other.
\end{proof}

We can now draw the parallels with Section~\ref{sec:common_info}.

\begin{Connection}[{\color{black}Analogy}: Identifiability at the intersection of Gács--Körner and Wyner]
\label{con:nonlinear_id}
{\color{black}Recall from Section~\ref{sec:wyner} that, for discrete views, the feasible sets
of the Gács--Körner and Wyner common information may intersect. A nonempty
intersection contains only the common part $W^*$ of Section~\ref{sec:gk}, up to
relabeling, and the chain~\eqref{eq:ci_chain} then holds with equality,
$K_{GK}(X;Y) = I(X;Y) = C(X;Y)$. The two works place their common factor $\bg$ in the intersection of the two feasible sets as well. Their invertibility conditions imply that $\bg$ is a deterministic function of each view, so the views agree on it with probability one, which places $\bg$ in the feasible set of the Gács--Körner common information. Their latent independence structures imply that the views are conditionally independent given $\bg$, as in~\eqref{eq:views_cond_indep}, which places $\bg$ in the feasible set of Wyner's common information. By Proposition~\ref{prop:intersection}, the intersection contains only $\bg$, up to invertible reparameterizations, which is the continuous counterpart of uniqueness up to relabeling. The two common informations, however, are infinite, since $\bg$ is continuous and recoverable from each view, so they cannot single out $\bg$ as they do in the discrete case.}
\end{Connection}

\begin{Connection}[{\color{black}Analogy}: A Gács--Körner problem for continuous views]
\label{con:idgcca_gk}
{\color{black}The Gács--Körner problem maximizes the entropy $H(W)$ over the variables on which
all views agree with probability one. Problem~\eqref{eq:idgcca_exact} keeps the
agreement as the constraint $\bar R^{(K)} = 0$ and replaces the entropy with the
reconstruction error of the views, to be minimized. The replacement is needed,
since the entropy of a continuous variable is infinite, and its differential
entropy changes under invertible maps, which are exactly the
reparameterizations the identifiability results leave free. As in the first
half of the proof of Proposition~\ref{prop:intersection}, every
$\widehat\bg$ on which the views agree is a function of $\bg$, so the candidates
are ordered by how much of $\bg$ they retain. Both objectives favor the finer
candidates. In the discrete case, any coarsening of the common variable lowers its entropy. In the continuous case, condition (iii) ensures that the best reconstruction of the views needs all of $\bg$, so any coarsening of $\bg$ increases the reconstruction error. Hence, both problems select the finest common variable, up to relabeling in the first case and up to invertible reparameterizations in the second. }
\end{Connection}

Overall, what can be recovered depends on the assumed generative model, not
only on the learning criterion. Correlation and consensus objectives alone admit
many solutions, from the trivial consensus of Section~\ref{sec:dcca} to
coarsenings of the common factor. Lyu {et~al.}\ and Karakasis and
Sidiropoulos single out the common factor by assuming that it is a common part of
the views, as in Gács--Körner, that also renders them conditionally independent,
as in Wyner. In the continuous setting, the information values can no longer
select the common factor, and the identifiability conditions take their place.
It remains to see how problem~\eqref{eq:idgcca_exact} can be solved in practice.

\subsubsection{A Tractable Relaxation}

Given $N$ training samples, the constraints in~\eqref{eq:idgcca_exact} take a
finite-dimensional form. Writing $\bG$ for the matrix whose rows are the
realizations of $\widehat\bg$, the second-moment constraint becomes
$\bG^\top\bG = N\bI_D$ and the mean constraint
$\bG^\top\mathbf{1}_N = \mathbf{0}_D$, while $\bar R^{(K)} = 0$ becomes the $KN$
vector equations $\boldsymbol{f}_k(\bx_k^{(n)}) = \bG(n,:)^\top$. The
optimization variables are $\bG$ together with the parameters of the encoders
and decoders. Updating $\bG$ under the orthogonality constraint calls for
optimization on the Stiefel manifold, up to the scaling by $\sqrt{N}$, which is
expensive and sensitive to local minima, and the equality constraints add to the
difficulty.

Karakasis and Sidiropoulos suggest two modifications that
make~\eqref{eq:idgcca_exact} tractable. The first removes $\widehat\bg$ from the
objective. At any feasible point of~\eqref{eq:idgcca_exact}, every encoding
equals $\widehat\bg$, so the encoding of any view can be used in place of
$\widehat\bg$ to reconstruct the others. Replacing $\widehat\bg$ in this way, and
excluding each view from its own reconstruction, gives the cross-reconstruction
loss\vspace{12pt}
\begin{linenomath}
\begin{equation}
  Q^{(K)}(\boldsymbol{f}_1,\boldsymbol{w}_1,\ldots,
  \boldsymbol{f}_K,\boldsymbol{w}_K)
  := \frac{1}{K-1}\sum_{k=1}^{K} \sum_{\substack{j=1 \\ j \neq k}}^{K}
     \E\Big[\big\| \boldsymbol{w}_k\big(\boldsymbol{f}_j(\bx_j)\big)
     - \bx_k \big\|^2 \Big],
  \label{eq:cross_recon}
\end{equation}
\end{linenomath}
which coincides with $B^{(K)}$ at every feasible point, since there the inner sum
is $K-1$ copies of the same term, and the factor $1/(K-1)$ makes the two values
agree. Under assumption (i) of Theorem~\ref{thm:idgcca}, nothing is lost by
reconstructing view $k$ from another view instead of from $\bg$. Since $\bg$ is a
function of $\bx_j$, conditional independence gives
$\E[\bx_k \mid \bx_j] = \E[\bx_k \mid \bg]$, so both achieve the same best
possible error.

Excluding $\bx_k$ from the reconstruction of itself is a deliberate choice, and
it is the mechanism that prevents leakage. Under~\eqref{eq:cond_indep},
information about $\bc_k$ beyond what $\bg$ carries is available only through
view $k$, so excluding that path removes any reward for the encodings to carry
it, regardless of how strong the private components may be. Together with the agreement of the encodings, this keeps private information out of them, as the authors argue. This is the structural difference from~\eqref{eq:dccae}, where each view reconstructs itself and hence private variation can be rewarded.

The second modification replaces the agreement constraint $\bar R^{(K)} = 0$ with a penalty, following a Lagrangian approach. This removes the equality constraints and leaves only the orthogonality and zero-mean constraints on $\widehat\bg$. Together, the two modifications give
\begin{linenomath}
\begin{equation}
\begin{aligned}
  \min_{\{\boldsymbol{f}_k\},\, \widehat\bg,\, \{\boldsymbol{w}_k\}}\;
  & (1-\lambda)\, \bar R^{(K)}(\boldsymbol{f}_1,\ldots,
    \boldsymbol{f}_K,\widehat\bg)
    \;+\; \lambda\, Q^{(K)}(\boldsymbol{f}_1,\boldsymbol{w}_1,\ldots,
    \boldsymbol{f}_K,\boldsymbol{w}_K) \\
  \text{s.t.} \;\;
  & ~~~~~~~~~~~~~~~~\E\big[\widehat\bg\,\widehat\bg^\top\big] = \bI_D, \quad
    \E[\widehat\bg] = \mathbf{0}_D ,
\end{aligned}
  \label{eq:idgcca}
\end{equation}
\end{linenomath}
where $\lambda \in (0,1)$ sets the trade-off, with smaller values enforcing the
agreement of the encodings more strictly. Note that Theorem~\ref{thm:idgcca} is
stated for~\eqref{eq:idgcca_exact}, while~\eqref{eq:idgcca} is what is solved in
practice.

\subsubsection{Algorithm}
Problem~\eqref{eq:idgcca} admits alternating minimization over the two groups of
variables identified above. With $\bG$ fixed, updating the encoders and decoders
is unconstrained and is handled by stochastic optimization. With the encoders
and decoders fixed, the update of $\bG$ minimizes
$\sum_{k=1}^{K}\|\bF_k - \bG\|_F^2$, where
$\bF_k(n,:) = \boldsymbol{f}_k(\bx_k^{(n)})^\top$, under the orthogonality and
centering constraints. Let $\bY = \sum_{k} \bF_k$, and let $\bar\bY$ be its columnwise centered
version. Expanding the squares, the update reduces to solving the orthogonal Procrustes problem
\begin{linenomath}
\begin{equation}
  \max_{\bG \in \R^{N \times D}} \Tr\big(\bG^\top \bar\bY\big)
  \quad \text{s.t.} \quad \bG^\top\bG = N\bI_D ,
  \label{eq:procrustes}
\end{equation}
\end{linenomath}
solved in closed form as $\bG^\star = \sqrt{N}\,\bU\bV^\top$ from the SVD
$\bar\bY = \bU\bSigma\bV^\top$. The columns of $\bU$ lie in the column space of
$\bar\bY$, which is orthogonal to $\mathbf{1}_N$, so $\bG^\star$ also satisfies
the centering constraint when $\bar\bY$ has rank at least $D$. Per iteration the
cost is $\mathcal{O}\big(N D^2 + |\mathcal{B}| \sum_k p_k\big)$, with $p_k$ the
number of parameters of $\boldsymbol{f}_k$ and $\boldsymbol{w}_k$ and
$|\mathcal{B}|$ the batch size.

\subsubsection{Trivial Solutions}

Theorems~\ref{thm:lyu} and~\ref{thm:idgcca} are population statements,
while the objectives, losses and constraints are all evaluated on finitely many
training samples. With a sufficiently rich function class, the empirical problems
admit solutions whose encodings merely index the training samples, as in the
construction of Section~\ref{sec:dcca}, so overfitting remains possible here as
it was there. The two lines of work offer complementary guarantees. For kernel
CCA, consistency is established~\cite{fukumizu2007}, as it is for the functional
CCA of Leurgans {et~al.}~\cite{leurgans1993}, but the limit is defined by
maximal correlation alone and need not correspond to any latent factor. For the
neural network formulations, the guarantee is instead a population one. Under
the generative models discussed above, the population problems identify the
common factor, up to invertible reparameterizations, but neither work shows that
the encoders learned from finitely many samples converge to it. To the best of our knowledge, the closest result is a finite-sample bound of Lyu {et~al.}, which limits how much the learned shared encoding depends on the private factor, for encoders of bounded complexity.

\subsection{Connection to Self-Supervised Learning}
\label{sec:ssl}

Everything in this section has assumed that the views are given by the problem,
such as the acoustic and articulatory measurements of the speech example.
Self-supervised learning drops that assumption. Its goal is to learn, from
unlabeled data, representations that are useful for downstream tasks such as
classification. To that end, self-supervised learning manufactures the views,
usually by applying two random augmentations to the same image, designed to leave
its semantic content unchanged, and then learns encodings that capture that
content. The semantic content that an augmentation preserves thus plays the role
of the common factor, and what the augmentation changes plays the role of the
private one. What self-supervised learning optimizes then has the structure we have been working with throughout. One term rewards agreement between the two encodings, and a second term stops that agreement from being achieved on a constant. This subsection highlights how strong the similarities are between self-supervised learning and nonlinear CCA, and where they stop. An early work in this line, by Becker and Hinton, already named canonical correlation as what it extends, and the identifiability questions of Section~\ref{sec:idgcca} reappear in the self-supervised literature in nearly the same form.

\subsubsection{The Original Argument}

Becker and Hinton~\cite{becker1992} trained two small networks, which they call
modules, on inputs that share only a common cause. In their multiview setup, two
neighboring patches of a stereo image serve as views, and the only property they
share is the local depth of the scene. The idea is to train the networks to agree, so that, without any labels, they learn to report depth. How agreement is measured, however, matters. The simplest measure admits the trivial solution of Section~\ref{sec:dcca}. As Becker and Hinton note, penalizing the squared difference of the outputs ``usually causes each module to produce the same constant output that is unaffected by the input to the module and therefore conveys no information about~it''.

What they propose instead is that the outputs ``agree closely relative to how
much they both vary as the input is varied''. With $a$ and $b$ the outputs of the
two networks, this corresponds to maximizing
\begin{equation}
\label{eq:imax}
I(a,b) = \tfrac{1}{2}\log\frac{V(a+b)}{V(a-b)},
\end{equation}
where $V$ is the variance over the training cases. When $a$ and $b$ have equal
variances $\sigma^2$ and correlation $\rho$, we have
$V(a \pm b) = 2\sigma^2(1 \pm \rho)$, so the ratio equals $(1+\rho)/(1-\rho)$. Hence, maximizing~\eqref{eq:imax} maximizes the correlation of the two outputs, and a constant output is no longer optimal. The paper itself describes the procedure as ``a multilayer nonlinear extension of statistical methods such as canonical correlation or alternating conditional expectation because it maximizes the normalized covariation between the outputs''.

In~\eqref{eq:imax}, agreement and the prevention of collapse are combined in a
single ratio. A small denominator rewards agreement, and a large numerator rules
out a constant output. Modern self-supervised methods separate the two roles into
distinct terms of the loss.

\subsubsection{How Collapse Can Be Prevented}

The self-supervised methods keep this two-part structure, an agreement term and
a term that prevents collapse, and differ in how the latter is realized. Let
$\bx$ and $\bx'$ be two augmentations of the same image, encoded as
$\boldsymbol{z} = \boldsymbol{f}_1(\bx)$ and
$\boldsymbol{z}' = \boldsymbol{f}_2(\bx')$, and let
$\bZ, \bZ' \in \mathbb{R}^{N\times D}$ collect the encodings of a minibatch of
$N$ samples. SimCLR~\cite{chen2020simclr} prevents collapse by contrasting the
encodings of matching augmentations with those of the other images in the same
minibatch. Writing $\boldsymbol{z}_1,\dots,\boldsymbol{z}_{2N}$ for the rows of
$\bZ$ and $\bZ'$ together, a positive pair $(i,j)$ is one image under its two
augmentations, and the remaining $2(N-1)$ points serve as its negatives. Then,
the contrastive loss for the pair $(i,j)$ is
\begin{equation}
\label{eq:ntxent}
\ell_{i,j} = -\log\frac{\exp\!\big(\mathrm{sim}(\boldsymbol{z}_i,\boldsymbol{z}_j)/\tau\big)}
{\sum_{m=1}^{2N}\mathbf{1}_{[m\neq i]}\exp\!\big(\mathrm{sim}(\boldsymbol{z}_i,\boldsymbol{z}_m)/\tau\big)},
\end{equation}
with $\mathrm{sim}$ the cosine similarity and $\tau$ a temperature. The
denominator is what keeps the encoders from collapsing. This loss is a form of
the InfoNCE loss of van den Oord {et~al.}~\cite{oord2018cpc}, {\color{black}who show that
minimizing it maximizes a lower bound on the mutual information between the
compared variables}, a bound that becomes tighter as the number of negatives
grows. As a result, SimCLR benefits from large batches, which provide more
negatives.

Barlow Twins~\cite{zbontar2021barlow} removes the negatives and constrains the
second-order statistics instead. Let $\widetilde\bZ$ and $\widetilde\bZ'$ be the
two batches of encodings, centered over the batch and with unit-norm columns. The
cross-correlation matrix between the two views and the loss built from it are \begin{equation}
\label{eq:barlow}
\bT = \widetilde\bZ^{\!\top}\widetilde\bZ' \in \mathbb{R}^{D\times D},
\qquad
\mathcal{L}_{\mathrm{BT}} = \sum_{p} (1-\bT_{pp})^2
+ \lambda \sum_{p} \sum_{q\neq p} \bT_{pq}^2 ,
\end{equation}
so that $\bT_{pq}$ is the sample correlation between coordinate $p$ of
one view and coordinate $q$ of the other, taken over the $N$ samples in the
batch, and $\lambda > 0$ weighs the two terms. Minimizing
$\mathcal{L}_{\mathrm{BT}}$ drives $\bT$ toward the identity. The diagonal terms
are the invariance requirement and the off-diagonal terms are what the authors
call a soft whitening constraint. The authors read the objective as an instance
of the information bottleneck, which seeks a representation that keeps as much
information as possible about the original image, while being as uninformative
as possible about the particular augmentation applied to it. In our terms, the
representation retains the common factor and discards the private one.

Comparing~\eqref{eq:barlow} with the whitening constraints of
Section~\ref{sec:cca} is worth doing carefully. Scaling each centered column to
unit norm divides each coordinate by its standard deviation over the batch, which
plays the role of the unit-variance part of the whitening constraints. The diagonal
condition $\bT_{pp}=1$ asks matched coordinates to be perfectly correlated across
the two views, which is what CCA maximizes. The off-diagonal condition asks
mismatched coordinates to be uncorrelated across views, whereas whitening asks
the coordinates of each view to be uncorrelated with one another. Since the
off-diagonal term never compares two coordinates of the same view, those
coordinates may remain correlated. The two conditions meet only at an exact
solution, where $\bT = \bI$ forces $\widetilde\bZ' = \widetilde\bZ$, so that
$\widetilde\bZ^{\top}\widetilde\bZ = \bI$, which is within-view whitening. This is
consistent with the authors' description of the off-diagonal term as a soft
whitening constraint.

VICReg~\cite{bardes2022vicreg} separates agreement from the terms that rule out
constant encodings, which~\eqref{eq:barlow} combines in a single cross-correlation
matrix. Its objective is
\begin{equation}
\label{eq:vicreg}
\lambda\, s(\bZ,\bZ') + \mu\big[v(\bZ)+v(\bZ')\big] + \nu\big[c(\bZ)+c(\bZ')\big],
\end{equation}
where $s$ is the mean squared distance between paired encodings and $v$
penalizes every coordinate whose standard deviation within the batch drops
below a target $\gamma$,\vspace{12pt}
\begin{equation}
\label{eq:vicreg_var}
v(\bZ) = \frac{1}{D}\sum_{p=1}^{D}
\max\!\Big(0,\ \gamma - \sqrt{\operatorname{Var}(\boldsymbol{z}^{(p)}) + \epsilon}\Big),
\end{equation}
with $\epsilon$ a small constant for numerical stability, while $c$ penalizes the
squared off-diagonal entries of the covariance matrix of a single view, and
$\lambda, \mu, \nu > 0$ weigh the three terms. The relation to CCA is close, but
not exact. For whitened encodings,
$\E\|\boldsymbol{z}-\boldsymbol{z}'\|^2 = 2D - 2\Tr\left[\E[\boldsymbol{z}\boldsymbol{z}'^{\top}]\right]$,
so minimizing the squared distance between the views boils down to CCA. VICReg keeps this agreement term but replaces the whitening constraints with two penalties, one that only bounds each standard deviation from below and one that decorrelates the coordinates of each view.

Neither the objective of Barlow Twins nor that of VICReg, however, rules out the
trivial solutions of Section~\ref{sec:dcca}. With sufficiently rich encoders, mapping most samples of
a batch to zero and the remaining few to suitably scaled, decorrelated values,
identically in both views, satisfies all the terms of~\eqref{eq:barlow}
and~\eqref{eq:vicreg}, exactly as it satisfies the whitening constraints there.

\subsubsection{When the Views Are Given After All}
In multimodal learning, the views are again given by the data rather than
constructed. CLIP~\cite{radford2021clip} trains an image encoder and a text
encoder jointly, maximizing the cosine similarity of the $N$ matched
image--caption pairs in a batch against the $N^2-N$ mismatched ones under a
symmetric cross-entropy loss. ImageBind~\cite{girdhar2023imagebind} extends this
to six modalities, namely images, text, audio, depth, thermal, and inertial
measurement unit readings, but does not align all pairs. It aligns each modality
to images only, using the InfoNCE loss on pairs $(\mathcal{I},\mathcal{M})$ of an
image and an observation of another modality, and reports that modalities
$\mathcal{M}_1$ and $\mathcal{M}_2$ become aligned even though no
$(\mathcal{M}_1,\mathcal{M}_2)$ pair was ever seen in training. The topology is
worth noting. Every modality is aligned to images alone, rather than to every
other modality as in the all-pairs sum of correlations of SUMCOR. By analogy with
MAXVAR, where all views are fitted to a single common representation $\bG$, the
image embedding plays the role of $\bG$, and the emergent alignment is then what
one would expect of a common representation that every view is required to
match.

\subsubsection{What Is Recovered}

Zimmermann et
al.~\cite{zimmermann2021} consider views generated as follows. A latent is drawn
uniformly from the latent space, a second latent is drawn from a conditional
centered at the first, and the two views are obtained by applying the same
mixing to the two latents. The mixing is never assumed known, only injective and
differentiable. Under this model, they show that, in the limit of infinitely
many negative samples, an encoder that minimizes a contrastive loss recovers the
latent variables that generated the views. When the latent space is a
hypersphere and the conditional is von Mises--Fisher, the minimizer
of~\eqref{eq:ntxent} recovers the latent up to an orthogonal map. When the latent
space is a convex body, the cosine similarity of~\eqref{eq:ntxent} is replaced by
a distance induced by a norm, and the latent is recovered up to an affine map.
What must be known is the form of the conditional that produced the positive
pair, because the similarity used in the loss has to be based on the same metric
as that conditional, so a contrastive objective identifies only the generative
model it was built for.

Von K\"ugelgen et~al.~\cite{vonkugelgen2021} consider views generated as
follows. The latent is split into content $\mathbf{c}\in\mathbb{R}^{n_c}$ and
style $\mathbf{s}$, and style may depend on content. A second latent is obtained
by keeping the content and changing a random subset of the style variables, with
every style variable changed with positive probability. The two views
$\mathbf{x}$ and $\tilde{\mathbf{x}}$ are obtained by applying the same mixing to
the two latents, and the mixing is smooth and invertible with a smooth inverse.
An encoder $f$ that maps a view to $(0,1)^{n_c}$ is trained to minimize
\begin{equation}\label{eq:alignmaxent}
\mathbb{E}\big[\|f(\mathbf{x})-f(\tilde{\mathbf{x}})\|_2^2\big]-h\big(f(\mathbf{x})\big),
\end{equation}
where $h$ is the differential entropy. The first term enforces alignment, and the
second rules out collapse. Under this model, they show that every smooth
minimizer of~\eqref{eq:alignmaxent} recovers content alone, up to an invertible
map, i.e., $f(\mathbf{x})=\boldsymbol{\psi}(\mathbf{c})$ for an invertible
$\boldsymbol{\psi}$. The number of content variables must be known, but the
distribution of the latent need not factorize. In the limit of many negative
samples, the denominator of~\eqref{eq:ntxent} acts as the entropy term, so
contrastive learning asymptotically recovers content as well.

In general, maximizing the differential entropy of an unconstrained output would
be meaningless. For example, scaling an output in $\mathbb{R}^{n_c}$ by a factor
$a>1$ increases $h$ by $n_c\log a$. In the setup of von K\"ugelgen et~al., the
output is confined to $(0,1)^{n_c}$, a set of unit volume, and since the
differential entropy of a vector confined to a set is at most the logarithm of
the volume of that set, with equality only for the uniform distribution, $h$ can be
at most zero, and the maximum is attained only by a uniform output. Under their
model, an encoder that is both aligned and uniform exists, so every minimizer
of~\eqref{eq:alignmaxent} is aligned and has a uniform output. These two properties together force the map from content to the output to be invertible, as von K\"ugelgen et~al. show.

In the result of von K\"ugelgen et~al., the invariance of content is a property
of the augmentation, which the user designs. When the views are given, as in
Section~\ref{sec:idgcca}, the corresponding conditions are hypotheses about the
data that have to be argued for case by case, and the guarantees of a
self-supervised method do not carry over to given views. This is also why the two fields focus on different things. Self-supervised learning focuses on designing the augmentation, and multiview learning focuses on justifying the assumptions.


\subsection{Summary}
\label{sec:nonlinear:summary}

This section replaced the linear maps of Sections~\ref{sec:cca}
and~\ref{sec:sp} with functions drawn from broader classes. The main points are
summarized below.

\begin{itemize}
\item \textbf{Rich function classes allow overfitting and trivial solutions.}
Optimal scaling, kernel CCA, and deep CCA restrict the function class in
different ways, and without a sufficient restriction, the learned representation
carries no information about the data, because learning criteria are evaluated
on finitely many samples. In optimal scaling, nominal and monotone transformations can merge distinct values, and degenerate results have been reported under extreme nonlinearities and high noise. In kernel CCA without regularization, every canonical correlation can equal one regardless of the data. In the deep formulations, with
a sufficiently rich class of networks, the empirical constraints can be met by
representations unrelated to the views, because agreement is imposed only at the
$N$ training samples. 

\item \textbf{Probabilistic formulations assume conditional independence but do
not single out the common factor.} The variational methods assume that the views
are conditionally independent given a shared latent factor, and fit the resulting
model by maximizing a lower bound on the likelihood. {\color{black}Conditional independence
places the factor in the feasible set of Wyner's common information, but does not
single it out, since the feasible set ranges from a minimally informative factor
to the views themselves, and two fitted models with the same likelihood may sit
at different points of that range.} Adding private factors makes the model more
expressive, but the private factor of a view can reconstruct that view while the
shared factor is ignored. As a result, the question of when the shared and private factors can be reliably separated remains~open.

\item \textbf{Agreement alone does not determine the representation.}
Proposition~\ref{prop:intersection} shows that a representation that agrees
across views and also renders the views conditionally independent is the common
factor, up to an invertible map. The criteria of this section reward agreement,
but do not impose conditional independence, which enters only as an assumption
of the generative model. Hence, even in the population setups, every function of the common factor agrees across views, including functions that discard part of the
common factor. As a result, additional requirements are needed for the representation to retain the whole common factor. Reconstructing each view from its own representation makes the representation more informative, but private variation may leak into the representation. Reconstructing each view from the
representations of the other views keeps private variation out, while requiring
the representation to retain what predicts the other views.

\item \textbf{The common factor is identifiable up to an invertible map, in the
population.} Lyu et~al. show identifiability for two views, each an invertible
function of the common factor and a private factor, with the common and private
factors mutually independent. Karakasis and Sidiropoulos show identifiability for
$K$ views. They require only that the common factor be a function of each view,
and allow the private factors to be dependent, provided that they are
conditionally independent given the common factor. In exchange, they require at
least one view to be mean dependent on the common factor, which makes the
learned representation a nontrivial function of the common factor alone. Moreover, they show that when the stacked conditional mean of the views given the common factor is an invertible function of the common factor, the common factor is recovered up to an invertible map. At any fixed $N$, the trivial solutions
remain available.

\item \textbf{Self-supervised learning seeks nontrivial agreement between
constructed views.} Many of its losses pair an agreement term with a term
intended to prevent collapse to a constant representation, which agrees
perfectly across views but carries no information. The second term is realized by negative samples in contrastive learning, by a soft whitening of the
cross-correlation matrix in Barlow Twins, or by a variance floor in VICReg. The identifiability result of von K\"ugelgen et~al. assumes that the augmentation leaves content unchanged and changes style, and the choice of augmentation implicitly defines what content and style are.

\end{itemize}

%
%

\section{Further Extensions and Open Problems}
\label{sec:open}

This section is in two parts. The first collects results that bear directly on
the problem of this tutorial but fall outside what the preceding sections
develop, so that a reader who wants to go further knows where to go. The second outlines open directions.

\subsection{Further Extensions}
\label{sec:ext}

\subsubsection{Unifying Generalized CCA with Subspace Clustering}
Generalized CCA assumes that all $K$ views share one common subspace. In many
deployments they do not, and the views fall into clusters, with the views inside a
cluster sharing a subspace. Karakasis and Sidiropoulos formulate this as subspace
clustering of subspaces (SCoS)~\cite{karakasis2026subspace}, in which the
assignment of views to clusters and the common subspace of each cluster are
estimated together. The formulation contains GCCA as the single-cluster case, and
it extends subspace clustering from vector-valued samples to matrix-valued ones,
which is why it unifies the two.

\subsubsection{Tensor CCA and the High-Order Correlation Criterion}
Every criterion in Section~\ref{sec:cca} is assembled from pairwise
correlations, so the extension to $K$ views sums or otherwise combines
quantities defined two views at a time. Luo et~al.~\cite{luo2015tcca} replace that with a criterion defined on all $K$ views at once, the high-order correlation of a covariance tensor built from the views. They show that maximizing it is
equivalent to computing the best rank-one approximation of that
tensor. The CP machinery of Section~\ref{sec:common_info}
therefore returns as the solver rather than as a representation of a
distribution. Wong et~al.~\cite{wong2020dtcca} give the deep version, learning nonlinear per-view maps whose images are correlated in high order.
Girka et~al.~\cite{girka2023tgcca} address a different tensor setting, in which each view is
itself a higher-order tensor, and constrain the canonical vectors to admit an
orthogonal rank-$R$ CP decomposition within the regularized GCCA
framework. To the best of our knowledge, none of the three comes with an identifiability result.

\subsubsection{Identifiability Without Paired Samples}
The results of Section~\ref{sec:nonlinear} assume that sample $n$ of view $k$
and sample $n$ of view $l$ come from the same draw. Timilsina
et~al.~\cite{timilsina2024nips} remove that assumption for linear mixtures and
establish identifiability of the shared components by minimizing a divergence
between the cross-modality distributions, so the alignment they need is
distributional rather than sample by sample.

\subsubsection{Identifiability Without Known Latent Dimensions}
The dimensions of the common and private factors are fixed in advance throughout
this tutorial. Shrestha and Fu~\cite{shrestha2025iclr} show that content-style identifiability holds
without prior knowledge of either dimension, provided sparsity constraints are
imposed on the learned representations. Their setting is
unaligned multi-domain data, in which a content variable is shared across all
domains and each domain has its own style. As in the previous paragraph, identifiability is established by matching the latent distributions across domains.

\subsubsection{Disentangling the Private Factors from the Common One}
Section~\ref{sec:idgcca} showed that, under the assumptions of Lyu et
al.~\cite{lyu2022}, an independence constraint between the shared and private
blocks of each encoder also recovers the private factors. Sristi et~al.~\cite{sristi2026ctae} pursue a separation of this kind in multi-region neural recordings, partitioning the latent space of each brain region into orthogonal shared and private subspaces. There both subspaces are estimated with coupled transformer encoders and decoders, which also handle the non-stationary temporal dynamics that a static model~\mbox{ignores}.

\subsubsection{Preventing Collapse in Deep CCA}
Section~\ref{sec:nonlinear} shows that a rich enough transformation class can
manufacture agreement between the views. He et~al.~\cite{he2024collapse} report the training counterpart of this, in which the performance of deep CCA drops sharply as training proceeds, and they relate it to the weight matrices of the networks becoming low-rank. As a remedy, they propose a noise regularizer that encourages each network to keep its weight matrices full rank, and the same regularizer also applies to deep GCCA.

\subsubsection{Adaptive Estimation}
Wang and Poor also implement their blind detector adaptively, tracking the
signal subspace as snapshots arrive with the projection approximation subspace
tracking algorithm of Yang~\cite{yang1995past}, rather than recomputing it from
a whole block. The CCA estimators of Section~\ref{sec:cca} have no such
counterpart and need the full record before processing. Peppas and Sidiropoulos
provide one, an adaptive CCA that updates the canonical directions as samples
arrive, applied to time synchronization for signal alignment~\cite{peppas2025adaptive}.

\subsubsection{Applications Beyond the Three Examples}
The formulation of Section~\ref{subsec:crossing-disciplines} has been carried
into settings the tutorial does not develop, and we name a few of them here,
grouped by field. In neuroimaging, a two-stage GCCA separates the task-related
response shared across subjects from resting-state activity private to each
one~\cite{karakasis2020multi, karakasis2022}. In network analysis, a related
problem is graph alignment, where two graphs must be matched with no node
correspondence given, and Karakasis et~al.~\cite{karakasis2021align} show how
subspace intersection can be used to align them. In wireless communications, an
underlay scheme places short-range secondary transmissions beneath a primary
signal and recovers them without coordinating with the primary. Ibrahim et
al.~\cite{ibrahim2022underlay} show how CCA can be combined with repetition
coding for this purpose, and the construction is carried to analog FM radio with
several receivers~\cite{razavi2024fm}. Packet repetition is also part of many
popular protocols in frequency-hopped IoT networks. Peppas et
al.~\cite{peppas2025signal, peppas2023harnessing} exploit that repetition to
create a multiview setup in which CCA applies directly to align packets received
in different hops, and the same alignment is carried to multicarrier waveforms
to make them resilient to interference~\cite{peppas2025wcl}.

\subsection{Open Problems}
\label{sec:op}

\subsubsection{Relaxing the Invertibility Condition}

The identifiability results of Section~\ref{sec:nonlinear} assume each mixing
function is partially invertible with respect to the common factor. The common
factor is then a function of any single view and can be recovered from one view
at a time. The next weaker assumption is that some joint function of all $K$
views determines the common factor, with no single view sufficient on its own.
To the best of our knowledge, neither an identifiability theory nor an algorithm
is available in that case. Note that the weakening costs more than analytical
convenience. A model identified this way cannot be applied to a single view at
test time, so the weaker condition returns us to the multiview fusion problem
set aside in Section~\ref{subsec:crossing-disciplines}. 

\subsubsection{Identifiability of the Private Factors}
In the most general model of Section~\ref{sec:nonlinear}, the private factors
are required only to be conditionally independent given the common factor, and,
to the best of our knowledge, no identifiability theory for them is written
directly in terms of that conditional independence. The available guarantee, by
Lyu et~al., instead assumes that the private factors are independent of the
common factor and of each other, and enforces this in learning through an
independence constraint. What is missing is a framework that recovers the
private factors under conditional independence alone.

\subsubsection{Estimating the Clustering of the Views}
Removing the pairing assumption and removing the clustering assumption are
different problems, and they should not be conflated. The first has been
addressed for linear mixtures. The second has not. Section~\ref{sec:sp:synthesis}
sets out the deployment that raises it, in which several groups of cell-edge
users are each received by a different subset of stations and some of the
stations are themselves mobile. Ibrahim and Sidiropoulos select a subset of
stations and apply generalized CCA to it~\cite{ibrahim2020selective}, which
addresses the problem for a fixed candidate subset and leaves the estimation of
the partition open. Note also that SCoS assumes the number of clusters is given,
and neither that deployment nor the underlay setting above provides such a
number.

\subsubsection{Finite-Sample Behavior and Minimax Limits}
Kernel CCA comes with consistency results (Section~\ref{sec:kcca_reg}), but, to
the best of our knowledge, no comparable result exists for deep CCA itself. The minimax rates, i.e., the smallest estimation error for the
common subspaces achievable by any estimator, as a function of the number of
samples, the number of views, the subspace dimensions and the noise level, are a
separate question and remain unknown. The multi-user
extension of Tandon, Sankar and Poor described in Section~\ref{sec:common_info}
is a starting point, since the rate on the common link bounds how much
information about the common factor reaches any decoder. Converting that rate
into an estimation error bound is open. The obstacle is that the Gray--Wyner
region is stated in entropy and mutual information while estimation theory
works in mean squared error, and the I-MMSE relationship of Guo, Shamai and
Verd\'{u}~\cite{guo2005} is a natural bridge between the two formalisms.

\subsubsection{Invariance, Domain Shift and Causality}
Correlation across views does not establish a causal relation. The factor these
methods recover is the one that co-varies across the views, and none of the
formulations surveyed here distinguishes the factor of interest from any other
variable that the views happen to share, such as a confounder introduced by the
acquisition process. The invariance that CCA enforces is indexed by the view, while the invariance that domain generalization requires is indexed by the environment, and a factor stable across views need not be stable across environments. Whether the first can be made to serve as the second is open, and answering it needs a model in which the views and the environments appear as separate indices.
\section{Conclusions}

This tutorial followed one question through information theory, statistics, signal processing, and machine learning. Given several views of the same phenomenon, what do the views have in common, and under what conditions can the common part be recovered from data?

Section~\ref{sec:common_info} gave two answers to the first part of the question for scalar-valued views. The G\'acs--K\"orner common information is defined through the highest-entropy function that each view can compute on its own, and Wyner's common information through the least informative variable that renders the views conditionally independent. The two quantities bound the mutual information of the views from below and from above, and all three coincide when the views share a common part given which they are conditionally~\mbox{independent}.

Section~\ref{sec:cca} turned to vector-valued views. CCA finds, within each view, the linear combinations that are most correlated with each other. The solution follows in closed form from the SVD of the {\color{black}coherence matrix}, whose singular values are the canonical correlations, the cosines of the principal angles between the subspaces of the two views. In this sense, CCA is the second-order counterpart of the common information of Section~\ref{sec:common_info}, and Section~\ref{sec:cca} related both notions to a single set of canonical correlations. For jointly Gaussian views, the canonical correlations determine both the mutual information and Wyner's common information. {\color{black}The relation to the G\'acs--K\"orner notion is weaker. A canonical correlation equal to one certifies that G\'acs--K\"orner common information exists, and the directions with canonical correlation one play the role of the common part by analogy.} On the side of recovery, CCA returns the common subspace exactly in the noiseless case, under rank conditions that require the common and private subspaces of the views to be linearly independent. Generalized CCA extends the analysis to $K$ views, where identifiability, cast as an intersection of subspaces, generically improves as views are added.

Section~\ref{sec:sp} reached the same problem from multiuser detection, where the goal was to recover selected users under strong interference. Our first stop was the result of Wang and Poor, who showed that, in CDMA with a single antenna, the signal subspace allows the zero-forcing and MMSE detectors to be built blindly, with the signature of the desired user as the only prior knowledge. With multiple antennas, the directions of the users are set by propagation rather than by design, and pilots take the role of the signatures. However, when only the pilots of the desired user are known, strong interference degrades the channel estimate, and longer pilot sequences are required, at the cost of rate. A second view allows the interference to be suppressed without pilots, and changes the question from which directions carry the most energy to which directions the views share, which is the question CCA answers. When the second view is provided by a second base station, CCA recovers the subspace of the cell-edge users in the noiseless case. When the second view is provided by repetition at the transmitter, the canonical directions converge, up to scaling and as the number of samples grows, to the linear MMSE detector of the desired user, and no channel estimation is required.

Section~\ref{sec:nonlinear} replaced the linear maps of CCA with nonlinear ones, drawn from optimal scaling, from a reproducing kernel Hilbert space, or from a class of neural networks. The richer the function class, the easier agreement becomes. Without regularization, every kernel canonical correlation can equal one regardless of the data, and a sufficiently rich class of networks can meet the deep CCA constraints at the training samples with representations unrelated to the views. Even in the population, when each view determines the common factor, every function of the common factor can be reproduced by all views, including functions that discard part of the common factor, so agreement alone does not determine the representation.{\color{black} The probabilistic formulations add conditional independence of the views given a shared latent factor, which places that factor in the feasible set of Wyner's common information without singling the factor out.} Under explicit generative models, Lyu et~al. for two views and Karakasis and Sidiropoulos for $K$ views showed that the common factor is identifiable up to an invertible map. Self-supervised learning pursues the same agreement between views that are constructed rather than given, and in the result of von K\"ugelgen et~al. the recovered factor is whatever the augmentations leave unchanged.

Of the open problems collected in Section~\ref{sec:open}, we single out four. Partial invertibility of each mixing function with respect to the common factor has not been relaxed, and relaxing that assumption to a joint function of all the views would return the problem to multiview fusion. No identifiability theory for the private factors is written directly in terms of their conditional independence. Finite-sample analyses cover only a few formulations, and minimax limits are unknown. Finally, the recovered factor is defined by agreement across views, and none of the formulations surveyed here separates a common cause from a confounder that the views share.

The practical lesson is the same across the tutorial. A high canonical correlation, or a high agreement between learned representations, does not necessarily indicate shared structure, since the agreement can be produced by the transformations themselves. The recovered factor can be read as the common factor only when a well-posed learning criterion is used under conditions that make the common factor identifiable, and these conditions should be checked against the application before the result is interpreted.

\vspace{6pt}

\authorcontributions{{Conceptualization, P.A.K. and W.S.; methodology, P.A.K.; formal analysis, P.A.K.; investigation, P.A.K.; visualization, P.A.K.; writing---original draft preparation, P.A.K.; writing---review and editing, P.A.K. and W.S.; supervision, W.S.; funding acquisition, W.S. All}
 authors have read and agreed to the published version of the manuscript.}

\funding{{This research was funded by the National Science Foundation, grant number 2201641.}} 

\institutionalreview{Not applicable.}

\informedconsent{Not applicable.}

\dataavailability{{No new data} 
 were created or analysed in this study. Data sharing is not applicable to this article.}

\acknowledgments{The authors thank {Nicholas} 
 D.\ Sidiropoulos for the many
discussions and collaborations that shaped some of the material presented in
this tutorial. During the preparation of this manuscript, the authors used
{Claude Opus 5.5} 
 (Anthropic) to generate the code for Figures~\ref{fig:perception}--\ref{fig:wyner_dsbs} and~\ref{fig:speech}. The figures were
subsequently edited and annotated by the authors. The authors have reviewed the
generated content and take full responsibility for the content of this
publication.}

\conflictsofinterest{The authors declare no conflicts of interest.}

\abbreviations{Abbreviations}{
The following abbreviations are used in this manuscript:\\

\noindent
\begin{tabular}{@{}ll}
{ACE}    & {Alternating Conditional Expectation} \\
ADC   & Analog-to-Digital Converter \\
BOLD  & Blood Oxygen Level-Dependent \\
BSC   & Binary Symmetric Channel \\
BSS   & Blind Source Separation \\
CCA   & Canonical Correlation Analysis \\
CDMA  & Code Division Multiple Access \\
CP    & Canonical Polyadic \\
DCCA  & Deep Canonical Correlation Analysis \\
DCCAE & Deep Canonically Correlated Autoencoders \\
DGCCA & Deep Generalized Canonical Correlation Analysis \\
DSBS  & Doubly Symmetric Binary Source \\
fMRI  & Functional Magnetic Resonance Imaging \\
GCCA  & Generalized Canonical Correlation Analysis \\
{GENVAR} & {Generalized Variance} \\
HGR   & Hirschfeld--Gebelein--Rényi \\
ICA   & Independent Component Analysis \\
{IoT}    & {Internet of Things} \\
KCCA  & Kernel Canonical Correlation Analysis \\
{KL}     & {Kullback--Leibler} \\
{MAXVAR} & {Maximum Variance} \\
{MFCC}   & {Mel-Frequency Cepstral Coefficient} \\
{MINVAR} & {Minimum Variance} \\
MMSE  & Minimum Mean-Square Error \\
{MSE}    & {Mean-Square Error} \\
MUD   & Multiuser Detection \\
NMF   & Nonnegative Matrix Factorization \\
NTF   & Nonnegative Tensor Factorization \\
PCA   & Principal Component Analysis \\
{PDF}    & {Probability Density Function} \\
PID   & Partial Information Decomposition \\
PLS   & Partial Least Squares \\
PMF   & Probability Mass Function \\
RDF   & Rate--Distortion Function \\
RKHS  & Reproducing Kernel Hilbert Space \\
SCoS  & Subspace Clustering of Subspaces \\
SNR   & Signal-to-Noise Ratio \\
{SSQCOR} & {Sum of Squared Correlations} \\
{SUMCOR} & {Sum of Correlations} \\
SVD   & Singular Value Decomposition \\
VAE   & Variational Autoencoder \\
{VCCA}   & {Variational Canonical Correlation Analysis} \\
WSN   & Wireless Sensor Network \\
\end{tabular} 
}


\begin{adjustwidth}{-\extralength}{0cm}
\reftitle{References}

\PublishersNote{}
\end{adjustwidth}
\end{document}